\documentclass[twoside]{article}

\usepackage[preprint]{aistats2027}
\usepackage{amsmath,amssymb,amsthm,mathrsfs,dsfont}
\usepackage{graphicx}
\usepackage{xcolor}
\usepackage{booktabs}
\usepackage{array}
\usepackage[authoryear,round]{natbib}
\usepackage{tabularx}
\usepackage{enumitem}
\usepackage{algorithm}
\usepackage{algpseudocode}
\usepackage{placeins}
\usepackage{tikz}
\usepackage{kotex}
\usetikzlibrary{arrows.meta,positioning,fit,calc}

\usepackage[colorlinks=true,linkcolor=blue,citecolor=blue,urlcolor=blue]{hyperref}
\newtheorem{definition}{Definition}[section]
\newtheorem{example}{Example}[section]
\newtheorem{theorem}{Theorem}[section]
\newtheorem{lemma}{Lemma}[section]
\newtheorem{corollary}{Corollary}[section]
\newtheorem{proposition}{Proposition}[section]

\newtheoremstyle{boldremark}
  {3pt}
  {3pt}
  {\normalfont}
  {}
  {\bfseries}
  {.}
  {.5em}
  {}

\theoremstyle{boldremark}
\newtheorem{remark}{Remark}[section]

\newcommand{\cX}{\mathcal X}
\newcommand{\cY}{\mathcal Y}
\newcommand{\cM}{\mathcal M}
\newcommand{\bbE}{\mathbb E}
\newcommand{\bbP}{\mathbb P}
\newcommand{\op}{\mathrm{op}}
\newcommand{\HS}{\mathrm{HS}}
\newcommand{\supp}{\operatorname{supp}}
\newcommand{\ind}{\mathds{1}}
\newcommand{\aref}[2][]{\hyperref[par:#2]{\textnormal{(#2#1)}}}
\newcommand{\astarref}[1]{\hyperref[par:#1star]{\textnormal{(#1$^\star$)}}}
\newcommand{\alphaH}{\alpha_{\mathrm H}}

\begin{document}

%

%

\twocolumn[
\aistatstitle{Semi-Supervised Kernel Ridge Fr\'echet Regression}
\aistatsauthor{Jaehee Seo \And Kwangho Kim \And Jisu Kim{\renewcommand{\thefootnote}{\textcolor{black}{$\star$}}\footnotemark}}
\aistatsaddress{Dept. of Statistics\\ Seoul National University \And Dept. of Statistics\\ Korea University \And Dept. of Statistics\\ Seoul National University}
]

{\renewcommand{\thefootnote}{\textcolor{black}{$\star$}}\footnotetext{Corresponding author; jkim82133@snu.ac.kr}}

\begin{abstract}
Fr\'echet regression provides a framework for predicting random
object-valued responses by minimizing the conditional expected squared
distances.
We propose \emph{semi-supervised kernel ridge Fr\'echet regression}
(SS-KRFR) to exploit unlabeled covariates when labeled samples are scarce.
The method learns kernel spectral features from both labeled and
unlabeled covariates, then utilizes labeled responses to estimate the
conditional Fr\'echet function by kernel ridge regression.
We derive prediction error bounds that distinguish the roles of labeled
and unlabeled samples, and establish polynomial convergence rates under
suitable conditions.
Under additional kernel assumptions, a sharper analysis relaxes the
sample size requirements for learning the features.
We also develop hybrids that combine features, predictions, or both from models fitted with and without unlabeled covariates, establishing uniform stability over mixing weights in Hadamard spaces at common tuning and a separate rate guarantee for prediction mixing.
Simulations and a real-data application show semi-supervised gains and
improvements over competing Fr\'echet regression methods in several settings.
\end{abstract}

\section{Introduction}
\label{sec:introduction}

Statistical regression increasingly involves random object-valued responses,
such as probability distributions, symmetric positive-definite (SPD) matrices,
graphs, and trees. Fr\'echet regression \citep{petersen2019frechet} accounts
for their metric structure by formulating prediction through expected
squared-distance minimization in the response space, following the Fr\'echet mean principle \citep{frechet1948}. Subsequent work has
developed a total variation regularized method \citep{lin2021total}, kernel sliced inverse regression \citep{dong2022frechet}, nonlinear global methods \citep{Bhattacharjee2025}, single index models \citep{bhattacharjee2023single, ghosal2023frechet}, local linear methods for spherical, toroidal, and Riemannian predictors
\citep{im2025spherical,im2026riemannian,im2026toroidal}, local methods using
RKHS response embeddings \citep{iida2025local}, tree-based ensembles \citep{capitaine2024frechet, qiu2024random,zhou2026frechet}, and deep-learning approaches \citep{iao2025deep, kim2025dfnn, cui2026deep}. Methods for inference on Fr\'echet regression also have been developed \citep{xu2025wasserstein, song2026inference}. 

In many applications, labeled pairs are scarce but unlabeled covariates
are abundant. Supervised-only estimators must learn both representations
and regression functions from the small labeled sample, neglecting unlabeled samples.
This motivates semi-supervised learning (SSL) \citep{Zhu2009introduction, van2020survey}, including regression methods \citep{kostopoulos2018semi}. We build on kernel-spectral SSL \citep{Ji2012simple,kim2026semisupervised} by learning basis functions from pooled covariates and fitting the prediction model using labeled responses. Related work studies manifold or cluster structure \citep{belkin2004semi, belkin2008towards, niyogi2013manifold, rigollet2007generalization, singh2008unlabeled, sinha2009semi} and the benefits and limitations of unlabeled data for regression \citep{wasserman2007statistical}.
Graph SS-FR \citep{qiu2024semi} uses pooled covariates to construct graph
distances for local smoothing; we instead learn kernel spectral features
and fit squared-distance responses by ridge regression.

\subsection{Contribution}\label{sec:contributions}

\begin{itemize}[leftmargin=*,itemsep=0.25em]

\item \textbf{SSL for random objects.}
SS-KRFR learns spectral features from pooled covariates and fits squared
distances to labeled responses by ridge regression. Minimizing the
estimated Fr\'echet criterion requires no linear response structure.

\item \textbf{Prediction bounds and the role of unlabeled data.}
Our bounds separate truncation, labeled noise, ridge bias, and feature
estimation, allowing contrasts outside the RKHS. We obtain polynomial
rates; additional spectral assumptions relax pooled-sample requirements.

\item \textbf{Hybrid estimators and stability.}
Representation (RH), prediction (PH), and two-level (TH) hybrids combine
labeled-only and pooled fits, with validation-selected weights in practice.
At common tuning, we prove uniform Hadamard-space stability over mixing
weights for all three, and a polynomial rate for possibly
validation-selected PH from its endpoint bounds. Neither result
guarantees improvement over the better endpoint.

\item \textbf{Experiments.}
We compare methods on SPD, distributional, and Billera--Holmes--Vogtmann
(BHV) tree responses and Healthy Brain Network EEG (HBN-EEG) tasks.

\end{itemize}

\section{Estimation and Theoretical Guarantees}
\label{sec:main-results}

\subsection{Sampling Scheme and Regression Model}
\label{sec:reg-model}

Let $(\cX,d_{\cX})$ and $(\cY,d_{\cY})$ be Polish spaces. Let $(X,Y)\sim\bbP$, with marginal $\bbP_X$ and compact support
$\mathcal S:=\supp(\bbP_X)$. The conditional moment, target,
identification, and curvature conditions below hold on a fixed Borel
set $\cX_0\subseteq\mathcal S$ with $\bbP_X(\cX_0)=1$.
Neither set needs to be known. Draw $X_1,\ldots,X_N\stackrel{\mathrm{i.i.d.}}{\sim}\bbP_X$ and select
$L\subset[N]$, $|L|=n$, uniformly and independently of these covariates.
Conditional on the covariates and $L$, observe independent
$Y_i\sim\bbP_{Y\mid X=X_i}$ for $i\in L$; the remaining $N-n$
observations are unlabeled. Let $X^\circ\sim\bbP_X$ be an independent query.

Using a fixed Borel version of
$\bbP_{Y\mid X=x}$, define the \emph{conditional Fr\'echet function}
\[
    M_\oplus(y,x)
    :=
    \bbE[d_{\cY}^2(y,Y)\mid X=x],
\]
whenever the expectation is finite. The \emph{Fr\'echet regression function} is
\[
    m_\oplus(x)
    :=
    \operatorname*{argmin}_{y\in\cY}
    M_\oplus(y,x),
\]
for $x\in\cX_0$. Condition~\aref{M2} fixes a Borel measurable extension
to $\cX$, with no minimizing requirement outside $\cX_0$.

In a Hilbert space this target on $\cX_0$ is the
conditional mean; in a Hadamard space it is the conditional barycenter.
Both cases have automatic quadratic curvature; see
Proposition~\ref{prop:automatic-quadratic-curvature}.

\subsection{Normalized Contrasts}
\label{subsec:contrasts}

For $z,y,v\in\cY$ with $y\ne z$, define the normalized squared-distance contrast as
\begin{equation}
\label{eq:directional-psi}
\begin{aligned}
    \psi_{z,y}(v)
    &:={}
    \frac{d_{\cY}^2(y,v)-d_{\cY}^2(z,v)}{d_{\cY}(y,z)}
    -d_{\cY}(y,z).
\end{aligned}
\end{equation}
For an observed response $v$, this contrast compares the squared prediction
errors at $y$ and $z$, adjusting for the distance between the two predictions.
In particular, in a Hilbert space, \eqref{eq:directional-psi} becomes
\[
    \psi_{z,y}(v)
    =-2\,\bigg\langle\frac{y-z}{\|y-z\|},v-z\bigg\rangle.
\]
Lemma~\ref{lem:directional-algebra} expresses the normalized population loss
difference as a distance term plus a scalar conditional expectation,
reducing Fr\'echet risk comparisons to scalar regression problems. Thus, we consider squared distances to each $y$ as scalar
responses and minimize the resulting estimated conditional loss over $\cY$.

Define the index set for the normalized contrasts by
\[
    \mathfrak T
    :=\{(z,y):z\in m_\oplus(\cX_0),\ y\in\cY,\ y\ne z\}.
\]
For $t=(z,y)\in\mathfrak T$, write $\psi_t:=\psi_{z,y}$. For
$x\in\cX_0$, define the conditional noise and centered conditional contrast,
respectively, by
\[
\begin{aligned}
    \zeta_t(x,v)
    &:=\psi_{z,y}(v)-\bbE[\psi_{z,y}(Y)\mid X=x],\\
    \dot h_t(x)
    &:=\bbE[\psi_{z,y}(Y)\mid X=x]-\bbE[\psi_{z,y}(Y)].
\end{aligned}
\]
Thus $\zeta_t$ is the conditional fluctuation and $\dot h_t$ the centered
regression signal. Centering separates the intercept from zero-mean
spectral functions (Appendix~\ref{app:RKHS}).

\subsection{Spectral Kernel Structure}\label{subsec:kernel}
Let $\kappa:\mathcal X\times \mathcal X\rightarrow \mathbb R$ be the kernel on $\mathcal X$ satisfying \aref{K1}. Define the centered kernel $\kappa_0:\mathcal X\times\mathcal X\rightarrow \mathbb R$ by
\[
\begin{aligned}
\kappa_0(x,x')
    =&\kappa(x,x')
       -\bbE[\kappa(x,X)]\\[1mm]
       &-\bbE[\kappa(X,x')]
       +\bbE[\kappa(X,X')],
\end{aligned}
\]
where $X'$ is an independent copy of $X$.
Let $\mathcal H_{{\kappa_0}}$ be the
reproducing kernel Hilbert space (RKHS) of
$\kappa_0|_{\mathcal S\times\mathcal S}$, and let
$K_x^0=\kappa_0(x,\cdot)|_{\mathcal S}$ for $x\in\mathcal S$.
For $f\in L^2(\bbP_X)$, define the integral operator by
\[
    (\mathcal T_{\kappa_0}f)(x)
    :=\bbE[\kappa_0(X,x)f(X)],\qquad x\in\mathcal S.
\]
Its RKHS realization is $\bbE[K_X^0\otimes K_X^0]$, where
$(K_x^0\otimes K_x^0)h=h(x)K_x^0$. We use the same symbol for both
realizations, which agree on RKHS inputs. Covariance perturbations and
operator norms act on $\mathcal H_{\kappa_0}$; fractional source powers
act on $L^2(\bbP_X)$.

Let $(\lambda_j,\phi_j)_{j\ge1}$ be the fixed positive eigenpairs of the
$L^2(\bbP_X)$ realization, normalized in that space, with eigenvalues in
nonincreasing order and repeated according to multiplicity; thus
$(\lambda_j,\sqrt{\lambda_j}\phi_j)$ are the corresponding RKHS-normalized
eigenpairs (Appendix~\ref{app:RKHS}).

For the asymptotic results, the kernel and tuning sequences are specified
before observing the data. As $n,N\to\infty$, take $s=s_{n,N}\to\infty$
with $1\le s<N$.
We use $\Pi_{\le s}$ for the orthogonal spectral projector onto the
leading rank-$s$ subspace in either $\mathcal H_{\kappa_0}$ or
$L^2(\bbP_X)$, with
$\Pi_{\le s}f:=\sum_{j\le s}\langle f,\phi_j\rangle_2\phi_j$.
The two realizations agree on RKHS functions; the input determines the
ambient space. Write $\Pi_{>s}:=\mathrm{Id}-\Pi_{\le s}$, where
$\mathrm{Id}$ is the identity on that space.

\subsection{Semi-Supervised Kernel Ridge Fr\'echet Regression}
\label{sec:SS-KRFR}

We adopt a centered version of the kernel spectral feature construction in
\citet{Ji2012simple,kim2026semisupervised}.
The pooled representation is learned from all $N$ covariates,
each receiving equal weight in the empirical eigensystem. Let
\[
\begin{aligned}
    &\mathbf K:=\bigl[\kappa(X_r,X_{r'})\bigr]_{r,r'=1}^N,\\
    &\mathbf H_N:=\mathbf I_N-N^{-1}\mathbf1_N\mathbf1_N^\top,\\
    &\mathbf K_0:=\mathbf H_N\mathbf K \mathbf H_N=\bigl[\widehat\kappa_0(X_r,X_{r'})\bigr]_{r,r'=1}^N,
\end{aligned}
\]
where $\widehat\kappa_0:\mathcal X\times\mathcal X\rightarrow \mathbb R$ is the empirical counterpart of the centered kernel $\kappa_0$, defined by
\[
\begin{aligned}
\widehat\kappa_0(x,x')
=&\kappa(x,x')
-\frac1N\sum_i\kappa(x,X_i)\\
&-\frac1N\sum_i\kappa(X_i,x')
+\frac1{N^2}\sum_{i,j}\kappa(X_i,X_j).
\end{aligned}
\]

We use spectral truncation.
Let $\widehat{\mathbf V}_s=(\widehat{\mathbf v}_1,\ldots,
\widehat{\mathbf v}_s)$ collect the leading $s$ orthonormal eigenvectors of
$N^{-1}\mathbf K_0$, so that
\begin{equation}
\label{eq:full-sample-gram-eigensystem}
\begin{aligned}
    &N^{-1}\mathbf K_0\widehat{\mathbf V}_s
      =\widehat{\mathbf V}_s\widehat{\boldsymbol\Lambda}_s,
      \quad
    &\widehat{\mathbf V}_s^\top\widehat{\mathbf V}_s=\mathbf I_s,\\
    &\widehat{\boldsymbol\Lambda}_s
      :=\operatorname{diag}(\widehat\lambda_1,\ldots,
        \widehat\lambda_s),\quad& \widehat \lambda_1\ge\cdots\ge\widehat\lambda_s.
\end{aligned}
\end{equation}
Let $\widehat\lambda_{s+1}$ denote the $(s+1)$st largest eigenvalue of
$N^{-1}\mathbf K_0$. Define the empirical spectral-gap event as
\[
    \mathcal E_s:=\{\widehat\lambda_s>\widehat\lambda_{s+1}\}.
\]
For a given query point $x\in\cX$, put
\[
\begin{aligned}
    \mathbf k_x
    :=&\,(\kappa(x,X_1),\ldots,\kappa(x,X_N))^\top,\\
    \widehat{\mathbf k}_0(x)
    :=&\,\mathbf H_N\{\mathbf k_x-N^{-1}\mathbf K\mathbf1_N\}\\
    =&\,(\widehat \kappa_0(x,X_1),\ldots,\widehat \kappa_0(x,X_N))^\top.
\end{aligned}
\]
On $\mathcal E_s$, the RKHS-normalized Nystr\"om feature is
\[
    \widehat{\boldsymbol\Phi}_s(x)
    :=\frac1{\sqrt N}\widehat{\boldsymbol\Lambda}_s^{-1/2}
      \widehat{\mathbf V}_s^\top\widehat{\mathbf k}_0(x)
    =\bigl(\widehat\phi_1(x),\ldots,
      \widehat\phi_s(x)\bigr)^\top.
\]
Here $\widehat\phi_j$ is normalized in the empirically centered RKHS,
analogously to $\sqrt{\lambda_j}\phi_j$ in $\mathcal H_{\kappa_0}$.
On $\mathcal E_s$, the full-sample features satisfy
{\small
\begin{equation}
\label{eq:full-sample-feature-matrix}
\begin{aligned}
&\frac1N\sum_{r=1}^N
  \widehat{\boldsymbol\Phi}_s(X_r)\widehat{\boldsymbol\Phi}_s(X_r)^\top
  =\widehat{\boldsymbol\Lambda}_s,\qquad
  \frac1N\sum_{r=1}^N\widehat{\boldsymbol\Phi}_s(X_r)=0.
\end{aligned}
\end{equation}
}
On $\mathcal E_s^c$, Appendix~\ref{app:empirical-spectral-construction}
fixes a measurable fallback only for probability statements.
Proposition~\ref{prop:primitive-spectral-verification} gives
$\bbP(\mathcal E_s^c)\to0$ under Conditions~\aref{K1}--\aref{K3}. In computation,
features are built from the numerical eigensystem, with candidate ranks capped
at the available positive empirical rank and selected by validation.

Write $Z_{y,i}:=d_\cY^2(y,Y_i)$ and $\overline Z_{y,L}:=n^{-1}\sum_{i\in L}Z_{y,i}$. For ridge parameter $\xi>0$, write
\[
\begin{aligned}
 \overline{\boldsymbol\Phi}_L&:=n^{-1}\sum_{i\in L}\widehat{\boldsymbol\Phi}_s(X_i),\\
 \mathbf q_L(x)&:=\widehat{\boldsymbol\Phi}_s(x)-\overline{\boldsymbol\Phi}_L,\\
 \mathbf C_{L,s}&:=n^{-1}\sum_{i\in L}\mathbf q_L(X_i)\mathbf q_L(X_i)^\top,\\
 \widehat{\mathbf b}_{y,L}&:=n^{-1}\sum_{i\in L}
   \mathbf q_L(X_i)(Z_{y,i}-\overline Z_{y,L}),\\
 \widehat{\boldsymbol\theta}_{y,L}&:=
   (\mathbf C_{L,s}+\xi\mathbf I_s)^{-1}\widehat{\mathbf b}_{y,L}.
\end{aligned}
\]
Since pooled centering need not center the labeled subset, we recenter
the learned features over $L$ to fit a separate unpenalized intercept,
unlike the spectral formulations of
\citet{Ji2012simple,kim2026semisupervised}.
This ensures that the Fr\'echet weights sum to one, guaranteeing
constant reproduction and coercivity of the weighted squared-distance
objective function. 

We estimate the Fr\'echet function by
\begin{equation}
\label{eq:ss-krfr-criterion}
 \widehat M_{s,\xi}(y,x)
 :=\overline Z_{y,L}+\mathbf q_L(x)^\top
       \widehat{\boldsymbol\theta}_{y,L}.
\end{equation}
Equivalently, its response-independent weights are
\begin{equation}
\label{eq:calibrated-weights-main}
 a_{s,\xi,i}(x):=\frac1n+
 \frac1n\mathbf q_L(X_i)^\top
 (\mathbf C_{L,s}+\xi\mathbf I_s)^{-1}\mathbf q_L(x).
\end{equation}
These weights sum to one, may be negative, and satisfy
\[
    \widehat M_{s,\xi}(y,x)=\sum_{i\in L}a_{s,\xi,i}(x)Z_{y,i}.
\]
Appendix~\ref{app:full-procedures} gives the coefficient objective and
finite-sample identities. KRFR uses the same fit with the representation
learned only from labeled covariates.

\paragraph{Approximate minimizer selection.}
For a specified tolerance $\varepsilon_{\mathrm{opt}}>0$ tending to zero,
define SS-KRFR as any measurable selector satisfying
\begin{equation}
\label{eq:ss-krfr-estimator}
\begin{split}
    \widehat M_{s,\xi}
      (\widehat m_{s,\xi,\varepsilon_{\mathrm{opt}}}(x),x)
    \le
    \inf_{y\in\cY}\widehat M_{s,\xi}(y,x)
      +\varepsilon_{\mathrm{opt}}.
\end{split}
\end{equation}

One may set $\varepsilon_{\mathrm{opt}}=0$ when an exact measurable
minimizer exists (Corollaries~\ref{cor:proper} and \ref{cor:Hilbert}).
Propositions~\ref{prop:coercivity} and
\ref{prop:measurable-approximate-minimizer} establish coercivity and
measurable selection. The representation is computed once; response
minimization depends on $\cY$ (Algorithm~\ref{alg:ss-krfr-detailed}).
A dense Gram matrix uses $O(N^2)$ memory and kernel evaluations;
$m$ eigensolver matrix--vector products cost $O(N^2m)$, excluding
orthogonalization.

\subsection{Regularity Conditions}
\label{sec:regularity-conditions}

Throughout, all conditional assumptions use the
same full-measure set $\cX_0$ and constants uniform in $x$.
Appendix~\ref{app:notations} specifies conditional norms and measurability
conventions for indexed suprema.

Conditions~\aref{M1}--\aref{M3} concern response moments and the target
Fr\'echet function.

\paragraph{(M1)}\label{par:M1}
\textit{There is $y_*\in\cY$ such that
\[
    \sup_{x\in\cX_0}
    \bbE[d_{\cY}^2(Y,y_*)\mid X=x]<\infty.
\]}

\paragraph{(M2)}\label{par:M2}
\textit{There exists Borel measurable $m_\oplus:\cX\to\cY$ such that,
for every $x\in\cX_0$, the conditional Fr\'echet function
$M_\oplus(\cdot,x)$ has the unique minimizer $m_\oplus(x)$.}

Conditions~\aref{M1}--\aref{M2} ensure uniform square integrability
of the normalized contrasts, so $\zeta_t$ and $\dot h_t$ are
well-defined; see Appendix~\ref{app:normalized-contrast-identities}.

For $x\in\cX_0$, define
\[
    \Delta_\oplus(y,x)
    :=M_\oplus(y,x)-M_\oplus(m_\oplus(x),x).
\]

\paragraph{(M3)}\label{par:M3}
\textit{For every $\eta>0$,
\[
    \inf_{x\in\cX_0}
    \inf_{d_{\cY}(y,m_\oplus(x))\ge\eta}
    \Delta_\oplus(y,x)>0.
\]
Moreover, there are
$c_M,\eta_M>0$ and $\beta_M>1$ such that
\[
    \Delta_\oplus(y,x)
    \ge c_Md_{\cY}(y,m_\oplus(x))^{\beta_M}
\]
whenever $d_{\cY}(y,m_\oplus(x))<\eta_M$, uniformly in $x\in\cX_0$.}

Condition~\aref{M3} requires uniform identification and local curvature.

\paragraph{Hilbert and Hadamard responses.}
The barycenter variance inequality gives
\[
    \Delta_\oplus(y,x)\ge d_\cY^2(y,m_\oplus(x))
\]
in Hadamard spaces, with equality in Hilbert spaces such as Euclidean
spaces, $L^2$, and $H^k=W^{k,2}$. Further Hadamard examples are
log-Euclidean SPD matrices, BHV trees, and univariate distributions on
$[a,b]$ with the $2$-Wasserstein metric
\citep{arsigny2006log,billera2001geometry,villani2021topics}.
Thus a measurable target satisfies \aref{M3} globally with
$\beta_M=2$ and $c_M=1$.

Conditions~\aref{K1}--\aref{K3} provide the kernel-side conditions for basic spectral stability via an ordinary Davis--Kahan argument~\citep{davis1970rotation, yu2015useful}.

\paragraph{(K1)}\label{par:K1}
\textit{The kernel $\kappa$ is measurable, continuous, and positive definite, and
$\sup_{x\in\cX}\kappa(x,x)\le B_\kappa<\infty$.}

Conditions~\aref{K2}--\aref{K3} concern the selected cutoff $s=s_{n,N}$.

\paragraph{(K2)}\label{par:K2}
\textit{$\lambda_s>\lambda_{s+1}>0$.}

Condition~\aref{K2} allows repeated eigenvalues, but the cutoff must not split an eigenspace.

\paragraph{(K3)}\label{par:K3}
\textit{$N^{-1/2}=o(\lambda_s-\lambda_{s+1})$.}

Conditions~\aref{R1}--\aref{R2} control fractional-source regularity and
conditional noise complexity. All indexed processes below, including
approximants and residuals, admit jointly measurable, separable versions.

\paragraph{(R1)}\label{par:R1}
\textit{For some $0<r\le1$ and $B_r>0$, every
$\dot h_t$ belongs to $\overline{\mathcal H_{\kappa_0}}^{\,L^2(\bbP_X)}$, and with
$c_j(t):=\bbE[\dot h_t(X)\phi_j(X)]$,
\[
    \sup_{t\in\mathfrak T}\sum_{j\ge1}
      \frac{c_j(t)^2}{\lambda_j^r}\le B_r^2,
    \qquad \sum_{j\ge1}\lambda_j^r<\infty.
\]}
Note that this uniform fractional source condition is imposed on functions of the
covariate, not on an embedding of the response. 
It gives
\[\dot h_t=\mathcal T_{\kappa_0}^{r/2}u_t,\qquad u_t\in L^2(\bbP_X)\] and
$\sup_t\|u_t\|_2\le B_r$, where the fractional power is taken in
$L^2(\bbP_X)$;
$r=1$ recovers RKHS membership, whereas $0<r<1$ permits targets outside the
RKHS \citep{zhang2023optimality}. The kernel, its population RKHS, $r$, and
$B_r$ remain fixed as $n,N$ grow.

\paragraph{Why the closure?}
Distance-based object models can have nonsmooth conditional Fr\'echet
functions in the covariate. For example, in a BHV tripod, a signal crossing two branches
has squared distance $(a+|\sin x|)^2$ to a point at radius $a$ on the third
branch. A centered contrast retains an absolute-value term,
which need not lie in a smooth covariate RKHS.
Example~\ref{ex:bhv-closure-source} verifies \aref{R1} with $r<1$ for such
a non-RKHS contrast. 

\paragraph{(R2)}\label{par:R2}
\textit{ There is $C_M<\infty$ such that, for every $k\ge1$,
$x_1,\ldots,x_k\in\cX_0$, $\mathbf w=(w_1,\ldots,w_k)\in\mathbb R^k$, and conditionally
independent $Y_i\sim \bbP_{Y\mid X=x_i}$,
{\small
\[
\begin{aligned}
&\left\|\,\sup_{t\in\mathfrak T}
  \left|\sum_{i=1}^kw_i\zeta_t(x_i,Y_i)\right|
  \right\|_{L^2(\,\cdot\mid x_1,\ldots,x_k)}
  \le C_M\|\mathbf w\|_2.
\end{aligned}
\]
}
}
Notice that \citet{im2026localentropy} pointed out that the local entropy conditions used in Fr\'echet regression
\citep{petersen2019frechet,chen2022uniform} can fail in
$\mathcal W^2([a,b])$.
Thus, we alternatively impose \aref{R2}. It holds automatically under \aref{M1} for
Hilbert responses (Proposition~\ref{prop:hilbert-automatic-complexity})
and for univariate $W_2$ responses through the quantile embedding into
$L^2(0,1)$ \citep{bigot2017geodesic}. An entropy-integral bound on the
conditional noise class, with the conditional sub-Gaussian and measurability
conditions of Proposition~\ref{prop:metric-response-entropy}, also suffices.

For the approximants $g_{t,s}:=\Pi_{\le s}\dot h_t$, put
\begin{equation}
\label{eq:R1-regression-truncation}
    b_{r,s}:=B_r\Big(\sum_{j>s}\lambda_j^r\Big)^{1/2},\qquad
    G_{r,s}:=B_r\lambda_s^{(r-1)/2}.
\end{equation}
Lemma~\ref{lem:automatic-spectral-truncation} gives
$\|\sup_t|\dot h_t-g_{t,s}|\|_2\le b_{r,s}$ and
$\sup_t\|g_{t,s}\|_{\mathcal H_{\kappa_0}}\le G_{r,s}$.

For the leading rank-$s$ projector $\widehat\Pi_{\le s}$ learned from $N$
covariates, define the prediction-weighted estimation discrepancy in
$\mathcal H_{\kappa_0}$ by
\[
    E_{N,s}
    :=\|\mathcal T_{\kappa_0}^{1/2}(\widehat\Pi_{\le s}-\Pi_{\le s})\|_{\HS}.
\]
Under Conditions~\aref{K1}--\aref{K3}, the bounded kernel covariance bound
and Davis--Kahan Theorem in
Proposition~\ref{prop:primitive-spectral-verification} give
\[
    E_{N,s}
    =O_p\!\left(
      N^{-1/2}(\lambda_s-\lambda_{s+1})^{-1}
    \right)
    =o_p(1).
\]

\subsection{Prediction Error Bounds}
\label{sec:statistical-rates}

For the basic bound, take $\xi=\xi_{n,N}>0$ satisfying
\begin{equation}
\label{eq:basic-tuning-regime}
    \frac{\xi}{\lambda_s}\to0,\qquad
    \frac{\log(2s)}{n(\lambda_s+\xi)}\to0\quad(N>n).
\end{equation}
The second limit is imposed only when $N>n$ and controls the labeled
covariance inverse (Lemma~\ref{lem:labeled-gram-stability}). Define
{\small\[
\rho_{r,s,n,N,\xi}
    :=b_{r,s}+\sqrt{\frac{s}{n}}
       +G_{r,s}\frac{\xi}{\lambda_s+\xi}+\frac{G_{r,s}}{\sqrt N\,(\lambda_s-\lambda_{s+1})}.
\]}
These four terms represent spectral truncation, labeled response noise,
ridge bias, and representation estimation. The ridge term reflects the
interaction between truncation and regularization \citep{amini2021spectrally}.

A stability argument
controls the entire contrast-residual family before taking its supremum;
it avoids multiplying $b_{r,s}$ by $\sqrt s$ or the labeled-noise term by
$G_{r,s}$ (Lemmas~\ref{lem:hilbert-smoother-stability}--\ref{lem:sharp-fixed-design}).

Conditional on the training-data sigma-field $\mathscr D_{n,N}$, define
\[
    \mathfrak R_{s,\xi,\varepsilon_{\mathrm{opt}}}
    :=\bbE_{X^\circ}\big[
      d_{\cY}^2(\widehat m_{s,\xi,\varepsilon_{\mathrm{opt}}}(X^\circ),m_\oplus(X^\circ))
      \mid\mathscr D_{n,N}\big].
\]
\medskip

\begin{theorem}\label{thm:frechet-error}
Under Conditions~\aref{M1}, \aref{M2}--\aref{M3}, \aref{K1}--\aref{K3},
\aref{R1}--\aref{R2}, and \eqref{eq:basic-tuning-regime}, suppose either
that $\varepsilon_{\mathrm{opt}}>0$ and $\varepsilon_{\mathrm{opt}}\to0$, or
that an exact measurable minimizer is available and
$\varepsilon_{\mathrm{opt}}=0$. Then
\[
\begin{split}
    \mathfrak R_{s,\xi,\varepsilon_{\mathrm{opt}}}
    =O_p\big(\rho_{r,s,n,N,\xi}^{\,2/(\beta_M-1)}
      +\rho_{r,s,n,N,\xi}^{\,2}
      +\varepsilon_{\mathrm{opt}}^{2/\beta_M}\big).
\end{split}
\]
In particular, $\rho_{r,s,n,N,\xi}\to0$ implies consistency.
\end{theorem}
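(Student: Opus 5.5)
The proof reduces the Fr\'echet risk comparison to scalar regression on the normalized contrasts. It then controls the uniform estimation error of the fitted criterion, and converts that control into a distance bound using the curvature condition \aref{M3}.

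\textbf{Step 1: the basic inequality.} Fix a query $x\in\cX_0$. Write $\widehat m=\widehat m_{s,\xi,\varepsilon_{\mathrm{opt}}}(x)$ and $z=m_\oplus(x)$, and suppose $\widehat m\ne z$. The weights \eqref{eq:calibrated-weights-main} sum to one, so $\widehat M_{s,\xi}(y,x)-\widehat M_{s,\xi}(z,x)=\sum_i a_{s,\xi,i}(x)\{Z_{y,i}-Z_{z,i}\}$. Dividing by $d_\cY(y,z)$ and using Lemma~\ref{lem:directional-algebra}, we get
\[
\frac{\Delta_\oplus(y,x)}{d_\cY(y,z)}
=\frac{\widehat M_{s,\xi}(y,x)-\widehat M_{s,\xi}(z,x)}{d_\cY(y,z)}
-\Big(\sum_i a_{s,\xi,i}(x)\psi_t(Y_i)-\bbE[\psi_t(Y)\mid X=x]\Big),
\]
with $t=(z,y)\in\mathfrak T$. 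The constant part $d_\cY(y,z)$ of $\psi_t$ cancels because the weights sum to one. The $\varepsilon_{\mathrm{opt}}$-minimizing property \eqref{eq:ss-krfr-estimator} bounds the first term at $y=\widehat m$ by $\varepsilon_{\mathrm{opt}}/d_\cY(\widehat m,z)$. Setting $W(x):=\sup_{t}|\sum_i a_{s,\xi,i}(x)\psi_t(Y_i)-\bbE[\psi_t(Y)\mid X=x]|$, we obtain
\[
\Delta_\oplus(\widehat m,x)\le \varepsilon_{\mathrm{opt}}+d_\cY(\widehat m,z)\,W(x).
\]

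\textbf{Step 2: from the basic inequality to the risk.} Let $D=d_\cY(\widehat m,z)$. Consider two cases.

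\emph{Case $D<\eta_M$.} Here \aref{M3} gives $c_MD^{\beta_M}\le \varepsilon_{\mathrm{opt}}+DW$. Hence either $D^{\beta_M}\lesssim\varepsilon_{\mathrm{opt}}$ or $D^{\beta_M-1}\lesssim W$, so
\[
D^2\lesssim \varepsilon_{\mathrm{opt}}^{2/\beta_M}+W^{2/(\beta_M-1)}.
\]

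\emph{Case $D\ge\eta_M$.} The uniform identification part of \aref{M3} gives $\Delta_\oplus\ge c_\eta>0$, so the event forces $W\gtrsim1$ or $\varepsilon_{\mathrm{opt}}\gtrsim1$. On that event, $D$ is bounded via coercivity (Proposition~\ref{prop:coercivity}) and \aref{M1}. The resulting contribution is controlled by a multiple of $W^2$; this is where the extra $\rho^2$ term comes from, since for large $W$ the exponent $2/(\beta_M-1)$ may be smaller than $2$. I expect this bounding of $D$ in terms of $W$ on the far region, using coercivity of the weighted squared-distance objective with possibly negative weights, to need care.

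Integrating over $X^\circ$ then reduces the theorem to showing $\bbE_{X^\circ}[W(X^\circ)^2\mid\mathscr D_{n,N}]=O_p(\rho^2)$. The power $2/(\beta_M-1)\ge 2$ or $\le2$ is handled by Jensen's inequality or by Markov's inequality in probability.

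\textbf{Step 3: bounding $W$, the main obstacle.} Decompose
\[
\sum_i a_i(x)\psi_t(Y_i)-\bbE[\psi_t\mid x]
=\sum_i a_i(x)\zeta_t(X_i,Y_i)+\Big(\sum_i a_i(x)\dot h_t(X_i)-\dot h_t(x)\Big),
\]
using again that the weights sum to one.

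\emph{Noise term.} Condition \aref{R2}, applied conditionally on the covariates, bounds it by $C_M\|\mathbf a(x)\|_2$. After averaging over $X^\circ$, $\bbE\|\mathbf a(X^\circ)\|_2^2\lesssim (1+s)/n$ on the event where $\mathbf C_{L,s}$ is comparable to $\widehat{\boldsymbol\Lambda}_s$. That comparison is Lemma~\ref{lem:labeled-gram-stability}, under \eqref{eq:basic-tuning-regime}.

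\emph{Signal term.} Write $\dot h_t=g_{t,s}+(\dot h_t-g_{t,s})$, and use the smoother-stability lemmas (Lemmas~\ref{lem:hilbert-smoother-stability}--\ref{lem:sharp-fixed-design}). The residual $\dot h_t-g_{t,s}$ contributes $b_{r,s}$ uniformly, without an extra $\sqrt s$ factor, by Lemma~\ref{lem:automatic-spectral-truncation}. For $g_{t,s}$, with RKHS norm at most $G_{r,s}$, the reproduction error of the ridge smoother in the learned features splits into two parts:
\begin{itemize}
\item the ridge bias $G_{r,s}\xi/(\lambda_s+\xi)$;
\item the mismatch between $\widehat\Pi_{\le s}$ and $\Pi_{\le s}$. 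This is bounded by $G_{r,s}E_{N,s}=O_p\big(G_{r,s}N^{-1/2}(\lambda_s-\lambda_{s+1})^{-1}\big)$, using Proposition~\ref{prop:primitive-spectral-verification} together with $\bbP(\mathcal E_s^c)\to0$.
\end{itemize}

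Summing the four contributions gives $\bbE_{X^\circ}[W^2\mid\mathscr D]^{1/2}=O_p(\rho_{r,s,n,N,\xi})$, which proves the theorem. Consistency follows because every term vanishes when $\rho\to0$ and $\varepsilon_{\mathrm{opt}}\to0$.

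The delicate point in Step 3 is keeping the supremum over $t$ inside the expectation while the weights $a_i$ depend on the pooled features. This is why the stability lemmas are applied to the whole residual family before the supremum is taken.
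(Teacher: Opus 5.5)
Your architecture is the paper's. The basic inequality $\Delta_\oplus(\widehat m,x)\le\varepsilon_{\mathrm{opt}}+D\,W(x)$ is exactly the one in Lemma~\ref{lem:normalized-argmin-inversion}, the local-curvature inversion is the same, and Step~3 is Theorem~\ref{thm:contrast-error} built from the same four lemmas.

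The gap is the far region in Step~2. From $c_\eta\le\Delta_\oplus(\widehat m,x)\le\varepsilon_{\mathrm{opt}}+DW$ you only get $DW\gtrsim1$, not $W\gtrsim1$, unless $D$ is already bounded. The tool you propose for bounding $D$, coercivity of the signed empirical criterion (Proposition~\ref{prop:coercivity}), cannot give $D\lesssim W$. That proposition is a fixed-data statement. It only confines approximate minimizers to a ball around $y_0$ of radius of order $\sum_i|a_{s,\xi,i}(x)|\,d_\cY(Y_i,y_0)+\{\sum_i|a_{s,\xi,i}(x)|\,d_\cY^2(Y_i,y_0)+\varepsilon_{\mathrm{opt}}\}^{1/2}$. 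With signed weights, $\sum_i|a_{s,\xi,i}(x)|$ is not $O(1)$: it is only bounded by $\sqrt n\,\|\mathbf a_{s,\xi}(x)\|_2$, which is of order $\sqrt s$ on average. Also, \aref{M1} controls the $d_\cY(Y_i,y_0)$ only in conditional $L^2$. So this radius is neither bounded nor related to $W(x)$, and integrating it over the far event does not produce the $\rho^2$ term.

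The missing idea is a \emph{population} growth bound (Lemma~\ref{lem:uniform-linear-escape}), which follows from \aref{M1} alone and is fed into the same basic inequality. Nothing about the fitted criterion beyond approximate optimality is needed. Set $\sigma_*^2:=\sup_{x\in\cX_0}M_\oplus(y_*,x)$. Minimality gives $M_\oplus(m_\oplus(x),x)\le\sigma_*^2$, and Minkowski gives $M_\oplus(y,x)^{1/2}\ge D-\sigma_*$. Hence $\Delta_\oplus(y,x)\ge D^2-2D\sigma_*\ge D^2/2$ once $D\ge4\sigma_*$. Combined with uniform identification on $\eta_0\le D\le R_0:=4\sigma_*\vee\eta_0$, this gives linear escape: $\Delta_\oplus\ge c_{\mathrm{esc}}D$ for every $D\ge\eta_0$.

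Then split on $W$ rather than $D$:
\begin{itemize}
\item If $W<c_{\mathrm{esc}}/2$ and $\varepsilon_{\mathrm{opt}}\le c_{\mathrm{esc}}\eta_0/4$, the inequality $c_{\mathrm{esc}}D\le\varepsilon_{\mathrm{opt}}+DW$ rules out $D\ge\eta_0$, so your local-curvature case applies.
\item If $W\ge c_{\mathrm{esc}}/2$ and $D<R_0$, then $D^2\lesssim W^2$ directly.
\item If $W\ge c_{\mathrm{esc}}/2$ and $D\ge R_0$, then $\tfrac12D^2\le\varepsilon_{\mathrm{opt}}+DW$ gives $D^2\le4W^2+4\varepsilon_{\mathrm{opt}}\lesssim W^2$.
\end{itemize}
This is where the $\rho^2$ term actually comes from.

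A smaller point: when $\beta_M<2$, so $p_M=2/(\beta_M-1)>2$, Markov's inequality cannot turn a second-moment bound into control of $\bbE_{X^\circ}[W^{p_M}\mid\mathscr D_{n,N}]$. Use truncation instead: $W^{p_M}\le(c_{\mathrm{esc}}/2)^{p_M-2}W^2$ on $\{W<c_{\mathrm{esc}}/2\}$. In your $D$-split, $D<\eta_M$ similarly lets you replace $W^{p_M}$ by $\min\{\eta_M^2,W^{p_M}\}\lesssim W^2$.
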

For quadratic curvature, this is
$O_p(\rho_{r,s,n,N,\xi}^2+\varepsilon_{\mathrm{opt}})$.
The target remains the true $m_\oplus$, not a kernel-projected minimizer.
When $r<1$, the rank-$s$ norm bound $G_{r,s}$ can grow; its ridge and
representation costs must still vanish.

\subsection{Spectral Refinement}
Additional kernel conditions~\aref{K4}--\aref{K5} sharpen representation
estimation for the same estimator and weaken sufficient pooled-sample requirements.

\paragraph{(K4)}\label{par:K4}
\textit{For some $C_{X,8}<\infty$,
\[
    \sup_{j\ge1}\bbE\bigl[|\phi_j(X)|^8\bigr]\le C_{X,8}.
\]}

\paragraph{(K5)}\label{par:K5}
\textit{For some $q>1$,
\[
\begin{aligned}
    \lambda_j\asymp j^{-q},&\qquad j\ge1,\\
    \lambda_j-\lambda_k\gtrsim j^{-q}-k^{-q},&
    \qquad 1\le j\le s<k.
\end{aligned}
\]}
The second inequality is an additional cross-boundary separation
assumption, not a consequence of eigenvalue decay alone. It is a
block-level analogue of the relative-gap conditions of
\citet{jirak2020perturbation,jirak2023relative}.
Taking $j=s$, $k=s+1$ shows that \aref{K5} implies \aref{K2}.

For this refinement, take
\begin{equation}
\label{eq:sharp-tuning-regime}
    s\log(es)\sqrt{\frac{\log N}{N}}\to0,\quad
    \frac{s^2}{n}\to0,\quad \frac{\xi}{\lambda_s}\to0.
\end{equation}
The first limit replaces \aref{K3}; $s^2/n\to0$ replaces the labeled
covariance condition in \eqref{eq:basic-tuning-regime}.
Proposition~\ref{prop:sharp-spectral-verification} gives the refined scale
\[
\begin{split}
    \rho_{r,s,n,N,\xi}^{\star}
    :=b_{r,s}&+\sqrt{\frac{s}{n}}
      +G_{r,s}\frac{\xi}{\lambda_s+\xi}\\
      &+G_{r,s}\sqrt{\frac{s^{2-q}\log(es)\log N}{N}}.
\end{split}
\]
\begin{theorem}\label{thm:sharp-frechet-error}
Under Conditions~\aref{M1}, \aref{M2}--\aref{M3}, \aref{K1},
\aref{K4}--\aref{K5}, \aref{R1}--\aref{R2}, and
\eqref{eq:sharp-tuning-regime}, the conclusion of
Theorem~\ref{thm:frechet-error} holds with
$\rho_{r,s,n,N,\xi}^{\star}$ in place of $\rho_{r,s,n,N,\xi}$,
under the same minimizer convention.
\end{theorem}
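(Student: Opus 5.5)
The proof of Theorem~\ref{thm:frechet-error} is modular: the representation-estimation error enters only through a single deterministic (on-event) bound on the uniform contrast-residual. The plan is to keep that argument verbatim and replace only the spectral perturbation input. Concretely, the earlier proof reduces $\mathfrak R_{s,\xi,\varepsilon_{\mathrm{opt}}}$ to a bound on
\[
\Big\|\sup_{t\in\mathfrak T}\big|\widehat h_t-\dot h_t\big|\Big\|_{L^2(\bbP_X)},
\]
where $\widehat h_t$ is the normalized estimated contrast built from $\widehat M_{s,\xi}$. Lemma~\ref{lem:directional-algebra} then converts this bound into a Fr\'echet risk bound via \aref{M3} and the $\varepsilon_{\mathrm{opt}}$-minimizer property, which produces the powers $2/(\beta_M-1)$, $2$ and $2/\beta_M$. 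That second half uses only \aref{M1}--\aref{M3} and is untouched. So it suffices to show the uniform residual is $O_p(\rho^\star_{r,s,n,N,\xi})$.

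\textbf{Step 1: decompose the residual.} Split the residual exactly as before into four parts:
\begin{itemize}[leftmargin=*,itemsep=0.1em]
\item the truncation part $b_{r,s}$ from Lemma~\ref{lem:automatic-spectral-truncation};
\item the labeled-noise part $\sqrt{s/n}$, controlled by \aref{R2} through the smoother-stability and fixed-design lemmas (Lemmas~\ref{lem:hilbert-smoother-stability}--\ref{lem:sharp-fixed-design});
\item the ridge bias $G_{r,s}\xi/(\lambda_s+\xi)$;
\item the representation term, which is bounded by $G_{r,s}$ times an operator discrepancy. In the basic proof this discrepancy is $E_{N,s}$, or its empirical-feature analogue, together with the labeled-covariance inverse stability.
\end{itemize}
The first three depend on the tuning only through $\xi/\lambda_s\to0$ and a well-conditioned $\mathbf C_{L,s}$, so they carry over unchanged.

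\textbf{Step 2: well-conditioned labeled covariance.} Condition~\eqref{eq:basic-tuning-regime} was used only to invert $\mathbf C_{L,s}+\xi\mathbf I_s$ relative to $\widehat{\boldsymbol\Lambda}_s$. I will replace it by a second-moment (Frobenius) argument. Work with whitened features $\widehat{\boldsymbol\Lambda}_s^{-1/2}\widehat{\boldsymbol\Phi}_s$, which are approximately $(\phi_1,\dots,\phi_s)$. By \aref{K4} each entry of the whitened labeled covariance has variance $O(1/n)$, so its deviation from $\mathbf I_s$ has Frobenius norm $O_p(s/\sqrt n)=o_p(1)$ when $s^2/n\to0$. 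Passing from $\phi_j$ to the empirical $\widehat\phi_j$ costs a term controlled in Step 3.

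\textbf{Step 3: sharp spectral perturbation.} Invoke Proposition~\ref{prop:sharp-spectral-verification}. Under \aref{K1}, \aref{K4}--\aref{K5} and the first limit in \eqref{eq:sharp-tuning-regime}, it gives three things:
\begin{itemize}[leftmargin=*,itemsep=0.1em]
\item $\bbP(\mathcal E_s^c)\to0$;
\item the prediction-weighted discrepancy satisfies
\[
E_{N,s}=O_p\Big(\sqrt{s^{2-q}\log(es)\log N/N}\Big),
\]
obtained via a relative-gap expansion rather than Davis--Kahan;
\item control of the whitened feature error needed in Step~2.
\end{itemize}
The $\mathcal T_{\kappa_0}^{1/2}$ weighting is what makes \aref{K5}'s cross-boundary separation yield this rate: the $(j,k)$ block is damped by $\lambda_k^{1/2}/(\lambda_j-\lambda_k)$.

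Substituting this rate into the representation term gives $G_{r,s}\sqrt{s^{2-q}\log(es)\log N/N}$, so the total residual is $O_p(\rho^\star_{r,s,n,N,\xi})$. The conclusion then follows from the unchanged reduction step.

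\textbf{Main obstacle.} The hardest step is checking that the representation term in the basic proof depends on the features only through the weighted discrepancy $E_{N,s}$, and not through $\|\widehat\Pi_{\le s}-\Pi_{\le s}\|_{\op}$ or the raw gap $(\lambda_s-\lambda_{s+1})^{-1}$ directly. If some step, for instance the Nystr\"om extension to the query point or the labeled recentering, uses the unweighted Davis--Kahan bound, I would redo it with the whitened decomposition. The idea is to write $\widehat{\boldsymbol\Phi}_s$ as a rotation of $\Pi_{\le s}K^0_x$ plus a $\mathcal T^{1/2}$-weighted remainder, and bound the remainder's $L^2(\bbP_X)$ norm by $E_{N,s}$ plus a centering error of order $N^{-1/2}$.
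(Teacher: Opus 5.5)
Your overall architecture matches the paper's. Its proof of this theorem is Theorem~\ref{thm:contrast-error} under the relative spectral route, which is the four-term decomposition of Lemma~\ref{lem:fixed-design-decomposition} with route-specific inputs. That is followed by the unchanged inversion of Lemmas~\ref{lem:uniform-linear-escape}--\ref{lem:normalized-argmin-inversion}.

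Your account of what must be swapped is wrong in one important respect: the noise, truncation and ridge terms do \emph{not} carry over unchanged. In the basic route, three inputs all came from $\|\widehat{\mathcal T}_{\kappa_0,N}-\mathcal T_{\kappa_0}\|_\op=O_p(N^{-1/2})=o_p(\lambda_s)$, i.e.\ from \aref{K3}:
\begin{itemize}
\item the pooled query bound $\overline{\mathbf C}_s\preceq C\widehat{\boldsymbol\Lambda}_s$, which feeds Lemma~\ref{lem:labeled-weight-bound} and hence the noise term and the smoother stability behind the truncation and representation terms;
\item the ratio $\widehat\lambda_s/\lambda_s\to1$, needed for $\|\mathbf A_s\|_\op$ in the ridge term;
\item the domination $\widetilde{\mathcal T}_N\preceq2(\mathcal T_{\kappa_0}+\lambda_s\mathrm{Id})$ inside Lemma~\ref{lem:sharp-subspace-perturbation}.
\end{itemize}
That condition fails exactly where the refined theorem is meant to apply. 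At the balanced cutoff with $N\asymp n$, one has $\lambda_s\asymp n^{-1/r}\ll N^{-1/2}$.

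The missing ingredient is relative (preconditioned) covariance control,
$\|(\mathcal T_{\kappa_0}+\lambda_{s+1}\mathrm{Id})^{-1/2}(\widehat{\mathcal T}_{\kappa_0,N}-\mathcal T_{\kappa_0})(\mathcal T_{\kappa_0}+\lambda_{s+1}\mathrm{Id})^{-1/2}\|_\op=O_p(\mathcal N(\lambda_{s+1})/\sqrt N)$.
Here $\mathcal N(\lambda_{s+1})\lesssim s$ by \aref{K5}, and the moments come from \aref{K4} (Lemmas~\ref{lem:derived-covariance-stability}--\ref{lem:sharp-leverage-stability}). This, not the representation term, is the real obstacle. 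Your worry about the representation term is unfounded. Lemma~\ref{lem:sharp-subspace-perturbation} already sees the features only through three things: $E_{N,s}$, the pooled domination above, and the identity $\bbE_L[P_L\|V\|^2\mid X_1,\ldots,X_N]=P_N\|V\|^2$. So no rotation-plus-remainder rewrite is needed.

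Your Step~2 transfer from $\phi_j$ to $\widehat\phi_j$ is also not something Proposition~\ref{prop:sharp-spectral-verification} provides. The entrywise $O(1/n)$ variance computation does not apply to $\widehat{\boldsymbol\Lambda}_s^{-1/2}\widehat{\boldsymbol\Phi}_s$, because the labeled covariates are among the points that define it.

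The step can be completed as follows:
\begin{itemize}
\item Write $\widehat{\boldsymbol\Lambda}_s^{-1/2}\widehat{\boldsymbol\Phi}_s(x)=M(\phi_1(x),\ldots,\phi_s(x))^\top+\tau(x)$ plus a constant vector that recentering removes, where $\tau(x)=\widehat{\boldsymbol\Lambda}_s^{-1/2}\widehat E_s^*\Pi_{>s}K_x^0$.
\item Bound $\|M\|_\op$ via $MM^\top\preceq\widehat{\boldsymbol\Lambda}_s^{-1/2}\overline{\mathbf C}_s\widehat{\boldsymbol\Lambda}_s^{-1/2}$ and the leverage bound.
\item Control $P_L\|\tau\|^2$ through the same data-reuse identity and pooled domination. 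The cost is $\lesssim\widehat\lambda_s^{-1}E_{N,s}^2+\|\widehat\Pi_{\le s}-\Pi_{\le s}\|_\HS^2=O_p(s^2\log(es)\log N/N)=o_p(1)$.
\end{itemize}
But this again rests on the preconditioned pooled bound.

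The paper's Lemma~\ref{lem:labeled-gram-stability} avoids any feature comparison. Since $\mathbf C_{L,s}=\widehat E_s^*\widehat{\mathcal T}_L\widehat E_s$ and $\widehat{\boldsymbol\Lambda}_s=\widehat E_s^*\widehat{\mathcal T}_{\kappa_0,N}\widehat E_s$ exactly, one gets $\eta_{L,N}\le\|U_N\|_\op^2(\delta_n+\delta_N)$. Here $U_N=(\mathcal T_{\kappa_0}+\lambda_{s+1}\mathrm{Id})^{1/2}\widehat E_s\mathbf B_s^{-1/2}$ is bounded in probability, and $\delta_m=O_p(s/\sqrt m)$ is the population-preconditioned deviation of each marginally iid sample. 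No independence between learned eigenvectors and labeled data is used.
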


\paragraph{Polynomial rates.}
Under \aref{K5}, the summability in \aref{R1} requires $qr>1$, and
\[
    b_{r,s}^2\lesssim s^{-(qr-1)},\qquad
    G_{r,s}^2\lesssim s^{q(1-r)}.
\]
Thus the squared refined scale is bounded, up to a constant, by
{\small\[
 s^{-(qr-1)}+\frac{s}{n}
   +s^{q(1-r)}\left(\frac{\xi}{\lambda_s+\xi}\right)^2+\frac{s^{2-qr}\log(es)\log N}{N}.
\]}
\begin{corollary}\label{cor:polynomial-rate}
Assume Conditions~\aref{M1}, \aref{M2}--\aref{M3}, \aref{K1},
\aref{K4}--\aref{K5}, and \aref{R1}--\aref{R2}, and the minimizer convention
of Theorem~\ref{thm:frechet-error}. If $qr>2$, $N\ge n$, and
\[
    s\asymp n^{1/(qr)},\qquad
    0<\xi=O\big(s^{-(3q-1)/2}\big),
\]
then \eqref{eq:sharp-tuning-regime} holds and
\[
\begin{split}
    \mathfrak R_{s,\xi,\varepsilon_{\mathrm{opt}}}
    =O_p\big(n^{-\frac{qr-1}{qr(\beta_M-1)}}
       +n^{-\frac{qr-1}{qr}}+\varepsilon_{\mathrm{opt}}^{2/\beta_M}\big).
\end{split}
\]
For $\beta_M=2$, this becomes
$O_p(n^{-(qr-1)/(qr)}+\varepsilon_{\mathrm{opt}})$.
\end{corollary}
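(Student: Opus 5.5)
The plan is to deduce Corollary~\ref{cor:polynomial-rate} from Theorem~\ref{thm:sharp-frechet-error}. Two things are needed. First, the choices $s\asymp n^{1/(qr)}$ and $\xi=O(s^{-(3q-1)/2})$ must satisfy the tuning regime \eqref{eq:sharp-tuning-regime}. Second, the squared refined scale $(\rho^\star_{r,s,n,N,\xi})^2$ must be $O(n^{-(qr-1)/(qr)})$. Once both hold, the conclusion of Theorem~\ref{thm:frechet-error} with $\rho^\star$ gives
\[
\mathfrak R=O_p\big((\rho^\star)^{2/(\beta_M-1)}+(\rho^\star)^2+\varepsilon_{\mathrm{opt}}^{2/\beta_M}\big).
\]
Since $(\rho^\star)^{2/(\beta_M-1)}=((\rho^\star)^2)^{1/(\beta_M-1)}$, substituting $(\rho^\star)^2\lesssim n^{-(qr-1)/(qr)}$ yields the stated exponents. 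The case $\beta_M=2$ then follows because both powers coincide and $\varepsilon_{\mathrm{opt}}^{2/2}=\varepsilon_{\mathrm{opt}}$.

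\textbf{Tuning regime.} Under \aref{K5}, $\lambda_s\asymp s^{-q}$.
\begin{itemize}[leftmargin=*,itemsep=0.25em]
\item Ridge condition: $\xi/\lambda_s\lesssim s^{-(3q-1)/2+q}=s^{-(q-1)/2}\to0$, since $q>1$ and $s\to\infty$.
\item Labeled condition: $s^2/n\asymp n^{2/(qr)-1}\to0$ exactly because $qr>2$.
\item Pooled condition: $N\ge n$ gives $s\log(es)\sqrt{\log N/N}\lesssim n^{1/(qr)}\log n\,\sqrt{\log N/N}$. The function $x\mapsto \log x/x$ is eventually decreasing, so this is at most $n^{1/(qr)-1/2}(\log n)^{3/2}\to0$, again using $qr>2$.
\end{itemize}
The remaining hypotheses of Theorem~\ref{thm:sharp-frechet-error} are assumed directly in the corollary.

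\textbf{Rate calculation.} I use the displayed bounds $b_{r,s}^2\lesssim s^{-(qr-1)}$ and $G_{r,s}^2\lesssim s^{q(1-r)}$. The first follows from $\sum_{j>s}j^{-qr}\asymp s^{1-qr}$, valid since $qr>1$; the second from $\lambda_s^{r-1}\asymp s^{q(1-r)}$. I then bound the four terms of the squared refined scale separately.
\begin{itemize}[leftmargin=*,itemsep=0.25em]
\item Truncation: $s^{-(qr-1)}\asymp n^{-(qr-1)/(qr)}$.
\item Labeled noise: $s/n\asymp n^{1/(qr)-1}=n^{-(qr-1)/(qr)}$. This balancing is what determines the choice of $s$.
\item Ridge bias: $s^{q(1-r)}(\xi/\lambda_s)^2\lesssim s^{q-qr-(q-1)}=s^{1-qr}$, matching the truncation term. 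Here the exponent $(3q-1)/2$ for $\xi$ is exactly what is needed.
\item Representation estimation: $s^{2-qr}\log(es)\log N/N\le s^{2-qr}\log(es)\log n/n$ for large $n$, using $N\ge n$ and monotonicity of $\log x/x$. Since $s^{2-qr}\le s$ when $qr\ge1$, comparing with $s/n$ requires $s^{1-qr}\log(es)\log n\lesssim 1$. This holds because $s^{1-qr}\asymp n^{-(qr-1)/(qr)}$ decays polynomially while the logarithmic factors grow only logarithmically.
\end{itemize}

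\textbf{Main obstacle.} The delicate step is the last term, since the logarithmic factors must be absorbed without an unwanted $\log n$ in the final rate. The bound $s^{2-qr}\le s^{1}\cdot s^{1-qr}$ provides a polynomial margin $s^{1-qr}$, which is what makes the absorption possible. I would also check that the same $N\ge n$ monotonicity argument is valid both when $N\asymp n$ and when $N\gg n$, and that these $O(\cdot)$ bounds on the deterministic scale $\rho^\star$ carry over to the $O_p$ conclusion without change.
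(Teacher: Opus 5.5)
Your proposal is correct and follows the paper's proof essentially step for step. You verify the three refined tuning restrictions the same way, including the monotonicity of $\log u/u$ for $N\ge n$; you bound the four squared terms of $\rho^{\star}_{r,s,n,N,\xi}$ by $n^{-(qr-1)/(qr)}$ with the same exponents; and you conclude via Theorem~\ref{thm:sharp-frechet-error}, the only cosmetic difference being that you compare the representation term with $s/n$ rather than $s^{-(qr-1)}$, which is equivalent at $s\asymp n^{1/(qr)}$.
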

For the RKHS source ($r=1$), the general metric-response rate is recovered. For
$1<qr\le2$, a smaller cutoff still yields consistency
(Remark~\ref{rmk:spectral-decay-regimes}). A vector-valued fractional
source for Hilbert responses improves the squared bias to $s^{-qr}$ and
gives $n^{-qr/(qr+1)}$ under quadratic curvature
(Corollary~\ref{cor:hilbert-fractional-rate}). These are sufficient
rates, not minimax-optimality claims.

\paragraph{Semi-supervised gain.}\label{rmk:ss-gain}
At common $(s,n,\xi)$ and fixed source class, pooling replaces $n$ by $N$
in the basic squared representation term
$G_{r,s}^2/\{N(\lambda_s-\lambda_{s+1})^2\}$ and reduces its refined
counterpart $G_{r,s}^2s^{2-q}\log(es)\log N/N$. In our upper bound, it
does not reduce truncation, response noise, or ridge bias, and does not
remove the refined route's labeled covariance condition $s^2/n\to0$. In particular,
$N\asymp n$ already suffices for Corollary~\ref{cor:polynomial-rate};
more unlabeled data need not improve the leading prediction rate.
When representation error dominates the bound, increasing $N$ can improve
its order until the remaining terms dominate.

\section{Hybrid Estimation}
\label{sec:hybrid-extension}

Since the predictive benefit of pooling varies across tasks, we propose three hybrid methods that adjust the influence of pooling
before ridge fitting through the feature kernel (RH), after fitting
through predictions (PH), or at both stages (TH). They build on model
aggregation and safe SSL
\citep{wolpert1992stacked,breiman1996stacked,li2014towards,li2017learning}.

\paragraph{Representation hybrid (RH).}
For fixed $\alphaH\in[0,1]$, define the concatenated feature
\begin{equation}
\label{eq:hybrid-covariate-law-main}
 \widehat{\boldsymbol\Phi}_{s,\alphaH}(x)
 :=\begin{bmatrix}
 \sqrt{1-\alphaH}\,\widehat{\boldsymbol\Phi}^{\mathrm L}_s(x)\\
 \sqrt{\alphaH}\,\widehat{\boldsymbol\Phi}^{\mathrm N}_s(x)
 \end{bmatrix}.
\end{equation}
The maps are centered and diagonalized separately. Labeled ridge on
this concatenation, including cross-block covariance, defines
$\widehat m^{\alphaH}$. Equivalently, it fits the sum of the two feature
kernels weighted by $1-\alphaH$ and $\alphaH$, in up to $2s$ dimensions.
RH can therefore change model capacity as well as reliance on pooling.
At common tuning, $\alphaH=0,1$ recover the endpoints
(Appendix~\ref{app:hybrid-extension-details}).

\paragraph{Prediction hybrid (PH).}
Denote these labeled-only and pooled endpoint fits, respectively, by
\[
    \widehat m^{\mathrm L}(x)
    :=\widehat m^{0}(x),\qquad
    \widehat m^{\mathrm N}(x)
    :=\widehat m^{1}(x).
\]
For $\eta\in[0,1]$, the prediction-hybrid objective is
\[
    \Gamma_{\eta}^{\mathrm{PH}}(y,x)
    :=(1-\eta)d_\cY^2(y,\widehat m^{\mathrm L}(x))
    +\eta d_\cY^2(y,\widehat m^{\mathrm N}(x)).
\]
Take $\varepsilon_{\mathrm{hyb}}>0$, or zero when an exact measurable
minimizer exists, and define
$\widehat m_\eta^{\mathrm{PH}}$ by
\begin{equation}
\label{eq:prediction-hybrid-main}
    \Gamma_{\eta}^{\mathrm{PH}}
      (\widehat m^{\mathrm{PH}}_{\eta}(x),x)
    \le
    \inf_{y\in\cY}\Gamma_{\eta}^{\mathrm{PH}}(y,x)
    +\varepsilon_{\mathrm{hyb}}.
\end{equation}

\paragraph{Why mix predictions?}
For fitted maps, write
\[\mathfrak D(g,h):=\int d_\cY^2(g(x),h(x))\,d\bbP_X(x),\]
and $
\mathfrak R(g):=\mathfrak D(g,m_\oplus)$. In a Hadamard space, exact
geodesic interpolation $g_\eta$ between $u,v$ satisfies
\[
\begin{split}
 \mathfrak R(g_\eta)\le{}&(1-\eta)\mathfrak R(u)+\eta\mathfrak R(v)
 -\eta(1-\eta)\mathfrak D(u,v).
\end{split}
\]
The same holds for conditional test loss
(Proposition~\ref{prop:hybrid-geodesic-gain}). Disagreement can permit
improvement over both endpoints, but validation need not find such a weight.
For PH, this also bounds risk by the larger endpoint risk.

\begin{corollary}[PH at balanced cutoffs]\label{cor:ph-balanced-rate}
Under Corollary~\ref{cor:polynomial-rate} with Hadamard responses, fit
both endpoints at its fixed common kernel, rank, ridge and tolerance, and use exact
PH interpolation. For any measurable, possibly validation-selected
$\widehat\eta\in[0,1]$,
\[
 \mathfrak R(\widehat m^{\mathrm{PH}}_{\widehat\eta})
 =O_p\big(n^{-(qr-1)/(qr)}+\varepsilon_{\mathrm{opt}}\big).
\]
\end{corollary}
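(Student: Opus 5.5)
The argument reduces PH risk to the two endpoint risks through the geodesic convexity inequality, uniformly in $\eta$, and then applies Corollary~\ref{cor:polynomial-rate} to each endpoint.

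\emph{Step 1 (pointwise inequality).} Fix $x$. With exact PH interpolation, $\widehat m^{\mathrm{PH}}_\eta(x)$ is the point $g_\eta(x)$ at fraction $\eta$ on the unique geodesic from $\widehat m^{\mathrm L}(x)$ to $\widehat m^{\mathrm N}(x)$. This holds because in a Hadamard space the weighted barycenter of two points lies on their geodesic. Applying the CAT(0) comparison inequality with base point $m_\oplus(x)$ gives
\[
d_\cY^2(g_\eta(x),m_\oplus(x))\le (1-\eta)d_\cY^2(\widehat m^{\mathrm L}(x),m_\oplus(x))+\eta\, d_\cY^2(\widehat m^{\mathrm N}(x),m_\oplus(x))-\eta(1-\eta)d_\cY^2(\widehat m^{\mathrm L}(x),\widehat m^{\mathrm N}(x)).
\]
This is exactly Proposition~\ref{prop:hybrid-geodesic-gain}. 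Dropping the nonpositive last term and bounding the convex combination by its larger entry gives, for every $\eta\in[0,1]$ simultaneously,
\[
d_\cY^2(g_\eta(x),m_\oplus(x))\le \max\{d_\cY^2(\widehat m^{\mathrm L}(x),m_\oplus(x)),\,d_\cY^2(\widehat m^{\mathrm N}(x),m_\oplus(x))\}.
\]
The right side is at most the sum of the two terms.

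\emph{Step 2 (arbitrary $\widehat\eta$).} The bound in Step 1 is deterministic and uniform in $\eta$, so it holds with $\eta=\widehat\eta$ for any measurable selection. This covers selections depending on the training data or on a validation sample. The only requirement is that $\widehat\eta$ be fixed before the query point; the case where $\widehat\eta$ also depends on $x$ is harmless as well. Integrating over $X^\circ$ conditionally on the data, with measurability of $x\mapsto g_{\widehat\eta}(x)$ supplied by continuity of geodesics in their endpoints, yields
\[
\mathfrak R(\widehat m^{\mathrm{PH}}_{\widehat\eta})\le \mathfrak R(\widehat m^{\mathrm L})+\mathfrak R(\widehat m^{\mathrm N}).
\]

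\emph{Step 3 (endpoint rates).} For the pooled endpoint, Corollary~\ref{cor:polynomial-rate} with $\beta_M=2$ (automatic in Hadamard spaces, $c_M=1$) gives $O_p(n^{-(qr-1)/(qr)}+\varepsilon_{\mathrm{opt}})$. For the labeled-only endpoint (KRFR), the same corollary applies with $N$ replaced by $n$, since the representation is learned from the $n$ labeled covariates, which are i.i.d.\ from $\bbP_X$. The hypothesis $N\ge n$ then holds with equality, and \eqref{eq:sharp-tuning-regime} with $N=n$ follows from $s\asymp n^{1/(qr)}$ and $qr>2$, exactly as in that corollary. Both endpoints use the same common $(s,\xi,\varepsilon_{\mathrm{opt}})$, so the two $O_p$ bounds add to give the claim.

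\emph{Main obstacle.} The step needing most care is confirming that the theorem's proof, and hence Corollary~\ref{cor:polynomial-rate}, applies verbatim to the labeled-only fit. In that fit the feature sample and the regression sample coincide ($L=[N]$ with $N=n$), so the labeled recentering and the stability lemmas must tolerate $n=N$. The tuning regime \eqref{eq:basic-tuning-regime} was stated only for $N>n$; the refined regime \eqref{eq:sharp-tuning-regime} is the one to use. I would check that its first limit, $s\log(es)\sqrt{\log n/n}\to0$, follows from $s^2/n\to0$, and that the sampling scheme degenerates correctly when no covariates are unlabeled.
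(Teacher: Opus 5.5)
Your proposal is correct and matches the paper's proof: apply Corollary~\ref{cor:polynomial-rate} to both endpoints (KRFR with representation size $n$, a case the paper's lemmas cover exactly at $N=n$), then use the CAT(0) inequality of Proposition~\ref{prop:hybrid-geodesic-gain} to bound the PH risk by the larger endpoint risk uniformly in $\eta$, so any measurable $\widehat\eta$ can be substituted. One small correction to your ``main obstacle'' paragraph: at $N=n$ the first refined tuning limit does not follow from $s^2/n\to0$ alone, but it does follow from the polynomial margin $n^{1/(qr)-1/2}(\log n)^{3/2}\to0$ at $s\asymp n^{1/(qr)}$ with $qr>2$, which is what you correctly use in Step~3.
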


\paragraph{Two-level hybrid (TH).}
TH combines representation and prediction adaptation by replacing the
pooled endpoint in PH with the RH fit:
\[
    \Gamma_{\alphaH,\eta}^{\mathrm{TH}}(y,x)
    :=(1-\eta)d_\cY^2(y,\widehat m^{\mathrm L}(x))
    +\eta d_\cY^2(y,\widehat m^{\alphaH}(x)).
\]
The near-minimizer as for PH defines
$\widehat m_{\alphaH,\eta}^{\mathrm{TH}}$; selection conventions are given
in Appendix~\ref{app:hybrid-extension-details}.

In experiments, PH and TH tune endpoints separately and then select
mixing weights on the same validation set
(Algorithm~\ref{alg:hybrids-detailed}); our rate results do not cover
that full procedure.

\paragraph{Stability at common tuning.}
Fix a kernel $\kappa$, rank $s<n$, ridge $\xi>0$, and common
optimization tolerances. Fits share labeled responses, recentering,
and the unpenalized intercept; Appendix~\ref{app:hybrid-extension-details}
specifies spectral and selection conventions.

For $N\ge n\ge2$, put
\begin{equation}
\label{eq:th-stability-scale}
 \chi_{s,n,N,\xi}
 :=\frac{\xi^{-1}+\xi^{-2}}{\lambda_s-\lambda_{s+1}}
       \sqrt{\frac{N-n}{n(N-1)}}.
\end{equation}

\begin{samepage}
\begin{theorem}[Stability of the two-level hybrid]
\label{thm:th-stability}
Under the common-tuning conventions above, assume the conditions of
Theorem~\ref{thm:frechet-error} or \ref{thm:sharp-frechet-error}, with
$\rho_{\mathrm N}=\rho_{r,s,n,N,\xi}$ or
$\rho_{r,s,n,N,\xi}^{\star}$, respectively.
Suppose $\cY$ is Hadamard. There is one nonnegative
$Z_{n,N}=O_p(1)$, common to all mixing weights, such that
simultaneously for all $(\alphaH,\eta)\in[0,1]^2$,
{\small
\begin{equation}
\label{eq:th-stability-bound}
\begin{aligned}
 &\mathfrak R(\widehat m_{\alphaH,\eta}^{\mathrm{TH}})
   +\mathfrak D(\widehat m_{\alphaH,\eta}^{\mathrm{TH}},\widehat m^{\mathrm N})\\
 &\quad\le Z_{n,N}\left(\rho_{\mathrm N}^2
       +(1-\alphaH\eta)^2\chi_{s,n,N,\xi}^2
       +\varepsilon_{\mathrm{opt}}+\varepsilon_{\mathrm{hyb}}\right).
\end{aligned}
\end{equation}
}
\end{theorem}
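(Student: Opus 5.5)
The plan is to reduce everything to two ingredients: the pooled endpoint rate from Theorem~\ref{thm:frechet-error} or \ref{thm:sharp-frechet-error}, and a deterministic perturbation bound comparing each hybrid fit to $\widehat m^{\mathrm N}$.

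\textbf{Step 1 (reduction to a distance to the pooled endpoint).} In a Hadamard space the triangle inequality gives
\[
 \mathfrak R(g)\le 2\mathfrak R(\widehat m^{\mathrm N})+2\mathfrak D(g,\widehat m^{\mathrm N}).
\]
The endpoint theorem gives $\mathfrak R(\widehat m^{\mathrm N})=O_p(\rho_{\mathrm N}^2+\varepsilon_{\mathrm{opt}})$, using $\beta_M=2$ from Proposition~\ref{prop:automatic-quadratic-curvature}. So it suffices to bound $\sup_{\alphaH,\eta}\mathfrak D(\widehat m^{\mathrm{TH}}_{\alphaH,\eta},\widehat m^{\mathrm N})/\{\rho_{\mathrm N}^2+(1-\alphaH\eta)^2\chi^2+\varepsilon_{\mathrm{opt}}+\varepsilon_{\mathrm{hyb}}\}$ by a single $O_p(1)$ variable.

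\textbf{Step 2 (the PH layer).} The TH objective $\Gamma^{\mathrm{TH}}_{\alphaH,\eta}(\cdot,x)$ is $1$-strongly convex along geodesics, being a weighted barycenter functional. Its $\varepsilon_{\mathrm{hyb}}$-minimizer therefore lies within $\sqrt{\varepsilon_{\mathrm{hyb}}}$ of the exact weighted barycenter $b_\eta(x)$ of $\widehat m^{\mathrm L}(x)$ and $\widehat m^{\alphaH}(x)$. This barycenter is the point at fraction $\eta$ on the geodesic, so
\[
 d(b_\eta,\widehat m^{\mathrm N})\le (1-\eta)\,d(\widehat m^{\mathrm L},\widehat m^{\mathrm N})+\eta\, d(\widehat m^{\alphaH},\widehat m^{\mathrm N}),
\]
by convexity of the distance function in CAT(0) spaces.

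\textbf{Step 3 (RH-to-pooled perturbation).} I need
\[
 d(\widehat m^{\alphaH}(x),\widehat m^{\mathrm N}(x))\lesssim (1-\alphaH)\,\Xi(x)+\text{opt.\ terms},
\]
where $\Xi$ is a common random function with $\int\Xi^2\,d\bbP_X=O_p(\chi^2)$. The endpoint $\widehat m^{\mathrm L}$ is the case $\alphaH=0$. Combining with Step 2 gives the factor $(1-\eta)+\eta(1-\alphaH)=1-\alphaH\eta$, which squares to the claimed term.

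To prove Step 3, I would write $\widehat M^{\alphaH}(y,x)-\widehat M^{\mathrm N}(y,x)=\sum_{i\in L}(a^{\alphaH}_i(x)-a^{\mathrm N}_i(x))Z_{y,i}$. The weights depend on the feature kernel $(1-\alphaH)\widehat k^{\mathrm L}+\alphaH\widehat k^{\mathrm N}$ through the ridge smoother, which is smooth in the kernel. A resolvent identity bounds the weight difference by $(1-\alphaH)(\xi^{-1}+\xi^{-2})\|\widehat k^{\mathrm L}-\widehat k^{\mathrm N}\|$ on the labeled design. The $\xi^{-1}$ factor comes from the cross term and the $\xi^{-2}$ factor from the inverse perturbation.

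The kernel difference is a difference of rank-$s$ spectral projectors built from the labeled and pooled covariates. By Davis--Kahan it is at most $(\lambda_s-\lambda_{s+1})^{-1}$ times the difference of the empirical covariance operators. Because $L$ is a uniform subsample, that covariance difference is a sampling-without-replacement deviation with standard deviation $\propto\sqrt{(N-n)/(n(N-1))}$ by the finite-population correction. This yields $\chi$.

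To pass from criterion differences to argmin distances, I would use normalized contrasts via Lemma~\ref{lem:directional-algebra}. Quadratic growth of the pooled criterion (coercivity, Proposition~\ref{prop:coercivity}) makes the displacement at most the sup over $t$ of the normalized contrast difference. That sup is controlled by \aref{R1}--\aref{R2} and Lemma~\ref{lem:hilbert-smoother-stability}. Uniformity over $\alphaH$ holds because the bound is linear in $(1-\alphaH)$ times one $\alphaH$-free random variable.

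\textbf{Main obstacle.} The hard part is Step 3: getting a bound linear in $(1-\alphaH)$ uniformly, when RH concatenates features in up to $2s$ dimensions rather than interpolating projectors. I would first try writing the RH smoother as a ridge smoother for the summed kernel and differentiating along $\alphaH$. I would bound the derivative uniformly, since the resolvent norm is at most $\xi^{-1}$ for every $\alphaH$, and then integrate from $\alphaH$ to $1$. The contrast supremum is handled as in the endpoint theorems, with the signal part adding to $\rho_{\mathrm N}$.
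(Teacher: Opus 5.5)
\textbf{There is a genuine gap in Step 3}, at the point where you convert criterion differences into the fit-to-fit distance $d(\widehat m^{\alphaH}(x),\widehat m^{\mathrm N}(x))$ by appealing to ``quadratic growth of the pooled criterion.''

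No such property is available. Proposition~\ref{prop:coercivity} only says that $\widehat M_{s,\xi}(\cdot,x)\to\infty$ at infinity. Because the ridge weights $a_{s,\xi,i}(x)$ can be negative, the map $y\mapsto\sum_i a_{s,\xi,i}(x)d_\cY^2(y,Y_i)$ need not be geodesically convex in a non-Hilbert Hadamard space. Only in the Hilbert case does it collapse to $\|y-\sum_i a_iY_i\|^2+C$.

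The signed criterion can therefore have several well-separated near-minimizers. An arbitrarily small change of weights can then move the selected $\varepsilon_{\mathrm{opt}}$-minimizer by order one. Your target inequality $d(\widehat m^{\alphaH},\widehat m^{\mathrm N})\lesssim(1-\alphaH)\Xi+\text{opt.\ terms}$ is a Lipschitz property of signed Fr\'echet argmins. It fails in general, and the paper explicitly disclaims it.

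There is a secondary problem as well. The normalized contrasts, and the envelope $|\psi_t(v)|\le 2d(v,y_*)+4\sigma_*$, are set up for $t=(z,y)$ with $z\in m_\oplus(\cX_0)$. They are not set up for $z=\widehat m^{\mathrm N}(x)$.

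\textbf{The repair is to anchor every fit at the target $m_\oplus(x)$ rather than at the pooled fit.} Your perturbation ingredients are sound and match the paper.
\begin{itemize}
\item With $Q_{\alphaH}=(1-\alphaH)\widehat\Pi_L+\alphaH\widehat\Pi_{[N]}$, the concatenated feature kernel is $\langle z_L(x),Q_{\alphaH}z_L(x')\rangle$. The kernel is therefore affine in $\alphaH$, so you do not need to differentiate and integrate.
\item A single resolvent identity gives $\sup_{x,t}|\widehat C_{Q_{\alphaH}}(t;x)-\widehat C_{\widehat\Pi_{[N]}}(t;x)|\le(1-\alphaH)B_{n,N}$ with $B_{n,N}=O_p(\chi_{s,n,N,\xi})$. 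This needs only an $O_p(1)$ envelope for the centered empirical contrasts from (M1), not (R1)--(R2).
\item In the projector comparison you need a case split, since no labeled eigengap is assumed.
\end{itemize}

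Adding the pooled contrast error $A_{\mathrm N}(x)=R^{\oplus}_{s,\xi}(x)$ bounds the RH contrast error against the population criterion by $A_{\mathrm N}(x)+(1-\alphaH)B_{n,N}$. The population variance inequality $\Delta_\oplus(y,x)\ge d_\cY^2(y,m_\oplus(x))$ of Proposition~\ref{prop:automatic-quadratic-curvature} then gives $d(\widehat m^{\alphaH}(x),m_\oplus(x))\le A_{\mathrm N}(x)+(1-\alphaH)B_{n,N}+\sqrt{\varepsilon_{\mathrm{opt}}}$. This step uses no convexity or uniqueness of the fitted criterion.

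Next, run your Step 2 with reference point $m_\oplus(x)$ in place of $\widehat m^{\mathrm N}(x)$; this yields the factor $1-\alphaH\eta$. Finally, obtain $\mathfrak D(\widehat m^{\mathrm{TH}}_{\alphaH,\eta},\widehat m^{\mathrm N})$ by the triangle inequality through $m_\oplus$. The resulting extra term $2\|A_{\mathrm N}\|_{L^2(\bbP_X)}$ is $O_p(\rho_{\mathrm N})$ and is absorbed into the bound. You never need, and cannot have, a direct fit-to-fit stability bound.
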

\end{samepage}

The common factor allows validation-selected weights. With exact interpolation and shared fits, $\eta=1$
gives RH and $\alphaH=1$ gives PH.

Notice that, along $N>n$, $\chi\to\infty$ at the balanced
rank and ridge choices of Corollaries~\ref{cor:polynomial-rate}
and~\ref{cor:hilbert-fractional-rate}
(Appendix~\ref{app:th-stability-proof}). Thus this bound does not give
their rates for fixed $\alphaH\eta<1$; approaching the pooled endpoint
with $(1-\alphaH\eta)\chi=O_p(\rho_{\mathrm N})$ suffices, up to
optimization error. The PH rate above uses both endpoint bounds instead.
Stability is not a guarantee of improvement over SS-KRFR.

\begin{table*}[!t]
\centering\scriptsize
\setlength{\tabcolsep}{2pt}
\caption{Mean test loss (MCSE) over 200 replications at a common budget: \(U=2000\) for semi-supervised methods and \(U=0\) for supervised methods. S2/S5 units: \(10^{-3}\). Bold marks the lowest mean loss in each setting, not statistical significance.}
\begin{tabular*}{\textwidth}{@{\extracolsep{\fill}}lrrrrrr@{}}
\toprule
Method & S1: Euc.\(\to\)SPD & S2: Euc.\(\to\)Dist. & S3: Euc.\(\to\)BHV & S4: Dist.\(\to\)SPD & S5: Dist.\(\to\)Dist. & S6: Dist.\(\to\)BHV \\
\midrule
KRFR & 2.4098 (0.0142) & 5.4546 (0.0378) & 0.2462 (0.0019) & 0.0340 (0.0006) & 1.1420 (0.0125) & 0.1717 (0.0021) \\
SS-KRFR & 2.4509 (0.0136) & 5.0297 (0.0326) & 0.2306 (0.0019) & 0.0262 (0.0003) & 1.2611 (0.0180) & 0.1567 (0.0024) \\
RH & 1.6079 (0.0107) & 5.1297 (0.0417) & 0.2317 (0.0019) & 0.0256 (0.0003) & 1.1386 (0.0169) & 0.1570 (0.0023) \\
PH & 2.1960 (0.0192) & 4.9917 (0.0307) & \textbf{0.2267 (0.0016)} & 0.0257 (0.0003) & 1.1342 (0.0132) & \textbf{0.1546 (0.0022)} \\
TH & \textbf{1.4981 (0.0110)} & 5.0514 (0.0355) & 0.2271 (0.0017) & \textbf{0.0253 (0.0003)} & \textbf{1.0842 (0.0126)} & 0.1555 (0.0022) \\
\midrule
WCE-NLGFR & 2.2029 (0.0108) & 5.7063 (0.0269) & 0.2746 (0.0018) & 0.0615 (0.0011) & 1.2007 (0.0142) & 0.1767 (0.0024) \\
RF-LC & 3.2938 (0.0219) & \textbf{4.8914 (0.0316)} & 0.2308 (0.0019) & 0.4978 (0.0027) & 1.5596 (0.0151) & 0.3489 (0.0031) \\
RF-LL & 1.9839 (0.0099) & 5.8649 (0.0390) & 0.2615 (0.0018) & -- & -- & -- \\
Graph NW & 6.9672 (0.0375) & 9.1949 (0.0491) & 0.3894 (0.0027) & 0.2487 (0.0027) & 2.2842 (0.0249) & 0.2620 (0.0027) \\
Graph kNN & 7.0637 (0.0371) & 9.2609 (0.0464) & 0.3973 (0.0027) & 0.1963 (0.0017) & 1.7935 (0.0171) & 0.2767 (0.0026) \\
NW-FR & 7.0267 (0.0367) & 9.2243 (0.0497) & 0.3881 (0.0027) & 0.2408 (0.0027) & 2.2614 (0.0243) & 0.2592 (0.0027) \\
kNN-FR & 7.0011 (0.0366) & 9.1966 (0.0441) & 0.3952 (0.0027) & 0.1929 (0.0017) & 1.8189 (0.0171) & 0.2765 (0.0026) \\
LL-FR & 2.4096 (0.0092) & 5.8647 (0.0298) & 0.2712 (0.0016) & -- & -- & -- \\
Global FR & 2.5607 (0.0095) & 6.0875 (0.0310) & 0.2788 (0.0016) & -- & -- & -- \\
\bottomrule
\end{tabular*}
\label{tab:main-object-results}
\end{table*}

\begin{table}[!t]
\centering\scriptsize
\setlength{\tabcolsep}{2pt}
\caption{HBN mean test loss (MCSE) over 200 shared splits: \(U=1314\) for semi-supervised and \(U=0\) for supervised methods. Bold marks the lowest mean loss. MCSE measures fixed-cohort split precision, not population uncertainty.}
\begin{tabular*}{\columnwidth}{@{\extracolsep{\fill}}lrr@{}}
\toprule
Method & HBN-A & HBN-B \\
\midrule
KRFR & 6.6962 (0.1135) & 6.6311 (0.1132) \\
SS-KRFR & 6.4601 (0.1088) & 6.5203 (0.1141) \\
RH & 6.5710 (0.1109) & 6.6451 (0.1139) \\
PH & \textbf{6.3581 (0.0985)} & \textbf{6.3761 (0.1018)} \\
TH & 6.3985 (0.1006) & 6.4245 (0.1083) \\
\midrule
WCE-NLGFR & 6.9164 (0.1272) & 6.8781 (0.1224) \\
RF-LC & 7.1214 (0.0922) & 7.1099 (0.1017) \\
Graph NW & 7.3925 (0.1404) & 7.5625 (0.1436) \\
Graph kNN & 7.8814 (0.1197) & 7.7777 (0.1132) \\
NW-FR & 7.2944 (0.1367) & 7.3169 (0.1361) \\
kNN-FR & 7.6715 (0.1150) & 7.3985 (0.0997) \\
\bottomrule
\end{tabular*}
\label{tab:hbn-main-results}
\end{table}

\section{Experiments}
\label{sec:experiment}

We compare KRFR, SS-KRFR, and hybrids RH, PH, and TH
(Section~\ref{sec:hybrid-extension}) in six simulations and two HBN-EEG tasks.
Tables~\ref{tab:main-object-results}--\ref{tab:hbn-main-results} compare
all methods at a common maximal unlabeled budget, without test-based
selection of methods or \(U\). Appendix~\ref{app:experiment-details}
retains smaller-pool results and descriptive test-best summaries.

Proposed methods and forests use validation loss; local and graph smoothers use
training LOOCV. The primary WCE comparator uses validation, not its
original GCV criterion (HBN-A loss: 6.9164 versus 14.0332;
Appendix~\ref{app:wce-tuning-sensitivity}). PH tunes KRFR and SS-KRFR
separately, then selects their prediction weight on the same validation
set; TH uses KRFR and RH. The fixed-tuning theory does not cover this
full selection procedure.

Test loss averages squared response-space distances to observed responses.
For mean loss \(\overline L_M(U)\) over 200 replications or splits, define
\(\operatorname{SS\!\text{-}gain}_M(U)
=100\{1-\overline L_M(U)/\overline L_M(0)\}\).
Tables report mean loss and Monte Carlo standard error (MCSE).
Appendix~\ref{app:experiment-details} gives paired comparisons, \(R^2\),
interval definitions and numerical conventions.

\subsection{Simulation Study}
\label{subsec:simulation}

The six settings cross Euclidean or distribution-valued covariates with
\(3\times3\) log-Euclidean SPD matrices, distributions on \([0,1]\)
equipped with the 2-Wasserstein metric \(W_2\), and five-taxon
Billera--Holmes--Vogtmann (BHV) tree responses. BHV trees have fixed pendant
lengths and use the intrinsic 15-orthant metric \citep{billera2001geometry}.
S3 crosses multiple topologies; S6 occupies two adjacent orthants.
These are empirical evaluations: the four-taxon construction in
Example~\ref{ex:bhv-closure-source} does not verify \aref{R1}--\aref{R2}
for S3/S6.
Each setting uses 200 independent replications with
\((n_{\rm train},n_{\rm val},n_{\rm test})=(150,50,500)\)
and nested unlabeled sample sizes \(U\in\{0,500,1000,2000\}\).
All methods use the same data within each replication
(Appendix~\ref{app:simulation-design}).

Competitors are WCE-NLGFR \citep{Bhattacharjee2025}, Graph SS-FR
(NW/kNN) \citep{qiu2024semi}, local Fr\'echet regression
(NW-FR, kNN-FR, and LL-FR), global Fr\'echet regression
\citep{petersen2019frechet}, and random-forest weighted local constant
(RF-LC) and local linear (RF-LL) Fr\'echet regression \citep{qiu2024random}.
LL-FR, Global FR and RF-LL are included only for Euclidean covariates.
Forests use 50 trees; tuning grids and implementation conventions are
specified in Appendix~\ref{app:tuning-grid}.

At \(U=2000\), pure SS-KRFR improves on KRFR in S2, S3, S4 and S6,
but its loss increases by 1.70\% in S1 and 10.43\% in S5.
TH instead reduces loss by 37.83\% and 5.06\% in those two settings
(Table~\ref{tab:main-object-results}). PH has lower mean loss than pure
SS-KRFR in all six settings; TH does so in five, with S2 the exception.
These mean comparisons support adapting pooling, not a guarantee of
improvement. Gains need not be monotone in \(U\):
in S3, TH's mean loss is 0.2265 at \(U=1000\) and 0.2271 at
\(U=2000\) (Appendix~\ref{app:simulation-results}).

Although \(\bbP_X\) is shared within S1--S3 and within S4--S6,
gains differ across response mechanisms, so covariate structure alone
does not explain them. Finite-sample feature estimation, the geometry and spectral
alignment of the regression signal, response noise, and regularization
may all matter. Equal weighting of every covariate in the pooled
eigensystem need not produce the best finite-sample predictor.
Hybrids retain a labeled-only component
and adjust reliance on pooled features or predictions; they do not
reweight individual unlabeled observations. RH also uses up to $2s$
feature coordinates. These are plausible
interpretations, not isolated causal effects or guarantees of improvement.

Relative performance is more consistently favorable
for the non-Euclidean, distribution-valued covariates
in S4--S6. TH attains the lowest mean loss in S4 and S5,
and PH in S6, with reductions of 58.87\%, 9.70\%, and 12.53\%
over WCE, the strongest external comparator in these settings. For Euclidean inputs, RF-LC wins S2 (0.00489 versus
PH's 0.00499) and is close to PH in S3. This pattern supports using
kernel representations for object-valued covariates, but does not
establish a geometry-driven advantage: response mechanisms also differ,
the S4/S6 signals are spectrally aligned by construction, and forests
split quantile coordinates whereas our kernels use distances.

\subsection{Real Data Analysis: HBN-EEG}
\label{subsec:hbn-eeg}

Our HBN analysis uses a processed dataset of 2,103 paired eyes-closed and eyes-open observations from HBN-EEG releases R1–R9
\citep{alexander2017hbn,langer2017developing,shirazi2024hbneeg}.\footnote[1]{HBN-EEG
release index: \url{https://neuromechanist.github.io/data/hbn/}.}
For each participant, resting-state eyes-closed (EC) and eyes-open (EO)
power spectra are averaged across 16 channels and normalized on
\([1,45]\) Hz, yielding distributions of power over frequency.

HBN-A predicts the EO spectral distribution \(Y_i=\mu_i^{\rm EO}\)
from the EC distribution \(X_i=\mu_i^{\rm EC}\). HBN-B adds the
regularized \(16\times16\) EC channel correlation matrix \(C_i\), using
\(X_i=(C_i,\mu_i^{\rm EC})\) with log-matrix coordinates and
training-only block scaling.
Both tasks use squared 2-Wasserstein loss, approximated using 80 scaled
midpoint quantiles.
HBN-B selects distance weights from seven candidates by validation;
RF-LC instead selects quantile-only, joint, or SPD-only inputs.

The 789 participants in R1--R4 undergo 200 shared random partitions
without replacement into 284 training, 183 validation, and 322 test
participants. These are fixed-cohort splits, not bootstrap samples or
independent cohorts.
Nested unlabeled pools remain fixed at R5--R6 (\(U=463\)), R5--R7
(\(U=828\)), and R5--R9 (\(U=1314\)), alongside \(U=0\); their
responses are withheld. Pool size and release composition change together,
so this is not a pure sample-size comparison.
Appendix~\ref{app:hbn-details} gives details.

At the common \(U=1314\) budget, pure SS-KRFR reduces mean loss
relative to KRFR by 3.53\% and 1.67\% in HBN-A/B; PH's reductions
are 5.05\% and 3.85\% (Table~\ref{tab:hbn-main-results}). RH slightly
increases HBN-B loss, so hybridization does not ensure improvement.
For both pure SS-KRFR and PH, \(U=463\) gives lower mean losses than
\(U=1314\). PH's corresponding gains over KRFR are 6.44\% and
6.03\% (Appendix~\ref{app:hbn-details}). Thus semi-supervised gains
are observed without monotone improvement as unlabeled data increase;
here pool size is also confounded with release composition.
These are fixed-cohort comparisons. NW-FR uses an empty-neighborhood
fallback for 14.19\% of HBN-A test queries
(Table~\ref{tab:app-exception-coverage}); the KRFR comparison therefore
complements the external-baseline results.

\section{Discussion}\label{sec:conclusion}

Unlabeled covariates support representation learning, but their predictive
value depends on the regression task. Our analysis distinguishes PH's
endpoint-based rate from stability of the jointly refitted RH and TH;
neither explains all empirical hybrid gains. Oracle guarantees for
independently tuned endpoints and validation selection remain open.
HBN evidence is conditional on one cohort: pool size and release composition
vary together, unlike the shared-law theory.






\section*{Acknowledgements}
Jaehee Seo was supported by the Next Generation Scholarship for Basic Studies (Type C) from Seoul National University. Jaehee Seo would like to thank Jeongmin Jeon for helpful comments on assumptions in Fr\'echet regression and Yoonju Jeong for insightful discussions about brain network data.

\bibliographystyle{plainnat}
\bibliography{ref}

\clearpage
\appendix
\thispagestyle{empty}

\onecolumn
\aistatstitle{Supplementary Materials}

\appendix

\section*{Overview}

This supplement provides implementation details, additional experimental
results, and proofs of the theoretical guarantees in the main text.
The algorithms below summarize SS-KRFR and its validation-selected
hybrids, including spectral feature construction, labeled ridge fitting,
and the selection of representation and prediction mixing weights.

Section~\ref{app:experiment-details} describes the experimental designs,
response-space computations, competing methods, tuning and evaluation
protocols, and additional numerical results.
Section~\ref{app:preliminary} introduces the notation and mathematical
preliminaries, establishes the well-posedness of SS-KRFR, and details the
hybrid estimators and normalized-contrast and spectral constructions.
Section~\ref{app:technical-results} develops the population approximation,
empirical spectral, and labeled-design bounds, combines them into a
normalized-contrast oracle inequality, and establishes the escape and
argmin-inversion results used to control prediction error.
Section~\ref{app:main-result-proofs} presents the proofs of the
basic and refined prediction bounds and the polynomial-rate corollary.
Finally, Section~\ref{app:th-stability-proof} proves the TH stability
theorem, deriving the nested-sample spectral discrepancy and propagating
it through the ridge refit and Hadamard interpolation. It also separates
this stability bound from the endpoint-based polynomial rate for PH.

\clearpage

\begin{algorithm*}[t]
\caption[SS-KRFR]{SS-KRFR with pooled spectral features and labeled ridge fitting.}
\label{alg:ss-krfr-detailed}
\small
\begin{algorithmic}[1]
\setlength{\itemsep}{3pt}
\setlength{\lineskip}{2pt}
\Require Pooled training covariates $\{X_i\}_{i=1}^{N}$; labeled indices $L\subseteq[N]$, $|L|=n$;
\Statex \hspace{\algorithmicindent} responses $\{Y_i\}_{i\in L}$ in $(\mathcal Y,d_{\mathcal Y})$; kernel $\kappa$; requested rank $s$;
\Statex \hspace{\algorithmicindent} ridge parameter $\xi>0$; optimization tolerance $\varepsilon_{\mathrm{opt}}\ge0$; Gram eigenvalue threshold $\tau_{\mathrm{eig}}\ge0$.
\Ensure A prediction map $\widehat m^{\mathrm N}:\mathcal X\to\mathcal Y$.
\Statex \textbf{Spectral representation.} The following routine applies to any representation sample $R\subseteq[N]$.
\Function{Spec}{$R;\kappa,s$}
  \State $\displaystyle M=|R|,\qquad K^R=[\kappa(X_r,X_{r'})]_{r,r'\in R},\qquad H_M=I_M-\frac{\mathbf1_M\mathbf1_M^\top}{M},\qquad K_0^R=H_MK^RH_M.$
  \State $\displaystyle M^{-1}K_0^R\widehat v_j^R=\widehat\lambda_j^R\widehat v_j^R,\quad
    \widehat\lambda_1^R\ge\cdots\ge\widehat\lambda_M^R\ge0,\quad
    (\widehat v_j^R)^\top\widehat v_k^R=\ind\{j=k\}.$
  \State $\displaystyle r_R=\min\!\left\{s,\#\{j:M\widehat\lambda_j^R>\tau_{\mathrm{eig}}\}\right\}\ge1,\quad
    \widehat V_R=(\widehat v_1^R,\ldots,\widehat v_{r_R}^R),\quad
    \widehat\Lambda_R=\operatorname{diag}(\widehat\lambda_1^R,\ldots,\widehat\lambda_{r_R}^R).$
  \State For $x\in\mathcal X$, set
  \Statex $\displaystyle\hspace{\algorithmicindent}
    k_R(x)=(\kappa(x,X_r))_{r\in R},\qquad
    \widehat k_{0,R}(x)=H_M\!\left(k_R(x)-\frac{K^R\mathbf1_M}{M}\right),\qquad
    \widehat\Phi^R(x)=\frac{\widehat\Lambda_R^{-1/2}\widehat V_R^\top\widehat k_{0,R}(x)}{\sqrt M}.$
  \State \Return $\widehat\Phi^R$.
\EndFunction
\Statex \textbf{Labeled fitting.} The following routine applies to any feature map $\phi:\mathcal X\to\mathbb R^r$.
\Function{Fit$_L$}{$\phi;\xi,\varepsilon_{\mathrm{opt}}$}
  \State $\displaystyle F=(\phi(X_i)^\top)_{i\in L},\qquad
    \bar\phi_L=\frac1n\sum_{i\in L}\phi(X_i),\qquad
    H_L=I_n-\frac{\mathbf1_n\mathbf1_n^\top}{n},\qquad Q=H_LF.$
  \State $\displaystyle C_L=\frac1n Q^\top Q,\qquad A_L=(C_L+\xi I_r)^{-1},\qquad q_L(x)=\phi(x)-\bar\phi_L.$
  \State For $y\in\mathcal Y$, the scalar ridge fit with an unpenalized intercept is
  \Statex $\displaystyle\hspace{\algorithmicindent}
    Z_y=(d_{\mathcal Y}^2(y,Y_i))_{i\in L},\qquad
    \bar Z_{y,L}=\frac{\mathbf1_n^\top Z_y}{n},\qquad
    \widehat\theta_y=A_L\frac{Q^\top Z_y}{n},\qquad
    \widehat a_y=\bar Z_{y,L}-\bar\phi_L^\top\widehat\theta_y.$
  \Statex $\displaystyle\hspace{\algorithmicindent}
    (\widehat a_y,\widehat\theta_y)
    =\underset{a\in\mathbb R,\,\theta\in\mathbb R^r}{\arg\min}
    \left\{\frac1{2n}\|Z_y-a\mathbf1_n-F\theta\|_2^2+\frac\xi2\|\theta\|_2^2\right\}.$
  \State At each query $x$, compute the weights, which do not depend on $y$:
  \Statex $\displaystyle\hspace{\algorithmicindent}
    w(x)=\frac{\mathbf1_n}{n}+\frac1nQA_Lq_L(x),\qquad
    w_i(x)=\frac1n+\frac1nq_L(X_i)^\top A_Lq_L(x),\qquad
    \sum_{i\in L}w_i(x)=1.$
  \State $\displaystyle \widehat M_{\phi,\xi}(y,x)
    =\sum_{i\in L}w_i(x)d_{\mathcal Y}^2(y,Y_i)
    =\bar Z_{y,L}+q_L(x)^\top\widehat\theta_y.$
  \State Choose a fixed measurable $\widehat m_{\phi,\xi}(x)\in\mathcal Y$ satisfying
  \Statex $\displaystyle\hspace{\algorithmicindent}
    \widehat M_{\phi,\xi}(\widehat m_{\phi,\xi}(x),x)
    \le\inf_{y\in\mathcal Y}\widehat M_{\phi,\xi}(y,x)+\varepsilon_{\mathrm{opt}}.$
  \State \Return $\widehat m_{\phi,\xi}$.
\EndFunction
\State $\displaystyle \widehat\Phi^{\mathrm N}=\mathsf{Spec}([N];\kappa,s),\qquad
  \widehat m^{\mathrm N}=\mathsf{Fit}_L(\widehat\Phi^{\mathrm N};\xi,\varepsilon_{\mathrm{opt}}).$
\State \Return $\widehat m^{\mathrm N}$; using $R=L$ instead gives the labeled-only KRFR fit $\widehat m^{\mathrm L}$.
\end{algorithmic}
\vspace{3pt}
\noindent\footnotesize
The box states the positive retained-rank case; $\tau_{\mathrm{eig}}=0$ gives the exact positive-spectrum version.
Fix eigenvector and minimizer selection rules. Use $\varepsilon_{\mathrm{opt}}=0$ only when an exact measurable minimizer exists.
The weights may be negative, so the final step is a signed weighted Fr\'echet minimization.
\end{algorithm*}
\clearpage

\begin{algorithm*}[t]
\caption[Validation-selected hybrids]{Validation-selected representation (RH),
prediction (PH), and two-level (TH) hybrids.}
\label{alg:hybrids-detailed}
\small
\begin{algorithmic}[1]
\setlength{\itemsep}{3pt}
\setlength{\lineskip}{2pt}
\Require Training data as in Algorithm~\ref{alg:ss-krfr-detailed}; disjoint labeled validation set
\Statex \hspace{\algorithmicindent} $\mathcal D_V=\{(X_j^V,Y_j^V)\}_{j=1}^{n_V}$; finite feature--ridge candidate set $\mathcal C$;
\Statex \hspace{\algorithmicindent} representation-weight grid $\mathcal A_H\subseteq[0,1]$ and prediction-weight grid $\mathcal E\subseteq[0,1]$,
\Statex \hspace{\algorithmicindent} both containing $0,1$; tolerances $\varepsilon_{\mathrm{opt}},\varepsilon_{\mathrm{hyb}}\ge0$; fixed tie-breaking rules.
\Ensure The selected prediction maps $\widehat m^{\mathrm{RH}},\widehat m^{\mathrm{PH}},\widehat m^{\mathrm{TH}}$.
\State $\displaystyle \mathcal V(m)=\frac1{n_V}\sum_{j=1}^{n_V}d_{\mathcal Y}^2(Y_j^V,m(X_j^V)),\qquad R_{\mathrm L}=L,\quad R_{\mathrm N}=[N].$
\Statex \textbf{1. Fit and select the labeled-only and pooled endpoints.}
\For{$c\in\mathcal C$}
  \State For $B\in\{\mathrm L,\mathrm N\}$, use the candidate kernel $\kappa_{B,c}$, requested rank $s_c$, and ridge $\xi_c$:
  \Statex $\displaystyle\hspace{\algorithmicindent}
    \widehat\Phi_c^B=\mathsf{Spec}(R_B;\kappa_{B,c},s_c),\qquad
    \widehat m_c^B=\mathsf{Fit}_L(\widehat\Phi_c^B;\xi_c,\varepsilon_{\mathrm{opt}}).$
\EndFor
\State $\displaystyle \widehat c_B\in\underset{c\in\mathcal C}{\arg\min}\,\mathcal V(\widehat m_c^B),\qquad
  \widehat m^B=\widehat m_{\widehat c_B}^B,\qquad B\in\{\mathrm L,\mathrm N\}.$
\Statex \textbf{2. RH: jointly select the representation weight and feature--ridge candidate.}
\For{$(c,\alpha)\in\mathcal C\times\mathcal A_H$}
  \State $\displaystyle \widehat\Phi_{c,\alpha}^{\mathrm H}(x)=
    \begin{pmatrix}\sqrt{1-\alpha}\,\widehat\Phi_c^{\mathrm L}(x)\\[1pt]
                    \sqrt\alpha\,\widehat\Phi_c^{\mathrm N}(x)\end{pmatrix},\qquad
    q_c^B(x)=\widehat\Phi_c^B(x)-\frac1n\sum_{i\in L}\widehat\Phi_c^B(X_i),\quad B\in\{\mathrm L,\mathrm N\}.$
  \State $\displaystyle q_{c,\alpha}^{\mathrm H}(x)=
    \begin{pmatrix}\sqrt{1-\alpha}\,q_c^{\mathrm L}(x)\\[1pt]\sqrt\alpha\,q_c^{\mathrm N}(x)\end{pmatrix},\qquad
    C_c^{BC}=\frac1n\sum_{i\in L}q_c^B(X_i)q_c^C(X_i)^\top,\quad B,C\in\{\mathrm L,\mathrm N\}.$
  \State $\displaystyle C_{c,\alpha}^{\mathrm H}=
    \begin{pmatrix}
      (1-\alpha)C_c^{\mathrm{LL}} & \sqrt{\alpha(1-\alpha)}\,C_c^{\mathrm{LN}}\\[2pt]
      \sqrt{\alpha(1-\alpha)}\,C_c^{\mathrm{NL}} & \alpha C_c^{\mathrm{NN}}
    \end{pmatrix},\qquad
    \widehat m_{c,\alpha}^{\mathrm{RH}}=\mathsf{Fit}_L(\widehat\Phi_{c,\alpha}^{\mathrm H};\xi_c,\varepsilon_{\mathrm{opt}}).$
\EndFor
\State $\displaystyle (\widehat c_H,\widehat\alpha_H)\in
  \underset{(c,\alpha)\in\mathcal C\times\mathcal A_H}{\arg\min}\,\mathcal V(\widehat m_{c,\alpha}^{\mathrm{RH}}),\qquad
  \widehat m^{\mathrm{RH}}=\widehat m_{\widehat c_H,\widehat\alpha_H}^{\mathrm{RH}}.$
\Statex \textbf{3. PH and TH: select a prediction weight between the already selected endpoints.}
\State For prediction maps $u,v:\mathcal X\to\mathcal Y$, define a measurable two-point interpolation $\mathcal B_\eta[u,v]$ by
\Statex $\displaystyle\hspace{\algorithmicindent}
  \Gamma_\eta(y,x;u,v)=(1-\eta)d_{\mathcal Y}^2(y,u(x))+\eta d_{\mathcal Y}^2(y,v(x)),$
\Statex $\displaystyle\hspace{\algorithmicindent}
  \Gamma_\eta(\mathcal B_\eta[u,v](x),x;u,v)
  \le\inf_{y\in\mathcal Y}\Gamma_\eta(y,x;u,v)+\varepsilon_{\mathrm{hyb}},\qquad
  \mathcal B_0[u,v]=u,\quad\mathcal B_1[u,v]=v.$
\State $\displaystyle \widehat m^{\mathrm{PH}}_\eta=\mathcal B_\eta[\widehat m^{\mathrm L},\widehat m^{\mathrm N}],\qquad
  \widehat\eta_{\mathrm{PH}}\in\underset{\eta\in\mathcal E}{\arg\min}\,\mathcal V(\widehat m^{\mathrm{PH}}_\eta),\qquad
  \widehat m^{\mathrm{PH}}=\widehat m^{\mathrm{PH}}_{\widehat\eta_{\mathrm{PH}}}.$
\State $\displaystyle \widehat m^{\mathrm{TH}}_\eta=\mathcal B_\eta[\widehat m^{\mathrm L},\widehat m^{\mathrm{RH}}],\qquad
  \widehat\eta_{\mathrm{TH}}\in\underset{\eta\in\mathcal E}{\arg\min}\,\mathcal V(\widehat m^{\mathrm{TH}}_\eta),\qquad
  \widehat m^{\mathrm{TH}}=\widehat m^{\mathrm{TH}}_{\widehat\eta_{\mathrm{TH}}}.$
\State \Return $\widehat m^{\mathrm{RH}},\widehat m^{\mathrm{PH}},\widehat m^{\mathrm{TH}}$, retaining their training-sample fits.
\end{algorithmic}
\vspace{3pt}
\noindent\footnotesize
Each $c$ specifies a common kernel family, bandwidth multiplier, requested rank, and ridge value for the two feature blocks;
representation-specific bandwidth scales and retained ranks may differ. Validation covariates do not enter either spectral fit.
In a Hadamard space, $\mathcal B_\eta[u,v](x)=\gamma_{u(x),v(x)}(\eta)$ is the exact geodesic interpolation, with
$\varepsilon_{\mathrm{hyb}}=0$. In general metric spaces, use the displayed near-minimizer definition.
\end{algorithm*}

\clearpage

\section{Experiment Details}
\label{app:experiment-details}

This section specifies the experiments in Section~\ref{sec:experiment}.
Within each simulation covariate type, the generating law is identical
across the three response spaces. Each setting uses 200 generated data
samples. HBN-A and HBN-B use 200 common random partitions without
replacement of the same pooled R1--R4 cohort.
The main tables compare every applicable method at the common maximal
unlabeled budget: \(U=2000\) in simulations and \(U=1314\) in HBN;
supervised methods use \(U=0\). These are common-budget summaries of
the existing runs, not a claim of a preregistered analysis. Descriptive
test-best summaries are retained separately below.

\subsection{Simulation Protocol}
\label{app:simulation-design}

Settings S1--S3 have Euclidean covariates and settings S4--S6 have
distribution-valued covariates. Within each covariate type, the responses
are ordered as \(3\times3\) SPD matrices, univariate distributions,
and five-taxon BHV trees. Each setting uses 200 data draws, indexed 0--199,
from the generating laws below.

For every replication,
\[
 (n_{\rm train},n_{\rm val},n_{\rm test})=(150,50,500),
 \qquad U\in\{0,500,1000,2000\}.
\]
The labeled blocks are independent draws. A separate covariate sample of
size 2,000 supplies nested unlabeled prefixes, independently of the labeled
blocks. All methods share these data within each replication. Euclidean
covariates retain their generated coordinates. For distribution covariates,
each quantile coordinate is centered and divided by its training-sample
standard deviation (divisor \(n_{\rm train}\)); standard deviations below
\(10^{-10}\) are replaced by one. These training transformations are applied
to validation, test, and unlabeled covariates. The latent parameters below
are used only to generate data; fitted methods receive the observed
covariates.

\begin{table}[!htbp]
\centering\scriptsize
\setlength{\tabcolsep}{2pt}
\caption{Simulation settings.}
\begin{tabular*}{\textwidth}{@{\extracolsep{\fill}}lll@{}}
\toprule
Setting & Covariate & Response metric \\
\midrule
S1 & Euclidean, 20 coordinates & log-Euclidean on \(3\times3\) SPD \\
S2 & Euclidean, 20 coordinates & \(W_2\) on distributions \\
S3 & Euclidean, 20 coordinates & intrinsic BHV tree distance \\
S4 & distribution, 100 quantile coordinates & log-Euclidean on \(3\times3\) SPD \\
S5 & distribution, 100 quantile coordinates & \(W_2\) on distributions \\
S6 & distribution, 100 quantile coordinates & intrinsic BHV tree distance \\
\bottomrule
\end{tabular*}
\label{tab:app-sim-settings}
\end{table}

\subsubsection{Covariate Generating Laws}

Write \(\sigma(v)=(1+e^{-v})^{-1}\), \(v_+=\max(v,0)\),
\(s(v)=\log(1+e^v)\), and \(\widetilde E=3\tanh(E/3)\) for
\(E\sim N(0,1)\). Noise arrays have independent entries before any
specified symmetrization. Define
\(\operatorname{wrap}(v)=(v+\pi)\bmod(2\pi)-\pi\).

\paragraph{Euclidean covariates (S1--S3).}
Draw \(C\in\{1,2,3\}\) with probabilities \((0.4,0.4,0.2)\) and centers
\((c_1,c_2,c_3)=(-2.3,-0.3,1.6)\). With independent mixture indicators,
\[
 \theta=\begin{cases}
 \operatorname{wrap}(c_C+0.6\widetilde E_\theta),&\text{probability }0.8,\\
 \operatorname{Unif}[-\pi,\pi],&\text{probability }0.2,
 \end{cases}
 \qquad
 u=\begin{cases}
 0.6\cos\theta+0.4\widetilde E_u,&\text{probability }0.8,\\
 \operatorname{Unif}[-3,3],&\text{probability }0.2.
 \end{cases}
\]
Set \(z_1=(\theta+\pi)/(2\pi)\), \(z_2=\sigma(u)\), and draw
\(z_3,z_4,z_5,W_1,\ldots,W_{15}\) independently from
\(\operatorname{Unif}[0,1]\), independently of \((\theta,u)\). Then
\[
 X=(z_1,z_2,z_3,z_4,z_5,W_1,\ldots,W_{15})\in[0,1]^{20}.
\]
Below, \((a,b,c,d,e)=2(z_1,z_2,z_3,z_4,z_5)-(1,1,1,1,1)\).

\paragraph{Distribution covariates (S4--S6).}
Draw \(z\sim\operatorname{Unif}([0,1]^3)\), put \(u_j=2z_j-1\), and set
\[
 \vartheta_1=2u_1+u_2,\qquad
 \vartheta_2=u_2u_3+u_1,\qquad
 \vartheta_3=(u_1+u_3)_+-u_2.
\]
For
\[
 \phi_1(t)=\sqrt2\cos(2\pi t),\quad
 \phi_2(t)=\sqrt2\sin(2\pi t),\quad
 \phi_3(t)=\sqrt2\cos(4\pi t),
\]
the observed covariate is the distribution with density
\[
 f_X(t\mid z)=
 \frac{\exp(\sum_{k=1}^3\vartheta_k\phi_k(t))}
 {\int_0^1\exp(\sum_{k=1}^3\vartheta_k\phi_k(v))\,dv},
 \qquad 0\le t\le1.
\]
It is represented by
\(Q_X=100^{-1/2}(F_X^{-1}((j-1/2)/100))_{j=1}^{100}\).
The implementation evaluates the density on \(t_k=k/500\),
\(k=0,\ldots,500\), uses trapezoidal integration for normalization and the
CDF, and linearly interpolates the inverse CDF. Thus the representation is a
numerical quantile approximation to the generated distribution.
Observation noise enters through the responses.

\subsubsection{Response Generating Equations}

For S1, define
\[
\begin{gathered}
 h_1=a_+,\quad h_2=b_+,\quad h_3=c_+,\quad
 h_4=(d-e)_+,\quad h_5=(a+b-c)_+,\\
 g_1=2h_1+h_4-h_3,\qquad
 g_2=2h_2+h_5-h_1,\qquad
 g_3=2h_3+h_1h_2+h_4-h_5.
\end{gathered}
\]
For S2, use
\[
 g_1=2ab+cd-e,\qquad g_2=|2ac-bd|+\tfrac12e^2,
 \qquad g_3=ad+be-2c.
\]
For S3, use
\[
 g_1=a_++\tfrac12 d,\qquad
 g_2=b_++\tfrac12 e,\qquad
 g_3=c_+.
\]
For S5, use
\[
 g_1=2\vartheta_1\vartheta_2+\vartheta_3,\qquad
 g_2=\vartheta_1+\vartheta_2-\tfrac12\vartheta_3,\qquad
 g_3=\vartheta_2\vartheta_3+\vartheta_1.
\]

\paragraph{SPD responses (S1, S4).}
For S1, the response is \(Y=\exp(H+\sigma_S\widetilde E_{\rm sym}/\sqrt3)\),
where \(\widetilde E_{\rm sym}=(\widetilde E+\widetilde E^\top)/2\).
For S1,
\[
 H=\begin{pmatrix}
 g_1+g_2&g_1-g_3&g_2+g_3\\
 g_1-g_3&-g_1+g_2&g_1+g_2\\
 g_2+g_3&g_1+g_2&g_3
 \end{pmatrix},\qquad \sigma_S=0.207.
\]
Computations use \(\operatorname{svec}(\log Y)\), with diagonal entries
unchanged and off-diagonal entries multiplied by \(\sqrt2\), preserving
the Frobenius norm.

For S4, an independent input-only reference of size \(M=2048\), generated
with seed 2026092908, defines centered Gaussian Nystr\"om eigenfunctions
at two scales. Reference quantile coordinates are standardized by their own
means and standard deviations (divisor \(M\)); scales at most \(10^{-10}\)
are replaced by one. Bandwidths are \(b\) times the median positive
reference pairwise distance, with \(b\in\{0.5,1.5\}\). For the centered
reference Gram eigenpairs \((\nu_{b,j},v_{b,j})\), ordered by descending
positive eigenvalue, define
\[
 \varphi_{b,j}(X)=\frac{\sqrt M}{\nu_{b,j}}k_{b,0}(X)^\top v_{b,j}.
\]
Here \(k_{b,0}\) uses reference centering, and the largest-magnitude entry
of each eigenvector is made positive. Retain 32 modes per scale. With seed
2026092909, independent Gaussian matrices of sizes \(6\times6\),
\(8\times6\), and \(12\times6\), drawn in that order, yield QR factors
\(Q_6,Q_8,Q_{12}\) with orthonormal columns; put \(C_j=Q_j/\sqrt6\).
Writing eigenfunction vectors as rows, the six-coordinate signal is
\[
\begin{aligned}
 v(X)&=\sqrt{0.3}\,(\varphi_{1.5,1}(X),\ldots,\varphi_{1.5,6}(X))C_6\\
 &\quad+\sqrt{0.7}\,(\varphi_{0.5,9}(X),\ldots,\varphi_{0.5,20}(X))C_{12},\\
 f(X)&=\{v(X)-\bar v_{\rm ref}\}/a_{\rm ref},\qquad
 a_{\rm ref}^2=M^{-1}\sum_{i=1}^M\|v(X_i^{\rm ref})-\bar v_{\rm ref}\|^2,\\
 Y&=\exp\!\left[\operatorname{svec}^{-1}
       \{f(X)+0.12Z/\sqrt6\}\right],\qquad Z\sim N(0,I_6).
\end{aligned}
\]
The noise is independent of \(X\), so
\(\exp\{\operatorname{svec}^{-1}f(X)\}\) is its conditional log-Euclidean
Fr\'echet mean. Unit reference signal energy is an empirical identity,
not population normalization. The reference and its eigenfunctions define
only the generating law and are never learner inputs or extra unlabeled
observations. Learners use training-only standardization. Input/noise streams
use NumPy SeedSequence \([2026092907,r,b,q]\), with \(r=0,\ldots,199\),
block \(b=0,1,2,3\) for training/validation/test/unlabeled and
\(q=0,1\) for input/noise. Unlabeled responses are not used.

\paragraph{Distribution responses (S2, S5).}
Let \(t_j=(j-1/2)/m\) and
\(q_j=m^{-1/2}F_{\operatorname{Beta}(A,B)}^{-1}(t_j)\). The parameters are
\[
\begin{array}{c|c|c|c|c}
 &m&A&B&\sigma_D\\ \hline
 S2&60&1.2+5.2\sigma(g_1+g_2)&1.2+5.0\sigma(-g_2+g_3)&0.07075\\
 S5&100&1+2\sigma(2(g_1+g_2))&1+2\sigma(2(-g_2+g_3))&0.020+0.020\sigma(g_3)
\end{array}
\]
Let \(P_{\rm mono}\) be ordinary equal-weight isotonic regression,
computed by the pool-adjacent-violators algorithm (PAVA), and let
\(\operatorname{clip}_{a,b}\) act coordinatewise. Define the implemented
correction
\[
 \mathcal C_{a,b}(v)
 :=\operatorname{clip}_{a,b}
       \{P_{\rm mono}(\operatorname{clip}_{a,b}(v))\}.
\]
The generated response is
\(\mathcal C_{0,m^{-1/2}}(q+\sigma_D\widetilde E/\sqrt m)\).
It belongs to
\(\mathcal Q_m=\{v:0\le v_1\le\cdots\le v_m\le m^{-1/2}\}\).
This clip-then-PAVA generating operation is not generally the Euclidean
projection of the uncorrected vector
onto \(\mathcal Q_m\).

\paragraph{BHV response space (S3, S6).}
We use unrooted trees on the five labeled taxa \(\{0,1,2,3,4\}\),
with all five pendant edges fixed at length one. Each nontrivial internal
split is identified by its two-element side \(ab\), representing
\(\{a,b\}\mid\{0,1,2,3,4\}\setminus\{a,b\}\).
There are ten such splits. Two are compatible exactly when their
two-element sides are disjoint; each of the 15 compatible pairs defines
an orthant \(\mathbb R_+^2\). Orthants are glued along equal-split boundary
rays, and their common origin is the unresolved star tree. This is the
two-dimensional internal-edge BHV space
\citep{billera2001geometry}. Its CAT(0) geometry gives unique geodesics
\citep{bridson1999metric} and unique probability-weighted Fr\'echet means
\citep{sturm2003probability}; these facts alone do not imply uniqueness
for signed regression weights.

The link of the origin is the Petersen graph: vertices are the ten splits,
and every compatible-pair edge has angular length \(\pi/2\).
For internal-edge radius \(r\) and link position \(q\),
\[
 d_{\rm BHV}^2((r,q),(r',q'))
 =r^2+(r')^2-2rr'\cos\{\min(\pi,d_{\rm link}(q,q'))\}.
\]
The origin has radius zero; link positions there are immaterial.
The ten-coordinate vector of split lengths is only a storage representation,
not a replacement Euclidean metric. Figure~\ref{fig:bhv-geometry} shows the
link and observed tree examples.

\begin{figure}[!htbp]
\centering
\includegraphics[width=\textwidth]{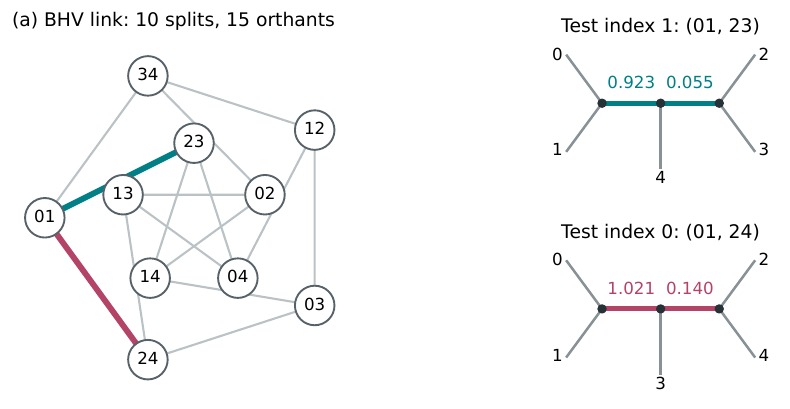}
\caption{Five-taxon BHV geometry and actual S6 tree examples.
Left: the link has ten split vertices and 15 orthant edges; the two S6
orthants are highlighted. Every link edge has intrinsic length \(\pi/2\),
irrespective of this schematic drawing. Right: the first test observation
of each S6 topology in replication 0, with its two internal lengths labeled.
Pendant lengths are one; tree layouts are schematic, not drawn to scale.}
\label{fig:bhv-geometry}
\end{figure}

\paragraph{Euclidean-to-BHV signal (S3).}
Write \(G(T,T',v)\) for the constant-speed intrinsic BHV geodesic at
fraction \(v\in[0,1]\). Let the four anchor trees \(A,B,C,D\) have split
pairs \((01,23),(01,24),(02,34),(04,12)\), respectively, with internal
lengths \((2,1)\) in the listed order. With the S3 scores defined above,
\[
\begin{aligned}
 T_a(X)&=G(A,B,\sigma(2g_1)),\qquad
 T_b(X)=G(C,D,\sigma(2g_2)),\\
 T_0(X)&=\{1+\tfrac12s(g_1+g_2)\}
           G(T_a(X),T_b(X),\sigma(2g_3)).
\end{aligned}
\]
Scalar multiplication rescales internal lengths only. The construction
can cross orthant boundaries, but does not assert positive mass on every
binary topology.

\paragraph{Distribution-to-BHV signal (S6).}
The observed input is the same native log-density distribution as in
S4 and S5, not an encoding of Euclidean covariates.
A fixed independent input-only reference of size \(M=2048\), generated
with seed 2026092241, defines 32 raw centered Nystr\"om eigenfunctions.
The reference quantile coordinates are standardized by their own means
and standard deviations (divisor \(M\)); scales at most \(10^{-10}\)
are replaced by one. The Gaussian bandwidth is 0.75 times the reference
median positive pairwise distance. Let \(K_0\) be its centered Gram matrix,
with descending positive eigenpairs \((\nu_j,v_j)\). The sign of each
\(v_j\) makes its largest-magnitude entry positive. For the correspondingly
reference-centered query kernel vector \(k_0(X)\), set
\[
 \varphi_j(X)=\frac{\sqrt M}{\nu_j}k_0(X)^\top v_j,\qquad
 z(X)=\tfrac12\varphi_1(X)+\sqrt{\frac{0.75}{8}}
                           \sum_{j=21}^{28}\varphi_j(X).
\]
These are raw eigenfunctions, without a \(\tanh\) transformation.
Put \(t(X)=0.6z(X)\) and \(h=0.75\). The signal tree has
\[
 \ell_{01}=h,\qquad \ell_{23}=t(X)_+,\qquad
 \ell_{24}=(-t(X))_+,
\]
with every other internal split length zero. Thus only one of the
incompatible splits 23 and 24 is positive. The signal lies on a
two-topology geodesic, and the observed responses below occupy a
two-orthant subset of BHV. The 75\% high-mode energy allocation is a
method-aligned signal specification, not a general distribution-to-BHV model.
On the fixed reference, \(M^{-1}\sum z^2=1\),
\(M^{-1}\sum t^2=0.36\), and
\(M^{-1}\sum(h^2+t^2)=0.9225\); these are empirical reference identities,
not population normalization or equal noise levels across settings.
Reference observations, eigenfunctions, and latent scores are never
provided to fitted methods as inputs or extra unlabeled observations.
Learners use the training-only standardization specified above.

\paragraph{BHV response noise and reproducibility.}
In both S3 and S6, for the two internal lengths of \(T_0(X)\), generate
\[
 \ell_j(Y)=\ell_j(T_0(X))\{1+0.5\tanh(E_j)\},\qquad
 E_1,E_2\stackrel{\rm iid}{\sim}N(0,1).
\]
The topology is unchanged by this noise. Conditional on \(X\), its
support lies in one convex orthant, and the symmetric multiplier has
mean one; hence \(T_0(X)\) is the conditional intrinsic Fr\'echet mean.
Input and response-noise streams use NumPy SeedSequence
\([2026092105,d,r,b,q]\): \(d=0,1\) for
Euclidean/distribution inputs, \(r=0,\ldots,199\),
\(b=0,1,2,3\) for training/validation/test/unlabeled blocks, and
\(q=0,1\) for input/noise. Unlabeled responses are not used.

\subsubsection{Prediction Rules and Comparators}

All methods use the same response rule for a given response.
For weights \(w_i(x)\) summing to one, log-Euclidean SPD predictions are
\(\exp(\sum_iw_i(x)\log Y_i)\). Distribution predictions are
\[
 \Pi_{[a,b]}^\uparrow\!\left(\sum_iw_i(x)q_i\right)
 =\operatorname{clip}_{[a,b]}
   \left\{\operatorname{PAVA}\!\left(\sum_iw_i(x)q_i\right)\right\}.
\]
Equal-weight PAVA is applied to the raw weighted vector before clipping.
This is the exact Euclidean projection onto the closed convex set of
bounded nondecreasing quantile vectors,
and hence the finite-grid Fr\'echet minimizer for unit-sum signed weights.
Bounds are \([0,m^{-1/2}]\) for S2/S5 and
\([1/\sqrt{80},45/\sqrt{80}]\) for HBN. This prediction rule is distinct
from the clipping/PAVA operation in the data-generating laws.

For BHV responses, the implementation searches all 15 orthants, including
boundary rays and the origin, even when the data occupy fewer orthants.
For signed weights with \(W=\sum_i w_i>0\), the objective in an orthant is
\[
 F(r,\theta)=Wr^2-2rS(\theta)+C,\qquad
 S(\theta)=\sum_i w_i r_i\cos\{\min(\pi,d_{\rm link}(\theta,q_i))\}.
\]
Its radial minimum is at \(r=(S(\theta)/W)_+\).
Between shortest-route changes and \(\pi\)-cutoffs, \(S(\theta)\) is
\(A\cos\theta+B\sin\theta+D\). The solver checks every segment endpoint
and interior maximum, comparing all orthants and the origin. This gives
a global minimum up to floating-point precision, with fixed
orthant/angle traversal resolving ties; it is not Euclidean split-vector
averaging or a positive-weight-only barycenter algorithm.
The specialized BHV5 metric and geodesics were checked against the
Owen--Provan implementation on all 225 orthant pairs (675 random pairs)
\citep{owen2011computing}; signed minimization was checked independently
by numerical search. These checks do not replace a generic numerical
error bound for arbitrary tree spaces.

RH uses concatenated supervised and pooled spectral features weighted by
\(\sqrt{1-\alphaH}\) and \(\sqrt{\alphaH}\), respectively. PH interpolates
the KRFR and SS-KRFR predictions, and TH interpolates KRFR and RH. The
interpolation is along response-space geodesics: log-matrix interpolation,
quantile interpolation, and intrinsic BHV geodesics.
Validation selects \(\alphaH\) and the prediction
mixing weight \(\eta\). Each endpoint is first tuned by
validation; PH and TH then search only their mixing weight between the
selected, already projected endpoint predictions. RH instead searches
its concatenation candidates and \(\alphaH\) together. These sequential
prediction-hybrid searches are not joint searches over every pair of
endpoint hyperparameters. The same validation set selects endpoint
hyperparameters and prediction weights; there is no additional split
between these stages. The common fixed-tuning guarantees of
Theorem~\ref{thm:th-stability} and Corollary~\ref{cor:ph-balanced-rate}
do not cover this full selection procedure.

External comparators are WCE-NLGFR \citep{Bhattacharjee2025},
Graph SS-FR with NW/kNN prediction \citep{qiu2024semi}, NW-FR/kNN-FR,
and random-forest weighted local constant (RF-LC) and local linear
(RF-LL) Fr\'echet regression \citep{qiu2024random}.
LL-FR and Global FR \citep{petersen2019frechet} are included only for
Euclidean covariates (S1--S3), using a common implementation across those
settings. RF-LL is also restricted to S1--S3; RF-LC uses the observed
input coordinates in all eight settings, including quantile coordinates
for distribution inputs. The numerical, graph and forest conventions below are
part of the evaluated procedures, not claims of identity with author software.

\subsubsection{Kernel and Tuning Candidates}
\label{app:tuning-grid}

Proposed-method candidates, forest candidates and WCE regularization are scored by validation
squared response-space loss; local and graph
smoothing parameters use training LOOCV. Global FR has no statistical
tuning parameter. Supervised fits are reused across \(U\), whereas pooled
representations are fitted at each \(U\). In HBN-B, distance-based methods'
input-mixture weights and the forest's input blocks are selected by validation.
No test response enters within-replication parameter selection.

For proposed simulation spectral fits, validation ties within \(10^{-10}\)
prefer smaller requested rank, compact cosine/Gaussian/Laplacian/Mat\'ern
kernel order, bandwidth multiplier closer to one, then larger \(\xi\).
RH first prefers \(\alphaH\) closer to one. The HBN backend uses effective
rank, lexical kernel order, bandwidth multiplier closer to one, then
smaller \(\xi\); its RH key is
\((|\alphaH-1|,s_{\rm eff},\text{kernel},\xi)\).
PH/TH prefer smaller \(\eta\). Remaining ties retain candidate
order. WCE validation and external LOOCV ties retain the first grid candidate.

The proposed spectral kernels are Gaussian, Laplacian, Mat\'ern-\(3/2\),
and compact cosine in simulations; HBN uses the first three.
For distance \(r=\|x-x'\|\) in the specified input representation,
\[
 K_G(r;h)=e^{-r^2/(2h^2)},\quad K_L(r;h)=e^{-r/h},\quad
 K_M(r;h)=(1+\sqrt3r/h)e^{-\sqrt3r/h}.
\]
The bandwidth-free compact-cosine candidate is
\[
 K_C(x,x')=\frac1p\sum_{j=1}^p\sum_{\ell=1}^4
 a_\ell\cos(\pi\ell c(x_j))\cos(\pi\ell c(x'_j)),
 \quad (a_1,a_2,a_3,a_4)=(1,0.5,0.3,0.2),
\]
where \(c(v)=\min(1,\max(0,v))\).
For radial spectral kernels, \(h=c_{\rm bw}\) times the median positive
pairwise distance within the representation sample: training covariates
for KRFR and the applicable training or pooled sample for semi-supervised
representations. WCE has the separate bandwidth rule below.

The proposed-method grids are
\[
\begin{aligned}
 &\mathcal B=\{0.5,0.75,1,1.5,2\},\qquad
 \mathcal S=\{8,16,32,40,48\},\\
 &\mathcal L=\{10^{-4+j/2}:j=0,\ldots,8\}\cup\{0.05,0.5\},\\
 &\mathcal A=\{0,0.1,0.25,0.5,0.75,0.9,1\}.
\end{aligned}
\]
Spectral models use \(c_{\rm bw}\in\mathcal B\), \(s\in\mathcal S\),
and \(\xi>0\) with \(n\xi\in\mathcal L\), where \(n\) is the labeled
fitting size, not the pooled size. Mixing weights
\(\alphaH,\eta\) belong to \(\mathcal A\).
Rank is capped by the available positive empirical eigenvalues.
For \(D=[\mathbf1,F]\), the labeled feature fit solves
\[
 \{D^\top D+n\xi\operatorname{diag}(0,I)\}\widehat\beta=D^\top Z.
\]
The intercept is unpenalized, as in Section~\ref{app:full-procedures}.
WCE's regularization and the numerical stabilization of local/global FR
are distinct from this feature-ridge grid.

\paragraph{Feature construction and numerical conventions.}
For KRFR, the representation sample contains training covariates only;
for pure SS-KRFR it contains training plus the current unlabeled prefix.
Validation and test observations are not representation rows.
Using the centered Gram eigenpairs \((\mu_j,v_j)\), query features are
\(k_0(x)^\top v_j/\sqrt{\mu_j}\).
Since \(\mu_j=N\widehat\lambda_j\) for a pooled sample, this equals
the RKHS normalization in Section~\ref{sec:SS-KRFR}, with no additional
sample-size rescaling. RH uses two separately centered maps with
the same candidate kernel, multiplier and nominal rank, and the centered
labeled covariance includes their cross-block terms.

The stored backends retain Gram eigenvalues greater than \(10^{-10}\)
and cap each requested rank by the retained positive rank.
The HBN backend also divides kernels by their mean diagonal;
the radial candidates here have diagonal one in exact arithmetic.
If no eigenvalue is retained, the numerical backend installs a
one-coordinate fallback rather than applying the asymptotic complete-block
convention. Iterative eigensolvers use tolerance \(10^{-8}\) with a dense
symmetric fallback; solves use the backend's direct/pseudoinverse convention.
These safeguards and validation-selected ranks are computational rules,
not verification of the asymptotic assumptions.

\paragraph{Theory scope.}
The main bounds use a fixed kernel and tuning, shared-law
sampling, and an exact or controlled approximate response minimizer.
Training-standardized coordinates, empirical bandwidths, finite-rank
compact-cosine candidates and sequential validation are not automatically
covered by those bounds. Exact finite-grid distribution projection does
not establish the remaining assumptions. The finite-sample rank grids do
not establish the stated asymptotic tuning regimes. HBN evaluates random partitions
of one fixed cohort, with a separate release-based unlabeled pool;
it is not an iid random-labeling population experiment.

\paragraph{WCE-NLGFR.}
For the Gaussian training Gram matrix \(K\), put
\(Q=I-\mathbf1\mathbf1^\top/n\) and \(G=QKQ\). The RKHS
training-feature centering and two resolvents in
\citet{Bhattacharjee2025}, Sections 4.2--4.3, give
\[
\begin{aligned}
 d_i(x)&=k(X_i,x)-n^{-1}\sum_j k(X_i,X_j),\\
 c(x)&=Q(K+\epsilon I)^{-1}d(x),\\
 w(x)&=n^{-1}\mathbf1+G(G+\epsilon I)^{-1}c(x).
\end{aligned}
\]
The Gaussian squared bandwidth is
\(\binom{n}{2}^{-1}\sum_{i<j}\|X_i-X_j\|^2\), including zero distances.
There is no spectral truncation. The grid is
\(\epsilon\in\{10^{-6},10^{-5},\ldots,10^{-1}\}\) in simulations and
\(\{10^{-5},10^{-4},\ldots,10^{-1}\}\) in HBN. Constant training inputs
use the response mean. In \citet{Bhattacharjee2025}, equation (14) selects
regularization by the training GCV criterion
\[
 \frac{n^{-1}\sum_i d_\cY^2(Y_i,\widehat m_\epsilon(X_i))}
 {[1-\{\operatorname{tr}(QG(G+\epsilon I)^{-1})+1\}/n]^2}.
\]
Our evaluated variant retains the weights above but instead selects
\[
 \widehat\epsilon=\underset{\epsilon\in\mathcal E}{\arg\min}\;
 \frac1{n_{\rm val}}\sum_{i\in I_{\rm val}}
 d_\cY^2\{Y_i,\widehat m_\epsilon(X_i)\},
\]
where each \(\widehat m_\epsilon\) is fitted only on training observations
and uses the specified response minimizer, including bounded isotonic
projection for distributions. Exact ties retain the first grid value;
invalid candidates are not eligible and no valid candidate means failure.
There is no training-plus-validation refit, bandwidth change, spectral
truncation, grid expansion, added null candidate or weight renormalization.
For HBN-B, this selection occurs within each of the seven \(\omega\)
candidates, followed by validation-only selection of \(\omega\).
The variant uses the same held-out prediction criterion as the proposed
methods. GCV instead scores fitted training responses with a trace penalty;
the printed trace is generally not the exact finite-\(\epsilon\) hat trace
of the two-resolvent weights. These distinctions motivate checking tuning
sensitivity, not a claim that the source estimator or its GCV is invalid.
Unless explicitly marked GCV, WCE-NLGFR in all performance tables denotes
this validation-tuned variant, not an exact reproduction of the source
paper's tuning procedure. All other comparator tuning rules remain unchanged.

\paragraph{Local smoothers.}
NW-FR uses the compact Epanechnikov weights
\(w_i(x)\propto(1-\|X_i-x\|^2/h^2)_+\).
Let \(h_0\) be the median distance from each labeled training input to its
nearest other labeled input. NW searches \(h=5^{j/10}h_0\),
\(j=-10,\ldots,20\), giving 31 bandwidths from \(0.2h_0\) to \(25h_0\).
kNN-FR uses uniform neighbor weights with
\[
 \mathcal K_n=\{1,\ldots,10,15,20,30,40,50,75,100,
                 \lfloor n/2\rfloor,n-1\}\cap\{1,\ldots,n-1\}.
\]
After removing duplicates, this gives 18 candidates for \(n=150\)
and 19 for \(n=284\), in increasing order.
Training LOOCV excludes the held-out label; preprocessing remains fixed.
Distance ties use stable training-index order. If a positive finite
bandwidth gives all-zero weights, all eligible labels receive uniform
weight, selecting their Fr\'echet mean. This explicit empty-support
convention is recorded as \texttt{ZERO\_WEIGHT\_TIE}; it does not require
duplicate observations or bootstrap resampling. A zero or nonfinite
bandwidth is instead an invalid candidate.

For S1--S3, Global FR uses
\[
 w_i(x)=n^{-1}\{1+(X_i-\bar X)^\top\widehat\Sigma^{-1}(x-\bar X)\},
 \qquad \widehat\Sigma=n^{-1}\sum_i(X_i-\bar X)(X_i-\bar X)^\top.
\]
LL-FR uses the intercept weights of radial Epanechnikov local weighted
least squares. Its bandwidth is \(h=c\) times the median positive
pairwise training distance, with
\(c\in\{0.25,0.35,0.5,0.75,1,1.5,2,3,4,6,8\}\), selected by training
LOOCV. This multivariate radial application and bandwidth grid are
specified conventions, rather than the scalar local-regression setting
alone in \citet{petersen2019frechet}.

Both covariance solves first attempt unregularized pivoted QR. Full-rank
solutions with relative residual at most \(10^{-8}\) are accepted.
Otherwise a slope-only ridge uses
\[
 \delta=\max\{0,(b-10^{10}a)/(10^{10}-1)\},
\]
rounded upward, where \(a,b\) are extreme covariance eigenvalues and
\(a=0\) under structural rank deficiency. Acceptance requires
\(\delta/b\le10^{-8}\) and stabilized relative residual at most \(10^{-5}\).
The full ridge resolvent retains all null directions; no rank projection
substitutes for it. Structural and numerical cases are recorded as
\texttt{STRUCTURAL\_RIDGE} and \texttt{NUMERICAL\_RIDGE}, not unregularized
fits. Failed checks remain explicit prediction failures.

\paragraph{Graph smoothers and boundary conventions.}
Synthetic Graph SS-FR uses pooled training and current unlabeled inputs
with radius
\[
 \rho=1.2\max_i\min_{j\ne i}\|X_i-X_j\|.
\]
Edges carry input distances. A query attaches individually to vertices
within \(\rho\); shortest paths give distances to labeled vertices.
For HBN, the fitting graph is the smallest connected symmetric union-kNN
graph of degree at least four, chosen from inputs alone. Each query
attaches to its four nearest fitting vertices. This connectivity rule is
an explicit adaptation of the graph construction in \citet{qiu2024semi};
the four-neighbor query rule follows their Remark 2.2.

Graph NW uses the same 31 Epanechnikov bandwidths, with \(h_0\) now the
median nearest-other labeled shortest-path distance. Graph kNN uses
\(k\in\mathcal K_n\). LOOCV retains all fitting covariate vertices and hides
only the held-out response; graph and input scales are not refitted.
Validation and test observations never supply fitting vertices or labels
for other queries. Graph NW uses the same all-zero-weight convention
as NW-FR.

In synthetic Graph kNN, when fewer than the selected \(k\) labels are
reachable, prediction uses the ambient nearest eligible labeled training
response, with stable index ties and self-label exclusion in LOOCV.
The selected \(k\) is retained if the original graph LOOCV has a valid
candidate. Only when all candidates lack valid LOOCV coverage are
they evaluated with this completion rule. This labeled-training 1NN
completion is not a response borrowed from another test observation.
It is recorded as \texttt{AMBIENT\_TRAIN\_1NN\_FALLBACK}.
HBN's connected-graph adaptation is reported separately.

\paragraph{Random-forest weighted Fr\'echet regression.}
RF-LC and RF-LL use the response-aware tree construction in
\citet[Appendix C.1]{qiu2024random}, with 50 trees fixed in every fit.
Each tree draws \(n\) training indices with replacement and uses their
distinct in-bag observations. At each node, a sampled set of coordinates
supplies one scalar two-means split per coordinate; the split minimizing
the sum of child Fr\'echet sums of squares is chosen, provided each child
contains at least five distinct observations. Splitting stops without a
relative improvement above \(10^{-12}\). Our scalar two-means solver finds
the global optimum rather than using randomized R \texttt{kmeans}; this
and the minimum-child constraint are specified implementation conventions.

Validation searches depths \(d\in\{3,5,7\}\) and numbers of candidate
coordinates \(m_{\rm try}\in\{\lceil p/3\rceil,\lceil0.8p\rceil\}\),
where \(p\) counts nonconstant training coordinates. Depth candidates
truncate common depth-seven trees. Exact validation ties prefer smaller
depth, smaller \(m_{\rm try}\), then input order. RF-LC averages uniform
within-leaf weights across trees; RF-LL applies the unregularized local
linear formula to these weights, using relative SVD cutoff
\(\sqrt{\epsilon_{\rm machine}}\). Rank-deficient RF-LL candidates fail
explicitly, without added ridge regularization. Both methods use the
response-space minimizers above, with no training-plus-validation refit.

For HBN-B, RF-LC selects among quantile-only, joint, and SPD-only inputs,
corresponding to \(\omega=0,0.5,1\), in that order. Positive rescaling
of input blocks leaves axis-aligned partitions unchanged in exact
arithmetic, so other interior weights do not define distinct candidates.
NumPy PCG64 streams use SeedSequence
\([2026100301,j,r,t,v]\), where \(j=0,\ldots,7\) indexes
S1--S6, HBN-A and HBN-B, \(r=0,\ldots,199\) is the replication,
\(t=0,\ldots,49\) is the tree, and \(v=0\) generates its bootstrap.
Splitting nodes use binary-heap indices \(v\ge1\). Streams are shared
across depth candidates and HBN-B input choices; the data-generation
and cohort-partition seeds are the same as for all other methods.

All methods use a common finite unit-sum acceptance tolerance
\(|\sum_i w_i-1|\le10^{-6}\). Raw signed weights are not renormalized
to conceal numerical errors. Prediction failures invalidate the
corresponding mean rather than being silently omitted.

\subsubsection{Evaluation and Summary Conventions}

For each fitted model, the test metrics are
\[
 L_r=\frac1{n_{\rm test}}\sum_{i\in I_{{\rm test},r}}
 d_\cY^2(Y_i,\widehat m_r(X_i)),\qquad
 R_r^2=1-\frac{\sum_i d_\cY^2(Y_i,\widehat m_r(X_i))}
 {\sum_i d_\cY^2(Y_i,\widehat m_{{\rm null},r})},
\]
where the sums defining \(R_r^2\) run over the test sample and
\(\widehat m_{{\rm null},r}\) is the training-response Fr\'echet mean.
The SPD loss uses squared Frobenius distance between log matrices,
the distribution loss uses scaled-quantile squared Euclidean distance,
and the BHV loss uses \(d_{\rm BHV}^2\).
BHV null predictors use the same intrinsic solver.
The prediction target is evaluated
against noisy test responses, not the noise-free generating signal.

For simulations and HBN splits, reported values are
\(\overline L=R^{-1}\sum_rL_r\) and \(\overline{R^2}=R^{-1}\sum_rR_r^2\),
with \(R=200\). For a replication-level metric \(Z_r\), the Monte Carlo
standard error is explicitly
\[
 \operatorname{MCSE}(\bar Z)=
 \frac{\operatorname{sd}(Z_1,\ldots,Z_R)}{\sqrt R},
\]
using sample standard deviation (divisor \(R-1\)).
The approximate 95\% Monte Carlo precision interval, denoted MC95, is
\(\bar Z\pm1.96\operatorname{MCSE}(\bar Z)\).
By contrast, E95 denotes the linearly interpolated empirical
2.5th--97.5th percentiles of replication losses. E95 describes variability
across replications or splits, not uncertainty in the mean.

For paired loss vectors \(A_r,B_r\), define the percent gain by the
ratio of means
\[
 \widehat G=100(1-\bar A/\bar B),\qquad
 \psi_r=-\frac{100(A_r-\bar A)}{\bar B}
        +\frac{100\bar A(B_r-\bar B)}{\bar B^2}.
\]
The paired delta-method MCSE is
\(\operatorname{sd}(\psi_1,\ldots,\psi_R)/\sqrt R\), in percentage points;
its MC95 interval is \(\widehat G\pm1.96\operatorname{MCSE}(\widehat G)\).
This is not the mean of per-replication percentage gains.
Paired mean loss reductions use the MCSE of \(B_r-A_r\).
Every comparison uses exactly the same 200 replication IDs.

Within each replication, tuning follows the method-specific training or
validation criterion above. Main Tables~\ref{tab:main-object-results}
and~\ref{tab:hbn-main-results} report all applicable methods at the
common maximal budget; neither methods nor \(U\) are selected by test
loss for those tables. MCSE is computed separately for each method's
200 losses. SS-gain always compares the same method at \(U\) and
\(U=0\), using the ratio of mean losses, not averaged per-run ratios.

The appendix tables explicitly labeled descriptive best instead select
\(U\), and where stated a family representative, by lowest aggregate
test loss, with smaller \(U\) resolving exact ties. Their external gains
compare with the lowest mean loss among allowed external method/\(U\)
pairs. These exploratory minima are not validation-selected procedures;
their MC95 intervals are not adjusted for this selection. A \(U=0\)
tie has gain zero and is not an unlabeled-data benefit.
For HBN, MCSE and MC95 quantify random-split Monte Carlo precision
conditional on the fixed cohort, not population confidence intervals
or uncertainty across independent cohorts. Loss magnitudes remain comparable only
within a setting.

\subsubsection{WCE Tuning Sensitivity}
\label{app:wce-tuning-sensitivity}

Table~\ref{tab:app-wce-selection-sensitivity} compares GCV and validation
selection on identical data and replication IDs, with all non-WCE fits
held fixed. The validation criterion is evaluated after the same response
minimizer; neither test responses nor a training-plus-validation refit
enter the choice. HBN-B also selects its input mixture using validation.

For HBN-A, mean WCE loss decreases from 14.0332 to 6.9164 (50.71\%),
and mean \(R^2\) rises from -0.530 to 0.261. HBN-B improves from 7.1461
to 6.8781 (3.75\%), with mean \(R^2\) rising from 0.233 to 0.265.
Validation also improves S1, but is not universally better: its mean loss
is slightly higher in S2 and S5 and increases by 1.43\% and 3.83\% in
S3 and S6. The S2/S5 tuning-difference MC95 intervals include zero.
All settings use the validation variant in the primary comparison, rather
than choosing between GCV and validation by test performance.

The SPD sensitivity row uses a separate distribution-to-SPD signal,
with the S5 scores \(g_1,g_2,g_3\) and
\[
 H_{\rm sens}=\begin{pmatrix}
 g_1+\tfrac12g_2&\tfrac14(g_1-g_3)&\tfrac14(g_2+g_3)\\
 \tfrac14(g_1-g_3)&-g_1+\tfrac12g_2&\tfrac14(g_1+g_2)\\
 \tfrac14(g_2+g_3)&\tfrac14(g_1+g_2)&\tfrac12g_3
 \end{pmatrix}.
\]
Its response is
\(Y=\exp\{H_{\rm sens}+(0.050+0.050\sigma(g_2))
\widetilde E_{\rm sym}/\sqrt3\}\), with the same distribution input law
and labeled sample sizes. There, validation-WCE loss is 0.8395 versus
TH's 0.8576 at \(U=1000\): TH's external gain is -2.15\%
(MCSE 1.80 percentage points; MC95 [-5.67\%, 1.37\%]).
Its SS-gain is nevertheless 12.45\% (MCSE 1.35 percentage points;
MC95 [9.81\%, 15.10\%]). This signal is distinct from multiscale S4;
no paired GCV-versus-validation effect is claimed for S4.

\begin{table}[!htbp]
\centering\scriptsize
\setlength{\tabcolsep}{2pt}
\caption{WCE tuning sensitivity on 200 matched datasets or partitions. Positive reduction favors validation; negative favors GCV. Reductions are percentages; MCSEs are percentage points. S2/S5 loss units: \(10^{-3}\). SPD sensitivity is separate from multiscale S4; S4 has no GCV-versus-validation comparison.}
\begin{tabular*}{\textwidth}{@{\extracolsep{\fill}}lrrrrr@{}}
\toprule
Setting & GCV loss & Validation loss & \(R^2\): GCV / val. & Loss reduction (MCSE) & MC95 \\
\midrule
S1 & 2.5723 & 2.2029 & 0.781 / 0.812 & 14.36 (0.28) & [13.81, 14.91] \\
S2 & 5.6916 & 5.7063 & 0.587 / 0.586 & -0.26 (0.22) & [-0.68, 0.16] \\
S3 & 0.2707 & 0.2746 & 0.483 / 0.476 & -1.43 (0.29) & [-2.01, -0.86] \\
SPD sensitivity & 1.1378 & 0.8395 & 0.912 / 0.935 & 26.21 (1.68) & [22.93, 29.50] \\
S5 & 1.1960 & 1.2007 & 0.903 / 0.903 & -0.39 (0.83) & [-2.02, 1.24] \\
S6 & 0.1702 & 0.1767 & 0.651 / 0.637 & -3.83 (0.77) & [-5.33, -2.32] \\
HBN-A & 14.0332 & 6.9164 & -0.530 / 0.261 & 50.71 (1.75) & [47.28, 54.15] \\
HBN-B & 7.1461 & 6.8781 & 0.233 / 0.265 & 3.75 (0.72) & [2.35, 5.16] \\
\bottomrule
\end{tabular*}
\label{tab:app-wce-selection-sensitivity}
\end{table}

\subsection{Descriptive Cross-Setting Summaries}
\label{app:descriptive-summaries}

Tables~\ref{tab:app-descriptive-sim-summary} and
\ref{tab:app-descriptive-hbn-summary} retain the exploratory family minima.
They do not evaluate a validation-selected method/\(U\) rule and are
separate from the common-budget comparisons in the main text.
Entries report mean test loss / mean \(R^2\), with \(U\) specified for
semi-supervised families. Advantage is the selected proposed method's
percentage loss reduction over the best allowed external method/\(U\)
pair, computed before rounding.

\begin{table}[!htbp]
\centering\scriptsize
\setlength{\tabcolsep}{2pt}
\caption{Descriptive test-best simulation summary; settings are defined in Table~\ref{tab:app-sim-settings}. Family representatives minimize mean test loss over methods/\(U\), including zero. Bold marks the lowest mean loss per setting. Local FR compares NW-FR, kNN-FR and, for Euclidean inputs, LL-FR; Forest FR compares RF-LC and, for Euclidean inputs, RF-LL.}
\begin{tabular*}{\textwidth}{@{\extracolsep{\fill}}lrrrrrrrr@{}}
\toprule
Setting & \shortstack{Ours\\(best $U$)} & \shortstack{Graph SS-FR\\(best $U$)} & \shortstack{WCE-\\NLGFR} & \shortstack{Forest\\FR} & \shortstack{Local\\FR} & \shortstack{Global\\FR} & SS-gain & Advantage \\
\midrule
S1 & \shortstack{TH, $U=2000$\\\textbf{1.4981 / 0.872}} & \shortstack{$U=2000$\\6.9672 / 0.406} & 2.2029 / 0.812 & \shortstack{RF-LL\\1.9839 / 0.831} & 2.4096 / 0.794 & 2.5607 / 0.781 & 37.83\% & 24.48\% \\
S2 & \shortstack{PH, $U=2000$\\0.00499 / 0.638} & \shortstack{$U=1000$\\0.00919 / 0.334} & 0.00571 / 0.586 & \shortstack{RF-LC\\\textbf{0.00489 / 0.646}} & 0.00586 / 0.575 & 0.00609 / 0.559 & 8.49\% & -2.05\% \\
S3 & \shortstack{TH, $U=1000$\\\textbf{0.22647 / 0.567}} & \shortstack{$U=500$\\0.38804 / 0.261} & 0.27461 / 0.476 & \shortstack{RF-LC\\0.23076 / 0.560} & 0.27117 / 0.482 & 0.27883 / 0.468 & 8.03\% & 1.86\% \\
S4 & \shortstack{TH, $U=2000$\\\textbf{0.0253 / 0.976}} & \shortstack{$U=0$\\0.1951 / 0.815} & 0.0615 / 0.942 & \shortstack{RF-LC\\0.4978 / 0.529} & 0.1929 / 0.817 & -- & 25.54\% & 58.87\% \\
S5 & \shortstack{TH, $U=2000$\\\textbf{0.00108 / 0.912}} & \shortstack{$U=2000$\\0.00179 / 0.854} & 0.00120 / 0.903 & \shortstack{RF-LC\\0.00156 / 0.873} & 0.00182 / 0.852 & -- & 5.06\% & 9.70\% \\
S6 & \shortstack{PH, $U=2000$\\\textbf{0.15457 / 0.683}} & \shortstack{$U=0$\\0.25921 / 0.471} & 0.17670 / 0.637 & \shortstack{RF-LC\\0.34887 / 0.289} & 0.25921 / 0.471 & -- & 9.99\% & 12.53\% \\
\bottomrule
\end{tabular*}
\label{tab:app-descriptive-sim-summary}
\end{table}

\begin{table}[!htbp]
\centering\tiny
\setlength{\tabcolsep}{2pt}
\caption{Descriptive test-best HBN summary. Family representatives minimize mean test loss over methods/\(U\), including zero. Bold marks the lowest mean loss per task. Local FR compares NW-FR and kNN-FR; Forest FR is RF-LC.}
\begin{tabular*}{\textwidth}{@{\extracolsep{\fill}}lrrrrrrr@{}}
\toprule
Setting & \shortstack{Ours\\(best $U$)} & \shortstack{Graph SS-FR\\(best $U$)} & WCE-NLGFR & Forest FR & Local FR & SS-gain & Advantage \\
\midrule
HBN-A & \textbf{PH, 6.2650 / 0.328 / 463} & 7.2566 / 0.222 / 0 & 6.9164 / 0.261 & RF-LC, 7.1214 / 0.238 & 7.2944 / 0.219 & 6.44\% & 9.42\% \\
HBN-B & \textbf{PH, 6.2313 / 0.333 / 463} & 7.2296 / 0.229 / 0 & 6.8781 / 0.265 & RF-LC, 7.1099 / 0.242 & 7.3169 / 0.220 & 6.03\% & 9.40\% \\
\bottomrule
\end{tabular*}
\label{tab:app-descriptive-hbn-summary}
\end{table}

\FloatBarrier

\subsection{Simulation Results}
\label{app:simulation-results}

The common-budget results in Table~\ref{tab:main-object-results} show
pure SS-KRFR gains of -1.70\%, 7.79\%, 6.37\%, 22.86\%, -10.43\%
and 8.72\% in S1--S6, respectively. At the same \(U=2000\), PH's
gains are 8.87\%, 8.49\%, 7.94\%, 24.37\%, 0.68\% and 9.99\%;
TH's are 37.83\%, 7.39\%, 7.76\%, 25.54\%, 5.06\% and 9.47\%.
Positive values indicate lower loss than the same method at \(U=0\).
At this common \(U\), PH has lower mean loss than pure SS-KRFR in all
six settings, and TH in five. The exception is S2, where TH's loss is
\(0.0050514\), versus \(0.0050297\) for pure SS-KRFR. These are
comparisons of means, not claims of statistically significant differences.
Thus the pooled estimator and its hybrids have different empirical
patterns; the fixed-tuning SS-KRFR bounds do not justify the full
validation-selected hybrid procedure.

The remaining tables in this subsection are descriptive best-\(U\)
summaries or explicitly labeled budget sweeps and paired comparisons.
For each best-\(U\) summary, each method is shown at its descriptively best \(U\), using the same
200 generated samples. Parentheses explicitly report MCSE.
Mean precision (MC95) and replication variability (E95) are distinct
columns. Graph NW and Graph kNN abbreviate the two Graph SS-FR variants.

\begin{table}[!htbp]
\centering\scriptsize
\setlength{\tabcolsep}{2pt}
\caption{S1: Euc.\(\to\)SPD. Each method at its descriptive best \(U\), \(R=200\). Parentheses are MCSE; MC95 is the approximate 95\% interval for the mean; E95 is the empirical 2.5--97.5\% loss range.}
\begin{tabular*}{\textwidth}{@{\extracolsep{\fill}}lrrrrrr@{}}
\toprule
Method & \(U\) & Loss (MCSE) & Loss MC95 & Loss E95 & \(R^2\) (MCSE) & \(R^2\) MC95 \\
\midrule
TH & 2000 & \textbf{1.4981 (0.0110)} & [1.4766, 1.5196] & [1.2520, 1.8625] & 0.8721 (0.0009) & [0.8703, 0.8739] \\
PH & 1000 & 2.1635 (0.0203) & [2.1237, 2.2034] & [1.6757, 2.6942] & 0.8154 (0.0017) & [0.8121, 0.8187] \\
RH & 500 & 1.6079 (0.0109) & [1.5864, 1.6293] & [1.3532, 1.9478] & 0.8627 (0.0009) & [0.8610, 0.8645] \\
SS-KRFR & 500 & 2.3920 (0.0165) & [2.3598, 2.4243] & [1.9680, 2.8175] & 0.7959 (0.0013) & [0.7933, 0.7985] \\
KRFR & 0 & 2.4098 (0.0142) & [2.3819, 2.4378] & [1.9895, 2.7503] & 0.7943 (0.0012) & [0.7919, 0.7966] \\
WCE-NLGFR & 0 & 2.2029 (0.0108) & [2.1818, 2.2240] & [1.9256, 2.4834] & 0.8120 (0.0009) & [0.8102, 0.8137] \\
RF-LC & 0 & 3.2938 (0.0219) & [3.2508, 3.3368] & [2.7153, 3.9160] & 0.7191 (0.0017) & [0.7158, 0.7223] \\
RF-LL & 0 & 1.9839 (0.0099) & [1.9644, 2.0034] & [1.7346, 2.2774] & 0.8306 (0.0009) & [0.8287, 0.8324] \\
Graph NW & 2000 & 6.9672 (0.0375) & [6.8938, 7.0407] & [6.0138, 7.9989] & 0.4060 (0.0024) & [0.4013, 0.4106] \\
Graph kNN & 0 & 7.0018 (0.0366) & [6.9299, 7.0736] & [6.0595, 8.0033] & 0.4029 (0.0023) & [0.3984, 0.4075] \\
LL-FR & 0 & 2.4096 (0.0092) & [2.3916, 2.4276] & [2.1711, 2.6494] & 0.7942 (0.0008) & [0.7926, 0.7959] \\
kNN-FR & 0 & 7.0011 (0.0366) & [6.9293, 7.0729] & [6.0595, 8.0033] & 0.4030 (0.0023) & [0.3984, 0.4076] \\
NW-FR & 0 & 7.0267 (0.0367) & [6.9547, 7.0986] & [6.0863, 8.0375] & 0.4009 (0.0023) & [0.3964, 0.4054] \\
Global FR & 0 & 2.5607 (0.0095) & [2.5422, 2.5792] & [2.3157, 2.8113] & 0.7813 (0.0009) & [0.7796, 0.7831] \\
\bottomrule
\end{tabular*}
\label{tab:app-all-methods-s1}
\label{tab:app-all-methods-best-s1-s3}
\end{table}

\begin{table}[!htbp]
\centering\scriptsize
\setlength{\tabcolsep}{2pt}
\caption{S2: Euc.\(\to\)Dist. Each method at its descriptive best \(U\), \(R=200\). Parentheses are MCSE; MC95 is the approximate 95\% interval for the mean; E95 is the empirical 2.5--97.5\% loss range. All loss entries (including MCSE and intervals) are in units of \(10^{-3}\).}
\begin{tabular*}{\textwidth}{@{\extracolsep{\fill}}lrrrrrr@{}}
\toprule
Method & \(U\) & Loss (MCSE) & Loss MC95 & Loss E95 & \(R^2\) (MCSE) & \(R^2\) MC95 \\
\midrule
TH & 2000 & 5.0514 (0.0355) & [4.9818, 5.1210] & [4.2062, 6.1380] & 0.6337 (0.0027) & [0.6283, 0.6390] \\
PH & 2000 & 4.9917 (0.0307) & [4.9316, 5.0517] & [4.2190, 5.9940] & 0.6381 (0.0024) & [0.6334, 0.6427] \\
RH & 2000 & 5.1297 (0.0417) & [5.0480, 5.2113] & [4.2361, 6.4465] & 0.6280 (0.0032) & [0.6217, 0.6342] \\
SS-KRFR & 2000 & 5.0297 (0.0326) & [4.9657, 5.0937] & [4.2328, 6.1654] & 0.6352 (0.0025) & [0.6303, 0.6402] \\
KRFR & 0 & 5.4546 (0.0378) & [5.3805, 5.5287] & [4.6443, 6.5447] & 0.6045 (0.0028) & [0.5990, 0.6101] \\
WCE-NLGFR & 0 & 5.7063 (0.0269) & [5.6535, 5.7591] & [4.9885, 6.4530] & 0.5863 (0.0021) & [0.5822, 0.5904] \\
RF-LC & 0 & \textbf{4.8914 (0.0316)} & [4.8294, 4.9534] & [4.1605, 5.9029] & 0.6455 (0.0023) & [0.6410, 0.6500] \\
RF-LL & 0 & 5.8649 (0.0390) & [5.7883, 5.9414] & [4.8201, 6.9934] & 0.5747 (0.0030) & [0.5689, 0.5806] \\
Graph NW & 1000 & 9.1937 (0.0474) & [9.1008, 9.2865] & [8.0802, 10.4860] & 0.3342 (0.0029) & [0.3285, 0.3399] \\
Graph kNN & 0 & 9.2013 (0.0453) & [9.1126, 9.2900] & [8.0956, 10.6203] & 0.3335 (0.0029) & [0.3278, 0.3391] \\
LL-FR & 0 & 5.8647 (0.0298) & [5.8063, 5.9230] & [5.0343, 6.7450] & 0.5748 (0.0024) & [0.5701, 0.5794] \\
kNN-FR & 0 & 9.1966 (0.0441) & [9.1101, 9.2831] & [8.0970, 10.4402] & 0.3338 (0.0028) & [0.3283, 0.3393] \\
NW-FR & 0 & 9.2243 (0.0497) & [9.1268, 9.3218] & [8.0803, 10.6870] & 0.3320 (0.0031) & [0.3260, 0.3380] \\
Global FR & 0 & 6.0875 (0.0310) & [6.0268, 6.1481] & [5.3107, 6.9504] & 0.5586 (0.0024) & [0.5538, 0.5634] \\
\bottomrule
\end{tabular*}
\label{tab:app-all-methods-s2}
\end{table}

\begin{table}[!htbp]
\centering\scriptsize
\setlength{\tabcolsep}{2pt}
\caption{S3: Euc.\(\to\)BHV. Each method at its descriptive best \(U\), \(R=200\). Parentheses are MCSE; MC95 is the approximate 95\% interval for the mean; E95 is the empirical 2.5--97.5\% loss range.}
\begin{tabular*}{\textwidth}{@{\extracolsep{\fill}}lrrrrrr@{}}
\toprule
Method & \(U\) & Loss (MCSE) & Loss MC95 & Loss E95 & \(R^2\) (MCSE) & \(R^2\) MC95 \\
\midrule
TH & 1000 & \textbf{0.2265 (0.0016)} & [0.2233, 0.2297] & [0.1912, 0.2822] & 0.5672 (0.0032) & [0.5610, 0.5734] \\
PH & 2000 & 0.2267 (0.0016) & [0.2235, 0.2299] & [0.1909, 0.2770] & 0.5670 (0.0031) & [0.5610, 0.5730] \\
RH & 1000 & 0.2303 (0.0018) & [0.2267, 0.2339] & [0.1912, 0.2953] & 0.5599 (0.0036) & [0.5529, 0.5669] \\
SS-KRFR & 2000 & 0.2306 (0.0019) & [0.2269, 0.2342] & [0.1901, 0.2899] & 0.5596 (0.0036) & [0.5526, 0.5666] \\
KRFR & 0 & 0.2462 (0.0019) & [0.2426, 0.2499] & [0.2064, 0.3056] & 0.5299 (0.0033) & [0.5234, 0.5363] \\
WCE-NLGFR & 0 & 0.2746 (0.0018) & [0.2710, 0.2782] & [0.2278, 0.3349] & 0.4756 (0.0033) & [0.4692, 0.4820] \\
RF-LC & 0 & 0.2308 (0.0019) & [0.2270, 0.2345] & [0.1886, 0.2997] & 0.5599 (0.0030) & [0.5540, 0.5658] \\
RF-LL & 0 & 0.2615 (0.0018) & [0.2580, 0.2650] & [0.2247, 0.3119] & 0.4997 (0.0040) & [0.4918, 0.5075] \\
Graph NW & 500 & 0.3880 (0.0026) & [0.3829, 0.3932] & [0.3177, 0.4581] & 0.2606 (0.0031) & [0.2545, 0.2667] \\
Graph kNN & 0 & 0.3951 (0.0027) & [0.3899, 0.4004] & [0.3292, 0.4745] & 0.2468 (0.0035) & [0.2399, 0.2536] \\
LL-FR & 0 & 0.2712 (0.0016) & [0.2681, 0.2743] & [0.2305, 0.3149] & 0.4821 (0.0029) & [0.4764, 0.4877] \\
kNN-FR & 0 & 0.3952 (0.0027) & [0.3899, 0.4004] & [0.3292, 0.4745] & 0.2468 (0.0035) & [0.2399, 0.2536] \\
NW-FR & 0 & 0.3881 (0.0027) & [0.3829, 0.3934] & [0.3177, 0.4580] & 0.2604 (0.0032) & [0.2541, 0.2668] \\
Global FR & 0 & 0.2788 (0.0016) & [0.2756, 0.2820] & [0.2372, 0.3228] & 0.4675 (0.0029) & [0.4619, 0.4731] \\
\bottomrule
\end{tabular*}
\label{tab:app-all-methods-s3}
\end{table}

\FloatBarrier
\begin{table}[!htbp]
\centering\scriptsize
\setlength{\tabcolsep}{2pt}
\caption{S4: Dist.\(\to\)SPD. Each method at its descriptive best \(U\), \(R=200\). Parentheses are MCSE; MC95 is the approximate 95\% interval for the mean; E95 is the empirical 2.5--97.5\% loss range.}
\begin{tabular*}{\textwidth}{@{\extracolsep{\fill}}lrrrrrr@{}}
\toprule
Method & \(U\) & Loss (MCSE) & Loss MC95 & Loss E95 & \(R^2\) (MCSE) & \(R^2\) MC95 \\
\midrule
TH & 2000 & \textbf{0.0253 (0.0003)} & [0.0247, 0.0259] & [0.0217, 0.0335] & 0.9761 (0.0003) & [0.9755, 0.9766] \\
PH & 2000 & 0.0257 (0.0003) & [0.0251, 0.0263] & [0.0218, 0.0311] & 0.9757 (0.0003) & [0.9751, 0.9763] \\
RH & 2000 & 0.0256 (0.0003) & [0.0250, 0.0261] & [0.0219, 0.0338] & 0.9758 (0.0003) & [0.9752, 0.9764] \\
SS-KRFR & 2000 & 0.0262 (0.0003) & [0.0256, 0.0268] & [0.0223, 0.0352] & 0.9752 (0.0003) & [0.9746, 0.9758] \\
KRFR & 0 & 0.0340 (0.0006) & [0.0327, 0.0352] & [0.0242, 0.0518] & 0.9679 (0.0006) & [0.9667, 0.9690] \\
WCE-NLGFR & 0 & 0.0615 (0.0011) & [0.0594, 0.0636] & [0.0456, 0.0872] & 0.9418 (0.0011) & [0.9397, 0.9438] \\
RF-LC & 0 & 0.4978 (0.0027) & [0.4926, 0.5030] & [0.4303, 0.5789] & 0.5292 (0.0021) & [0.5250, 0.5334] \\
Graph NW & 0 & 0.2408 (0.0027) & [0.2354, 0.2461] & [0.1783, 0.3381] & 0.7723 (0.0025) & [0.7675, 0.7771] \\
Graph kNN & 0 & 0.1951 (0.0018) & [0.1917, 0.1985] & [0.1556, 0.2601] & 0.8154 (0.0016) & [0.8122, 0.8186] \\
kNN-FR & 0 & 0.1929 (0.0017) & [0.1895, 0.1962] & [0.1540, 0.2579] & 0.8175 (0.0016) & [0.8144, 0.8206] \\
NW-FR & 0 & 0.2408 (0.0027) & [0.2354, 0.2461] & [0.1783, 0.3381] & 0.7723 (0.0025) & [0.7675, 0.7771] \\
\bottomrule
\end{tabular*}
\label{tab:app-all-methods-s4}
\label{tab:app-all-methods-best-s4-s6}
\end{table}

\begin{table}[!htbp]
\centering\scriptsize
\setlength{\tabcolsep}{2pt}
\caption{S5: Dist.\(\to\)Dist. Each method at its descriptive best \(U\), \(R=200\). Parentheses are MCSE; MC95 is the approximate 95\% interval for the mean; E95 is the empirical 2.5--97.5\% loss range. All loss entries (including MCSE and intervals) are in units of \(10^{-3}\).}
\begin{tabular*}{\textwidth}{@{\extracolsep{\fill}}lrrrrrr@{}}
\toprule
Method & \(U\) & Loss (MCSE) & Loss MC95 & Loss E95 & \(R^2\) (MCSE) & \(R^2\) MC95 \\
\midrule
TH & 2000 & \textbf{1.0842 (0.0126)} & [1.0594, 1.1089] & [0.8094, 1.5007] & 0.9120 (0.0011) & [0.9099, 0.9141] \\
PH & 500 & 1.1333 (0.0132) & [1.1074, 1.1591] & [0.8567, 1.5427] & 0.9080 (0.0011) & [0.9059, 0.9101] \\
RH & 1000 & 1.1386 (0.0176) & [1.1041, 1.1731] & [0.8097, 1.6592] & 0.9075 (0.0015) & [0.9047, 0.9104] \\
SS-KRFR & 0 & 1.1420 (0.0125) & [1.1175, 1.1664] & [0.8848, 1.5621] & 0.9073 (0.0011) & [0.9052, 0.9093] \\
KRFR & 0 & 1.1420 (0.0125) & [1.1175, 1.1664] & [0.8848, 1.5621] & 0.9073 (0.0011) & [0.9052, 0.9093] \\
WCE-NLGFR & 0 & 1.2007 (0.0142) & [1.1728, 1.2285] & [0.9040, 1.6935] & 0.9026 (0.0012) & [0.9002, 0.9049] \\
RF-LC & 0 & 1.5596 (0.0151) & [1.5301, 1.5891] & [1.2209, 2.0124] & 0.8734 (0.0013) & [0.8710, 0.8759] \\
Graph NW & 0 & 2.2614 (0.0243) & [2.2138, 2.3090] & [1.7105, 3.0422] & 0.8166 (0.0019) & [0.8128, 0.8204] \\
Graph kNN & 2000 & 1.7935 (0.0171) & [1.7600, 1.8269] & [1.4179, 2.3271] & 0.8544 (0.0014) & [0.8516, 0.8572] \\
kNN-FR & 0 & 1.8189 (0.0171) & [1.7854, 1.8525] & [1.4275, 2.3207] & 0.8523 (0.0015) & [0.8495, 0.8552] \\
NW-FR & 0 & 2.2614 (0.0243) & [2.2138, 2.3090] & [1.7105, 3.0422] & 0.8166 (0.0019) & [0.8128, 0.8204] \\
\bottomrule
\end{tabular*}
\label{tab:app-all-methods-s5}
\end{table}

\begin{table}[!htbp]
\centering\scriptsize
\setlength{\tabcolsep}{2pt}
\caption{S6: Dist.\(\to\)BHV. Each method at its descriptive best \(U\), \(R=200\). Parentheses are MCSE; MC95 is the approximate 95\% interval for the mean; E95 is the empirical 2.5--97.5\% loss range.}
\begin{tabular*}{\textwidth}{@{\extracolsep{\fill}}lrrrrrr@{}}
\toprule
Method & \(U\) & Loss (MCSE) & Loss MC95 & Loss E95 & \(R^2\) (MCSE) & \(R^2\) MC95 \\
\midrule
TH & 2000 & 0.1555 (0.0022) & [0.1511, 0.1598] & [0.1173, 0.2341] & 0.6814 (0.0047) & [0.6722, 0.6905] \\
PH & 2000 & \textbf{0.1546 (0.0022)} & [0.1503, 0.1589] & [0.1198, 0.2232] & 0.6827 (0.0048) & [0.6732, 0.6922] \\
RH & 2000 & 0.1570 (0.0023) & [0.1524, 0.1615] & [0.1190, 0.2504] & 0.6782 (0.0049) & [0.6686, 0.6878] \\
SS-KRFR & 2000 & 0.1567 (0.0024) & [0.1521, 0.1614] & [0.1200, 0.2533] & 0.6781 (0.0053) & [0.6677, 0.6885] \\
KRFR & 0 & 0.1717 (0.0021) & [0.1676, 0.1758] & [0.1311, 0.2408] & 0.6478 (0.0046) & [0.6389, 0.6567] \\
WCE-NLGFR & 0 & 0.1767 (0.0024) & [0.1721, 0.1813] & [0.1326, 0.2566] & 0.6372 (0.0052) & [0.6270, 0.6475] \\
RF-LC & 0 & 0.3489 (0.0031) & [0.3429, 0.3549] & [0.2744, 0.4445] & 0.2890 (0.0036) & [0.2819, 0.2960] \\
Graph NW & 0 & 0.2592 (0.0026) & [0.2540, 0.2644] & [0.1957, 0.3434] & 0.4708 (0.0044) & [0.4622, 0.4794] \\
Graph kNN & 1000 & 0.2767 (0.0026) & [0.2716, 0.2817] & [0.2142, 0.3518] & 0.4347 (0.0044) & [0.4260, 0.4433] \\
kNN-FR & 0 & 0.2765 (0.0026) & [0.2714, 0.2817] & [0.2131, 0.3593] & 0.4350 (0.0045) & [0.4261, 0.4438] \\
NW-FR & 0 & 0.2592 (0.0027) & [0.2540, 0.2644] & [0.1957, 0.3434] & 0.4708 (0.0044) & [0.4622, 0.4794] \\
\bottomrule
\end{tabular*}
\label{tab:app-all-methods-s6}
\end{table}

\FloatBarrier
\begin{table}[!htbp]
\centering\scriptsize
\setlength{\tabcolsep}{2pt}
\caption{Best proposed method by unlabeled sample size: mean loss / mean \(R^2\).}
\begin{tabular*}{\textwidth}{@{\extracolsep{\fill}}lrrrr@{}}
\toprule
Setting & \(U=0\) & \(U=500\) & \(U=1000\) & \(U=2000\) \\
\midrule
S1 & TH, 2.4098 / 0.794 & TH, 1.5002 / 0.872 & TH, 1.5063 / 0.871 & \textbf{TH, 1.4981 / 0.872} \\
S2 & TH, 0.00545 / 0.605 & PH, 0.00508 / 0.632 & PH, 0.00503 / 0.635 & \textbf{PH, 0.00499 / 0.638} \\
S3 & TH, 0.24624 / 0.530 & TH, 0.22995 / 0.560 & \textbf{TH, 0.22647 / 0.567} & PH, 0.22669 / 0.567 \\
S4 & TH, 0.0340 / 0.968 & TH, 0.0263 / 0.975 & TH, 0.0255 / 0.976 & \textbf{TH, 0.0253 / 0.976} \\
S5 & TH, 0.00114 / 0.907 & TH, 0.00110 / 0.911 & TH, 0.00109 / 0.911 & \textbf{TH, 0.00108 / 0.912} \\
S6 & TH, 0.17172 / 0.648 & PH, 0.15662 / 0.679 & PH, 0.15632 / 0.679 & \textbf{PH, 0.15457 / 0.683} \\
\bottomrule
\end{tabular*}
\label{tab:app-best-ours-sweep}
\label{tab:main-sim-gain}
\end{table}

For the descriptive test-best comparisons, the best proposed methods
are TH in S1, S3, S4 and S5, and PH in S2
and S6. In setting order S1--S6, their SS-gains are 37.83\%, 8.49\%,
8.03\%, 25.54\%, 5.06\% and 9.99\%, respectively. The corresponding
external advantages are 24.48\%, -2.05\%, 1.86\%, 58.87\%, 9.70\%
and 12.53\%. WCE uses validation tuning in all these comparisons;
RF-LL is the best external method in S1, RF-LC in S2/S3, and WCE in
S4--S6. In S2, RF-LC has lower mean loss than every proposed method.
These gains characterize the selected designs, not uniform superiority
over all data-generating mechanisms. Pure SS-KRFR's descriptive best is
\(U=0\) in S5, whereas in S4 its SS-gain is 22.86\%, compared with
TH's 25.54\%. Its S1 gain is only 0.74\% with MC95 crossing zero,
whereas TH's is 37.83\%. Thus semi-supervised benefit is more consistently
expressed by the displayed hybrids than by the pure pooled representation.

The covariate generating law is shared within each input type, yet the
pure-SS and hybrid gains differ across response mechanisms. Covariate
structure alone therefore does not explain the observed gains. The
regression function's shape and spectral alignment with the representation,
response geometry, noise level, finite-sample estimation and regularization
are plausible additional contributors. These experiments do not isolate
their separate causal effects. Hybridization can retain a supervised
component while exploiting pooled structure, but this is an interpretation
of the pattern, not a verified mechanism or a guarantee of improvement.
RH can also change model capacity through its up-to-\(2s\)-coordinate
concatenation; its gains cannot be attributed solely to adjusting pooling.
Hybrid SS-gain MC95 intervals exclude zero in every displayed simulation;
these remain conditional, non-selection-adjusted precision statements.

\begin{table}[!htbp]
\centering\scriptsize
\setlength{\tabcolsep}{2pt}
\caption{Best hybrid paired gains (\%). MCSE is in percentage points. SS-gain uses the same method at \(U=0\); external gain uses the listed best external method. Both use ratios of paired mean losses. MC95 conditions on these descriptive method/\(U\) choices and is not selection-adjusted.}
\begin{tabular*}{\textwidth}{@{\extracolsep{\fill}}llrrlrr@{}}
\toprule
Setting & Method, \(U\) & SS-gain (MCSE) & MC95 & External, \(U\) & Ext. gain (MCSE) & MC95 \\
\midrule
S1 & TH, 2000 & 37.83 (0.53) & [36.80, 38.87] & RF-LL, 0 & 24.48 (0.63) & [23.24, 25.73] \\
S2 & PH, 2000 & 8.49 (0.48) & [7.55, 9.43] & RF-LC, 0 & -2.05 (0.76) & [-3.54, -0.56] \\
S3 & TH, 1000 & 8.03 (0.58) & [6.89, 9.16] & RF-LC, 0 & 1.86 (0.87) & [0.15, 3.56] \\
S4 & TH, 2000 & 25.54 (0.86) & [23.85, 27.23] & WCE-NLGFR, 0 & 58.87 (0.49) & [57.91, 59.82] \\
S5 & TH, 2000 & 5.06 (0.92) & [3.26, 6.87] & WCE-NLGFR, 0 & 9.70 (1.09) & [7.56, 11.85] \\
S6 & PH, 2000 & 9.99 (1.02) & [8.00, 11.98] & WCE-NLGFR, 0 & 12.53 (1.07) & [10.42, 14.63] \\
\bottomrule
\end{tabular*}
\label{tab:app-sim-hybrid-gains}
\end{table}

\begin{table}[!htbp]
\centering\scriptsize
\setlength{\tabcolsep}{2pt}
\caption{Pure SS-KRFR paired gains (\%). MCSE is in percentage points. SS-gain uses the same method at \(U=0\); external gain uses the listed best external method. Both use ratios of paired mean losses. MC95 conditions on these descriptive method/\(U\) choices and is not selection-adjusted.}
\begin{tabular*}{\textwidth}{@{\extracolsep{\fill}}llrrlrr@{}}
\toprule
Setting & Method, \(U\) & SS-gain (MCSE) & MC95 & External, \(U\) & Ext. gain (MCSE) & MC95 \\
\midrule
S1 & SS-KRFR, 500 & 0.74 (0.67) & [-0.58, 2.06] & RF-LL, 0 & -20.57 (0.95) & [-22.44, -18.71] \\
S2 & SS-KRFR, 2000 & 7.79 (0.61) & [6.60, 8.98] & RF-LC, 0 & -2.83 (0.83) & [-4.45, -1.20] \\
S3 & SS-KRFR, 2000 & 6.37 (0.71) & [4.97, 7.77] & RF-LC, 0 & 0.09 (0.93) & [-1.74, 1.91] \\
S4 & SS-KRFR, 2000 & 22.86 (0.93) & [21.04, 24.68] & WCE-NLGFR, 0 & 57.39 (0.49) & [56.42, 58.35] \\
S5 & SS-KRFR, 0 & 0.00 (0.00) & [-0.00, 0.00] & WCE-NLGFR, 0 & 4.89 (1.16) & [2.62, 7.16] \\
S6 & SS-KRFR, 2000 & 8.72 (1.22) & [6.33, 11.11] & WCE-NLGFR, 0 & 11.29 (1.18) & [8.98, 13.60] \\
\bottomrule
\end{tabular*}
\label{tab:app-sim-pure-gains}
\end{table}

For completeness, the following tables hold the unlabeled size fixed
at \(U=2000\) and compare each semi-supervised method with its own
\(U=0\) endpoint. Positive paired mean loss reductions denote improvement.
These fixed-size comparisons need not use each method's descriptive optimum.

\begin{table}[!htbp]
\centering\scriptsize
\setlength{\tabcolsep}{2pt}
\caption{Paired mean loss reductions: \(U=2000\) versus \(U=0\); positive means improvement. MC95 is mean reduction \(\pm1.96\) MCSE. S2/S5 loss reductions, when present, are in units of \(10^{-3}\).}
\begin{tabular*}{\textwidth}{@{\extracolsep{\fill}}lrrrr@{}}
\toprule
Method & S1 reduction (MCSE) & MC95 & S2 reduction (MCSE) & MC95 \\
\midrule
TH & 0.9117 (0.0163) & [0.8798, 0.9436] & 0.4032 (0.0288) & [0.3467, 0.4597] \\
PH & 0.2138 (0.0157) & [0.1831, 0.2445] & 0.4629 (0.0279) & [0.4082, 0.5177] \\
RH & 0.8019 (0.0152) & [0.7722, 0.8317] & 0.3249 (0.0349) & [0.2564, 0.3934] \\
SS-KRFR & -0.0410 (0.0141) & [-0.0686, -0.0134] & 0.4249 (0.0347) & [0.3569, 0.4929] \\
Graph NW & 0.0594 (0.0089) & [0.0420, 0.0769] & 0.0294 (0.0138) & [0.0024, 0.0564] \\
Graph kNN & -0.0619 (0.0082) & [-0.0779, -0.0459] & -0.0597 (0.0168) & [-0.0925, -0.0268] \\
\bottomrule
\end{tabular*}
\label{tab:app-u-gain}
\end{table}

\begin{table}[!htbp]
\centering\scriptsize
\setlength{\tabcolsep}{2pt}
\caption{Paired mean loss reductions: \(U=2000\) versus \(U=0\); positive means improvement. MC95 is mean reduction \(\pm1.96\) MCSE. S2/S5 loss reductions, when present, are in units of \(10^{-3}\).}
\begin{tabular*}{\textwidth}{@{\extracolsep{\fill}}lrrrr@{}}
\toprule
Method & S4 reduction (MCSE) & MC95 & S5 reduction (MCSE) & MC95 \\
\midrule
TH & 0.0087 (0.0004) & [0.0078, 0.0095] & 0.0578 (0.0108) & [0.0367, 0.0789] \\
PH & 0.0083 (0.0004) & [0.0074, 0.0091] & 0.0078 (0.0075) & [-0.0070, 0.0226] \\
RH & 0.0084 (0.0004) & [0.0076, 0.0093] & 0.0033 (0.0149) & [-0.0260, 0.0326] \\
SS-KRFR & 0.0078 (0.0004) & [0.0069, 0.0086] & -0.1191 (0.0151) & [-0.1487, -0.0896] \\
Graph NW & -0.0079 (0.0011) & [-0.0101, -0.0057] & -0.0228 (0.0066) & [-0.0357, -0.0099] \\
Graph kNN & -0.0012 (0.0006) & [-0.0025, 0.0000] & 0.0488 (0.0077) & [0.0337, 0.0639] \\
\bottomrule
\end{tabular*}
\label{tab:app-u-gain-s4-s5}
\end{table}

\begin{table}[!htbp]
\centering\scriptsize
\setlength{\tabcolsep}{2pt}
\caption{Paired mean loss reductions: \(U=2000\) versus \(U=0\); positive means improvement. MC95 is mean reduction \(\pm1.96\) MCSE. S2/S5 loss reductions, when present, are in units of \(10^{-3}\).}
\begin{tabular*}{\textwidth}{@{\extracolsep{\fill}}lrrrr@{}}
\toprule
Method & S3 reduction (MCSE) & MC95 & S6 reduction (MCSE) & MC95 \\
\midrule
TH & 0.0191 (0.0016) & [0.0160, 0.0222] & 0.0163 (0.0016) & [0.0130, 0.0195] \\
PH & 0.0195 (0.0014) & [0.0167, 0.0223] & 0.0172 (0.0018) & [0.0136, 0.0207] \\
RH & 0.0145 (0.0017) & [0.0112, 0.0179] & 0.0148 (0.0018) & [0.0111, 0.0184] \\
SS-KRFR & 0.0157 (0.0018) & [0.0121, 0.0192] & 0.0150 (0.0022) & [0.0108, 0.0192] \\
Graph NW & -0.0011 (0.0005) & [-0.0020, -0.0001] & -0.0027 (0.0007) & [-0.0042, -0.0013] \\
Graph kNN & -0.0022 (0.0005) & [-0.0032, -0.0011] & 0.0047 (0.0015) & [0.0018, 0.0076] \\
\bottomrule
\end{tabular*}
\label{tab:app-bhv-paired}
\end{table}

\FloatBarrier

\subsection{HBN-EEG Design and Results}
\label{app:hbn-details}

\subsubsection{Dataset and Object Construction}

We use resting-state recordings from HBN-EEG releases R1--R9
\citep{alexander2017hbn,langer2017developing,shirazi2024hbneeg}.
The processed cohort has 2,103 distinct participants with usable EC and EO
segments, from 2,118 candidate recordings. Release counts are:

\begin{table}[!htbp]
\centering\scriptsize
\setlength{\tabcolsep}{2pt}
\caption{HBN-EEG analytic cohort.}
\begin{tabular*}{\textwidth}{@{\extracolsep{\fill}}lrrrrrrrrr@{}}
\toprule
Release & R1 & R2 & R3 & R4 & R5 & R6 & R7 & R8 & R9 \\
\midrule
Participants & 136 & 148 & 183 & 322 & 328 & 135 & 365 & 221 & 265 \\
\bottomrule
\end{tabular*}
\label{tab:app-hbn-release-counts}
\end{table}

The first 16 EEG channels are used consistently. EC and EO segments are
identified from event markers, segments shorter than eight seconds are
discarded, and retained segments are concatenated within participant and
condition. Channelwise Welch spectra use four-second windows with 50\%
overlap and are averaged across the 16 channels. Power on 1--45 Hz is
normalized to a probability distribution \(\mu_i^g\),
\(g\in\{\mathrm{EC},\mathrm{EO}\}\), represented by
\[
 Q_i^g=80^{-1/2}\big((F_i^g)^{-1}((j-1/2)/80)\big)_{j=1}^{80}.
\]
For the SPD covariate, each channel is standardized within the participant,
and the empirical EC correlation matrix \(R_i\) is regularized as
\(\widetilde C_i=0.995R_i+0.005I_{16}\). After diagonal normalization,
the SPD object \(C_i\) is represented by
\(S_i=\operatorname{svec}(\log C_i)\).

HBN-A and HBN-B both predict \(Y_i=\mu_i^{\rm EO}\). HBN-A uses \(X_i=\mu_i^{\rm EC}\). HBN-B uses the joint covariate
\(X_i=(C_i,\mu_i^{\rm EC})\). Let \(\delta_S\) and \(\delta_Q\) denote
the median positive pairwise squared distances between the corresponding
training coordinates. Its input representation is
\[
 x_i(\omega)=\left(\sqrt{\omega/\delta_S}\,S_i,
                      \sqrt{(1-\omega)/\delta_Q}\,Q_i^{\rm EC}\right),
 \qquad\omega\in\mathcal A.
\]
Thus the radial kernels operate on the weighted sum of the two scaled
squared distances. The weight acts on the covariate distance before kernel
evaluation. Training data determine both scales.

\subsubsection{Random-Partition Design and Tuning}

The 789 distinct participants in R1--R4 form one labeled cohort.
In each of 200 repetitions, a random permutation without replacement
assigns 284 to training, 183 to validation and 322 to testing.
NumPy SeedSequence \([2026092802,r]\), \(r=0,\ldots,199\), determines
the split. Every method, \(U\), HBN-A/B task and HBN-B mixture candidate
uses the same split in a given repetition. Participants can recur across
repetitions, but never across roles within one split. This is not bootstrap
sampling or a fixed R4 release-held-out evaluation.

The nested unlabeled pools are fixed at R5--R6 (463 participants),
R5--R7 (828), and R5--R9 (1,314), together with \(U=0\).
Unlabeled outcomes are withheld. Validation and test covariates are queries;
only training and current unlabeled inputs enter representations and
graphs. Pool size is confounded with release composition.

KRFR, SS-KRFR, RH, PH, TH, WCE-NLGFR, Graph SS-FR (NW/kNN),
NW-FR, kNN-FR and RF-LC are fitted using the method-specific criteria in
Section~\ref{app:tuning-grid}. Feature fits use \(n=284\) and
\(n\xi\in\mathcal L\). For HBN-B, both input-block scales are recomputed
from the current training split. Each non-forest method/\(U\) is fitted at all seven
\(\omega\in\{0,0.1,0.25,0.5,0.75,0.9,1\}\), and its validation-loss
winner is selected only after all seven candidates are available.
Exact ties retain the first grid value. No valid validation candidate
means selection failure, not selection by test performance. RF-LC instead
uses the three input choices specified in Section~\ref{app:tuning-grid}.

Predictions use \(\Pi_{[1/\sqrt{80},45/\sqrt{80}]}^\uparrow\):
equal-weight PAVA of the raw prediction, followed by clipping.
Loss averages the sum of 80 squared scaled-quantile errors over the
322 test participants. Each split's \(R^2\) denominator uses the test
loss of that split's training-response mean; negative values are retained.
All reported means and intervals concern this fixed cohort and its random
partitions, not independent-cohort or unseen-release generalization.

\subsubsection{Results}

At the common maximal budget \(U=1314\), Table~\ref{tab:hbn-main-results}
reports pure SS-KRFR gains of 3.53\% and 1.67\%, PH gains of 5.05\%
and 3.85\%, and TH gains of 4.45\% and 3.12\% in HBN-A/B.
RH gains 1.87\% in HBN-A but loses 0.21\% in HBN-B. These comparisons
retain all methods at the same available pool size and are distinct
from the descriptive test-best summaries below.

Among the descriptive test-best results, PH at \(U=463\) has the
lowest mean loss in both tasks: 6.2650 in
HBN-A and 6.2313 in HBN-B. Its SS-gains are 6.44\% and 6.03\%,
while external gains are 9.42\% and 9.40\% against validation-tuned
WCE-NLGFR, respectively. Pure SS-KRFR has SS-gains of 6.34\% and
5.40\%; those comparisons are reported separately from PH.
The incremental PH advantage over pure SS is small in HBN-A, so the
stronger hybrid pattern in simulations is not evidence of a significant
hybrid-versus-pure-SS gap in every task. Both hybrid SS-gain intervals
exclude zero, as do their gains over validation-tuned WCE.
The largest unlabeled pool is not the descriptive optimum.
Within-repetition parameters, including HBN-B's \(\omega\), use only
the specified training or validation criteria.

\begin{table}[!htbp]
\centering\footnotesize
\setlength{\tabcolsep}{2pt}
\caption{HBN method comparisons across 200 common random splits: mean loss / mean \(R^2\) / descriptive best \(U\).}
\begin{tabular*}{\textwidth}{@{\extracolsep{\fill}}lrr@{}}
\toprule
Method & HBN-A & HBN-B \\
\midrule
TH & 6.3251 / 0.321 / 463 & 6.2859 / 0.324 / 463 \\
PH & \textbf{6.2650 / 0.328 / 463} & \textbf{6.2313 / 0.333 / 463} \\
RH & 6.5068 / 0.302 / 463 & 6.4568 / 0.306 / 463 \\
SS-KRFR & 6.2718 / 0.327 / 463 & 6.2727 / 0.327 / 463 \\
KRFR & 6.6962 / 0.286 / 0 & 6.6311 / 0.295 / 0 \\
WCE-NLGFR & 6.9164 / 0.261 / 0 & 6.8781 / 0.265 / 0 \\
RF-LC & 7.1214 / 0.238 / 0 & 7.1099 / 0.242 / 0 \\
Graph NW & 7.2566 / 0.222 / 0 & 7.2296 / 0.229 / 0 \\
Graph kNN & 7.7464 / 0.169 / 0 & 7.5131 / 0.196 / 0 \\
kNN-FR & 7.6715 / 0.177 / 0 & 7.3985 / 0.209 / 0 \\
NW-FR & 7.2944 / 0.219 / 0 & 7.3169 / 0.220 / 0 \\
\bottomrule
\end{tabular*}
\label{tab:app-hbn-family-results}
\end{table}

\begin{table}[!htbp]
\centering\scriptsize
\setlength{\tabcolsep}{2pt}
\caption{HBN-A. Each method at its descriptive best \(U\), \(R=200\). Parentheses are MCSE; MC95 is the approximate 95\% interval for the mean; E95 is the empirical 2.5--97.5\% loss range.}
\begin{tabular*}{\textwidth}{@{\extracolsep{\fill}}lrrrrrr@{}}
\toprule
Method & \(U\) & Loss (MCSE) & Loss MC95 & Loss E95 & \(R^2\) (MCSE) & \(R^2\) MC95 \\
\midrule
TH & 463 & 6.3251 (0.0997) & [6.1298, 6.5204] & [4.4707, 10.2572] & 0.3215 (0.0095) & [0.3028, 0.3402] \\
PH & 463 & \textbf{6.2650 (0.0944)} & [6.0800, 6.4500] & [4.3797, 10.1442] & 0.3283 (0.0088) & [0.3112, 0.3455] \\
RH & 463 & 6.5068 (0.1115) & [6.2883, 6.7252] & [4.4917, 10.8168] & 0.3015 (0.0111) & [0.2797, 0.3233] \\
SS-KRFR & 463 & 6.2718 (0.0974) & [6.0809, 6.4627] & [4.2766, 9.8551] & 0.3268 (0.0095) & [0.3082, 0.3454] \\
KRFR & 0 & 6.6962 (0.1135) & [6.4738, 6.9186] & [4.4530, 10.7069] & 0.2861 (0.0094) & [0.2677, 0.3045] \\
WCE-NLGFR & 0 & 6.9164 (0.1272) & [6.6671, 7.1656] & [4.3316, 11.3263] & 0.2606 (0.0119) & [0.2372, 0.2840] \\
RF-LC & 0 & 7.1214 (0.0922) & [6.9408, 7.3020] & [4.7229, 10.0476] & 0.2378 (0.0089) & [0.2203, 0.2554] \\
Graph NW & 0 & 7.2566 (0.1374) & [6.9874, 7.5259] & [4.3354, 11.7694] & 0.2223 (0.0141) & [0.1947, 0.2500] \\
Graph kNN & 0 & 7.7464 (0.1149) & [7.5213, 7.9716] & [5.1102, 11.5278] & 0.1689 (0.0130) & [0.1435, 0.1944] \\
kNN-FR & 0 & 7.6715 (0.1150) & [7.4461, 7.8968] & [4.9599, 11.8496] & 0.1765 (0.0131) & [0.1508, 0.2022] \\
NW-FR & 0 & 7.2944 (0.1367) & [7.0264, 7.5623] & [4.3711, 11.6773] & 0.2187 (0.0138) & [0.1916, 0.2458] \\
\bottomrule
\end{tabular*}
\label{tab:app-hbn-a-precision}
\end{table}

\begin{table}[!htbp]
\centering\scriptsize
\setlength{\tabcolsep}{2pt}
\caption{HBN-B. Each method at its descriptive best \(U\), \(R=200\). Parentheses are MCSE; MC95 is the approximate 95\% interval for the mean; E95 is the empirical 2.5--97.5\% loss range.}
\begin{tabular*}{\textwidth}{@{\extracolsep{\fill}}lrrrrrr@{}}
\toprule
Method & \(U\) & Loss (MCSE) & Loss MC95 & Loss E95 & \(R^2\) (MCSE) & \(R^2\) MC95 \\
\midrule
TH & 463 & 6.2859 (0.0987) & [6.0924, 6.4794] & [4.3799, 10.4892] & 0.3244 (0.0099) & [0.3049, 0.3438] \\
PH & 463 & \textbf{6.2313 (0.0955)} & [6.0442, 6.4184] & [4.4408, 9.6333] & 0.3326 (0.0086) & [0.3159, 0.3494] \\
RH & 463 & 6.4568 (0.1063) & [6.2485, 6.6651] & [4.4481, 10.8168] & 0.3057 (0.0110) & [0.2842, 0.3271] \\
SS-KRFR & 463 & 6.2727 (0.0985) & [6.0797, 6.4657] & [4.4363, 9.5990] & 0.3269 (0.0097) & [0.3080, 0.3459] \\
KRFR & 0 & 6.6311 (0.1132) & [6.4091, 6.8530] & [4.4232, 10.7101] & 0.2947 (0.0086) & [0.2779, 0.3115] \\
WCE-NLGFR & 0 & 6.8781 (0.1224) & [6.6381, 7.1181] & [4.5375, 11.0008] & 0.2652 (0.0113) & [0.2430, 0.2873] \\
RF-LC & 0 & 7.1099 (0.1017) & [6.9106, 7.3091] & [4.6368, 10.4151] & 0.2422 (0.0079) & [0.2268, 0.2577] \\
Graph NW & 0 & 7.2296 (0.1327) & [6.9695, 7.4897] & [4.4050, 11.8129] & 0.2287 (0.0120) & [0.2052, 0.2523] \\
Graph kNN & 0 & 7.5131 (0.1046) & [7.3081, 7.7182] & [5.1150, 11.3943] & 0.1965 (0.0101) & [0.1767, 0.2163] \\
kNN-FR & 0 & 7.3985 (0.0997) & [7.2030, 7.5940] & [5.0370, 11.0573] & 0.2087 (0.0097) & [0.1897, 0.2277] \\
NW-FR & 0 & 7.3169 (0.1361) & [7.0502, 7.5836] & [4.4996, 11.9104] & 0.2199 (0.0124) & [0.1957, 0.2441] \\
\bottomrule
\end{tabular*}
\label{tab:app-hbn-b-precision}
\end{table}

\begin{table}[!htbp]
\centering\scriptsize
\setlength{\tabcolsep}{2pt}
\caption{Best hybrid paired gains (\%). MCSE is in percentage points. SS-gain uses the same method at \(U=0\); external gain uses the listed best external method. Both use ratios of paired mean losses. MC95 conditions on these descriptive method/\(U\) choices and is not selection-adjusted.}
\begin{tabular*}{\textwidth}{@{\extracolsep{\fill}}llrrlrr@{}}
\toprule
Setting & Method, \(U\) & SS-gain (MCSE) & MC95 & External, \(U\) & Ext. gain (MCSE) & MC95 \\
\midrule
HBN-A & PH, 463 & 6.44 (1.01) & [4.46, 8.42] & WCE-NLGFR, 0 & 9.42 (1.39) & [6.70, 12.13] \\
HBN-B & PH, 463 & 6.03 (1.01) & [4.05, 8.01] & WCE-NLGFR, 0 & 9.40 (1.34) & [6.78, 12.03] \\
\bottomrule
\end{tabular*}
\label{tab:app-hbn-hybrid-gains}
\end{table}

\begin{table}[!htbp]
\centering\scriptsize
\setlength{\tabcolsep}{2pt}
\caption{Pure SS-KRFR paired gains (\%). MCSE is in percentage points. SS-gain uses the same method at \(U=0\); external gain uses the listed best external method. Both use ratios of paired mean losses. MC95 conditions on these descriptive method/\(U\) choices and is not selection-adjusted.}
\begin{tabular*}{\textwidth}{@{\extracolsep{\fill}}llrrlrr@{}}
\toprule
Setting & Method, \(U\) & SS-gain (MCSE) & MC95 & External, \(U\) & Ext. gain (MCSE) & MC95 \\
\midrule
HBN-A & SS-KRFR, 463 & 6.34 (1.25) & [3.88, 8.80] & WCE-NLGFR, 0 & 9.32 (1.32) & [6.73, 11.91] \\
HBN-B & SS-KRFR, 463 & 5.40 (1.21) & [3.03, 7.78] & WCE-NLGFR, 0 & 8.80 (1.32) & [6.22, 11.38] \\
\bottomrule
\end{tabular*}
\label{tab:app-hbn-pure-gains}
\end{table}

\FloatBarrier
\begin{table}[!htbp]
\centering\scriptsize
\setlength{\tabcolsep}{2pt}
\caption{Best proposed method by unlabeled release pool: mean loss / mean \(R^2\).}
\begin{tabular*}{\textwidth}{@{\extracolsep{\fill}}lrrrr@{}}
\toprule
Setting & \(U=0\) & \(U=463\) & \(U=828\) & \(U=1314\) \\
\midrule
HBN-A & TH, 6.6962 / 0.286 & \textbf{PH, 6.2650 / 0.328} & PH, 6.3406 / 0.321 & PH, 6.3581 / 0.319 \\
HBN-B & TH, 6.6311 / 0.295 & \textbf{PH, 6.2313 / 0.333} & PH, 6.2884 / 0.326 & PH, 6.3761 / 0.317 \\
\bottomrule
\end{tabular*}
\label{tab:app-hbn-ph-sweep}
\label{tab:hbn-main-unlabeled}
\end{table}

\begin{table}[!htbp]
\centering\footnotesize
\setlength{\tabcolsep}{2pt}
\caption{HBN-A: mean loss / mean \(R^2\) by unlabeled pool. Bold marks the lowest mean loss in each column, including ties.}
\begin{tabular*}{\textwidth}{@{\extracolsep{\fill}}lrrrr@{}}
\toprule
Method & \(U=0\) & \(U=463\) & \(U=828\) & \(U=1314\) \\
\midrule
TH & \textbf{6.6962 / 0.286} & 6.3251 / 0.321 & 6.3934 / 0.314 & 6.3985 / 0.314 \\
PH & \textbf{6.6962 / 0.286} & \textbf{6.2650 / 0.328} & \textbf{6.3406 / 0.321} & \textbf{6.3581 / 0.319} \\
RH & \textbf{6.6962 / 0.286} & 6.5068 / 0.302 & 6.5962 / 0.291 & 6.5710 / 0.294 \\
SS-KRFR & \textbf{6.6962 / 0.286} & 6.2718 / 0.327 & 6.4189 / 0.312 & 6.4601 / 0.307 \\
KRFR & \textbf{6.6962 / 0.286} & 6.6962 / 0.286 & 6.6962 / 0.286 & 6.6962 / 0.286 \\
WCE-NLGFR & 6.9164 / 0.261 & 6.9164 / 0.261 & 6.9164 / 0.261 & 6.9164 / 0.261 \\
RF-LC & 7.1214 / 0.238 & 7.1214 / 0.238 & 7.1214 / 0.238 & 7.1214 / 0.238 \\
Graph NW & 7.2566 / 0.222 & 7.3477 / 0.213 & 7.3509 / 0.213 & 7.3925 / 0.209 \\
Graph kNN & 7.7464 / 0.169 & 7.8109 / 0.162 & 7.7707 / 0.167 & 7.8814 / 0.155 \\
kNN-FR & 7.6715 / 0.177 & 7.6715 / 0.177 & 7.6715 / 0.177 & 7.6715 / 0.177 \\
NW-FR & 7.2944 / 0.219 & 7.2944 / 0.219 & 7.2944 / 0.219 & 7.2944 / 0.219 \\
\bottomrule
\end{tabular*}
\label{tab:app-hbn-a-all-u}
\end{table}

\begin{table}[!htbp]
\centering\footnotesize
\setlength{\tabcolsep}{2pt}
\caption{HBN-B: mean loss / mean \(R^2\) by unlabeled pool. Bold marks the lowest mean loss in each column, including ties.}
\begin{tabular*}{\textwidth}{@{\extracolsep{\fill}}lrrrr@{}}
\toprule
Method & \(U=0\) & \(U=463\) & \(U=828\) & \(U=1314\) \\
\midrule
TH & \textbf{6.6311 / 0.295} & 6.2859 / 0.324 & 6.3529 / 0.317 & 6.4245 / 0.312 \\
PH & \textbf{6.6311 / 0.295} & \textbf{6.2313 / 0.333} & \textbf{6.2884 / 0.326} & \textbf{6.3761 / 0.317} \\
RH & \textbf{6.6311 / 0.295} & 6.4568 / 0.306 & 6.5825 / 0.292 & 6.6451 / 0.287 \\
SS-KRFR & \textbf{6.6311 / 0.295} & 6.2727 / 0.327 & 6.3793 / 0.314 & 6.5203 / 0.299 \\
KRFR & \textbf{6.6311 / 0.295} & 6.6311 / 0.295 & 6.6311 / 0.295 & 6.6311 / 0.295 \\
WCE-NLGFR & 6.8781 / 0.265 & 6.8781 / 0.265 & 6.8781 / 0.265 & 6.8781 / 0.265 \\
RF-LC & 7.1099 / 0.242 & 7.1099 / 0.242 & 7.1099 / 0.242 & 7.1099 / 0.242 \\
Graph NW & 7.2296 / 0.229 & 7.3961 / 0.212 & 7.4852 / 0.202 & 7.5625 / 0.194 \\
Graph kNN & 7.5131 / 0.196 & 7.6990 / 0.175 & 7.6539 / 0.180 & 7.7777 / 0.165 \\
kNN-FR & 7.3985 / 0.209 & 7.3985 / 0.209 & 7.3985 / 0.209 & 7.3985 / 0.209 \\
NW-FR & 7.3169 / 0.220 & 7.3169 / 0.220 & 7.3169 / 0.220 & 7.3169 / 0.220 \\
\bottomrule
\end{tabular*}
\label{tab:app-hbn-b-all-u}
\end{table}

\FloatBarrier

\clearpage

\paragraph{Coverage and exceptional branches.}
Every displayed method/\(U\) has finite test and validation predictions
on all 200 evaluated repetitions; no failed query is removed from a mean.
For HBN-B this statement concerns the validation-selected mixture at
each method/\(U\). The recorded ridge, graph and empty-kernel-support
branches are tabulated separately, so full coverage is not confused
with execution of an unmodified original comparator.
In particular, a \texttt{ZERO\_WEIGHT\_TIE} is an empty compact-kernel
neighborhood even with distinct participants, not evidence of a bootstrap
sample. \texttt{CONNECTED\_GRAPH\_ADAPTATION} counts only queries whose
fitting graph required a kNN construction degree above four and for which
no other exceptional status was recorded.
All methods retain the common \(10^{-6}\) weight-sum tolerance.

\begin{table}[!htbp]
\centering\scriptsize
\setlength{\tabcolsep}{2pt}
\caption{Recorded ridge, empty-support and graph-adaptation branches at each method's displayed best \(U\). These are accepted predictions, not omitted queries. All displayed methods have complete test and validation coverage; HBN-B counts use validation-selected \(\omega\). The unit-sum tolerance is uniformly \(10^{-6}\).}
\begin{tabular*}{\textwidth}{@{\extracolsep{\fill}}lll lrr@{}}
\toprule
Setting & Method & \(U\) & Status & Queries & \% \\
\midrule
S1 & Graph NW & 2000 & \texttt{ZERO\_WEIGHT\_TIE} & 145/100000 & 0.14 \\
S1 & NW-FR & 0 & \texttt{ZERO\_WEIGHT\_TIE} & 100/100000 & 0.10 \\
S2 & Graph NW & 1000 & \texttt{ZERO\_WEIGHT\_TIE} & 176/100000 & 0.18 \\
S2 & NW-FR & 0 & \texttt{ZERO\_WEIGHT\_TIE} & 210/100000 & 0.21 \\
S3 & Graph NW & 500 & \texttt{ZERO\_WEIGHT\_TIE} & 84/100000 & 0.08 \\
S3 & NW-FR & 0 & \texttt{ZERO\_WEIGHT\_TIE} & 93/100000 & 0.09 \\
S4 & Graph NW & 0 & \texttt{ZERO\_WEIGHT\_TIE} & 2633/100000 & 2.63 \\
S4 & Graph kNN & 0 & \texttt{AMBIENT\_TRAIN\_1NN\_FALLBACK} & 241/100000 & 0.24 \\
S4 & NW-FR & 0 & \texttt{ZERO\_WEIGHT\_TIE} & 2633/100000 & 2.63 \\
S5 & Graph NW & 0 & \texttt{ZERO\_WEIGHT\_TIE} & 3979/100000 & 3.98 \\
S5 & Graph kNN & 2000 & \texttt{AMBIENT\_TRAIN\_1NN\_FALLBACK} & 16/100000 & 0.02 \\
S5 & NW-FR & 0 & \texttt{ZERO\_WEIGHT\_TIE} & 3979/100000 & 3.98 \\
S6 & Graph NW & 0 & \texttt{ZERO\_WEIGHT\_TIE} & 3805/100000 & 3.81 \\
S6 & Graph kNN & 1000 & \texttt{AMBIENT\_TRAIN\_1NN\_FALLBACK} & 35/100000 & 0.04 \\
S6 & NW-FR & 0 & \texttt{ZERO\_WEIGHT\_TIE} & 3805/100000 & 3.81 \\
HBN-A & Graph NW & 0 & \texttt{ZERO\_WEIGHT\_TIE} & 8455/64400 & 13.13 \\
HBN-A & Graph NW & 0 & \texttt{CONNECTED\_GRAPH\_ADAPTATION} & 16544/64400 & 25.69 \\
HBN-A & Graph kNN & 0 & \texttt{CONNECTED\_GRAPH\_ADAPTATION} & 19320/64400 & 30.00 \\
HBN-A & NW-FR & 0 & \texttt{ZERO\_WEIGHT\_TIE} & 9141/64400 & 14.19 \\
HBN-B & Graph NW & 0 & \texttt{ZERO\_WEIGHT\_TIE} & 4709/64400 & 7.31 \\
HBN-B & Graph NW & 0 & \texttt{CONNECTED\_GRAPH\_ADAPTATION} & 16925/64400 & 26.28 \\
HBN-B & Graph kNN & 0 & \texttt{CONNECTED\_GRAPH\_ADAPTATION} & 16744/64400 & 26.00 \\
HBN-B & NW-FR & 0 & \texttt{ZERO\_WEIGHT\_TIE} & 4798/64400 & 7.45 \\
\bottomrule
\end{tabular*}
\label{tab:app-exception-coverage}
\end{table}

\FloatBarrier

\paragraph{Computing environment.}
Experiments were conducted on an Apple M2 Pro MacBook Pro
with a 12-core CPU and 16\,GB of unified memory, using
Python 3.13.9 on macOS. All computations were CPU-based,
without GPU acceleration. 
\clearpage

\section{Additional Methodology and Preliminaries}\label{app:preliminary}

Throughout the Appendix, we work under the setting of Section~\ref{sec:reg-model}.
Unless stated otherwise, the basic prediction bound uses
Conditions~\aref{M1}, \aref{M2}--\aref{M3}, \aref{K1}--\aref{K3}, and \aref{R1}--\aref{R2} exactly
as stated in Section~\ref{sec:regularity-conditions}, together with
\eqref{eq:basic-tuning-regime}. The relative-spectral rate result
uses Conditions~\aref{M1}, \aref{M2}--\aref{M3}, \aref{K1},
\aref{K4}--\aref{K5}, and \aref{R1}--\aref{R2}, together with
\eqref{eq:sharp-tuning-regime}; \aref{K5} implies \aref{K2}.

We use the compact support $\mathcal S=\operatorname{supp}(\bbP_X)$ and the
fixed Borel set $\cX_0\subseteq\mathcal S$ with $\bbP_X(\cX_0)=1$ specified
in the main text. Neither set is assumed known. The ambient metric space
$\cX$ need not be compact. All conditional-moment, identification, and
curvature assertions are made on the same set $\cX_0$. The population
RKHSs and their operators are defined using the restrictions of $\kappa$
and $\kappa_0$ to $\mathcal S\times\mathcal S$; their pointwise feature
identities are used for $x\in\mathcal S$. Empirical kernel formulae,
optimization, and measurable selectors remain defined for every ambient
query $x\in\cX$. Since
\[
    \bbP\{X_1,\ldots,X_N,X^\circ\in\cX_0\}=1,
\]
all sample and prediction-risk arguments may be carried out on $\cX_0$.

\subsection{Notation and conventions}
\label{app:notations}

For real \(a,b\), write \(a\land b\) and \(a\lor b\) for their minimum and
maximum, respectively; \(\bigwedge\) and \(\bigvee\) denote indexed extrema.

For \(N\ge1\), write \([N]:=\{1,\ldots,N\}\). For an assertion \(A\), write
\(\ind\{A\}\) for its indicator. For nonnegative numbers
\(a\) and \(b\) depending on \(n,N,s\), write \(a\lesssim b\) when
\(a\le Cb\) for a constant independent of \(n,N,s\). Write
\(a\gtrsim b\) for the reverse inequality and \(a\asymp b\) when both hold.
Constants may depend on fixed quantities in the regularity conditions, but
not on $n,N,s$. The source parameters $r,B_r$ and the population
kernel are fixed; the rank-dependent factors $b_{r,s}$ and $G_{r,s}$
are displayed explicitly. The
sub-Gaussian Orlicz norm of a real random variable \(Z\) under a probability
law \(Q\) is
\[
    \|Z\|_{\psi_2(Q)}
    :=
    \inf\left\{
      c>0:
      \bbE_Q\!\left[\exp\!\left(\frac{Z^2}{c^2}\right)\right]\le2
    \right\},
\]
with \(\inf\varnothing=\infty\). When the underlying law is clear, write
\(\|Z\|_{\psi_2}\). For a bounded operator \(A\), write \(\|A\|_\op\) and
\(\|A\|_\HS\) for its operator and Hilbert--Schmidt norms, and let
\(\mathrm{Id}\) denote the identity on the ambient Hilbert space specified
by context (by default, \(\mathcal H_{\kappa_0}\)).
Unless explicitly stated otherwise, covariance operators, their
perturbations, and the associated operator and Hilbert--Schmidt norms
are interpreted on $\mathcal H_{\kappa_0}$. Fractional source spaces
and their operator powers are defined using the $L^2(\bbP_X)$ realization
of $\mathcal T_{\kappa_0}$.
For a subset \(S\) of a Hilbert space \(\mathcal H\),
\(\overline{\operatorname{span}}(S)\) denotes the closure, in the
\(\mathcal H\)-norm, of the set of all finite linear combinations of
elements of \(S\).

All random elements and maps are measurable with respect to their displayed
Borel and product sigma-fields. Under the measurable-version convention in
Section~\ref{sec:regularity-conditions},
every normalized-contrast process used under an indexed supremum admits a
jointly measurable, separable version, so that the supremum equals
almost surely, and conditionally almost surely when applicable, the supremum
over a fixed countable determining subclass. Consequently, indexed suprema,
conditional expectations, and conditional \(L^2\) norms appearing below are
measurable random variables.

Let
\[
    \mathscr X_{N,L}:=\sigma(X_1,\ldots,X_N,L),
    \qquad \mathscr D_{n,N}
    :=
    \sigma(X_1,\ldots,X_N,L,\{Y_i:i\in L\}).
\]
For a random function \(F(X^\circ)\) and a sigma-field \(\mathscr A\), write
\[
    \|F\|_{2,X\mid\mathscr A}
    :=
    \left\{
        \bbE_{X^\circ}[|F(X^\circ)|^2\mid\mathscr A]
    \right\}^{1/2}.
\]
All expectations involving \(X^\circ\) integrate with respect to the
independent test covariate. When a conditional norm is written as
\(O_p(r_{n,N})\) or \(o_p(r_{n,N})\), the assertion concerns the resulting
random variable under the joint law after the conditioning variables are
restored. Conditional second-moment bounds are converted to this form by
conditional Markov inequalities, and bounds holding pointwise given a
sigma-field are used with respect to the same joint probability.

For normalized contrasts, put $\cM:=m_\oplus(\cX_0)$. The index set introduced
in the main text is then
\[
    \mathfrak T=\{(z,y):z\in\cM,\ y\in\cY,\ y\ne z\}.
\]
For \(t=(z,y)\in\mathfrak T\), set
\[
\begin{alignedat}{2}
    &d_t:=d_{\cY}(y,z),\;&\psi_t&:=\psi_{z,y},\\
    &\overline\psi_t:=\bbE[\psi_t(Y)],\;&H_t(x)&:=\bbE[\psi_t(Y)\mid X=x],\\
    &\zeta_t(x,v):=\psi_t(v)-H_t(x),\;&\dot h_t(x)&:=H_t(x)-\overline\psi_t.
\end{alignedat}
\]
Here $H_t(x)$, $\zeta_t(x,v)$, and $\dot h_t(x)$ are used for
$x\in\cX_0$. The target contrasts belong to the fixed space
$\overline{\mathcal H_{\kappa_0}}^{\,L^2(\bbP_X)}$ under
\aref{R1}, but need not be RKHS functions. For every deterministic cutoff,
\[
 c_j(t)=\bbE[\dot h_t(X)\phi_j(X)],\qquad
 g_{t,s}=\Pi_{\le s}\dot h_t=\sum_{j\le s}c_j(t)\phi_j,\qquad
 r_{t,s}=\dot h_t-g_{t,s}.
\]
The approximants are proof devices and are not computed by the estimator.
Their norm and residual bounds are $G_{r,s}$ and $b_{r,s}$ in
\eqref{eq:R1-regression-truncation}. The source exponent $r$ and radius
$B_r$ do not depend on the sample sizes. In particular, reducing the
residual does not require changing the population RKHS. The common
full-measure agreement of conditional and spectral versions is justified
in Lemma~\ref{lem:population-rkhs-surrogate}. No extension of the
conditional law outside $\cX_0$ is required.

\subsection{RKHS and metric geometry}
\label{app:RKHS}

This subsection records the RKHS facts and operator notation used in the
main text and in the proofs below. We use the same kernel symbols for their
restrictions to $\mathcal S\times\mathcal S$ and write
$K_x=\kappa(x,\cdot)|_{\mathcal S}$ for $x\in\mathcal S$.
Accordingly, $\mathcal H_\kappa$ and $\mathcal H_{\kappa_0}$ below are
spaces of functions on $\mathcal S$.

\begin{definition}[Positive definite kernel and RKHS]
\label{def:app-pd-kernel-rkhs}
A symmetric function
$\kappa:\mathcal S\times\mathcal S\to\mathbb R$ is a \emph{positive definite kernel} if,
for every $m\ge1$, $x_1,\ldots,x_m\in\mathcal S$, and
$c_1,\ldots,c_m\in\mathbb R$,
\[
    \sum_{a=1}^m\sum_{b=1}^m
    c_ac_b\kappa(x_a,x_b)\ge0.
\]
A Hilbert space $\mathcal H_\kappa$ of real-valued functions on $\mathcal S$ is the
\emph{reproducing kernel Hilbert space} associated with $\kappa$ if
$K_x:=\kappa(x,\cdot)\in\mathcal H_\kappa$ for every $x\in\mathcal S$ and
\[
    f(x)
    =\langle f,K_x\rangle_{\mathcal H_\kappa},
    \qquad
    f\in\mathcal H_\kappa.
\]
Every positive definite kernel determines a unique RKHS.
\end{definition}

The reproducing property implies
\[
    |f(x)|
    \le
    \|f\|_{\mathcal H_\kappa}\sqrt{\kappa(x,x)},
    \qquad
    \|K_x-K_{x'}\|_{\mathcal H_\kappa}^2
    =
    \kappa(x,x)+\kappa(x',x')-2\kappa(x,x').
\]
Since $\mathcal S$ is compact and $\kappa$ is continuous, the feature map
$x\mapsto K_x$ is continuous and
$\sup_{x\in\mathcal S}\kappa(x,x)<\infty$.  In particular, every
$f\in\mathcal H_\kappa$ is continuous and bounded on $\mathcal S$.

\paragraph{Population centering.}
Let $X\sim \bbP_X$.  Boundedness of $\kappa$ guarantees that the Bochner mean
embedding
\[
    m_\kappa:=\bbE[K_X]\in\mathcal H_\kappa
\]
is well-defined.  Set
\[
    K_x^0:=K_x-m_\kappa
\]
and define
\[
    \kappa_0(x,x'):=\langle K_x^0,K_{x'}^0\rangle_{\mathcal H_\kappa}
    =\kappa(x,x')-\bbE[\kappa(x,X)]-\bbE[\kappa(X,x')]+\bbE[\kappa(X,X')],
\]
where $X'$ is an independent copy of $X$.

\begin{proposition}[Centered RKHS realization]
\label{prop:app-centered-rkhs-realization}
The function $\kappa_0$ is a continuous positive definite kernel.  Its RKHS
is realized as follows.  Let
\[
    \mathcal H_0
    :=\overline{\operatorname{span}}
      \{K_x-m_\kappa:x\in\mathcal S\}
      \subseteq\mathcal H_\kappa
\]
and define $\mathcal J:\mathcal H_0\to\mathbb R^{\mathcal S}$ by
$(\mathcal J w)(x)=\langle w,K_x^0\rangle_{\mathcal H_\kappa}$.
Then $\mathcal J$ is an isometry of $\mathcal H_0$ onto
$\mathcal H_{\kappa_0}$. Under this identification, the function
$\mathcal J(K_x^0)=\kappa_0(x,\cdot)$ is the representer of evaluation and
\begin{align*}
    f(x)
    &=\langle f,K_x^0\rangle_{\mathcal H_{{\kappa_0}}},\\
    \bbE[K_X^0]
    &=0,
    \qquad
    \bbE[f(X)]=0
    \quad
    \text{for every }f\in\mathcal H_{{\kappa_0}},\\
    \kappa_0(x,x)
    &=\|K_x^0\|_{\mathcal H_{{\kappa_0}}}^2,\\
    \|K_x^0-K_{x'}^0\|_{\mathcal H_{{\kappa_0}}}^2
    &=
    \kappa_0(x,x)+\kappa_0(x',x')-2\kappa_0(x,x').
 \end{align*}
Moreover, with the ambient bound $B_\kappa:=\sup_{x\in\cX}\kappa(x,x)$
from Condition~\nameref{par:K1},
\[
    \sup_{x\in\mathcal S}\kappa_0(x,x)
    \le4B_\kappa.
\]
\end{proposition}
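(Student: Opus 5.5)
We prove Proposition~\ref{prop:app-centered-rkhs-realization} by realizing everything inside the ambient RKHS $\mathcal H_\kappa$ and transporting the structure through $\mathcal J$, the standard Moore--Aronszajn identification. First, the mean embedding $m_\kappa=\bbE[K_X]$ exists as a Bochner integral because $x\mapsto K_x$ is continuous on the compact set $\mathcal S$, hence Borel measurable and separably valued, and $\bbE\|K_X\|_{\mathcal H_\kappa}=\bbE\sqrt{\kappa(X,X)}\le\sqrt{B_\kappa}$. Bochner integrals commute with bounded linear functionals, so $\langle K_x,m_\kappa\rangle=\bbE[\kappa(x,X)]$ and $\|m_\kappa\|^2=\bbE[\kappa(X,X')]$. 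Expanding $\langle K_x-m_\kappa,K_{x'}-m_\kappa\rangle_{\mathcal H_\kappa}$ then yields exactly the displayed formula for $\kappa_0$. Positive definiteness follows because $\sum_{a,b}c_ac_b\kappa_0(x_a,x_b)=\|\sum_ac_aK^0_{x_a}\|^2\ge0$. Continuity follows from continuity of $x\mapsto K_x^0$ in $\mathcal H_\kappa$ together with joint continuity of the inner product.

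Next we treat $\mathcal J$. It is linear, and it is injective on $\mathcal H_0$: if $\langle w,K_x^0\rangle=0$ for all $x\in\mathcal S$, then $w\perp\mathcal H_0$, so $w=0$. Define $\mathcal G:=\mathcal J(\mathcal H_0)$ with the transported norm $\|\mathcal Jw\|:=\|w\|_{\mathcal H_\kappa}$. This makes $\mathcal G$ a Hilbert space of functions on $\mathcal S$, and $\mathcal J$ is an isometry onto it by construction. By definition $\mathcal J(K_{x'}^0)(x)=\kappa_0(x',x)$, so $\kappa_0(x',\cdot)\in\mathcal G$. For $f=\mathcal Jw$ we have $\langle f,\mathcal J K_x^0\rangle_{\mathcal G}=\langle w,K_x^0\rangle=f(x)$, which is the reproducing property. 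Uniqueness of the RKHS (Definition~\ref{def:app-pd-kernel-rkhs}) then gives $\mathcal G=\mathcal H_{\kappa_0}$. The identities $\kappa_0(x,x)=\|K_x^0\|^2_{\mathcal H_{\kappa_0}}$ and the formula for $\|K_x^0-K_{x'}^0\|^2$ follow at once from the reproducing property and bilinearity.

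For the centering identities, note that $\bbE[K_X^0]=\bbE[K_X]-m_\kappa=0$ in $\mathcal H_\kappa$. For $f=\mathcal Jw$, the function $x\mapsto\langle w,K_x^0\rangle$ is continuous and bounded, so it is integrable, and exchanging the Bochner integral with the bounded functional $\langle w,\cdot\rangle$ gives $\bbE f(X)=\langle w,\bbE K^0_X\rangle=0$. The vector identity $\bbE[K_X^0]=0$ in $\mathcal H_{\kappa_0}$ is the image of this statement under the isometry $\mathcal J$; the isometry preserves Bochner means because it is bounded and linear. Finally, the sup bound follows from the triangle inequality. For $x\in\mathcal S$,
\[
\kappa_0(x,x)^{1/2}=\|K_x-m_\kappa\|\le\|K_x\|+\|m_\kappa\|\le\sqrt{B_\kappa}+\bbE\|K_X\|\le2\sqrt{B_\kappa},
\]
and squaring gives $\kappa_0(x,x)\le4B_\kappa$.

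The only delicate point is the measurability and Bochner-integrability step that lets us exchange $\bbE$ with inner products. We handle it by noting that $\mathcal S$ is compact, hence separable, so the continuous image $\{K_x:x\in\mathcal S\}$ lies in a separable subspace. Pettis' theorem then gives strong measurability, and the norm bound gives integrability. The remaining steps are routine.
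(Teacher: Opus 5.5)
Your proposal is correct and follows essentially the same route as the paper: you write $\kappa_0$ as a feature-map inner product in $\mathcal H_\kappa$, show $\mathcal J$ is injective on $\mathcal H_0$, transport the inner product to obtain the reproducing property, get the centering identities by exchanging $\bbE$ with inner products, and bound $\|K_x^0\|\le\|K_x\|+\|m_\kappa\|\le 2\sqrt{B_\kappa}$. The differences are cosmetic: you identify $\mathcal J(\mathcal H_0)$ with $\mathcal H_{\kappa_0}$ by uniqueness of the RKHS, whereas the paper matches it with the Moore--Aronszajn completion of $\operatorname{span}\{\kappa_0(x,\cdot)\}$, and you make the Bochner measurability step explicit where the paper leaves it implicit.
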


\begin{proof}
The displayed formula for $\kappa_0$ is a feature-map representation, so
$\kappa_0$ is positive definite.  Continuity follows from continuity of
$x\mapsto K_x$.  If $\mathcal J w=0$, then $w$ is orthogonal to every $K_x^0$ and
hence to $\mathcal H_0$, so $w=0$. Thus $\mathcal J$ is injective. Equip
\(\mathcal J(\mathcal H_0)\) with the inner product transported from \(\mathcal H_0\),
so that
\(\langle \mathcal J w,\mathcal J v\rangle_{\mathcal J(\mathcal H_0)}
  :=\langle w,v\rangle_{\mathcal H_\kappa}\).
Then \(\mathcal J\) is an isometry and \(\mathcal J(\mathcal H_0)\) is complete. Moreover,
\[
 \langle \mathcal J(K_x^0),\mathcal J(K_{x'}^0)\rangle_{\mathcal J(\mathcal H_0)}
 =\langle K_x^0,K_{x'}^0\rangle_{\mathcal H_\kappa}
 =\kappa_0(x,x').
\]
To make this identification explicit, recall that the Moore--Aronszajn
construction starts from finite linear combinations of the kernel sections
\(\kappa_0(x,\cdot)\), equips them with the inner product
\[
 \left\langle
   \sum_{i=1}^m c_i\kappa_0(x_i,\cdot),
   \sum_{j=1}^{\ell}d_j\kappa_0(y_j,\cdot)
 \right\rangle
 :=
 \sum_{i=1}^m\sum_{j=1}^{\ell}
 c_id_j\kappa_0(x_i,y_j),
\]
identifies elements whose difference has zero norm, and takes the
Hilbert-space completion.  Here
\(\mathcal J(K_x^0)=\kappa_0(x,\cdot)\), and the definition of \(\mathcal H_0\)
shows that
\[
 \mathcal J(\mathcal H_0)
 =
 \overline{\operatorname{span}}
 \{\kappa_0(x,\cdot):x\in\mathcal S\}
\]
in the transported norm. Thus \(\mathcal J(\mathcal H_0)\) is exactly the complete
space produced by this construction and is canonically isometric to
\(\mathcal H_{\kappa_0}\). Moreover, for \(w\in\mathcal H_0\),
\[
 (\mathcal J w)(x)
 =
 \langle w,K_x^0\rangle_{\mathcal H_\kappa}
 =
 \langle \mathcal J w,\mathcal J(K_x^0)\rangle_{\mathcal J(\mathcal H_0)},
\]
which gives the displayed reproducing identity. Since
$\bbE[K_X^0]=\bbE[K_X]-m_\kappa=0$, taking inner products with
$f\in\mathcal H_{{\kappa_0}}$ gives the displayed mean-zero identities.  The
remaining identities follow from the Hilbert-space norm, and
\[
    \|K_x^0\|_{\mathcal H_\kappa}
    \le
    \|K_x\|_{\mathcal H_\kappa}+\|m_\kappa\|_{\mathcal H_\kappa}
    \le2\sqrt{B_\kappa},
\]
which proves the stated bound.
\end{proof}

Throughout the sequel, write
\[
    C_\kappa
    :=
    \sup_{x\in\mathcal S}\kappa_0(x,x).
\]
The preceding bound gives $C_\kappa\le4B_\kappa<\infty$. Thus uniform
boundedness of the centered
feature map follows from the standing bounded-kernel assumption.

Centering separates the intercept from the spectral component: all functions
in $\mathcal H_{{\kappa_0}}$ have zero $\bbP_X$-mean by the centered RKHS
realization.  This is why the estimator in the main
text estimates the marginal term by an explicit intercept and applies the
kernel expansion to centered contrasts.

\paragraph{Covariance and integral operators.}
For $u,v,h\in\mathcal H_{\kappa_0}$, the rank-one outer-product operator
$u\otimes v:\mathcal H_{\kappa_0}\to\mathcal H_{\kappa_0}$ is
\[
 (u\otimes v)h:=\langle v,h\rangle_{\mathcal H_{\kappa_0}}u.
\]
In particular, for $x\in\mathcal S$, the reproducing property gives
\[
 (K_x^0\otimes K_x^0)h
 =\langle K_x^0,h\rangle_{\mathcal H_{\kappa_0}}K_x^0
 =h(x)K_x^0.
\]
Thus $K_X^0\otimes K_X^0$ is a random operator, not a scalar product;
it maps an input function $h$ to the random RKHS element $h(X)K_X^0$.
For $f\in L^2(\bbP_X)$, define the integral operator
$\mathcal T_{\kappa_0}:L^2(\bbP_X)\to L^2(\bbP_X)$ by the representative
\[
 (\mathcal T_{\kappa_0}f)(x)
 :=\int_{\mathcal S}\kappa_0(x,x')f(x')\,d\bbP_X(x')
 =\bbE[\kappa_0(X,x)f(X)],\qquad x\in\mathcal S.
\]
The input is the function $f$; $x$ is the point at which its output is
evaluated. Changing $f$ on a $\bbP_X$-null set does not change this
integral. Under the bounded-kernel assumption,
$\bbE[f(X)K_X^0]$ is a well-defined RKHS element, and its
$L^2(\bbP_X)$ class is the displayed output.

We use the same symbol $\mathcal T_{\kappa_0}$ for the RKHS realization
$\mathcal H_{\kappa_0}\to\mathcal H_{\kappa_0}$. On an input
$h\in\mathcal H_{\kappa_0}$ it returns the RKHS element
$\bbE[h(X)K_X^0]$, with the same pointwise formula.
In this realization, $\mathcal T_{\kappa_0}=\bbE[K_X^0\otimes K_X^0]$
is the average of the random rank-one operators over $X\sim\bbP_X$.
This Bochner expectation in the trace-class operators is well-defined,
since
\[
 \bbE\|K_X^0\otimes K_X^0\|_{\mathrm{tr}}
 =\bbE\|K_X^0\|_{\mathcal H_{\kappa_0}}^2
 =\bbE\kappa_0(X,X)<\infty,
\]
where $\|\cdot\|_{\mathrm{tr}}$ is the trace norm. Because
$\bbE K_X^0=0$, it is the covariance operator of the centered random
feature $K_X^0$, analogous to $\bbE[ZZ^\top]$ for a centered random vector
$Z$.
The two realizations agree on RKHS inputs, but retain their respective
ambient inner products. In particular, covariance perturbations and
operator and Hilbert--Schmidt norms below use the RKHS realization,
whereas fractional source spaces use the $L^2(\bbP_X)$ realization.
The operator $\mathcal T_{\kappa_0}$ is positive, self-adjoint, and trace class, and for
$f,g\in\mathcal H_{{\kappa_0}}$,
\[
    \langle f,\mathcal T_{{\kappa_0}}g\rangle_{\mathcal H_{{\kappa_0}}}
    =\bbE[f(X)g(X)],
    \qquad
    \operatorname{tr}(\mathcal T_{{\kappa_0}})
    =\bbE[\kappa_0(X,X)]<\infty.
\]
The two realizations have the same nonzero eigenvalues. For a positive
eigenvalue, the corresponding $L^2(\bbP_X)$ eigenfunction has a unique
continuous representative in $\mathcal H_{{\kappa_0}}$; we use the same
symbol for both realizations. We write the positive eigenpairs as
$(\lambda_j,\phi_j)_{j\ge1}$, with positive eigenvalues listed in
nonincreasing order and repeated according to multiplicity:
\[
    \lambda_1\ge\lambda_2\ge\cdots>0,
    \qquad
    \mathcal T_{{\kappa_0}}\phi_j=\lambda_j\phi_j,
    \qquad
    \langle\phi_j,\phi_k\rangle_{L^2(\bbP_X)}
    =\ind\{j=k\}.
\]
Under Condition~\nameref{par:K2}, each truncation level used
below is a spectral-block boundary,
\[
    \lambda_s>\lambda_{s+1}>0.
\]
Thus multiplicities are allowed inside $\mathcal H_s$ and inside its
orthogonal complement, and the cutoff respects repeated-eigenvalue blocks.
With the $L^2(\bbP_X)$ normalization above,
\[
    \left\{
        \sqrt{\lambda_j}\,\phi_j
    \right\}_{j\ge1}
    \quad\text{is orthonormal in }\mathcal H_{{\kappa_0}}.
\]
Thus the population eigenfunctions $\phi_j$ are $L^2(\bbP_X)$-normalized,
whereas their RKHS-normalized versions are
$\sqrt{\lambda_j}\phi_j$.

Under Condition~\nameref{par:K1}, the RKHS realization of
$\mathcal T_{{\kappa_0}}$ has trivial null space. Indeed, for
$f\in\mathcal H_{\kappa_0}$,
$\mathcal T_{{\kappa_0}}f=0$ implies $\bbE[f(X)^2]=0$ by the covariance identity
above; continuity of $f$ and full
support of $\bbP_X$ on $\mathcal S$ then imply $f\equiv0$ on $\mathcal S$.  Hence the positive-eigenvalue
system displayed above is complete in
$\mathcal H_{{\kappa_0}}$ by the spectral theorem.

\paragraph{The $L^2$ closure and fractional source spaces.}
The inclusion $J:\mathcal H_{\kappa_0}\to L^2(\bbP_X)$ is bounded and
injective. The RKHS realization of $\mathcal T_{\kappa_0}$ is $J^*J$,
whereas its $L^2(\bbP_X)$ realization is $JJ^*$; these expressions act on
different Hilbert spaces. Identifying RKHS functions with their
$L^2(\bbP_X)$ classes gives
\[
 \overline{\mathcal H_{\kappa_0}}^{\,L^2(\bbP_X)}
 =\overline{\operatorname{span}\{\phi_j:j\ge1\}}^{L^2(\bbP_X)}
 =\ker(\mathcal T_{\kappa_0})^\perp.
\]
Here the kernel and orthogonal complement are taken in $L^2(\bbP_X)$;
the trivial-null-space statement above concerns only the RKHS realization.
For $0<r\le1$, using fractional powers and ranges in $L^2(\bbP_X)$, define
\[
 [\mathcal H_{\kappa_0}]^r:=\operatorname{Ran}(\mathcal T_{\kappa_0}^{r/2})
 =\left\{\sum_jc_j\phi_j:\sum_jc_j^2/\lambda_j^r<\infty\right\}.
\]
At $r=1$ this is the embedded RKHS; at $r<1$ it can strictly contain it.
These are source/interpolation spaces in the sense used in misspecified
KRR \citep{zhang2023optimality}. In contrast, the closure in the RKHS
norm would not enlarge the RKHS at all. We use the same notation
$\Pi_{\le s}$ for the spectral projector in $L^2(\bbP_X)$ and its RKHS
restriction, defining
\[
 \Pi_{\le s}f:=\sum_{j\le s}\langle f,\phi_j\rangle_2\phi_j,
 \qquad f\in L^2(\bbP_X).
\]
The two realizations agree on RKHS functions: for
$h\in\mathcal H_{\kappa_0}$ and $e_j=\sqrt{\lambda_j}\phi_j$,
\[
 \sum_{j\le s}\langle h,e_j\rangle_{\mathcal H_{\kappa_0}}e_j
 =\sum_{j\le s}\langle h,\phi_j\rangle_2\phi_j
 =\Pi_{\le s}h.
\]
The input determines the ambient space. In particular,
$\Pi_{\le s}\dot h_t$ uses the $L^2$ realization when the target is outside
the RKHS; operator norms and projector products in the RKHS eigenspace
perturbation arguments retain their RKHS interpretation.

If the original kernel is universal on $\mathcal S$, its RKHS is dense
in $C(\mathcal S)$ \citep{steinwart2001influence}. Centering each
approximant shows that $\overline{\mathcal H_{\kappa_0}}^{\,L^2(\bbP_X)}=L_0^2(\bbP_X)$: for
$f\in\mathcal H_\kappa$, $f-\bbE f(X)$ belongs to the centered RKHS
realization, and continuous centered functions are dense in
$L_0^2(\bbP_X)$. Universality alone does not imply the fractional
source bound or the trace summability in \aref{R1}; those are
additional quantitative assumptions.

For a truncation level $s$, define
\[
    \mathcal H_s
    :=\operatorname{span}\{\sqrt{\lambda_1}\phi_1,\ldots,\sqrt{\lambda_s}\phi_s\}
    =\operatorname{span}\{\phi_1,\ldots,\phi_s\},
\]
and use $\Pi_{\le s}$ for the orthogonal projector onto $\mathcal H_s$
in either ambient space, as above. Put
$\Pi_{>s}:=\mathrm{Id}-\Pi_{\le s}$ in the same space. The population
RKHS-normalized feature vector is
\[
    \boldsymbol\Phi_s(x)
    :=
    \bigl(
        \sqrt{\lambda_1}\phi_1(x),\ldots,
        \sqrt{\lambda_s}\phi_s(x)
    \bigr)^\top.
\]
It satisfies
\begin{align*}
    \bbE[\boldsymbol\Phi_s(X)]
    &=0,\\
    \bbE[\boldsymbol\Phi_s(X)\boldsymbol\Phi_s(X)^\top]
    &=\boldsymbol\Lambda_s
    :=\operatorname{diag}(\lambda_1,\ldots,\lambda_s),\\
    \|\boldsymbol\Phi_s(x)\|_2^2
    &=\sum_{j=1}^s\lambda_j\phi_j(x)^2
    \le\kappa_0(x,x).
\end{align*}
The last inequality is Bessel's inequality applied to the
RKHS-orthonormal system
$(\sqrt{\lambda_j}\phi_j)_{j\ge1}$ and the representer $K_x^0$.

\paragraph{Normalization conventions.}
The notation used throughout the paper is summarized as follows:
\[
\begin{array}{c|cc}
& L^2\text{-normalized}
& \text{RKHS-normalized}\\
\hline
\text{population}
& \phi_j
& \sqrt{\lambda_j}\phi_j\\[1mm]
\text{empirical}
& \widehat\varphi_j
& \widehat\phi_j.
\end{array}
\]
More precisely, the empirical eigensystem is constructed in
Section~\ref{app:empirical-spectral-construction}. On \(\mathcal E_s\),
\[
    \widehat\varphi_j
    :=\widehat\lambda_j^{-1/2}\widehat\phi_j,
    \qquad
    \widehat{\boldsymbol\Phi}_s(x)
    :=
    (\widehat\phi_1(x),\ldots,\widehat\phi_s(x))^\top.
\]
Thus $\sqrt{\widehat\lambda_j}$ is already absorbed into the empirical
RKHS-normalized eigenfunction $\widehat\phi_j$.

\paragraph{Kernel-side notation.}
The basic Conditions~\nameref{par:K1}--\nameref{par:K3} and the additional
Conditions~\nameref{par:K4}--\nameref{par:K5} are stated in the main
text. Under Condition~\nameref{par:K4}, fix a population
eigenbasis satisfying its eighth-moment bound. The complete-block projectors
used below are invariant under rotations within repeated-eigenvalue blocks;
no moment bound is imposed on every possible rotated basis. This subsection
records the operator notation used below.
Let
\[
    e_j:=\sqrt{\lambda_j}\phi_j,
    \qquad
    \Pi_j:=e_j\otimes e_j.
\]
Thus $(e_j)_{j\ge1}$ is an orthonormal basis of
$\mathcal H_{{\kappa_0}}$ and
\[
    \Pi_{\le s}=\sum_{j\le s}\Pi_j,
    \qquad
    \Pi_{>s}=\sum_{j>s}\Pi_j.
\]
Let
\[
    \widetilde{\mathcal T}_N
    :=
    \frac1N\sum_{i=1}^N K_{X_i}^0\otimes K_{X_i}^0.
\]
The empirically centered covariance operator satisfies
\[
    \widehat{\mathcal T}_{{\kappa_0},N}
    =
    \widetilde{\mathcal T}_N
    -\overline K_N^0\otimes\overline K_N^0,
    \qquad
    \overline K_N^0:=\frac1N\sum_{i=1}^N K_{X_i}^0.
\]
Define the effective dimension
\[
    \mathcal N(\mu)
    :=
    \operatorname{tr}\{\mathcal T_{{\kappa_0}}
    (\mathcal T_{{\kappa_0}}+\mu \mathrm{Id})^{-1}\}.
\]
For the principal-subspace analysis, define the relative rank across the
selected spectral cutoff by
\[
    \mathfrak r_s^{\partial}
    :=
    \sum_{j\le s}
    \frac{\lambda_j}{\lambda_j-\lambda_{s+1}}
    +
    \sum_{k>s}
    \frac{\lambda_k}{\lambda_s-\lambda_k}.
\]
This quantity is built from interactions crossing the cutoff and therefore
permits arbitrary multiplicities within equal-eigenvalue groups on either side
of the cutoff.
For $I=\{1,\ldots,s\}$, it is exactly the relative rank
$\mathbf r_I(\mathcal T_{\kappa_0})$ of \citet[Definition~1]{jirak2020perturbation}.
The block perturbation theorem of that paper assumes the contraction
condition $x\mathfrak r_s^{\partial}\le 1/8$, where $x$ is the scale of the
relative perturbation; it does not itself assume polynomial eigenvalue decay.
Condition~\nameref{par:K5} is a stronger primitive sufficient condition used
here to obtain explicit rates.  In particular,
Lemma~\ref{lem:cross-boundary-spectrum-consequences} shows that it implies
$\mathfrak r_s^{\partial}\lesssim s\log(es)$, so the relative perturbation
scale $x_N\asymp\sqrt{\log N/N}$ satisfies the required contraction under
\eqref{eq:sharp-tuning-regime}.  The strict boundary
$\lambda_s>\lambda_{s+1}$ only prevents a population eigenspace from being
split at the cutoff; eigenvalue multiplicities within complete blocks on
either side remain allowed.  Thus the block argument below applies
\citet[Theorem~2]{jirak2020perturbation}; the related individual-eigenvalue
relative-rank theory is developed in \citet{jirak2023relative}.

Two population spectral sums will be used below:
\begin{align*}
    \mathcal U_s
    &:={}
    \sum_{j\le s}\sum_{k>s}
    \frac{\lambda_j\lambda_k}
         {(\lambda_j-\lambda_k)^2},\\
    \mathcal W_s^{+}
    &:={}
    \left(\sum_{k>s}\lambda_k\right)
    \left(\sum_{j\le s}
      \frac{\lambda_j}{\lambda_j-\lambda_{s+1}}\right).
\end{align*}
The first is the variance scale for the ordinary Hilbert--Schmidt
principal-subspace error. The second is the non-asymptotic upper-bound scale
for the leading-side PCA excess risk in \citet[
Proposition~2.10]{reiss2016pca}. Their different spectral weights are what produce the
relative-spectral prediction-weighted representation bound.

\begin{example}[A Fourier kernel satisfying the spectral conditions]
\label{ex:fourier-kernel-moment-spectrum}
Let $\cX=\mathbb T:=\mathbb R/(2\pi\mathbb Z)$ with its usual circular
metric, and let $\bbP_X$ be the uniform probability measure. For $m\ge1$, put
\[
    \phi_{2m-1}(\theta)=\sqrt2\cos(m\theta),\qquad
    \phi_{2m}(\theta)=\sqrt2\sin(m\theta).
\]
These functions form an orthonormal basis of $L^2_0(\bbP_X)$ and have zero
mean. For any $q>1$, define
\[
    \kappa(\theta,\theta')
    :=1+\sum_{j\ge1}j^{-q}\phi_j(\theta)\phi_j(\theta').
\]
Since $|\phi_j(\theta)\phi_j(\theta')|\le2$ and
$\sum_jj^{-q}<\infty$, the series converges absolutely and uniformly.
Each summand is a positive semidefinite kernel, so $\kappa$ is continuous,
positive definite, and bounded by
$\sup_\theta\kappa(\theta,\theta)\le1+2\sum_jj^{-q}<\infty$.
Thus Condition~\aref{K1} holds. Population centering removes precisely the
constant term, and orthonormality gives the positive eigenpairs
$(\lambda_j,\phi_j)$ with $\lambda_j=j^{-q}$.
Consequently,
\[
    \sup_{j\ge1}\bbE|\phi_j(X)|^8\le16,
    \qquad
    \lambda_j-\lambda_k=j^{-q}-k^{-q}\quad(j<k).
\]
Condition~\aref{K4} holds, Condition~\aref{K5} holds with equality in its
cross-boundary comparison, and every positive integer cutoff satisfies
Condition~\aref{K2}. This example verifies only the kernel-side conditions;
the response conditions and tuning restrictions remain separate.
\end{example}

\begin{example}[A BHV contrast in the closure but outside the RKHS]
\label{ex:bhv-closure-source}
Consider unrooted four-taxon BHV trees with fixed pendant lengths. Their
internal-edge space is a tripod: three rays $R_1,R_2,R_3$ meeting at an
unresolved star tree $o$ \citep{billera2001geometry}. Denote radius $a\ge0$
on ray $k$ by $aR_k$. The intrinsic distance is $|a-b|$ on the same ray
and $a+b$ on distinct rays.

Take the circular covariate and Fourier kernel in
Example~\ref{ex:fourier-kernel-moment-spectrum} with $q=4$. Thus
$X\sim\operatorname{Unif}(\mathbb T)$, $\lambda_j=j^{-4}$, and
$\overline{\mathcal H_{\kappa_0}}^{\,L^2(\bbP_X)}=L_0^2(\bbP_X)$. Put $u=\sin x$ and let
\[
 T(x)=\begin{cases}
       uR_1,&u\ge0,\\
       (-u)R_2,&u<0.
       \end{cases}
 \qquad Y=W T(X),\qquad
 W\sim\operatorname{Unif}[1/2,3/2],\quad W\perp X.
\]
Scaling changes only the internal-edge length. Conditional on $X=x$,
the response lies on a single ray, has mean radius $|u|$, and has
Fr\'echet mean $m_\oplus(x)=T(x)$. In particular $o$ belongs to the target
image. For the candidate $y=aR_3$, $a>0$,
\[
 M_\oplus(y,x)=a^2+\bbE[W^2]\sin^2x+2a|\sin x|,\qquad
 M_\oplus(o,x)=\bbE[W^2]\sin^2x.
\]
Consequently the normalized centered contrast is
\[
 \dot h_{(o,aR_3)}(x)=2\big(|\sin x|-2/\pi\big).
\]
This is a nonsmooth function of the \emph{covariate}; merely observing
nonsmoothness in the response candidate would not establish
nonmembership in a covariate RKHS.

The Fourier identity
\[
 |\sin x|=\frac2\pi-\frac4\pi
     \sum_{m\ge1}\frac{\cos(2mx)}{4m^2-1}
\]
shows that this centered contrast has coefficients of order $m^{-2}$
along $\phi_{4m-1}=\sqrt2\cos(2mx)$. Thus its RKHS squared norm would
contain a divergent series of order
$\sum_m m^{-4}(4m-1)^4$. It does not belong to
$\mathcal H_{\kappa_0}$, even as an $L^2$ equivalence class with an RKHS
representative, but it does belong to $\overline{\mathcal H_{\kappa_0}}^{\,L^2(\bbP_X)}$.

This example also verifies the \emph{uniform} fractional source rather
than only one contrast. For a point of radius $a$ on ray $k$, write
\[
 F_1(u)=u,\qquad F_2(u)=-u,\qquad F_3(u)=-|u|.
\]
Its squared distance to $WT(x)$ is
$a^2+W^2u^2-2aW F_k(u)$. In the contrast between $y=aR_k$ and
$z=bR_\ell$, the $W^2u^2$ term cancels. If $k=\ell$, division by
$|a-b|$ gives a coefficient of magnitude two; if $k\ne\ell$, division
by $a+b$ gives two coefficients with total absolute value two.
After marginal centering, every contrast therefore has the form
\[
 \dot h_t(x)=\alpha_t\sin x+\beta_t\big(|\sin x|-2/\pi\big),\qquad
 |\alpha_t|+|\beta_t|\le2.
\]
The constant $\beta_t$ here is a coefficient, not the curvature exponent
$\beta_M$. The weighted coefficient sums are uniformly finite whenever
$4r<3$, while $\sum_j\lambda_j^r<\infty$ whenever $4r>1$.
Thus \aref{R1} holds for every $1/4<r<3/4$, although its exact-RKHS
case fails.

Conditions \aref{M1} and \aref{M2} hold on the whole circle. The tripod
is Hadamard, so \aref{M3} holds with $\beta_M=2$. Moreover
\[
 \zeta_t(x,WT(x))=(W-1)\{\alpha_t\sin x+\beta_t|\sin x|\}.
\]
For conditionally independent radii, Cauchy--Schwarz in $\mathbb R^2$
and conditional centering give \aref{R2} with
$C_M^2\le8\operatorname{Var}(W)$. The kernel conditions
\aref{K1}, \aref{K2}, \aref{K4}, and \aref{K5} hold by
Example~\ref{ex:fourier-kernel-moment-spectrum}.
For instance, $r=5/8$ gives $qr=5/2>2$ and the refined source rate
$O_p(n^{-3/5}+\varepsilon_{\mathrm{opt}})$ with the tuning in
Corollary~\ref{cor:polynomial-rate}.

The absolute value arises from crossing branches and comparing with a
third branch; two-ray distances alone can cancel it. Smooth mixture
probabilities or other conditional laws can also smooth a Fr\'echet
function. Hence BHV geometry alone does not imply failure of RKHS
membership. This constructed example verifies the new assumptions; it
is not an assumption verification for every experimental BHV mechanism.
\end{example}

\subsubsection{Metric geometry}\label{app:geom}

\begin{definition}[Hadamard space]
A geodesic metric space $(\cY,d_{\cY})$ is a \emph{CAT(0) space} if, for
all $y_0,y_1,v\in\cY$, every constant-speed geodesic
$\gamma:[0,1]\to\cY$ from $y_0$ to $y_1$ satisfies
\[
    d_{\cY}(v,\gamma(t))^2
    \le
    (1-t)d_{\cY}(v,y_0)^2
    +t d_{\cY}(v,y_1)^2
    -t(1-t)d_{\cY}(y_0,y_1)^2,
    \qquad t\in[0,1].
\]
A \emph{Hadamard space} is a complete CAT(0) space.
\end{definition}

\begin{definition}[Finite second moment]
A Borel probability measure $P$ on $\cY$ has a \emph{finite second
moment} if
\[
    \int_{\cY}d_{\cY}(y,y_0)^2\,dP(y)<\infty
\]
for some, and hence every, $y_0\in\cY$.
\end{definition}

In a Hadamard space, every Borel probability measure with a finite second
moment has a unique Fr\'echet mean, also called its barycenter
\citep{sturm2003probability}.

\begin{proposition}[Automatic quadratic curvature]
\label{prop:automatic-quadratic-curvature}
Under Condition~\nameref{par:M1}, the conditional law
$\bbP_{Y\mid X=x}$ has a finite second moment for every $x\in\cX_0$. The following
assertions hold.
\begin{enumerate}[label=(\roman*),leftmargin=*]
\item If $\cY$ is a Hilbert space, then for every $x\in\cX_0$ and $y\in\cY$,
\[
    \Delta_\oplus(y,x)
    =\|y-m_\oplus(x)\|^2.
\]
\item If $\cY$ is a Hadamard space, then for every $x\in\cX_0$ and $y\in\cY$,
\[
    \Delta_\oplus(y,x)
    \ge d_\cY(y,m_\oplus(x))^2.
\]
\end{enumerate}
Consequently, in either case the local curvature part of
Condition~\nameref{par:M3} holds globally in $y$, uniformly for
$x\in\cX_0$, with \(\beta_M=2\) and \(c_M=1\).
The same inequality also implies uniform identification, because for every
$\eta>0$,
\[
    \inf_{x\in\cX_0}\inf_{d_\cY(y,m_\oplus(x))\ge\eta}
    \Delta_\oplus(y,x)\ge\eta^2.
\]
Thus Condition~\nameref{par:M1}, together with Hilbert or Hadamard geometry,
supplies the curvature and identification parts of
Condition~\nameref{par:M3}; Condition~\nameref{par:M2} records the
measurability of the chosen conditional-mean or conditional-barycenter
version.
The RKHS approximation and response-noise complexity requirements are handled
separately by Conditions~\nameref{par:R1}--\nameref{par:R2}.
\end{proposition}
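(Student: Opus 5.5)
The argument has three steps: finiteness of conditional second moments, the two curvature identities (expansion in the Hilbert case, the CAT(0) inequality plus Jensen in the Hadamard case), and then \aref{M3} as an immediate consequence.

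\emph{Finite second moments.} Fix $x\in\cX_0$. Condition~\aref{M1} gives $\bbE[d_\cY^2(Y,y_*)\mid X=x]<\infty$. By the triangle inequality, $d_\cY^2(Y,y_0)\le 2d_\cY^2(Y,y_*)+2d_\cY^2(y_*,y_0)$ for any $y_0$, so the conditional law has a finite second moment. Hence $M_\oplus(\cdot,x)$ is finite everywhere. In the Hadamard case, \citet{sturm2003probability} then gives a unique barycenter, which is $m_\oplus(x)$ under \aref{M2}.

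\emph{Hilbert case.} Let $\mu=\bbE[Y\mid X=x]$, a Bochner integral that exists by finiteness of the second moment. Expanding the square,
\[
\|y-Y\|^2=\|y-\mu\|^2+2\langle y-\mu,\mu-Y\rangle+\|\mu-Y\|^2 .
\]
Taking conditional expectations, the cross term vanishes, so $M_\oplus(y,x)=\|y-\mu\|^2+M_\oplus(\mu,x)$. Thus $\mu$ is the unique minimizer, hence $\mu=m_\oplus(x)$, and $\Delta_\oplus(y,x)=\|y-m_\oplus(x)\|^2$.

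\emph{Hadamard case.} Write $m=m_\oplus(x)$ and let $\gamma$ be the geodesic from $m$ to $y$. For each response $v$, the CAT(0) inequality gives
\[
d^2(v,\gamma(t))\le(1-t)d^2(v,m)+t\,d^2(v,y)-t(1-t)d^2(m,y).
\]
Integrate against $\bbP_{Y\mid X=x}$; all terms are integrable by the first step. This yields
\[
M_\oplus(\gamma(t),x)\le (1-t)M_\oplus(m,x)+tM_\oplus(y,x)-t(1-t)d^2(m,y).
\]
Minimality of $m$ gives $M_\oplus(\gamma(t),x)\ge M_\oplus(m,x)$. Subtracting, we get $0\le t\,\Delta_\oplus(y,x)-t(1-t)d^2(m,y)$. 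Dividing by $t>0$ and letting $t\downarrow0$ gives $\Delta_\oplus(y,x)\ge d^2(y,m)$.

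\emph{Consequences for \aref{M3}.} In both cases $\Delta_\oplus(y,x)\ge d_\cY^2(y,m_\oplus(x))$ for all $y\in\cY$ and all $x\in\cX_0$. This is the local curvature bound with $\beta_M=2$ and $c_M=1$, holding for any $\eta_M$. Restricting to $d_\cY(y,m_\oplus(x))\ge\eta$ gives the identification bound $\inf\Delta_\oplus\ge\eta^2>0$, uniformly in $x$.

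There is no real obstacle here. The only care needed is integrability, so that the CAT(0) inequality can be integrated and the cross term in the Hilbert expansion vanishes; both follow from finiteness of the second moments established in the first step.
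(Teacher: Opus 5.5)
Your proposal is correct and follows the paper's proof essentially step for step. In the Hilbert case you expand the square around the conditional mean. In the Hadamard case you integrate the CAT(0) inequality along the geodesic from $m_\oplus(x)$ to $y$, invoke minimality of $m_\oplus(x)$, divide by $t$, and let $t\downarrow0$. The differences are cosmetic: you integrate the pointwise CAT(0) inequality yourself where the paper cites Sturm's variance inequality directly, and you make the identification of the conditional mean with $m_\oplus(x)$ explicit via uniqueness.
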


\begin{proof}
Fix $x\in\cX_0$. In the Hilbert case, put $m=\bbE[Y\mid X=x]$.  Since
$\bbE[Y-m\mid X=x]=0$, expansion of the square gives
\[
\begin{split}
 M_\oplus(y,x)
 &=\bbE[\|y-m+m-Y\|^2\mid X=x]\\
 &=\|y-m\|^2+\bbE[\|Y-m\|^2\mid X=x].
\end{split}
\]
Taking $y=m$ and subtracting proves part~(i). For the Hadamard case, put
$m=m_\oplus(x)$, and let $\gamma_t$ be the geodesic from $m$ to $y$.
The CAT(0) variance inequality \citep{sturm2003probability} gives
\[
    M_\oplus(\gamma_t,x)
    \le (1-t)M_\oplus(m,x)+tM_\oplus(y,x)
       -t(1-t)d_\cY(m,y)^2.
\]
Since $m$ minimizes the conditional Fr\'echet function,
$M_\oplus(m,x)\le M_\oplus(\gamma_t,x)$.  Dividing by $t>0$ yields
\[
    \Delta_\oplus(y,x)
    \ge (1-t)d_\cY(m,y)^2.
\]
Letting $t\downarrow0$ proves part~(ii).
The Hilbert conditional mean and the Hadamard barycenter exist uniquely
because Condition~\nameref{par:M1} gives finite conditional second moments.
Finally, the displayed global bound
implies the uniform-identification display in the statement.  Measurability
of $x\mapsto m_\oplus(x)$ is retained explicitly in
Condition~\nameref{par:M2}, since it concerns the chosen version of the
regular conditional law.
\end{proof}

\subsection{SS-KRFR computation and well-posedness}
\label{app:full-procedures}
\label{app:ss-krfr-full-procedure}

The empirical eigensystem, its measurable repeated-eigenvalue convention,
and its basis-free realization are given in
Section~\ref{app:empirical-spectral-construction}.  Together with the
weights in \eqref{eq:calibrated-weights-main}, these constructions implement
the estimator in \eqref{eq:ss-krfr-estimator}; here we record the
well-posedness results used by the theory.

\paragraph{Ridge-regularized coefficient.}
Let $\mathbf F=(\widehat{\boldsymbol\Phi}_s(X_i)^\top)_{i\in L}$,
$\mathbf H_L=\mathbf I_n-\mathbf1_n\mathbf1_n^\top/n$, and
$\mathbf Z_y=(Z_{y,i})_{i\in L}$. The fit in Section~\ref{sec:SS-KRFR} is
\[
 (\widehat a_y,\widehat{\boldsymbol\theta}_{y,L})
 =\operatorname*{argmin}_{a,\boldsymbol\theta}
 \left\{\frac1{2n}\|\mathbf Z_y-a\mathbf1_n-\mathbf F\boldsymbol\theta\|^2
       +\frac{\xi}{2}\|\boldsymbol\theta\|^2\right\}.
\]
The intercept normal equation gives
$\widehat a_y=\overline Z_{y,L}
-\overline{\boldsymbol\Phi}_L^\top\widehat{\boldsymbol\theta}_{y,L}$.
Substitution yields
\[
 \mathbf C_{L,s}=\mathbf F^\top\mathbf H_L\mathbf F/n,\qquad
 \widehat{\mathbf b}_{y,L}=\mathbf F^\top\mathbf H_L\mathbf Z_y/n,\qquad
 (\mathbf C_{L,s}+\xi\mathbf I_s)\widehat{\boldsymbol\theta}_{y,L}
 =\widehat{\mathbf b}_{y,L}.
\]
The centered slope Hessian is at least $\xi\mathbf I_s$; the intercept
equation then gives a unique joint fit. With
$\mathbf A_s=(\mathbf C_{L,s}+\xi\mathbf I_s)^{-1}$, its weight vector is
\[
 \mathbf a_{s,\xi}(x)=\mathbf1_n/n+
 \mathbf H_L\mathbf F\mathbf A_s\mathbf q_L(x)/n,\qquad
 \|\mathbf a_{s,\xi}(x)\|^2
 =\frac1n+\frac1n\mathbf q_L(x)^\top\mathbf A_s
             \mathbf C_{L,s}\mathbf A_s\mathbf q_L(x).
\]
The two summands are orthogonal and the weights sum to one.
A common translation of the feature map leaves $\mathbf H_L\mathbf F$
and $\mathbf q_L$ unchanged. The pooled diagonal covariance is used only
for learning and analyzing the representation, not as the regression inverse.
The following well-posedness results apply to these unit-sum weights;
they do not assert that every numerical response correction is an argmin.

\begin{proposition}[Coercivity]
\label{prop:coercivity}
For every $x\in\cX$, the map
\[
    y\longmapsto\widehat M_{s,\xi}(y,x)
    =\sum_{i\in L}a_{s,\xi,i}(x)d_{\cY}^2(y,Y_i)
\]
is coercive: it diverges whenever $d_{\cY}(y_*,y)\to\infty$ for any fixed
$y_*\in\cY$.
\end{proposition}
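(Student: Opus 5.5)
The plan is to exploit the unit-sum property of the weights $a_{s,\xi,i}(x)$ from \eqref{eq:calibrated-weights-main}. Coercivity then follows from an elementary expansion of each squared distance around the fixed anchor $y_*$. Fix $x\in\cX$ and the training data, and write $a_i:=a_{s,\xi,i}(x)$, which are finite real numbers with $\sum_{i\in L}a_i=1$ (shown in Section~\ref{app:full-procedures}). Put $D:=d_\cY(y_*,y)$ and $c_i:=d_\cY(y_*,Y_i)$, which are finite constants given the data. The triangle inequality gives $|d_\cY(y,Y_i)-D|\le c_i$, so we may write
\[
 d_\cY(y,Y_i)=D+\epsilon_i(y),\qquad |\epsilon_i(y)|\le c_i .
\]

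Substituting and expanding the square yields
\[
 \widehat M_{s,\xi}(y,x)=\Big(\sum_i a_i\Big)D^2+2D\sum_i a_i\epsilon_i(y)+\sum_i a_i\epsilon_i(y)^2 .
\]
Using $\sum_i a_i=1$ and bounding the remaining terms by absolute values, we obtain
\[
 \widehat M_{s,\xi}(y,x)\ge D^2-2D\sum_i|a_i|c_i-\sum_i|a_i|c_i^2 .
\]
The right side is a quadratic in $D$ with leading coefficient one and data-dependent finite lower-order coefficients. It therefore diverges as $D\to\infty$. Independence of the anchor is immediate: the argument works for any fixed $y_*$, or one notes that $d_\cY(y_*,y)\to\infty$ and $d_\cY(y_*',y)\to\infty$ are equivalent by the triangle inequality.

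The only point needing care is that the weights may be negative. For that reason one cannot lower-bound each term $a_i d^2(y,Y_i)$ separately. The key step is to keep the exact cancellation to the leading coefficient $\sum_i a_i=1$ before bounding the error terms, and this is where the recentering and the unpenalized intercept matter. One should also record that the $a_i$ are well defined for every ambient $x$, including on the fallback branch of $\mathcal E_s^c$. This holds because $\mathbf C_{L,s}+\xi\mathbf I_s$ is invertible for $\xi>0$ and $\widehat{\mathbf k}_0(x)$ is finite under \aref{K1}. Unit-sum is exactly what the intercept normal equation delivers on every branch.
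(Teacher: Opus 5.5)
Your proposal is correct and follows essentially the same route as the paper's proof. Both anchor each squared distance at the reference point via the triangle inequality and use $\sum_{i\in L}a_{s,\xi,i}(x)=1$ to keep the leading term $d_\cY^2(y,y_*)$ exact, which yields the identical quadratic lower bound $D^2-2D\sum_i|a_i|c_i-\sum_i|a_i|c_i^2$. Your decomposition $d_\cY(y,Y_i)=D+\epsilon_i(y)$ is simply a repackaging of the paper's estimate $|d_\cY^2(y,Y_i)-d_\cY^2(y,y_0)|\le d_\cY(Y_i,y_0)\{2d_\cY(y,y_0)+d_\cY(Y_i,y_0)\}$.
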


\begin{proof}
Fix $x\in\cX$ and a reference point $y_0\in\cY$.
The reverse triangle inequality gives
\[
    |d_{\cY}(y,Y_i)-d_{\cY}(y,y_0)|
    \le d_{\cY}(Y_i,y_0).
\]
Consequently,
\[
    |d_{\cY}^2(y,Y_i)-d_{\cY}^2(y,y_0)|
    \le
    d_{\cY}(Y_i,y_0)
    \{2d_{\cY}(y,y_0)+d_{\cY}(Y_i,y_0)\}.
\]
Since $\sum_{i\in L}a_{s,\xi,i}(x)=1$,
\[
\begin{split}
    \widehat M_{s,\xi}(y,x)
    ={}&d_{\cY}^2(y,y_0)
    +\sum_{i\in L}a_{s,\xi,i}(x)
      \{d_{\cY}^2(y,Y_i)-d_{\cY}^2(y,y_0)\}\\
    \ge{}&d_{\cY}^2(y,y_0)
    -2d_{\cY}(y,y_0)
      \sum_{i\in L}|a_{s,\xi,i}(x)|d_{\cY}(Y_i,y_0)
    -\sum_{i\in L}|a_{s,\xi,i}(x)|d_{\cY}^2(Y_i,y_0).
\end{split}
\]
For fixed data and fixed $x$, the two sums on the right-hand side are
finite constants. Hence
\[
    \widehat M_{s,\xi}(y,x)\longrightarrow\infty
    \qquad\text{as }d_{\cY}(y,y_0)\longrightarrow\infty,
\]
which proves coercivity.
\end{proof}

For the measurability statements below, let \(\mathbb D_{n,N}\) be the
space of possible observed datasets
\[
    \mathbf d=(x_1,\ldots,x_N,L,(v_i)_{i\in L}),
    \qquad L\subset[N],\quad |L|=n.
\]
With its natural Borel structure, \(\mathbb D_{n,N}\) is a standard Borel
space: it is a finite disjoint union of finite products of the Polish
spaces \(\cX\) and \(\cY\). For \(\mathbf d\in\mathbb D_{n,N}\), write
\[
    \widehat M_{s,\xi}(y,x;\mathbf d)
    :=\sum_{i\in L}a_{s,\xi,i}(x;\mathbf d)d_{\cY}^2(y,v_i)
\]
for the estimated Fr\'echet function computed from \(\mathbf d\). The data argument is
suppressed outside the measurability arguments.

\begin{proposition}[Measurable approximate minimizer]
\label{prop:measurable-approximate-minimizer}
For the estimated Fr\'echet function in \eqref{eq:ss-krfr-criterion}, for every
$\varepsilon_{\mathrm{opt}}>0$, there exists a selector jointly measurable
in the observed data and $x$ satisfying
\eqref{eq:ss-krfr-estimator}. If an exact measurable argmin selector exists,
one may take $\varepsilon_{\mathrm{opt}}=0$.
\end{proposition}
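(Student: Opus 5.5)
The plan is to apply a standard measurable selection theorem to the Carathéodory function $F(\mathbf d,x,y):=\widehat M_{s,\xi}(y,x;\mathbf d)$ on the standard Borel space $\mathbb D_{n,N}\times\cX$ with values over the Polish space $\cY$. I first check the two Carathéodory properties. For each fixed $y$, the map $(\mathbf d,x)\mapsto F$ is Borel. The weights $a_{s,\xi,i}(x;\mathbf d)$ in \eqref{eq:calibrated-weights-main} are built from kernel evaluations by centering, the measurable eigensystem convention of Section~\ref{app:empirical-spectral-construction} (including the fallback on $\mathcal E_s^c$), and a matrix inverse $(\mathbf C_{L,s}+\xi\mathbf I_s)^{-1}$ that is continuous in its entries; each $d_\cY^2(y,v_i)$ is continuous in $v_i$. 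For each fixed $(\mathbf d,x)$, the map $y\mapsto F$ is continuous, being a finite signed combination of continuous functions.

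Next, fix a countable dense set $\{y_k\}_{k\ge1}\subset\cY$, which exists because $\cY$ is Polish. Continuity in $y$ gives
\[
 \inf_{y\in\cY}F(\mathbf d,x,y)=\inf_{k}F(\mathbf d,x,y_k)=:F^*(\mathbf d,x),
\]
so $F^*$ is measurable as a countable infimum. It is also finite: coercivity (Proposition~\ref{prop:coercivity}) together with the lower bound in its proof shows that $F$ is bounded below by a quadratic in $d_\cY(y,y_0)$ whose coefficients are finite for the given data.

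Now define the selector explicitly. Let $k^*(\mathbf d,x):=\min\{k: F(\mathbf d,x,y_k)<F^*(\mathbf d,x)+\varepsilon_{\mathrm{opt}}\}$. This set is nonempty because $\varepsilon_{\mathrm{opt}}>0$ and the infimum over the dense sequence equals $F^*$. Each event $\{k^*=k\}$ is a finite Boolean combination of measurable sets, so $\widehat m(\mathbf d,x):=y_{k^*(\mathbf d,x)}$ is jointly measurable, being a countably-valued map with measurable level sets. By construction it satisfies \eqref{eq:ss-krfr-estimator}, even with strict inequality.

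For the case $\varepsilon_{\mathrm{opt}}=0$ there is nothing to construct: if an exact measurable argmin selector exists by hypothesis, it trivially satisfies \eqref{eq:ss-krfr-estimator} with tolerance zero.

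The only real obstacle is verifying joint measurability of the weights. Eigenvectors of $N^{-1}\mathbf K_0$ are not unique under ties or sign flips, so the argument relies on the fixed measurable convention of Section~\ref{app:empirical-spectral-construction}. I also note that $\mathbf q_L$ and $\mathbf C_{L,s}$ depend on eigenvectors only through invariant combinations on $\mathcal E_s$, which reduces the question to measurability of a spectral projector. Such a projector is a continuous function of the matrix on the gap event $\mathcal E_s$, which is itself an open, hence Borel, set.
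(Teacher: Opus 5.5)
Your proposal is correct and follows the paper's proof essentially step by step. Both use Borel measurability of $(\mathbf d,x)\mapsto\widehat M_{s,\xi}(y,x;\mathbf d)$ for each fixed $y$, via the fixed Borel eigensystem rule, and continuity in $y$, so that the infimum over a countable dense subset equals the full infimum, which is finite by the coercivity lower bound. Both then select the first dense point within $\varepsilon_{\mathrm{opt}}$ of that infimum, and the $\varepsilon_{\mathrm{opt}}=0$ case is immediate from the hypothesis.

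Your closing remark is a valid alternative to relying on the paper's specific Gram--Schmidt convention. The weights depend on the eigenbasis only through the feature kernel, i.e.\ through $\widehat{\mathbf V}_s\widehat{\boldsymbol\Lambda}_s^{-1}\widehat{\mathbf V}_s^\top$, which is continuous on the open gap event. Strictly, though, it is the weights that are basis-invariant, not $\mathbf q_L$ and $\mathbf C_{L,s}$ themselves.
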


\begin{proof}
Let \(\cY_0=\{y_1,y_2,\ldots\}\) be a fixed countable dense subset of
\(\cY\), and fix \(y_0\in\cY\). The empirical matrix is a Borel function of
\(\mathbf d\), and the fixed Borel eigensystem rule specified in
Section~\ref{app:empirical-spectral-construction} makes every empirical
feature a jointly Borel function of \((\mathbf d,x)\). Thus, for every fixed
\(j\), the map
\((\mathbf d,x)\mapsto\widehat M_{s,\xi}(y_j,x;\mathbf d)\) is measurable.
The lower bound from the proof of Proposition~\ref{prop:coercivity} gives
\[
\begin{aligned}
    \widehat M_{s,\xi}(y,x;\mathbf d)
    &\ge
    d_{\cY}^2(y,y_0)
    -2d_{\cY}(y,y_0)
      \sum_{i\in L}|a_{s,\xi,i}(x;\mathbf d)|
      d_{\cY}(v_i,y_0)
    -\sum_{i\in L}|a_{s,\xi,i}(x;\mathbf d)|d_{\cY}^2(v_i,y_0)\\
    &\ge
    -\left\{
      \sum_{i\in L}|a_{s,\xi,i}(x;\mathbf d)|
      d_{\cY}(v_i,y_0)
    \right\}^2
    -\sum_{i\in L}|a_{s,\xi,i}(x;\mathbf d)|d_{\cY}^2(v_i,y_0).
\end{aligned}
\]
For fixed \((\mathbf d,x)\), the two sums in the final bound are finite.
Together with finiteness at \(y_1\), this shows that the infimum over \(y\)
is a finite real number. Hence
\[
    q(\mathbf d,x):=\inf_{\ell\ge1}
      \widehat M_{s,\xi}(y_\ell,x;\mathbf d)
\]
is jointly measurable as a countable infimum. By the definition of an infimum,
for every $\varepsilon_{\mathrm{opt}}>0$ the set
\[
    \left\{j:\widehat M_{s,\xi}(y_j,x;\mathbf d)
    \le q(\mathbf d,x)+\varepsilon_{\mathrm{opt}}\right\}
\]
is nonempty. Let \(\widehat j(\mathbf d,x)\) be its smallest element. Moreover,
\[
    \{(\mathbf d,x):\widehat j(\mathbf d,x)=j\}
    =\{(\mathbf d,x):\widehat M_{s,\xi}(y_j,x;\mathbf d)
          \le q(\mathbf d,x)+\varepsilon_{\mathrm{opt}}\}
       \cap\bigcap_{\ell<j}
       \{(\mathbf d,x):\widehat M_{s,\xi}(y_\ell,x;\mathbf d)
          >q(\mathbf d,x)+\varepsilon_{\mathrm{opt}}\},
\]
so \(\widehat j\), and therefore
\(\widehat m_{s,\xi,\varepsilon_{\mathrm{opt}}}(x;\mathbf d)
:=y_{\widehat j(\mathbf d,x)}\), is jointly measurable.

For fixed $(\mathbf d,x)$, the objective function is continuous in $y$. Since
$\cY_0$ is dense,
\[
    \inf_{\ell\ge1}\widehat M_{s,\xi}(y_\ell,x;\mathbf d)
    =\inf_{y\in\cY}\widehat M_{s,\xi}(y,x;\mathbf d).
\]
The asserted approximate optimality follows from the definition of
\(\widehat j(\mathbf d,x)\).
\end{proof}

\begin{corollary}[Proper metric space]
\label{cor:proper}
If $(\cY,d_{\cY})$ is proper, that is, every closed ball is compact, then
$\operatorname*{argmin}_{y\in\cY}\widehat M_{s,\xi}(y,x)$ is nonempty.
Thus, there is an exact selector jointly measurable in the observed data and
$x$, so one may take $\varepsilon_{\mathrm{opt}}=0$.
\end{corollary}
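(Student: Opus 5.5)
The argument combines three ingredients: coercivity of the criterion, lower semicontinuity, and compactness of closed balls.

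Fix the observed data $\mathbf d$ and a query $x$. First I would note that $y\mapsto\widehat M_{s,\xi}(y,x;\mathbf d)=\sum_{i\in L}a_{s,\xi,i}(x;\mathbf d)d_{\cY}^2(y,v_i)$ is continuous in $y$. It is a finite signed combination of the continuous functions $d_\cY^2(\cdot,v_i)$, so this holds even though some weights may be negative. Next, Proposition~\ref{prop:coercivity} shows that the map diverges as $d_\cY(y,y_0)\to\infty$. I would therefore pick any $y_1\in\cY$ and choose a radius $R$ such that $\widehat M_{s,\xi}(y,x)>\widehat M_{s,\xi}(y_1,x)$ whenever $d_\cY(y,y_0)>R$. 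Taking $R\ge d_\cY(y_1,y_0)$ ensures that $y_1$ lies in the closed ball $\bar B(y_0,R)$. The infimum over $\cY$ then equals the infimum over this ball. Because $\cY$ is proper, the ball is compact, and a continuous function attains its minimum there. Hence the argmin is nonempty.

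For the measurable exact selector, I would view the argmin correspondence $(\mathbf d,x)\mapsto\operatorname*{argmin}_y\widehat M_{s,\xi}(y,x;\mathbf d)$ as nonempty and closed-valued. The objective is Carathéodory: it is jointly measurable in $(\mathbf d,x)$, as in the proof of Proposition~\ref{prop:measurable-approximate-minimizer}, and continuous in $y$. The domain $\mathbb D_{n,N}\times\cX$ is standard Borel and $\cY$ is Polish. The measurable maximum theorem (Aliprantis--Border, Thm.~18.19) requires a compact-valued constraint correspondence. To supply it, I would use a measurable radius
\[
R(\mathbf d,x):=\Bigl(\sum_{i}|a_{s,\xi,i}|d_\cY(v_i,y_0)\Bigr)+\Bigl\{\Bigl(\sum_{i}|a_{s,\xi,i}|d_\cY(v_i,y_0)\Bigr)^2+\sum_i|a_{s,\xi,i}|d^2_\cY(v_i,y_0)+\widehat M_{s,\xi}(y_0,x)\Bigr\}^{1/2}+1,
\]
obtained by solving the quadratic lower bound from the coercivity proof. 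Outside the ball $\bar B(y_0,R)$ the objective exceeds $\widehat M_{s,\xi}(y_0,x)$, so restricting to $\bar B(y_0,R(\mathbf d,x))$ loses no minimizers. This ball correspondence is compact-valued and measurable, since its distance function $d_\cY(y,y_0)-R$ is Carathéodory. The theorem then yields a measurable argmin correspondence with a measurable selector (Kuratowski--Ryll-Nardzewski), which gives $\varepsilon_{\mathrm{opt}}=0$.

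The main obstacle is the measurability of the constrained correspondence rather than existence itself. If a citation is unwanted, the fallback is direct. Let $q(\mathbf d,x)$ be the countable infimum over a dense set, as before. For each $k$, the set $\{y:\widehat M\le q+1/k\}\cap\bar B(y_0,R)$ is a closed-valued correspondence with a measurable graph, and its intersection over $k$ is the argmin. On compact values this gives an Effros-measurable correspondence, to which Kuratowski--Ryll-Nardzewski applies.
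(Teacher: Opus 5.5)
Your proof is correct. The existence half matches the paper's argument; the only difference is that you use a closed ball through $y_1$ where the paper uses the bounded sublevel set $\{y:\widehat M_{s,\xi}(y,x)\le\widehat M_{s,\xi}(y_1,x)+1\}$.

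The measurability half takes a genuinely different route. The paper builds no constraint set. It observes that the argmin graph $\{(\mathbf d,x,y):\widehat M_{s,\xi}(y,x;\mathbf d)=q(\mathbf d,x)\}$ is Borel with nonempty compact sections, and then applies the Arsenin--Kunugui uniformization theorem.

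You instead extract an explicit measurable radius from the coercivity quadratic. Your square root is well defined because $\widehat M_{s,\xi}(y_0,x)\ge-\sum_i|a_{s,\xi,i}|d_\cY^2(v_i,y_0)$. You then apply the measurable maximum theorem to a Carath\'eodory objective over a compact-valued constraint, followed by Kuratowski--Ryll-Nardzewski.

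Your route stays within standard analytic measurable-selection tools and works over an arbitrary measurable parameter space. It also yields a measurable argmin correspondence, not just one selector. The paper's route is shorter, but it leans on a descriptive-set-theoretic uniformization theorem that needs a standard Borel parameter space.

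Two details need tightening.

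First, measurability of $(\mathbf d,x)\mapsto\bar B(y_0,R(\mathbf d,x))$ does not follow merely from $d_\cY(y,y_0)-R$ being Carath\'eodory. The dense-set argument handles strict sublevel sets, and a closed ball can be strictly larger than the closure of the open ball. The claim does hold here by properness: for closed $F$, $\bar B(y_0,R)\cap F\ne\emptyset$ if and only if $R\ge\operatorname{dist}(y_0,F)$, because that distance is attained. So lower inverses of closed sets are measurable.

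Second, your fallback is not citation-free. Passing from a measurable graph with compact sections to Effros measurability is itself a nontrivial theorem about Borel sets with compact sections, of the same depth as the paper's uniformization step. Your main route already suffices, so you can drop the fallback.
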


\begin{proof}
Fix $\mathbf d\in\mathbb D_{n,N}$ and $x$. The map
$y\mapsto\widehat M_{s,\xi}(y,x;\mathbf d)$ is continuous. By coercivity,
the sublevel set
\[
    K_{\mathbf d,x}:=\{y:\widehat M_{s,\xi}(y,x;\mathbf d)
       \le \widehat M_{s,\xi}(y_1,x;\mathbf d)+1\}
\]
is bounded. It is closed by continuity and compact since $\cY$ is proper. The global infimum equals the infimum over $K_{\mathbf d,x}$, so
continuity and compactness give a nonempty compact argmin.

For joint measurability, use the countable dense set
$\cY_0=\{y_1,y_2,\ldots\}$ and define
\[
    q(\mathbf d,x):=\inf_{j\ge1}
       \widehat M_{s,\xi}(y_j,x;\mathbf d).
\]
Continuity in $y$ shows that this equals the infimum over all $y\in\cY$, and
the countable representation makes $q$ measurable. Since the weights and response coordinates are
Borel and $y\mapsto d_\cY^2(y,v_i)$ is continuous, the objective function is jointly
Borel in $(\mathbf d,x,y)$. Hence the argmin graph
\[
 \mathcal G
 :=\{(\mathbf d,x,y):
   \widehat M_{s,\xi}(y,x;\mathbf d)=q(\mathbf d,x)\}
\]
is Borel and has nonempty closed sections. The space
$\mathbb D_{n,N}\times\cX$ is standard Borel, $\cY$ is Polish, and the
sections are compact by the first part of the proof. The Arsenin--Kunugui
uniformization theorem \citep{kechris2012classical} therefore yields a
jointly Borel exact selector.
\end{proof}

\begin{corollary}[Hilbert space]
\label{cor:Hilbert}
If $\cY$ is a Hilbert space, then the empirical minimizer is unique and
\[
    \widehat m_{s,\xi,0}(x)
    =\sum_{i\in L}a_{s,\xi,i}(x)Y_i.
\]
Thus $\varepsilon_{\mathrm{opt}}=0$ is available.
\end{corollary}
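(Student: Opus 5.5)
The plan is to expand the squared distance around the weighted mean and use the fact, recorded before Proposition~\ref{prop:coercivity}, that the weights $a_{s,\xi,i}(x)$ sum to one. The proof is direct; no uniformization or compactness argument is needed.

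Fix the data and the query $x\in\cX$. Write $a_i:=a_{s,\xi,i}(x)$ and set $\bar y:=\sum_{i\in L}a_iY_i$, a finite linear combination and hence an element of $\cY$. For any $y\in\cY$, expand
\[
 \|y-Y_i\|^2=\|y-\bar y\|^2+2\langle y-\bar y,\bar y-Y_i\rangle+\|\bar y-Y_i\|^2 .
\]
Multiply by $a_i$ and sum over $i\in L$. Because $\sum_i a_i=1$, the first term contributes $\|y-\bar y\|^2$. The cross term is $2\langle y-\bar y,\bar y-\sum_ia_iY_i\rangle=0$. This gives
\[
 \widehat M_{s,\xi}(y,x)=\|y-\bar y\|^2+\widehat M_{s,\xi}(\bar y,x).
\]
This identity holds even though some of the $a_i$ may be negative: only the unit sum is used, not positivity. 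Hence $\bar y$ is the unique minimizer, and any other $y$ increases the criterion by exactly $\|y-\bar y\|^2>0$.

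For measurability, recall from the proof of Proposition~\ref{prop:measurable-approximate-minimizer} that each weight $a_i$ is a jointly Borel function of $(\mathbf d,x)$. The responses $v_i$ are coordinate projections of $\mathbf d$. Scalar multiplication and addition are continuous on the separable Hilbert space $\cY$, so $(\mathbf d,x)\mapsto\sum_i a_i(x;\mathbf d)v_i$ is jointly Borel. It is therefore a measurable exact selector, and the convention \eqref{eq:ss-krfr-estimator} is satisfied with $\varepsilon_{\mathrm{opt}}=0$.

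There is no real obstacle. The one point needing care is the displayed decomposition: with signed weights the map $y\mapsto\widehat M_{s,\xi}(y,x)$ is not an average of convex functions with positive weights. Convexity of the objective is instead read off from the identity itself, whose quadratic part has coefficient $\sum_ia_i=1>0$.
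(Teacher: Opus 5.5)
Your proposal is correct and follows essentially the same route as the paper: expand the squared Hilbert norm around $\sum_{i\in L}a_{s,\xi,i}(x)Y_i$ using only the unit-sum property of the (possibly signed) weights, read off the unique minimizer from the strictly convex quadratic term, and conclude exact measurability from measurability of the weights and responses. Your version is slightly more explicit, since it identifies the $y$-independent constant as $\widehat M_{s,\xi}(\bar y,x)$ and spells out the joint Borel measurability, but it is the same argument.
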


\begin{proof}
Since $\sum_{i\in L}a_{s,\xi,i}(x)=1$, expansion of the squared Hilbert
norm gives
\[
    \widehat M_{s,\xi}(y,x)
    =\left\|y-\sum_{i\in L}a_{s,\xi,i}(x)Y_i\right\|_{\cY}^2+C(x),
\]
where $C(x)$ is independent of $y$. The first term is strictly convex and
has the unique minimizer displayed in the corollary. The weights and the
responses are measurable, so this exact minimizer is measurable and one may
set $\varepsilon_{\mathrm{opt}}=0$.
\end{proof}

\subsection{Validation-selected hybrid estimation}
\label{app:hybrid-extension-details}

Let $\widehat{\boldsymbol\Phi}^{\mathrm L}_{s_L}$ and
$\widehat{\boldsymbol\Phi}^{\mathrm N}_{s_N}$ be learned separately from
labeled-only and pooled covariates. Each map has its own centering,
positive-rank cap and bandwidth; the same bandwidth multiplier need not
produce the same numerical bandwidth. For $\alphaH\in[0,1]$, define
\[
 \widehat{\boldsymbol\Phi}_{s,\alphaH}(x)=
 \begin{bmatrix}
 \sqrt{1-\alphaH}\,\widehat{\boldsymbol\Phi}^{\mathrm L}_{s_L}(x)\\
 \sqrt{\alphaH}\,\widehat{\boldsymbol\Phi}^{\mathrm N}_{s_N}(x)
 \end{bmatrix}.
\]
The experimental candidates cap each endpoint by the same nominal $s$,
so the concatenation has at most $2s$ coordinates. Fit the labeled
coefficient as in Section~\ref{app:full-procedures} to these
features, with a single unpenalized intercept. In particular, the centered
labeled covariance contains both within-block and cross-block terms.
This is ridge on the sum kernel
\[
 (1-\alphaH)\langle\widehat{\boldsymbol\Phi}^{\mathrm L}_{s_L}(x),
                  \widehat{\boldsymbol\Phi}^{\mathrm L}_{s_L}(x')\rangle
 +\alphaH\langle\widehat{\boldsymbol\Phi}^{\mathrm N}_{s_N}(x),
                  \widehat{\boldsymbol\Phi}^{\mathrm N}_{s_N}(x')\rangle,
\]
not one eigensystem of a mixture of empirical covariate laws and not a
mixture of separately fitted slopes.

At fixed common ridge, features and response rule, $\alphaH=0,1$ give
the respective endpoints. If the two endpoint maps agree up to an
orthogonal transformation, their concatenation has the same feature
kernel for every $\alphaH$ and therefore gives the same prediction.
These identities do not compare independently tuned fits.

RH jointly searches its feature, ridge and representation-weight
candidates by validation. PH first tunes KRFR and pure SS-KRFR, then
selects its prediction mixing weight between those fitted endpoints.
TH similarly combines tuned KRFR and tuned RH. Thus prediction mixing
does not perform an exhaustive joint search over all endpoint
hyperparameter pairs. The implemented geodesics are specified in
Appendix~\ref{app:experiment-details}; in a Hadamard space the exact
two-point barycenter is the point at fraction $\eta$ along the unique
geodesic. For the general metric-space definition, the measurable
approximate-selection argument in
Proposition~\ref{prop:measurable-approximate-minimizer} applies to the two
endpoint predictions.

\paragraph{Joint selection over mixing weights.}
At the common tuning of Theorem~\ref{thm:th-stability}, the concatenated
features and ridge weights are jointly Borel in the data, query and
$\alphaH$. For any common $\varepsilon_{\mathrm{opt}}>0$, the proof of
Proposition~\ref{prop:measurable-approximate-minimizer} applies with
$\alphaH$ added to the parameter space: minimize over a fixed countable
dense subset of $\cY$ and choose its first point within
$\varepsilon_{\mathrm{opt}}$ of the infimum. Unit-sum weights ensure a
finite infimum, and continuity in the response argument identifies it
with the full infimum. This yields a jointly Borel family satisfying
the same tolerance for every $\alphaH$; at $0,1$ reuse the endpoint fits.
For $\varepsilon_{\mathrm{opt}}=0$, a jointly Borel family of exact
selectors is required, not merely separate measurability at each weight.
Corollaries~\ref{cor:proper} and \ref{cor:Hilbert} extend to this enlarged
parameter space; Hadamard geometry alone does not establish existence
of a signed-criterion minimizer. In a Hadamard space, exact two-point
interpolation is jointly continuous in its endpoints and weight.
For $\varepsilon_{\mathrm{hyb}}>0$, the same countable selection argument
with $(\alphaH,\eta)$ yields a jointly Borel family with a uniform
interpolation tolerance.

The original prediction bounds concern KRFR and pure SS-KRFR with a
fixed kernel and tuning under the shared-law sampling model.
Theorem~\ref{thm:th-stability} additionally controls the concatenated
learner and prediction mixing through TH for Hadamard responses,
uniformly over mixing weights at common numerical kernel, rank and ridge,
without an additional labeled eigengap condition. RH and PH are boundary cases. The mixing
weights may be selected measurably using training or validation data;
prediction risk integrates over a fresh query independent of these data.
Corollary~\ref{cor:ph-balanced-rate} separately gives PH a balanced
polynomial rate by bounding both endpoints, without using the TH
discrepancy scale $\chi$.
These results do not automatically cover independently tuned endpoints,
data-dependent kernels, the full validation procedure, or covariate-law
mismatch. For distribution responses,
PAVA followed by clipping gives the exact finite-grid Fr\'echet minimizer
for unit-sum signed weights; this does not establish the remaining assumptions.
A CAT(0) squared-distance convexity inequality
can propagate bounds for both fitted endpoints, but does not guarantee
that validation mixing is no worse than the better endpoint.

\subsection{Normalized contrasts and spectral constructions}
\label{app:normalized-contrast-identities}

We record the squared-distance algebra behind the normalized contrast.
Notation for the coefficient-level spectral construction, the empirical
Nystr\"om extension, and the fixed-design proof is introduced in
Subsections~\ref{app:population-spectral-notation}--\ref{app:empirical-spectral-construction}
below.

For $t=(z,y)\in\mathfrak T$ and $x\in\cX_0$, define the normalized
population contrast by
\[
    \dot C_\oplus(t;x)
    :=\frac{M_\oplus(y,x)-M_\oplus(z,x)}{d_t}.
\]
Its empirical counterpart is
\[
\begin{aligned}
    \widehat C_{s,\xi}(t;x)
    &:=\frac{\widehat M_{s,\xi}(y,x)-\widehat M_{s,\xi}(z,x)}{d_t}\\
    &=d_t+\sum_{i\in L}a_{s,\xi,i}(x)\psi_t(Y_i).
\end{aligned}
\]

\begin{lemma}[Normalized squared-distance algebra]
\label{lem:directional-algebra}
For every $t=(z,y)\in\mathfrak T$ and $v\in\cY$,
\[
    d_{\cY}^2(y,v)-d_{\cY}^2(z,v)
    =d_t^2+d_t\psi_t(v),
\]
and
\[
    |\psi_t(v)|\le2d_{\cY}(z,v).
\]
Consequently, for $x\in\cX_0$,
\[
    \dot C_\oplus(t;x)=d_t+H_t(x)
    =d_t+\overline\psi_t+\dot h_t(x).
\]
For any weights $a_1,\ldots,a_n$ satisfying $\sum_i a_i=1$,
\[
    \frac{
        \sum_i a_i d_{\cY}^2(y,v_i)
        -\sum_i a_i d_{\cY}^2(z,v_i)
    }{d_t}
    =
    d_t+\sum_i a_i\psi_t(v_i).
\]
\end{lemma}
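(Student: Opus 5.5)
The lemma is pure algebra in the definition \eqref{eq:directional-psi} together with the triangle inequality. I will establish its four claims in the order stated.

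\emph{First identity.} Multiply the definition of $\psi_t(v)$ by $d_t=d_\cY(y,z)>0$. This gives $d_t\psi_t(v)=d_\cY^2(y,v)-d_\cY^2(z,v)-d_t^2$, which rearranges to the first identity.

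\emph{Bound on $\psi_t$.} Factor the difference of squares as $\{d_\cY(y,v)-d_\cY(z,v)\}\{d_\cY(y,v)+d_\cY(z,v)\}$. By the reverse triangle inequality the first factor has absolute value at most $d_t$. Hence
\[
\psi_t(v)=\frac{d_\cY(y,v)-d_\cY(z,v)}{d_t}\,\{d_\cY(y,v)+d_\cY(z,v)\}-d_t .
\]
Write $a:=\{d_\cY(y,v)-d_\cY(z,v)\}/d_t\in[-1,1]$ and $S:=d_\cY(y,v)+d_\cY(z,v)$, so that $\psi_t(v)=aS-d_t$.

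For the upper bound, use $d_\cY(y,v)\le d_\cY(z,v)+d_t$. This gives
\[
\psi_t(v)\le \frac{(d_\cY(z,v)+d_t)^2-d_\cY^2(z,v)}{d_t}-d_t=2d_\cY(z,v).
\]
For the lower bound, use $d_\cY(y,v)\ge |d_\cY(z,v)-d_t|$. Then
\[
d_\cY^2(y,v)-d_\cY^2(z,v)\ge d_t^2-2d_td_\cY(z,v),
\]
so $\psi_t(v)\ge -2d_\cY(z,v)$. Together these give $|\psi_t(v)|\le2d_\cY(z,v)$.

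\emph{Population contrast.} By \aref{M1} and the triangle inequality, $d_\cY^2(y,Y)$ and $d_\cY^2(z,Y)$ are conditionally integrable. So are $\psi_t(Y)$, using the bound just proved, and its marginal expectation. Take the conditional expectation given $X=x\in\cX_0$ of the first identity at $v=Y$ and divide by $d_t$. This yields
\[
\dot C_\oplus(t;x)=d_t+H_t(x).
\]
The second expression follows from the definition $\dot h_t=H_t-\overline\psi_t$.

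\emph{Weighted identity.} Sum the first identity at $v=v_i$ with weights $a_i$, use $\sum_ia_i=1$ to collect the $d_t^2$ terms, and divide by $d_t$.

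The only step needing any care is the two-sided bound on $\psi_t$. Its main subtlety is handling the lower bound when $d_\cY(z,v)<d_t$; the inequality $(d_\cY(z,v)-d_t)^2\ge d_t^2-2d_td_\cY(z,v)$ covers both cases, since it is just $d_\cY^2(z,v)\ge0$.
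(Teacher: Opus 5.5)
Your proof is correct and follows the paper's argument essentially verbatim. The identity is the definition of $\psi_t$ rearranged, the envelope comes from squaring $|d_\cY(z,v)-d_t|\le d_\cY(y,v)\le d_\cY(z,v)+d_t$, and the population and weighted identities follow by conditional expectation and by summing with $\sum_i a_i=1$; your factorization into $aS-d_t$ is never used and can be dropped, while your integrability remark under \aref{M1} is a small addition the paper leaves implicit.
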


\begin{proof}
The first identity is the definition of $\psi_t$. Let
$r=d_{\cY}(y,z)$ and $a=d_{\cY}(z,v)$. The triangle inequality gives
$|a-r|\le d_{\cY}(y,v)\le a+r$. Squaring these bounds gives
\[
    -2ar
    \le
    d_{\cY}^2(y,v)-a^2-r^2
    \le
    2ar,
\]
which proves the envelope after division by $r$. Taking conditional
expectations proves the population identity. The weighted identity follows
by summing the first display and using $\sum_i a_i=1$.
\end{proof}

\begin{remark}[Uniform square integrability of normalized contrasts]
Under Conditions~\nameref{par:M1} and \nameref{par:M2}, put
\[
    \sigma_*^2
    :=
    \sup_{x\in\cX_0}M_\oplus(y_*,x)<\infty.
\]
For \(t=(z,y)\in\mathfrak T\), the first coordinate satisfies
\(z=m_\oplus(x_z)\) for some \(x_z\in\cX_0\). The minimizing property gives
\[
    \bbE[d_{\cY}^2(z,Y)\mid X=x_z]
    \le
    \bbE[d_{\cY}^2(y_*,Y)\mid X=x_z]
    \le\sigma_*^2.
\]
Minkowski's inequality therefore yields
\[
    d_{\cY}(z,y_*)\le2\sigma_*.
\]
For any \(x\in\cX_0\),
\[
\begin{split}
    \bbE[d_{\cY}^2(z,Y)\mid X=x]
    &\le
    2d_{\cY}^2(z,y_*)
    +2\bbE[d_{\cY}^2(y_*,Y)\mid X=x]\\
    &\le10\sigma_*^2.
\end{split}
\]
Combining this bound with Lemma~\ref{lem:directional-algebra} gives
\[
    \bbE[\psi_t(Y)^2\mid X=x]
    \le4\bbE[d_{\cY}^2(z,Y)\mid X=x]
    \le40\sigma_*^2,
\]
uniformly in \(t\in\mathfrak T\) and \(x\in\cX_0\). Hence the
unconditional means $\overline\psi_t$ are finite, and $H_t$, $\zeta_t$,
and the conditional definition of $\dot h_t$ are well-defined on
$\cX_0$ with the measurable versions fixed in the main text.
Condition~\nameref{par:R1} supplies fractional-source control of the
centered conditional contrasts in the fixed $L^2$ closure; only their
finite spectral projections need be RKHS functions.
\end{remark}

\subsubsection{Population spectral coordinates}
\label{app:population-spectral-notation}

Recall that
\[
    \mathcal T_{{\kappa_0}}\phi_j=\lambda_j\phi_j,
    \qquad
    \langle\phi_j,\phi_k\rangle_{L^2(\bbP_X)}=\ind\{j=k\}.
\]
The identity
$\langle f,\mathcal T_{{\kappa_0}}g\rangle_{\mathcal H_{{\kappa_0}}}
=\langle f,g\rangle_{L^2(\bbP_X)}$ gives
\[
    \left\langle
        \sqrt{\lambda_j}\phi_j,
        \sqrt{\lambda_k}\phi_k
    \right\rangle_{\mathcal H_{{\kappa_0}}}
    =\ind\{j=k\}.
\]
Thus the RKHS-normalized population eigenfunction is
$\sqrt{\lambda_j}\phi_j$. Recall from Section~\ref{app:RKHS} the population
feature vector \(\boldsymbol\Phi_s\), the subspace \(\mathcal H_s\), and its
\(\mathcal H_{\kappa_0}\)-orthogonal projector \(\Pi_{\le s}\).

For $t\in\mathfrak T$, the population $L^2$ coefficients are
\[
 c_j(t):=\bbE[\phi_j(X)\dot h_t(X)]
        =\bbE[\phi_j(X)\psi_t(Y)].
\]
The equality uses conditional expectation and $\bbE\phi_j(X)=0$.
For every fixed $s$, the RKHS coordinates of $g_{t,s}=\Pi_{\le s}\dot h_t$
are $c_j(t)/\sqrt{\lambda_j}$, $j\le s$. The corresponding infinite
sequence need not be square summable. Define the population comparison
criterion
\[
 \dot C_{\kappa,s}(t;x):=d_t+\overline\psi_t+g_{t,s}(x),
 \qquad x\in\mathcal S.
\]
This approximates the true normalized criterion; it does not redefine
the regression target.

\subsubsection{Empirical centered-kernel eigensystem and Nystr\"om extension}
\label{app:empirical-spectral-construction}

The empirical features used by the main estimator are constructed from the
empirical law $\widehat{\bbP}_{X,N}=N^{-1}\sum_{i=1}^N\delta_{X_i}$ based on all
$N$ covariates. Hence, conditionally on the pooled covariate sample, the
eigensystem is fixed with respect to the random labeled subset.
Let
\[
    \mathbf K:=[\kappa(X_r,X_{r'})]_{r,r'=1}^N,
    \qquad
    \mathbf H_N:=\mathbf I_N-\frac1N\mathbf1_N\mathbf1_N^\top,
\]
and set
\[
    \mathbf K_{0}:=\mathbf H_N\mathbf K \mathbf H_N.
\]
Let $(\widehat\lambda_{j},\widehat{\mathbf v}_{j})_{j=1}^N$
satisfy the symmetric full-sample eigenproblem
\[
    \frac1N\mathbf K_{0}
    \widehat{\mathbf v}_{j}
    =
    \widehat\lambda_{j}\widehat{\mathbf v}_{j},
    \qquad
    \widehat\lambda_{1}\ge\cdots\ge
    \widehat\lambda_{N}\ge0,
\]
where
$\widehat{\mathbf v}_{j}
=(\widehat v_{j1},\ldots,\widehat v_{jN})^\top$ is
normalized in $\mathbb R^N$. Put
$\widehat\lambda_{N+1}:=0$ and
\[
    \mathcal E_{s}:=
    \{\widehat\lambda_{s}>\widehat\lambda_{s+1}\}.
\]
For every realized symmetric Gram matrix, choose the ordered eigenbasis by the
following fixed Borel rule: within each repeated-eigenvalue eigenspace,
project the canonical vectors of \(\mathbb R^N\) in lexicographic order and
apply Gram--Schmidt after discarding zero projections; choose the sign so that
the first nonzero coordinate is positive. Thus the rank-\(s\) subspace and all
coordinate formulae below are jointly measurable. The objective function is invariant
under any orthogonal change of basis within a complete repeated-eigenvalue
block, because its ridge multiplier is constant on that block.
For $x\in\cX$ and $r=1,\ldots,N$, define the empirically centered kernel
\[
    \widehat\kappa_{0}(x,X_r)
    :=\kappa(x,X_r)-\frac1N\sum_{t=1}^N\kappa(x,X_t)
      -\frac1N\sum_{t=1}^N\kappa(X_t,X_r)
      +\frac1{N^2}\sum_{t=1}^N\sum_{t'=1}^N
       \kappa(X_t,X_{t'}).
\]
The full-sample empirical centered-kernel integral operator is
\[
    (\widehat{\mathcal L}_{\kappa_0}h)(x)
    :=
    \frac1N\sum_{r=1}^N
    \widehat\kappa_{0}(x,X_r)h(X_r).
\]
All $N$ covariates receive the same representation mass; responses enter the estimated Fr\'echet function.

On $\mathcal E_{s}$, for $1\le j\le s$, define the empirical
$L^2(\widehat{\bbP}_{X,N})$-normalized Nystr\"om extension
\[
    \widehat\varphi_{j}(x)
    :=
    \frac1{\sqrt N\,\widehat\lambda_{j}}
    \sum_{r=1}^N
    \widehat v_{jr}\,
    \widehat\kappa_{0}(x,X_r).
\]
The RKHS-normalized empirical eigenfunction used as a regression feature is
\[
    \widehat\phi_{j}(x)
    :=
    \widehat\lambda_{j}^{1/2}\widehat\varphi_{j}(x)
    =
    \frac1{\sqrt{N\widehat\lambda_{j}}}
    \sum_{r=1}^N
    \widehat v_{jr}\,
    \widehat\kappa_{0}(x,X_r).
\]
At the observed covariate points,
\[
    \widehat\varphi_{j}(X_\ell)
    =
    \sqrt N\,\widehat v_{j\ell},
    \qquad
    \widehat\phi_{j}(X_\ell)
    =
    \sqrt{N\widehat\lambda_{j}}\,
    \widehat v_{j\ell},
    \qquad \ell=1,\ldots,N.
\]
Hence
\[
    \frac1N\sum_{r=1}^N
    \widehat\varphi_{j}(X_r)
    \widehat\varphi_{k}(X_r)
    =
    \ind\{j=k\}.
\]
Both normalizations represent the same full-sample empirical eigendirection:
\[
    \widehat{\mathcal L}_{\kappa_0}\widehat\varphi_{j}
    =\widehat\lambda_{j}\widehat\varphi_{j},
    \qquad
    \widehat{\mathcal L}_{\kappa_0}\widehat\phi_{j}
    =\widehat\lambda_{j}\widehat\phi_{j}.
\]
The first eigenfunction is normalized in $L^2(\widehat{\bbP}_{X,N})$,
whereas the second is normalized in the empirical RKHS.
Finally, set
\[
    \widehat{\boldsymbol\Phi}_{s}(x)
    :=
    (\widehat\phi_{1}(x),\ldots,
     \widehat\phi_{s}(x))^\top,
    \qquad
    \widehat{\boldsymbol\Lambda}_{s}
    :=
    \operatorname{diag}(\widehat\lambda_{1},\ldots,
    \widehat\lambda_{s}).
\]
This is the normalization convention used by \citet{kim2026semisupervised}:
$\widehat\phi_{j}$ itself is RKHS-normalized, so the factor
$\sqrt{\widehat\lambda_{j}}$ is already part of the feature.
For proofs, on $\mathcal E_{s}^c$ use the measurable exceptional-event
convention
\[
    \widehat\varphi_{j}=0,
    \qquad
    \widehat\phi_{j}=0,
    \qquad
    \widehat{\boldsymbol\Phi}_{s}=0,
    \qquad
    \widehat{\boldsymbol\Lambda}_{s}=0.
\]
The scalar quantities $\widehat\lambda_j$ always denote the raw empirical
eigenvalues before this convention is applied; the convention zeroes the
displayed truncated feature objects and matrices. In finite-sample
computation, the estimator uses the numerical eigensystem and the
validation-selected effective rank; this convention is used only in
probability arguments.
This is the centered kernel-PCA/Nystr\"om normalization used throughout the
proofs; see also \citet{zwald2005convergence}. It yields
\[
    \frac1N\sum_{r=1}^N
    \widehat{\boldsymbol\Phi}_{s}(X_r)
    \widehat{\boldsymbol\Phi}_{s}(X_r)^\top
    =\widehat{\boldsymbol\Lambda}_{s},
\]
\[
    \frac1N\sum_{r=1}^N
    \widehat{\boldsymbol\Phi}_{s}(X_r)=0.
\]

Fix an ordering \(L=\{\ell_1,\ldots,\ell_n\}\).
The componentwise weights in the main text are collected as
$\mathbf a_{s,\xi}(x)
=(a_{s,\xi,i}(x))_{i\in L}\in\mathbb R^n$.
This is the usual full-sample kernel-PCA/Nystr\"om construction.

\subsubsection{A common RKHS realization of the centered eigensystems}

Let $\mathcal H_\kappa$ be the RKHS of the original kernel restricted to
$\mathcal S\times\mathcal S$. All identities in this common-space
realization are for $x\in\mathcal S$, and the observed covariates lie in
$\cX_0\subseteq\mathcal S$ almost surely. Write
\[
    K_x:=\kappa(x,\cdot)|_{\mathcal S},
    \qquad
    m_\kappa:=\bbE[K_X],
    \qquad
    K_x^0:=K_x-m_\kappa.
\]
We identify $\mathcal H_{{\kappa_0}}$ isometrically with the closed subspace
$\overline{\operatorname{span}}\{K_x^0:x\in\mathcal S\}$ of
$\mathcal H_\kappa$. Under this identification,
\[
    \mathcal T_{{\kappa_0}}=\bbE[K_X^0\otimes K_X^0].
\]
Recall that
\(\overline K_N^0=N^{-1}\sum_{r=1}^N K_{X_r}^0\), and define
\[
    \widehat K_{x}^0
    :=K_x^0-\overline K_N^0
    =K_x-\frac1N\sum_{r=1}^N K_{X_r}.
\]
Then
\[
    \widehat\kappa_{0}(x,X_r)
    =
    \langle\widehat K_{x}^0,
    \widehat K_{X_r}^0\rangle_{\mathcal H_\kappa}.
\]
Set
\[
    \widehat{\mathcal T}_{{\kappa_0},N}
    :=
    \frac1N\sum_{r=1}^N
    \widehat K_{X_r}^0\otimes\widehat K_{X_r}^0.
\]
On $\mathcal E_s$, for $1\le j\le s$, define the common-space eigenvectors by
\[
    \widehat e_j
    :=\frac1{\sqrt{N\widehat\lambda_j}}
      \sum_{r=1}^N\widehat v_{jr}\widehat K_{X_r}^0.
\]
Then
\[
    \widehat{\mathcal T}_{{\kappa_0},N}\widehat e_j
    =\widehat\lambda_j\widehat e_j,
    \qquad
    \langle\widehat e_j,\widehat e_k\rangle=\ind\{j=k\},
    \qquad
    \widehat\phi_j(x)=\langle\widehat e_j,\widehat K_x^0\rangle.
\]
The vector $\widehat e_j\in\mathcal H_{\kappa_0}$ is distinguished from
its empirically centered feature function $\widehat\phi_j$:
\[
 \langle\widehat e_j,K_x^0\rangle
 =\widehat\phi_j(x)+\langle\widehat e_j,\overline K_N^0\rangle.
\]
Thus the common-space eigenvectors, rather than the empirically centered
functions, enter the synthesis maps below. On $\mathcal E_s^c$, set
$\widehat e_j=0$ for $j\le s$, consistently with the proof convention.

\subsubsection{Synthesis maps and normalized-contrast comparison criteria}

All population RKHS evaluations below are on $\mathcal S$, and observed
covariates belong to $\cX_0$ almost surely. In the common
$\mathcal H_{\kappa_0}$ realization define
\[
 E_s\mathbf a=\sum_{j\le s}a_j\sqrt{\lambda_j}\phi_j,\qquad
 \widehat E_s\mathbf a=\sum_{j\le s}a_j\widehat e_j,\qquad
 \Pi_{\le s}=E_sE_s^*,\quad
 \widehat\Pi_{\le s}=\widehat E_s\widehat E_s^*.
\]
On $\mathcal E_s^c$ set $\widehat E_s,\widehat{\boldsymbol\Phi}_s,
\widehat{\boldsymbol\Lambda}_s,\widehat\Pi_{\le s}$ to zero for the
probability convention. Write
\[
 \mathbf B_s:=\widehat{\boldsymbol\Lambda}_s+\xi\mathbf I_s,\qquad
 \mathbf A_s:=(\mathbf C_{L,s}+\xi\mathbf I_s)^{-1}.
\]
The first matrix depends only on the pooled covariates; the second also
depends on $L$. Retain the query second moment
\begin{equation}
\label{eq:supp-overline-C}
 \overline{\mathbf C}_s:=
 \bbE_{X^\circ}[\widehat{\boldsymbol\Phi}_s(X^\circ)
 \widehat{\boldsymbol\Phi}_s(X^\circ)^\top\mid\mathscr X_{N,L}]
 =\widehat E_s^*(\mathcal T_{\kappa_0}
       +\overline K_N^0\otimes\overline K_N^0)\widehat E_s.
\end{equation}
Distinguish the uncentered labeled second moment from its centered version:
\[
 \mathbf C^u_{L,s}:=P_L\widehat{\boldsymbol\Phi}_s
                         \widehat{\boldsymbol\Phi}_s^\top,\qquad
 \mathbf C_{L,s}=\mathbf C^u_{L,s}
       -\overline{\boldsymbol\Phi}_L\overline{\boldsymbol\Phi}_L^\top,
 \qquad P_Lf:=n^{-1}\sum_{i\in L}f(X_i).
\]
With $\overline K_L^0=P_LK^0$, common centering cancels exactly:
\[
 \mathbf q_L(x)=\widehat E_s^*(K_x^0-\overline K_L^0).
\]
For scalar or Hilbert-valued functions on the observed covariates,
define the linear smoother
\[
 (\widehat S f)(x):=\sum_{i\in L}a_{s,\xi,i}(x)f(X_i).
\]
It reproduces constants. In the common population-centered RKHS put
\[
 \mathbf c_{t,s}:=\widehat E_s^*g_{t,s},\qquad
 \ell_{t,s}:=\widehat\Pi_{\le s}g_{t,s},\qquad
 d_{t,s}:=(I-\widehat\Pi_{\le s})g_{t,s}.
\]
Thus $\dot h_t=r_{t,s}+d_{t,s}+\ell_{t,s}$ on $\cX_0$.
For the conditional-mean oracle define
\[
 \overline C_{s,\xi}^L(t;x):=d_t+\widehat S H_t(x)
       =d_t+\overline\psi_t+\widehat S\dot h_t(x).
\]
This replaces $\psi_t(Y_i)$ by $H_t(X_i)$ in the actual labeled fit.
The learned-subspace comparison uses
$\widehat C_s^H(t;x):=d_t+\overline\psi_t+\ell_{t,s}(x)$ and
population-centered evaluation. It is a proof criterion, not a new
fitting step. In every formula $\mathbf C_{L,s}$ is the centered labeled
covariance, never $\mathbf C^u_{L,s}$. The exact smoother decomposition
is recorded in Lemma~\ref{lem:fixed-design-decomposition}.

\section{Technical Results}
\label{app:technical-results}

In this section, pointwise identities involving population RKHS functions
are understood on $\mathcal S$, and those involving conditional contrasts
or conditional response laws are understood on $\cX_0$. We work on the
probability-one event on which all observed covariates lie in $\cX_0$.

The proof has four stages. First, fractional-source expansions express
all truncation residuals through a common Hilbert-valued tail. Second,
the kernel moment and spectral assumptions control the empirical
representation and labeled covariance. Third, a vector-valued ridge
stability inequality and a pooled empirical covariance bound control
truncation and representation residuals even under data reuse; response
noise and ridge bias complete the four-term contrast bound. Finally,
curvature and uniform escape turn that bound into prediction error for
the true Fr\'echet target.

\subsection{Population and approximation lemmas}
\label{app:population-lemmas}

\begin{lemma}[Spectral expansion in the fixed closure]
\label{lem:population-rkhs-surrogate}
Under \aref{K1} and \aref{R1}, write
$a_j(t)=\lambda_j^{-r/2}c_j(t)$. Then
\[
 \dot h_t=\sum_{j\ge1}\lambda_j^{r/2}a_j(t)\phi_j
       \quad\hbox{in }L^2(\bbP_X),\qquad
 \sup_t\sum_j a_j(t)^2\le B_r^2.
\]
The finite projection $g_{t,s}=\Pi_{\le s}\dot h_t$ belongs to
$\mathcal H_{\kappa_0}$, with
\[
 \sup_t\|g_{t,s}\|_{\mathcal H_{\kappa_0}}^2
 =\sup_t\sum_{j\le s}a_j(t)^2\lambda_j^{r-1}
 \le B_r^2\lambda_s^{r-1}=G_{r,s}^2.
\]
The infinite series need not converge in the RKHS when $r<1$.
Under the additional moment and target conditions \aref{M1} and \aref{M2},
all displayed scalar expansions can be made simultaneous on the common
full-measure set $\cX_0$, after a Borel full-measure restriction if needed.
\end{lemma}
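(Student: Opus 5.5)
The plan is to work entirely in the $L^2(\bbP_X)$ eigenbasis $(\phi_j)$ and then transfer the finite projection to the RKHS using the normalization facts of Appendix~\ref{app:RKHS}.

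\textbf{Spectral expansion.} By \aref{R1}, each $\dot h_t$ lies in $\overline{\mathcal H_{\kappa_0}}^{\,L^2(\bbP_X)}=\overline{\operatorname{span}\{\phi_j\}}^{L^2(\bbP_X)}$. Since $(\phi_j)$ is an orthonormal basis of this closure, Parseval gives $\dot h_t=\sum_j c_j(t)\phi_j$ in $L^2(\bbP_X)$. Substituting $c_j(t)=\lambda_j^{r/2}a_j(t)$ yields the displayed series. The bound $\sup_t\sum_j a_j(t)^2=\sup_t\sum_j c_j(t)^2/\lambda_j^r\le B_r^2$ is then exactly the weighted inequality in \aref{R1}.

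\textbf{Finite projection and its RKHS norm.} The projection $g_{t,s}=\sum_{j\le s}c_j(t)\phi_j$ is a finite combination of the continuous representatives $\phi_j\in\mathcal H_{\kappa_0}$, so it lies in $\mathcal H_{\kappa_0}$. Rewrite it as $g_{t,s}=\sum_{j\le s}(c_j(t)/\sqrt{\lambda_j})\,e_j$ with $e_j=\sqrt{\lambda_j}\phi_j$. Because $(e_j)$ is orthonormal in $\mathcal H_{\kappa_0}$,
\[
\|g_{t,s}\|_{\mathcal H_{\kappa_0}}^2=\sum_{j\le s}c_j(t)^2/\lambda_j=\sum_{j\le s}a_j(t)^2\lambda_j^{r-1}.
\]
Since $r\le1$ and $\lambda_j\ge\lambda_s$ for $j\le s$, we have $\lambda_j^{r-1}\le\lambda_s^{r-1}$. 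Hence the sum is at most $\lambda_s^{r-1}\sum_j a_j(t)^2\le B_r^2\lambda_s^{r-1}=G_{r,s}^2$, and taking the supremum over $t$ gives the claim. For the remark on non-convergence, it suffices to note that the full sum $\sum_j c_j^2/\lambda_j$ can diverge when $r<1$. Example~\ref{ex:bhv-closure-source} exhibits this.

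\textbf{Simultaneous versions (the only delicate step).} The expansions above hold for each $t$ as $L^2$ identities, i.e.\ only $\bbP_X$-a.e., while $\mathfrak T$ is uncountable. The plan is as follows.
\begin{enumerate}
\item Under \aref{M1}--\aref{M2}, the preceding remark gives uniform square integrability of $\psi_t$. Each $H_t(x)=\int\psi_t\,d\bbP_{Y\mid X=x}$ is therefore defined pointwise on $\cX_0$ through the fixed Borel version of the conditional law, and each $g_{t,s}$ is a pointwise-defined continuous function.
\item Use the measurable separable-version convention of Section~\ref{sec:regularity-conditions}. Pass to a countable determining subclass $\mathfrak T_0\subseteq\mathfrak T$.
\item Intersect the countably many full-measure sets on which the $L^2$ identities hold for $t\in\mathfrak T_0$. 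This produces a Borel set of full measure, which we intersect with $\cX_0$.
\item On that set, extend the identities to all $t$ by separability.
\end{enumerate}
If the separability convention does not directly deliver step 4, the fallback is to replace $\cX_0$ by this Borel full-measure subset. The statement explicitly permits such a restriction.
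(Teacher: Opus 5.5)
Your proof of the $L^2$ expansion, the coefficient bound, and the RKHS norm of $g_{t,s}$ via the orthonormal system $e_j=\sqrt{\lambda_j}\phi_j$ is correct and agrees with the paper. The gap is in the simultaneous-version assertion, which is missing two ingredients.

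First, you never give the infinite series a pointwise meaning. For fixed $t$, an $L^2$ identity only identifies $\dot h_t$ with an equivalence class. So ``the full-measure set on which the $L^2$ identity holds'' is undefined until you fix a pointwise representative of $\sum_j\lambda_j^{r/2}a_j(t)\phi_j$. Convergence in $L^2$ gives a.e.\ convergence only along a subsequence that may depend on $t$. The paper supplies the representative from the trace part of \aref{R1}, which your plan never uses. Tonelli gives $\bbE\sum_j\lambda_j^r\phi_j(X)^2=\sum_j\lambda_j^r<\infty$, hence a Borel full-measure set on which $\sum_j\lambda_j^r\phi_j(x)^2<\infty$. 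On that set, Cauchy--Schwarz yields $\bigl|\sum_{j>J}\lambda_j^{r/2}a_j(t)\phi_j(x)\bigr|\le B_r\bigl(\sum_{j>J}\lambda_j^r\phi_j(x)^2\bigr)^{1/2}\to0$ uniformly in $t$. For each fixed $t$, this pointwise sum then agrees a.e.\ with $\dot h_t$.

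Second, your step ``extend the identities to all $t$ by separability'' does not go through. Two separable processes that agree on a countable dense set need not agree everywhere. Each value is only a cluster value along the countable set, and that cluster set can contain more than one point. You also cannot invoke the stipulated convention directly. It only asserts that separable versions exist, whereas the lemma concerns the specific versions at hand: the conditional contrast built from the fixed Borel conditional law, and the pointwise series.

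The missing idea is continuity in $t$. The paper shows that $t=(z,y)\mapsto H_t(x)$, $\overline\psi_t$ and $c_j(t)$ are continuous on $\mathfrak T$ by dominated convergence. The envelope is $|\psi_t(v)|\le2d_{\cY}(z,v)\le2d_{\cY}(v,y_*)+4\sigma_*$, from Lemma~\ref{lem:directional-algebra} and \aref{M1}, with $\sigma_*^2=\sup_{x\in\cX_0}M_\oplus(y_*,x)$. The series sum is then continuous in $t$ as a uniform limit of continuous partial sums. Agreement on a countable dense subset of the separable set $\mathfrak T\subseteq\cY\times\cY$ therefore extends to every $t$ at each good $x$.

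Your fallback of shrinking $\cX_0$ cannot substitute for this. A restriction in $x$ secures the identity only for $t$ in the countable subclass and does nothing about the uncountable index set. The continuity you record in your first step is continuity of $g_{t,s}$ in $x$, not in $t$.
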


\begin{proof}
The positive eigenfunctions form an orthonormal basis of
$\overline{\mathcal H_{\kappa_0}}^{\,L^2(\bbP_X)}$.
Closure membership and Parseval give the $L^2$ expansion. Condition
\aref{R1} gives the coefficient bound. Since $0<r\le1$ and
$\lambda_j\ge\lambda_s$ for $j\le s$, the finite RKHS norm is bounded
as displayed; no RKHS norm is assigned to the infinite target.

For the version assertion additionally use \aref{M1} and \aref{M2}.
Tonelli and \aref{R1} give a Borel full-measure
set on which $\sum_j\lambda_j^r\phi_j(x)^2<\infty$. On this set,
Cauchy--Schwarz gives, uniformly in $t$,
\[
 \left|\sum_{j>J}\lambda_j^{r/2}a_j(t)\phi_j(x)\right|
 \le B_r\left(\sum_{j>J}\lambda_j^r\phi_j(x)^2\right)^{1/2}\longrightarrow0.
\]
The series therefore converges uniformly in $t$ at each such $x$.
The chosen conditional contrasts are continuous in $t=(z,y)$ on
$\mathfrak T$ with its product metric: the contrast is pointwise continuous
off the diagonal, and Lemma~\ref{lem:directional-algebra} and the uniform
bound $d_{\cY}(z,y_*)\le2\sigma_*$ give an integrable dominating function,
both conditionally at any $x\in\cX_0$ and unconditionally. The same bound
makes $c_j(t)$ continuous by dominated convergence. Thus both the
conditional contrast and its spectral series are continuous in $t$.
Choose a countable dense subset of the original index set, intersect the
sets of $L^2$ agreement for its elements, and then use continuity to
extend agreement to every $t$. Intersecting with the preceding Borel
set gives the claimed common full-measure restriction. The moment,
identification, and curvature conditions are preserved by this
restriction, and any resulting smaller target index set inherits all
bounds. The stipulated measurable, separable versions are retained.
\end{proof}

\begin{lemma}[Uniform fractional-source truncation]
\label{lem:automatic-spectral-truncation}
Under \aref{M1}, \aref{M2}, \aref{K1}, and \aref{R1}, put
\[
 v_{r,s}(x):=(\lambda_j^{r/2}\phi_j(x))_{j>s}\in\ell^2,
 \qquad r_{t,s}:=\dot h_t-g_{t,s}.
\]
On the common full-measure set,
\[
 r_{t,s}(x)=\langle a_{>s}(t),v_{r,s}(x)\rangle_{\ell^2},\qquad
 \|v_{r,s}\|_{L^2(\bbP_X;\ell^2)}^2=\sum_{j>s}\lambda_j^r.
\]
Consequently
\[
 \left\|\sup_t|r_{t,s}|\right\|_{L^2(\bbP_X)}\le b_{r,s}\longrightarrow0.
\]
If $\lambda_j\asymp j^{-q}$, summability is equivalent to $qr>1$ and
$b_{r,s}^2\lesssim s^{-(qr-1)}$.
\end{lemma}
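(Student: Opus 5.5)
We would work on the common full-measure set supplied by Lemma~\ref{lem:population-rkhs-surrogate}. On that set the series $\dot h_t(x)=\sum_{j\ge1}\lambda_j^{r/2}a_j(t)\phi_j(x)$ converges pointwise, uniformly in $t$, and $g_{t,s}(x)=\sum_{j\le s}\lambda_j^{r/2}a_j(t)\phi_j(x)$ is its $s$-th partial sum. Subtracting gives $r_{t,s}(x)=\sum_{j>s}\lambda_j^{r/2}a_j(t)\phi_j(x)$, which is exactly $\langle a_{>s}(t),v_{r,s}(x)\rangle_{\ell^2}$. This is legitimate as soon as $v_{r,s}(x)\in\ell^2$, and the Tonelli argument in that lemma gives $\sum_j\lambda_j^r\phi_j(x)^2<\infty$ on the same set.

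For the norm identity we would apply Tonelli together with the $L^2(\bbP_X)$-orthonormality of the $\phi_j$:
\[
 \bbE\sum_{j>s}\lambda_j^r\phi_j(X)^2=\sum_{j>s}\lambda_j^r\bbE\phi_j(X)^2=\sum_{j>s}\lambda_j^r,
\]
which is finite by \aref{R1}.

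The uniform bound comes from Cauchy--Schwarz in $\ell^2$. Pointwise,
\[
 \sup_t|r_{t,s}(x)|\le\sup_t\|a_{>s}(t)\|_{\ell^2}\,\|v_{r,s}(x)\|_{\ell^2}\le B_r\|v_{r,s}(x)\|_{\ell^2},
\]
using $\sup_t\sum_j a_j(t)^2\le B_r^2$. Squaring and integrating then gives $\|\sup_t|r_{t,s}|\|_{L^2(\bbP_X)}\le b_{r,s}$. Measurability of the supremum follows from the separable-version convention, since the supremum may be taken over a countable dense subclass. The tail of the convergent series $\sum_j\lambda_j^r$ vanishes as $s\to\infty$, so $b_{r,s}\to0$.

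For polynomial decay $\lambda_j\asymp j^{-q}$, the sum $\sum_j j^{-qr}$ converges if and only if $qr>1$. Comparing the tail with an integral gives $\sum_{j>s}j^{-qr}\lesssim\int_s^\infty u^{-qr}\,du\asymp s^{-(qr-1)}$, so $b_{r,s}^2\lesssim s^{-(qr-1)}$. The only delicate point in the whole argument is that the pointwise identity must hold simultaneously for all $t$ off a single null set. That is exactly what the common-version clause of Lemma~\ref{lem:population-rkhs-surrogate} provides, so we expect no genuine obstacle here.
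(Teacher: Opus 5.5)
Your proposal is correct and follows essentially the same route as the paper's proof. Both take the pointwise identity from the common-version clause of Lemma~\ref{lem:population-rkhs-surrogate}, obtain the vector norm from Tonelli and $L^2(\bbP_X)$-orthonormality, apply Cauchy--Schwarz with the supremum over $t$ taken before integrating, use tail summability for $b_{r,s}\to0$, and finish with an integral comparison for the polynomial case.
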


\begin{proof}
The first identity follows from Lemma~\ref{lem:population-rkhs-surrogate}.
Tonelli and $\bbE\phi_j(X)^2=1$ give the vector-valued norm identity.
Taking the supremum in Cauchy--Schwarz before integrating gives the
bound with $B_r$. Summability yields its limit. Integral comparison of
$\sum_{j>s}j^{-qr}$ proves the polynomial assertion.
\end{proof}

\begin{proposition}[Hilbert responses: fractional source and noise]
\label{prop:hilbert-automatic-complexity}
Let $\cY=\mathcal H_Y$ be a separable Hilbert space and assume \aref{M1}.
Then \aref{R2} holds without any source assumption. For source regularity,
let $0<r\le1$, $\sum_j\lambda_j^r<\infty$, and suppose
\[
 m_c:=\bbE[Y\mid X=\cdot]-\bbE Y
       =\sum_{j\ge1}\theta_j\phi_j
       \quad\hbox{in }L^2(\bbP_X;\mathcal H_Y),\qquad
 M_r^2:=\sum_{j\ge1}\frac{\|\theta_j\|_{\mathcal H_Y}^2}{\lambda_j^r}<\infty.
\]
Then \aref{R1} holds with any fixed $B_r\ge2M_r$ and the appropriate
common versions. In the contrast and prediction bounds one may replace
$b_{r,s}$ by
\[
 b^{\mathrm H}_{r,s}:=2\left(\sum_{j>s}\|\theta_j\|_{\mathcal H_Y}^2\right)^{1/2}
 \le2M_r\lambda_{s+1}^{r/2},
\]
while using $G^{\mathrm H}_{r,s}:=2M_r\lambda_s^{(r-1)/2}$ in the ridge
and representation terms (or the larger $G_{r,s}$). The nondegenerate
case $M_r>0$ is understood in the latter notation; if $M_r=0$, the
centered signal terms are zero. For $\lambda_j\asymp j^{-q}$,
$(b^{\mathrm H}_{r,s})^2\lesssim s^{-qr}$.
The conclusions also hold for a closed convex subset of a separable
Hilbert space with its induced distance.
\end{proposition}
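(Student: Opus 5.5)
The plan is to reduce everything to the explicit linear form of the normalized contrast in a Hilbert space, $\psi_{z,y}(v)=-2\langle u_t,v-z\rangle$ with the unit vector $u_t:=(y-z)/\|y-z\|$. Write $m(x):=\bbE[Y\mid X=x]$; under \aref{M1} this exists in $\mathcal H_Y$ as a Bochner conditional expectation. Then for $t=(z,y)\in\mathfrak T$ I expect
\[
 \zeta_t(x,v)=-2\langle u_t,v-m(x)\rangle,\qquad
 \dot h_t(x)=-2\langle u_t,m_c(x)\rangle .
\]
Both follow by taking conditional and unconditional expectations of $\psi_t$, and the $z$-dependence cancels. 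Every claim should then follow by taking suprema over unit vectors. For measurability and separability of the indexed processes, the supremum over $t$ reduces to a supremum over a countable dense subset of the unit sphere of the separable space $\mathcal H_Y$.

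For \aref{R2}, fix $x_1,\ldots,x_k$ and weights $\mathbf w$. For any unit $u$, Cauchy--Schwarz gives
\[
 \sup_t\Bigl|\sum_iw_i\zeta_t(x_i,Y_i)\Bigr|\le 2\Bigl\|\sum_iw_i\{Y_i-m(x_i)\}\Bigr\|_{\mathcal H_Y}.
\]
Conditional independence and conditional centering kill the cross terms. The conditional second moment of the right side is therefore
\[
 4\sum_iw_i^2\,\bbE[\|Y-m(x_i)\|^2\mid X=x_i]\le 4\sigma_*^2\|\mathbf w\|_2^2 ,
\]
where $\sigma_*^2=\sup_x\bbE[\|Y-y_*\|^2\mid X=x]$, because the conditional variance is at most the second moment about $y_*$. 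This gives $C_M=2\sigma_*$ with no source assumption.

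For \aref{R1}, expand $m_c=\sum_j\theta_j\phi_j$. Then $c_j(t)=-2\langle u_t,\theta_j\rangle$ and $|c_j(t)|\le2\|\theta_j\|$. Hence
\[
 \sup_t\sum_jc_j(t)^2/\lambda_j^r\le4M_r^2\le B_r^2 .
\]
Closure membership holds because $\dot h_t$ is the $L^2$-limit of finite combinations of the $\phi_j$. The common-version assertion is handled as in Lemma~\ref{lem:population-rkhs-surrogate}.

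The sharper truncation uses the vector structure directly. One has $r_{t,s}(x)=-2\langle u_t,\sum_{j>s}\theta_j\phi_j(x)\rangle$, so
\[
 \sup_t|r_{t,s}(x)|\le2\Bigl\|\sum_{j>s}\theta_j\phi_j(x)\Bigr\| ,
\]
and $L^2(\bbP_X)$-orthonormality of the $\phi_j$ gives $\|\sup_t|r_{t,s}|\|_2\le b^{\mathrm H}_{r,s}$. The bound $\sum_{j>s}\|\theta_j\|^2\le\lambda_{s+1}^r\sum_{j>s}\|\theta_j\|^2/\lambda_j^r\le\lambda_{s+1}^rM_r^2$ gives the stated comparison. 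For the RKHS norm,
\[
 \|g_{t,s}\|_{\mathcal H_{\kappa_0}}^2=\sum_{j\le s}c_j(t)^2/\lambda_j\le4\lambda_s^{r-1}\sum_{j\le s}\|\theta_j\|^2/\lambda_j^r\le(G^{\mathrm H}_{r,s})^2 .
\]
Under $\lambda_j\asymp j^{-q}$, one gets $(b^{\mathrm H}_{r,s})^2\lesssim s^{-qr}$.

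The step I expect to need the most care is justifying the replacement of $b_{r,s},G_{r,s}$ in the downstream contrast and prediction bounds. The plan is to check that the later proofs use the source condition only through the two functionals $\|\sup_t|r_{t,s}|\|_2$ and $\sup_t\|g_{t,s}\|_{\mathcal H_{\kappa_0}}$. Any valid upper bounds, such as $b^{\mathrm H}_{r,s}$ and $G^{\mathrm H}_{r,s}$, can then be inserted; this is what Lemma~\ref{lem:automatic-spectral-truncation} supplies in general. The case $M_r=0$ is trivial, since then $\dot h_t\equiv0$.

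For a closed convex subset $C$, all candidates $y,z$ lie in $C$ and $\psi_t$ has the same linear form. The ambient conditional mean lies in $C$ by closed convexity. It is therefore the conditional Fréchet mean, by Proposition~\ref{prop:automatic-quadratic-curvature}(i) restricted to $C$, and the identities above are unchanged.
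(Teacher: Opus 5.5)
The linear-contrast reduction, the noise bound with $C_M=2\sigma_*$, the coefficients $c_j(t)=-2\langle u_t,\theta_j\rangle$, the bounds $\|\sup_t|r_{t,s}|\|_2\le b^{\mathrm H}_{r,s}$ and $\sup_t\|g_{t,s}\|_{\mathcal H_{\kappa_0}}\le G^{\mathrm H}_{r,s}$, and the closed-convex remark are all correct and match the paper. The gap is at the step you flagged yourself.

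\textbf{The reduction to an envelope fails.} It is not true that the later proofs use the residual only through the envelope $\|\sup_t|r_{t,s}|\|_2$. The residual term in Theorem~\ref{thm:contrast-error} is the \emph{fitted} residual $\|\sup_t|(\widehat S-I)r_{t,s}|\|_2$. Lemma~\ref{lem:approximation-residual-bound} never uses an envelope to bound it. Instead it writes $r_{t,s}=\langle a_{>s}(t),v_{r,s}\rangle_{\ell^2}$ with $\sup_t\|a_{>s}(t)\|\le B_r$. It then bounds $\sup_t|(\widehat S-I)r_{t,s}(x)|\le B_r\|(\widehat S-I)v_{r,s}(x)\|_{\ell^2}$ before any supremum is taken, and uses $\|v_{r,s}\|_{L^2}^2=\sum_{j>s}\lambda_j^r$. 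That route can only return $b_{r,s}$, so a sharper envelope bound has nowhere to be inserted.

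Conversely, suppose all you know is an $L^2$ bound on $R=\sup_t|r_{t,s}|$. The signed weights then give only $\sup_t|\widehat S r_{t,s}(x)|\le\sum_{i\in L}|a_{s,\xi,i}(x)|R(X_i)$. The query $L^2$ norm of this is at most $\{n\,\bbE_{X^\circ}\|\mathbf a_{s,\xi}(X^\circ)\|^2\,P_LR^2\}^{1/2}=O_p(\sqrt s\,b^{\mathrm H}_{r,s})$. Without further structure on the family, nothing better is available. Under $\lambda_j\asymp j^{-q}$ this is of order $s^{-(qr-1)/2}$, the same order as the generic $b_{r,s}$. So the claimed improvement, and with it the $n^{-qr/(qr+1)}$ rate of Corollary~\ref{cor:hilbert-fractional-rate}, is lost. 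This $\sqrt s$ inflation is exactly what the paper's vector-valued smoother argument is built to avoid.

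\textbf{How to repair it.} The fix uses an object you already wrote down. Put $V_s=\sum_{j>s}\theta_j\phi_j\in L^2(\bbP_X;\mathcal H_Y)$, so that $r_{t,s}=-2\langle u_t,V_s\rangle_{\mathcal H_Y}$ with $\|u_t\|=1$.
\begin{itemize}
\item The smoother is linear and acts on $\mathcal H_Y$-valued inputs, so pointwise $\sup_t|(\widehat S-I)r_{t,s}(x)|\le 2\|(\widehat S-I)V_s(x)\|_{\mathcal H_Y}$.
\item $V_s$ is deterministic and the labeled covariates are marginally $\bbP_X$, so $\bbE P_L\|V_s\|^2=\|V_s\|_{L^2}^2=(b^{\mathrm H}_{r,s})^2/4$.
\item Markov's inequality and Lemma~\ref{lem:hilbert-smoother-stability} with $\mathcal V=\mathcal H_Y$ give $\|\widehat S V_s\|_{L^2}=O_p(b^{\mathrm H}_{r,s})$.
\item The triangle inequality then yields $\|\sup_t|(\widehat S-I)r_{t,s}|\|_2=O_p(b^{\mathrm H}_{r,s})$.
\end{itemize}
This is how the paper proves the replacement.

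For the $G$-terms you need not inspect any later proof. $G^{\mathrm H}_{r,s}$ is exactly $G_{r,s}$ evaluated at $B_r=2M_r$, and you have verified \aref{R1} with that radius. Lemmas~\ref{lem:sharp-subspace-perturbation} and~\ref{lem:sharp-fixed-design} therefore apply verbatim.
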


\begin{proof}
For $u_t=(y-z)/\|y-z\|$, the squared-norm identity gives
\[
 \psi_t(v)=-2\langle u_t,v-z\rangle,\qquad
 \dot h_t(x)=-2\langle u_t,m_c(x)\rangle,
 \qquad c_j(t)=-2\langle u_t,\theta_j\rangle.
\]
Hence $\sum_j c_j(t)^2/\lambda_j^r\le4M_r^2$, uniformly in $t$, and
closure membership follows from the vector expansion. The scalar
versions are justified by Lemma~\ref{lem:population-rkhs-surrogate};
equivalently the vector series converges absolutely in norm almost
everywhere by the source bound and $\sum_j\lambda_j^r\phi_j(x)^2<\infty$.
For the residual, define the deterministic Hilbert-valued tail
$V_s(x)=\sum_{j>s}\theta_j\phi_j(x)$. Orthogonality gives
\[
 \|V_s\|_{L^2(\bbP_X;\mathcal H_Y)}^2
 =\sum_{j>s}\|\theta_j\|^2
 \le \lambda_{s+1}^r M_r^2.
\]
Moreover $r_{t,s}=-2\langle u_t,V_s\rangle$ and
$\bbE P_L\|V_s\|^2=\|V_s\|_{L^2}^2$. Markov's inequality and the
Hilbert-valued smoother bound of Lemma~\ref{lem:hilbert-smoother-stability}
give $\|\widehat S V_s\|_{L^2}=O_p(\|V_s\|_{L^2})$. The triangle
inequality and $\sup_t\|u_t\|=1$ therefore yield
\[
 \left\|\sup_t|(\widehat S-I)r_{t,s}|\right\|_2
 =O_p\big(b^{\mathrm H}_{r,s}\big).
\]
This proves the claimed replacement for the fitted residual, not merely
a pointwise approximation bound. The finite-source norm bound is the
one already proved, with $B_r=2M_r$.

To check \aref{R2}, condition on $x_1,\ldots,x_k$ and set
$\epsilon_i=Y_i-\bbE[Y_i\mid X=x_i]$. Then
\[
 \sup_t\left|\sum_iw_i\zeta_t(x_i,Y_i)\right|
 \le2\left\|\sum_iw_i\epsilon_i\right\|_{\mathcal H_Y}.
\]
Conditional independence and centering give
\[
 \bbE\left[\left\|\sum_iw_i\epsilon_i\right\|^2\,\middle|\,x_1,\ldots,x_k\right]
 =\sum_iw_i^2\bbE[\|\epsilon_i\|^2\mid X=x_i]
 \le\sigma_Y^2\sum_iw_i^2,
\]
where $\sigma_Y^2=\sup_{x\in\cX_0}\bbE[\|Y-y_*\|^2\mid X=x]<\infty$.
Thus $C_M=2\sigma_Y$ suffices. If the response is restricted to a closed
convex subset, its Bochner conditional mean remains in that subset,
all distance identities above are unchanged, and the same argument
applies. This includes the closed convex quantile representation of
univariate $W_2$ responses \citep{bigot2017geodesic}.
\end{proof}

\begin{proposition}[An abstract entropy sufficient condition for general metric responses]
\label{prop:metric-response-entropy}
Suppose that $(\mathfrak T,d_\zeta)$ is a separable totally bounded
semimetric space, fix $t_0\in\mathfrak T$, and write
$D_\zeta=\operatorname{diam}(\mathfrak T,d_\zeta)$. Assume that the processes
below have measurable $d_\zeta$-separable versions. Assume also that,
uniformly in $x\in\cX_0$,
\begin{align}
 \|\zeta_t(x,Y)-\zeta_{t'}(x,Y)\|_{\psi_2(\bbP_{Y\mid X=x})}
 &\le L_\zeta d_\zeta(t,t'),
 \label{eq:metric-entropy-increment}\\
 \|\zeta_{t_0}(x,Y)\|_{\psi_2(\bbP_{Y\mid X=x})}
 &\le \sigma_\zeta,
 \notag
\end{align}
and that
\[
 J_\zeta
 :=\int_0^{D_\zeta}
 \sqrt{1+\log N(\epsilon,\mathfrak T,d_\zeta)}\,d\epsilon
 <\infty.
\]
Then Condition~\nameref{par:R2} holds with
\[
 C_M\le C\{\sigma_\zeta+L_\zeta J_\zeta\},
\]
where $C$ is numerical. In particular,
the entropy integral is finite if, for some finite $v>0$ and
$A\ge D_\zeta$,
\[
 \log N(\epsilon,\mathfrak T,d_\zeta)
 \le v\log(A/\epsilon),\qquad 0<\epsilon\le D_\zeta.
\]

Alternatively, the abstract hypotheses can be verified directly from the
response spaces as follows. Suppose that $(\cM,d_\cY)$ and
$(\cY,d_\cY)$ are totally bounded, with diameters $D_\cM$ and $D_\cY$, and
give $\mathfrak T\subset\cM\times\cY$ the product semimetric
\[
 d_\oplus((z,y),(z',y'))
 :=d_\cY(z,z')+d_\cY(y,y').
\]
Assume that the processes have measurable $d_\oplus$-separable versions and,
for some $t_0\in\mathfrak T$, uniformly in $x\in\cX_0$,
\[
 \|\zeta_t(x,Y)-\zeta_{t'}(x,Y)\|_{\psi_2(\bbP_{Y\mid X=x})}
 \le L_\oplus d_\oplus(t,t'),
 \qquad
 \|\zeta_{t_0}(x,Y)\|_{\psi_2(\bbP_{Y\mid X=x})}
 \le\sigma_\oplus.
\]
If
\[
 J_\cM+J_\cY
 :=\int_0^{D_\cM}\!\sqrt{1+\log N(\epsilon,\cM,d_\cY)}\,d\epsilon
 +\int_0^{D_\cY}\!\sqrt{1+\log N(\epsilon,\cY,d_\cY)}\,d\epsilon
 <\infty,
\]
then Condition~\nameref{par:R2} holds with
\[
 C_M\le C\{\sigma_\oplus+L_\oplus(J_\cM+J_\cY)\}.
\]
\end{proposition}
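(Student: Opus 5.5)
The plan is to verify \aref{R2} directly by a conditional Dudley chaining argument, carried out for fixed design points and weights. Fix $k\ge1$, $x_1,\ldots,x_k\in\cX_0$, $\mathbf w\in\mathbb R^k$, and conditionally independent $Y_i\sim\bbP_{Y\mid X=x_i}$. Define the process
\[
 \mathbb G_t:=\sum_{i=1}^k w_i\zeta_t(x_i,Y_i),\qquad t\in\mathfrak T .
\]
By construction $\bbE[\zeta_t(x_i,Y)\mid X=x_i]=0$, so each summand is conditionally centred, and the summands are conditionally independent.

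\textbf{Step 1 (increments).} The first step is to show that $\mathbb G$ has sub-Gaussian increments with respect to $d_\zeta$. Apply the general Hoeffding inequality for sums of independent centred sub-Gaussian variables, $\|\sum_i\xi_i\|_{\psi_2}\le C(\sum_i\|\xi_i\|_{\psi_2}^2)^{1/2}$, to $\xi_i=w_i\{\zeta_t(x_i,Y_i)-\zeta_{t'}(x_i,Y_i)\}$. Using \eqref{eq:metric-entropy-increment}, this gives
\[
 \|\mathbb G_t-\mathbb G_{t'}\|_{\psi_2}\le C L_\zeta\|\mathbf w\|_2\,d_\zeta(t,t'),
\]
and in the same way $\|\mathbb G_{t_0}\|_{\psi_2}\le C\sigma_\zeta\|\mathbf w\|_2$. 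The constants are uniform in $x_1,\ldots,x_k$ because the hypotheses are uniform over $\cX_0$.

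\textbf{Step 2 (chaining).} Next I would invoke the separable-version hypothesis, which makes the supremum of $\mathbb G$ over $\mathfrak T$ equal to a supremum over a countable subset, and hence measurable. Dudley's entropy bound for separable sub-Gaussian processes (for example van der Vaart--Wellner, Corollary 2.2.5, with $\psi=\psi_2$) then gives
\[
 \Bigl\|\sup_{t}|\mathbb G_t-\mathbb G_{t_0}|\Bigr\|_{\psi_2}\le C L_\zeta\|\mathbf w\|_2 J_\zeta .
\]
Since $\sup_t|\mathbb G_t|\le|\mathbb G_{t_0}|+\sup_t|\mathbb G_t-\mathbb G_{t_0}|$ and $\|\cdot\|_{L^2}\le C\|\cdot\|_{\psi_2}$, this yields \aref{R2} with $C_M\le C(\sigma_\zeta+L_\zeta J_\zeta)$. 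The conditional $L^2$ norm is exactly the law-$\bbP$ norm given $x_1,\ldots,x_k$, so nothing further is needed. The polynomial-entropy case is a calculus check: $\int_0^{D_\zeta}\sqrt{1+v\log(A/\epsilon)}\,d\epsilon$ is finite, which follows from the substitution $\epsilon=Ae^{-u}$ since $A\ge D_\zeta$.

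\textbf{Step 3 (product entropy).} For the second part, I would reduce to the first by bounding the entropy integral of $(\mathfrak T,d_\oplus)$ through coverings of the factors. Take $\epsilon/4$-nets of $\cM$ and $\cY$; their products give an $\epsilon/2$-net of $\cM\times\cY\supseteq\mathfrak T$. The standard passage to centres lying inside $\mathfrak T$ then gives
\[
 N(\epsilon,\mathfrak T,d_\oplus)\le N(\epsilon/4,\cM,d_\cY)\,N(\epsilon/4,\cY,d_\cY).
\]
Using $\sqrt{1+a+b}\le\sqrt{1+a}+\sqrt{1+b}$ and the change of variables $\epsilon\mapsto4\epsilon$, each factor contributes at most $4J_\cM$ or $4J_\cY$ over the range where its covering number exceeds one. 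Beyond $D_\cM$ (respectively $D_\cY$) the corresponding integrand equals $1$, and the resulting linear term is at most $D_\oplus\le D_\cM+D_\cY\le J_\cM+J_\cY$, because both integrands are at least one. Hence $J_\zeta\le C(J_\cM+J_\cY)$, and the first part applies with $L_\zeta=L_\oplus$ and $\sigma_\zeta=\sigma_\oplus$.

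The main obstacle I expect is bookkeeping rather than substance. The chaining theorem must be applied conditionally and uniformly in the design, and measurability of the supremum rests entirely on the stipulated separable versions. The interior-net factor and the integration range beyond the factor diameters in Step 3 must be handled so that only a numerical constant is lost.
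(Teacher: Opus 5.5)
Your proposal is correct and follows essentially the paper's route: the conditional Hoeffding-type Orlicz bound for the increments and the base point, Dudley chaining for the supremum, and the product of $\epsilon/4$-nets with centres moved into $\mathfrak T$, followed by splitting the entropy integral at the two diameters. The only difference is that the paper writes out the chaining explicitly in $L^2$, using a finite-maximum inequality at each dyadic level and then monotone convergence over the countable determining subclass. You instead cite the packaged $\psi_2$ Dudley bound of van der Vaart--Wellner; its packing-number integrand $\sqrt{\log(1+D(\epsilon))}$ converts to your $J_\zeta$ via $D(\epsilon)\le N(\epsilon/2)$, at the cost of only a numerical constant.
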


\begin{proof}
Fix $x_1,\ldots,x_k\in\cX_0$ and $\mathbf w\in\mathbb R^k$, and condition on
these covariates. Put
\[
 Z_t:=\sum_{i=1}^k w_i\zeta_t(x_i,Y_i).
\]
The summands are conditionally independent and centered. The standard
Orlicz-norm inequality for sums of independent centered sub-Gaussian random
variables and \eqref{eq:metric-entropy-increment} give
\[
 \|Z_t-Z_{t'}\|_{\psi_2}
 \le C\left\{\sum_{i=1}^k w_i^2
 \|\zeta_t(x_i,Y_i)-\zeta_{t'}(x_i,Y_i)\|_{\psi_2}^2\right\}^{1/2}
 \le CL_\zeta\|\mathbf w\|_2d_\zeta(t,t').
\]
Likewise,
$\|Z_{t_0}\|_{\psi_2}\le C\sigma_\zeta\|\mathbf w\|_2$.
Let $d_Z(t,t'):=\|Z_t-Z_{t'}\|_{\psi_2}$. The preceding display gives
$d_Z\le CL_\zeta\|\mathbf w\|_2d_\zeta$. We give the required $L^2$ chaining
calculation explicitly. Put $D=D_\zeta$.
If $D=0$, then $d_Z(t,t_0)=0$ for every $t$ and the assumed separable
version gives $Z_t=Z_{t_0}$ almost surely on the determining subclass, so
the desired bound is immediate. Hence suppose $D>0$, put
$\delta_m=2^{-m}D$, and let $T_m$
be a finite $\delta_m$-net of $\mathfrak T$ with
$|T_m|=N(\delta_m,\mathfrak T,d_\zeta)$, take $T_0=\{t_0\}$, and choose
$\pi_m:\mathfrak T\to T_m$ such that
$d_\zeta(t,\pi_m(t))\le\delta_m$, with $\pi_0(t)=t_0$.

For centered sub-Gaussian variables $V_1,\ldots,V_M$, put
$\sigma_M:=\bigvee_{r\le M}\|V_r\|_{\psi_2}$. If $\sigma_M=0$, all the
variables vanish almost surely and the bound below is immediate. Otherwise,
the union bound gives
\[
 \bbP\!\left(\bigvee_{r\le M}|V_r|>u\right)
 \le2M\exp\!\left\{-\frac{cu^2}{\sigma_M^2}\right\},
\]
and integration of
$\bbE\bigvee_{r\le M}|V_r|^2
=\int_0^\infty 2u\,\bbP(\bigvee_{r\le M}|V_r|>u)\,du$,
split at $u_0=C\sigma_M\sqrt{1+\log M}$, implies
\[
 \left\|\bigvee_{r\le M}|V_r|\right\|_{L^2}
 \le C\sqrt{1+\log M}
       \bigvee_{r\le M}\|V_r\|_{\psi_2}.
\]
Apply this finite maximum bound at level $m$ to the increments
$Z_u-Z_v$ indexed by pairs
$(u,v)=(\pi_m(t),\pi_{m-1}(t))$ that occur as $t$ varies. There are at
most $|T_m||T_{m-1}|$ such pairs, and their $d_\zeta$-length is at most
$\delta_m+\delta_{m-1}$. Since
$N(\delta_{m-1},\mathfrak T,d_\zeta)
\le N(\delta_m,\mathfrak T,d_\zeta)$, it follows that
\[
 \left\|\bigvee_{t\in\mathfrak T}
 |Z_{\pi_m(t)}-Z_{\pi_{m-1}(t)}|\right\|_{L^2}
 \le C L_\zeta\|\mathbf w\|_2\delta_{m-1}
 \sqrt{1+\log N(\delta_m,\mathfrak T,d_\zeta)}.
\]
For
$f(\epsilon):=\sqrt{1+\log N(\epsilon,\mathfrak T,d_\zeta)}$,
monotonicity of the covering number gives, term by term,
\[
 \delta_{m-1}f(\delta_m)
 =2\delta_m f(\delta_m)
 \le4\int_{\delta_{m+1}}^{\delta_m}f(\epsilon)\,d\epsilon.
\]
Therefore,
\[
 \sum_{m\ge1}\delta_{m-1}
 \sqrt{1+\log N(\delta_m,\mathfrak T,d_\zeta)}
 \le
 C\int_0^D
 \sqrt{1+\log N(\epsilon,\mathfrak T,d_\zeta)}\,d\epsilon
 =CJ_\zeta.
\]
For any finite subset $F$ of the fixed countable determining subclass, the
telescoping identity for $Z_{\pi_M(t)}-Z_{t_0}$, Minkowski's inequality, and
$d_Z(\pi_M(t),t)\to0$ yield, upon first letting $M\to\infty$,
\[
 \left\|\bigvee_{t\in F}|Z_t-Z_{t_0}|\right\|_{L^2}
 \le CL_\zeta\|\mathbf w\|_2J_\zeta.
\]
Letting $F$ increase to that determining subclass and using monotone
convergence proves
\[
 \left\|\,\sup_{t\in\mathfrak T}|Z_t-Z_{t_0}|\right\|_{L^2}
 \le CL_\zeta\|\mathbf w\|_2J_\zeta.
\]
This is the $L^2$ form of the usual sub-Gaussian entropy bound. The triangle
inequality and the base-point bound prove the asserted abstract
constant bound. Finally,
$\int_0^{D_\zeta}\sqrt{1+v\log(A/\epsilon)}\,d\epsilon<\infty$, proving the
polynomial-entropy assertion.

For the response-space specialization, form the product of an
$\epsilon/4$-net of $\cM$ and an $\epsilon/4$-net of $\cY$. In each product
cell that intersects $\mathfrak T$, choose one point of the intersection as
its center. Every point in that cell is within \(\epsilon\) of the chosen
center in \(d_\oplus\). Therefore, even when covering centers are required to
belong to \(\mathfrak T\),
\[
 N(\epsilon,\mathfrak T,d_\oplus)
 \le
 N(\epsilon/4,\cM,d_\cY)
 N(\epsilon/4,\cY,d_\cY).
\]
Since $\sqrt{1+a+b}\le\sqrt{1+a}+\sqrt{1+b}$ for $a,b\ge0$ and
$\operatorname{diam}(\mathfrak T,d_\oplus)\le D_\cM+D_\cY$, splitting the
integral at the two diameters and changing variables gives
\[
 \int_0^{\operatorname{diam}(\mathfrak T,d_\oplus)}
 \sqrt{1+\log N(\epsilon,\mathfrak T,d_\oplus)}\,d\epsilon
 \le C(J_\cM+J_\cY).
\]
Apply the abstract part with $d_\zeta=d_\oplus$ to obtain
the asserted response-space constant bound.
\end{proof}

\begin{remark}[Relation to Fr\'echet-regression covering assumptions]
The local entropy assumptions used in global and local Fr\'echet regression
have the form
\[
 \int_0^1\sqrt{1+\log N(\delta\epsilon,
 B_\delta(m_\oplus(x)),d_{\cY})}\,d\epsilon=O(1),
\]
pointwise or uniformly in the predictor
\citep{petersen2019frechet,chen2022uniform}.  Proposition~
\ref{prop:metric-response-entropy} is the corresponding condition for the
object actually indexed in the present proof: the normalized conditional
noise class $\{\zeta_t:t\in\mathfrak T\}$. The response-space specialization
above makes the transfer explicit when the normalized contrast map is
uniformly Lipschitz in $d_\oplus$ and the two global entropy integrals are
finite. This response-space route is a convenient sufficient verification.
Since the normalized contrast divides by $d_{\cY}(y,z)$, a global
product-metric Lipschitz bound can be restrictive as $y$ approaches $z$. In
such settings the abstract $d_\zeta$ condition can be verified directly.
Local covering arguments can also be paired with localization before they
imply the global maximal inequality in Condition~\nameref{par:R2}. The
Hilbert proposition above gives a separate verification.
\end{remark}

\subsection{Empirical spectral and design lemmas}
\label{app:spectral-lemmas}

\begin{lemma}[Fixed-design four-component decomposition]
\label{lem:fixed-design-decomposition}
On $\mathcal E_s$, write
$\ell_{t,s}=\widehat\Pi_{\le s}g_{t,s}$,
$d_{t,s}=(I-\widehat\Pi_{\le s})g_{t,s}$, and
$\mathbf c_{t,s}=\widehat E_s^*g_{t,s}$. For $x\in\cX_0$,
\begin{equation}
\label{eq:directional-fixed-design-decomposition}
\begin{split}
 \widehat C_{s,\xi}(t;x)-\dot C_\oplus(t;x)
 ={}&\sum_{i\in L}a_{s,\xi,i}(x)\zeta_t(X_i,Y_i)
       +(\widehat S-I)r_{t,s}(x)\\
    &+(\widehat S-I)d_{t,s}(x)
       -\xi\mathbf q_L(x)^\top\mathbf A_s\mathbf c_{t,s}.
\end{split}
\end{equation}
\end{lemma}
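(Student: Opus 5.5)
The plan is a direct algebraic verification on $\mathcal E_s$, for fixed $x\in\cX_0$. It uses Lemma~\ref{lem:directional-algebra}, the unit-sum property of the weights, the split $\dot h_t=r_{t,s}+d_{t,s}+\ell_{t,s}$ (valid pointwise on the common full-measure set by Lemma~\ref{lem:population-rkhs-surrogate}), and the synthesis identity $\mathbf q_L(x)=\widehat E_s^*(K_x^0-\overline K_L^0)$.

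\textbf{Step 1 (noise and smoother).} By Lemma~\ref{lem:directional-algebra}, $\widehat C_{s,\xi}(t;x)=d_t+\sum_{i\in L}a_{s,\xi,i}(x)\psi_t(Y_i)$ and $\dot C_\oplus(t;x)=d_t+\overline\psi_t+\dot h_t(x)$. Write $\psi_t(Y_i)=\zeta_t(X_i,Y_i)+\overline\psi_t+\dot h_t(X_i)$ and use $\sum_i a_{s,\xi,i}(x)=1$ to cancel $d_t$ and $\overline\psi_t$. This gives
\[
 \widehat C_{s,\xi}(t;x)-\dot C_\oplus(t;x)=\sum_{i\in L}a_{s,\xi,i}(x)\zeta_t(X_i,Y_i)+(\widehat S-I)\dot h_t(x).
\]
Linearity of $\widehat S$ and the three-term split of $\dot h_t$ then produce the $r_{t,s}$ and $d_{t,s}$ terms of \eqref{eq:directional-fixed-design-decomposition}. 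It remains to show
\[
(\widehat S-I)\ell_{t,s}(x)=-\xi\,\mathbf q_L(x)^\top\mathbf A_s\mathbf c_{t,s}.
\]

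\textbf{Step 2 (the learned-subspace term).} Since $\widehat\Pi_{\le s}=\widehat E_s\widehat E_s^*$, we have $\ell_{t,s}=\widehat E_s\mathbf c_{t,s}$. With population-centered evaluation, $\ell_{t,s}(u)=\mathbf c_{t,s}^\top\widehat E_s^*K_u^0$. Subtracting the labeled average gives
\[
\ell_{t,s}(u)-P_L\ell_{t,s}=\mathbf c_{t,s}^\top\widehat E_s^*(K_u^0-\overline K_L^0)=\mathbf q_L(u)^\top\mathbf c_{t,s},
\]
for $u=x$ and for $u=X_i$. This is the one point needing care. The common-space eigenvectors $\widehat e_j$ are not the empirically centered features: by the recorded identity, $\langle\widehat e_j,K_u^0\rangle=\widehat\phi_j(u)+\langle\widehat e_j,\overline K_N^0\rangle$. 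That constant shift cancels in the labeled recentering, which is why $\mathbf q_L$ appears with no residual offset.

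\textbf{Step 3 (assemble).} Expand $\widehat S\ell_{t,s}(x)=\sum_i a_{s,\xi,i}(x)\ell_{t,s}(X_i)$ using \eqref{eq:calibrated-weights-main}. The $1/n$ part contributes $P_L\ell_{t,s}$. The second part contributes
\[
\mathbf q_L(x)^\top\mathbf A_s\,\frac1n\sum_{i\in L}\mathbf q_L(X_i)\ell_{t,s}(X_i).
\]
Replace $\ell_{t,s}(X_i)$ by $P_L\ell_{t,s}+\mathbf q_L(X_i)^\top\mathbf c_{t,s}$ from Step 2. The constant piece vanishes because $\sum_{i\in L}\mathbf q_L(X_i)=0$, and the remainder equals $\mathbf C_{L,s}\mathbf c_{t,s}$. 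Hence
\[
\widehat S\ell_{t,s}(x)=P_L\ell_{t,s}+\mathbf q_L(x)^\top\mathbf A_s\mathbf C_{L,s}\mathbf c_{t,s},
\qquad
\ell_{t,s}(x)=P_L\ell_{t,s}+\mathbf q_L(x)^\top\mathbf c_{t,s}.
\]
Subtracting, and using $\mathbf A_s\mathbf C_{L,s}-\mathbf I_s=\mathbf A_s\{\mathbf C_{L,s}-(\mathbf C_{L,s}+\xi\mathbf I_s)\}=-\xi\mathbf A_s$, gives the claimed ridge-bias term.

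No analytic obstacle is expected. The only delicate points are the centering bookkeeping in Step 2 and the pointwise (not merely $L^2$) validity of the split of $\dot h_t$ on $\cX_0$, which Lemma~\ref{lem:population-rkhs-surrogate} supplies.
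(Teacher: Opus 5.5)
Your proposal is correct and follows the paper's proof. Constant reproduction reduces the difference to the noise term plus $(\widehat S-I)\dot h_t$, and the centered ridge fit reproduces $\ell_{t,s}$ up to $-\xi\mathbf q_L(x)^\top\mathbf A_s\mathbf c_{t,s}$; the paper absorbs the offset $\langle\ell_{t,s},\overline K_N^0\rangle$ into the unpenalized intercept, which is exactly your Step~2 recentering. (The split $\dot h_t=r_{t,s}+d_{t,s}+\ell_{t,s}$ is definitional, since $r_{t,s}:=\dot h_t-g_{t,s}$, so Lemma~\ref{lem:population-rkhs-surrogate} is not needed for that step.)
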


\begin{proof}
The weights reproduce constants, so the difference is the first term
plus $(\widehat S-I)\dot h_t$. Use
$\dot h_t=r_{t,s}+d_{t,s}+\ell_{t,s}$. In the common RKHS,
\[
 \ell_{t,s}(x)=\widehat{\boldsymbol\Phi}_s(x)^\top\mathbf c_{t,s}
       +\langle\ell_{t,s},\overline K_N^0\rangle.
\]
The constant is absorbed by the unpenalized intercept. Centered ridge
therefore gives
\[
 \widehat S\ell_{t,s}(x)=P_L\ell_{t,s}
       +\mathbf q_L(x)^\top\mathbf A_s\mathbf C_{L,s}\mathbf c_{t,s}
       =\ell_{t,s}(x)-\xi\mathbf q_L(x)^\top\mathbf A_s\mathbf c_{t,s}.
\]
This proves the identity. In particular, the learned-subspace component
is reproduced exactly at zero ridge on an invertible design; no
independent labeled signal-fluctuation term is introduced.
\end{proof}

\subsection{Prediction-weighted spectral and fixed-design bounds}
\label{app:prediction-weighted-analysis}

This subsection establishes the spectral and fixed-design bounds used in
Theorem~\ref{thm:contrast-error}.
The regression arguments use either the basic spectral bounds in
Proposition~\ref{prop:primitive-spectral-verification} or the relative
spectral bounds in Proposition~\ref{prop:sharp-spectral-verification},
under their respective assumptions.
Condition~\nameref{par:R1} supplies fractional-source coordinates, and the
labeled-versus-full-sample design comparison uses sampling without
replacement. These ingredients are combined with the response-noise,
truncation, and representation bounds in
Section~\ref{app:additional-rate-results}.

\begin{proposition}[Basic spectral stability]
\label{prop:primitive-spectral-verification}
Under Conditions~\nameref{par:K1}--\nameref{par:K3},
$\bbP(\mathcal E_s)\to1$, and the full-sample empirical eigensystem satisfies
\[
    \frac{\widehat\lambda_s}{\lambda_s}\xrightarrow{p}1,
    \qquad
    \|\mathbf B_s^{-1/2}\overline{\mathbf C}_s\mathbf B_s^{-1/2}\|_\op=O_p(1),
\]
\[
    \|\overline K_N^0\|_{\mathcal H_{\kappa_0}}=O_p(N^{-1/2}),
\]
and
\[
    \|\mathcal T_{\kappa_0}^{1/2}(\widehat\Pi_{\le s}-\Pi_{\le s})\|_\HS
    =O_p\!\left(
      \frac{1}
      {\sqrt N\,(\lambda_s-\lambda_{s+1})}
    \right)
    =o_p(1).
\]
\end{proposition}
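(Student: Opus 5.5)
The proof compares the empirically centered covariance operator $\widehat{\mathcal T}_{\kappa_0,N}=\widetilde{\mathcal T}_N-\overline K_N^0\otimes\overline K_N^0$ with $\mathcal T_{\kappa_0}$ in $\mathcal H_{\kappa_0}$, and then transfers this operator closeness to eigenvalues, the gap event, the spectral projector, and the prediction-weighted quantities.

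\textbf{Step 1: operator concentration.} The random elements $K_{X_i}^0$ are i.i.d. and centered in $\mathcal H_{\kappa_0}$, with $\|K_x^0\|^2\le C_\kappa\le 4B_\kappa$ by Proposition~\ref{prop:app-centered-rkhs-realization}. Hence $\bbE\|\overline K_N^0\|^2\le C_\kappa/N$, and Markov's inequality gives $\|\overline K_N^0\|=O_p(N^{-1/2})$. The rank-one operators $K_{X_i}^0\otimes K_{X_i}^0$ are i.i.d. in the Hilbert space of Hilbert--Schmidt operators, with HS norm at most $C_\kappa$. Therefore $\bbE\|\widetilde{\mathcal T}_N-\mathcal T_{\kappa_0}\|_\HS^2\le C_\kappa^2/N$. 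Combining the two bounds,
\[
\Delta_N:=\|\widehat{\mathcal T}_{\kappa_0,N}-\mathcal T_{\kappa_0}\|_\HS=O_p(N^{-1/2}).
\]

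\textbf{Step 2: eigenvalues and the gap event.} The nonzero eigenvalues of $\widehat{\mathcal T}_{\kappa_0,N}$ coincide with those of $N^{-1}\mathbf K_0$, since $\widehat\kappa_0(X_r,X_{r'})=\langle\widehat K_{X_r}^0,\widehat K_{X_{r'}}^0\rangle$ makes this the usual Gram/covariance duality. Weyl's inequality then gives $|\widehat\lambda_j-\lambda_j|\le\Delta_N$ for every $j$. By \aref{K3}, $\Delta_N=o_p(\lambda_s-\lambda_{s+1})$, so with probability tending to one
\[
\widehat\lambda_s-\widehat\lambda_{s+1}\ge(\lambda_s-\lambda_{s+1})/2>0.
\]
This gives $\bbP(\mathcal E_s)\to1$. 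Since $\lambda_s\ge\lambda_s-\lambda_{s+1}$, we also get $|\widehat\lambda_s/\lambda_s-1|\le\Delta_N/\lambda_s=o_p(1)$.

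\textbf{Step 3: projector bound.} On $\mathcal E_s$ the projector $\widehat\Pi_{\le s}$ is the spectral projector of $\widehat{\mathcal T}_{\kappa_0,N}$ for its top $s$ eigenvalues. The Davis--Kahan $\sin\Theta$ theorem in the form of \citet{yu2015useful}, applied with the population gap, gives
\[
\|\widehat\Pi_{\le s}-\Pi_{\le s}\|_\HS\le 2\sqrt2\,\Delta_N/(\lambda_s-\lambda_{s+1}).
\]
Because $\|\mathcal T_{\kappa_0}^{1/2}\|_\op\le\sqrt{\lambda_1}$ is a fixed constant, the same bound holds with the $\mathcal T_{\kappa_0}^{1/2}$ weighting. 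This is the stated $O_p$ rate, and it is $o_p(1)$ by \aref{K3}. On $\mathcal E_s^c$ the convention sets $\widehat\Pi_{\le s}=0$, but this event has vanishing probability and so does not affect the $O_p$ statement.

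\textbf{Step 4: the whitened operator bound.} By \eqref{eq:supp-overline-C},
\[
\overline{\mathbf C}_s=\widehat E_s^*(\mathcal T_{\kappa_0}+\overline K_N^0\otimes\overline K_N^0)\widehat E_s.
\]
I write $\mathcal T_{\kappa_0}=\widehat{\mathcal T}_{\kappa_0,N}+(\mathcal T_{\kappa_0}-\widehat{\mathcal T}_{\kappa_0,N})$. Since $\widehat E_s^*\widehat{\mathcal T}_{\kappa_0,N}\widehat E_s=\widehat{\boldsymbol\Lambda}_s$, we get
\[
\mathbf B_s^{-1/2}\widehat{\boldsymbol\Lambda}_s\mathbf B_s^{-1/2}\preceq\mathbf I_s.
\]
The two remainder terms have operator norm at most $(\Delta_N+\|\overline K_N^0\|^2)\|\mathbf B_s^{-1}\|_\op\le(\Delta_N+O_p(N^{-1}))/\widehat\lambda_s$. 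By Step 2, $\widehat\lambda_s\ge\lambda_s/2$ with probability tending to one. Then \aref{K3} and $\lambda_s\ge\lambda_s-\lambda_{s+1}$ make $\Delta_N/\lambda_s=o_p(1)$. Hence the whitened operator norm is at most $1+o_p(1)=O_p(1)$.

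The main obstacle is bookkeeping rather than any single step. One must identify $N^{-1}\mathbf K_0$ with $\widehat{\mathcal T}_{\kappa_0,N}$ inside the common realization, and make sure the synthesis map $\widehat E_s$ uses the population-centered representers, so that the cross term $\overline K_N^0\otimes\overline K_N^0$ is accounted for exactly.
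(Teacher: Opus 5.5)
Your proposal is correct and follows the paper's proof essentially step for step. It uses Hilbert--Schmidt concentration of $\widetilde{\mathcal T}_N$ and $\overline K_N^0$, then Weyl's inequality with \aref{K3} for $\mathcal E_s$ and $\widehat\lambda_s/\lambda_s\to1$, then Davis--Kahan for the projector, and finally the decomposition $\overline{\mathbf C}_s=\widehat{\boldsymbol\Lambda}_s+\widehat E_s^*(\mathcal T_{\kappa_0}-\widehat{\mathcal T}_{\kappa_0,N}+\overline K_N^0\otimes\overline K_N^0)\widehat E_s$ for the whitened bound. The only difference is cosmetic, in Step~3: you use the Hilbert--Schmidt form of Davis--Kahan with $\|\mathcal T_{\kappa_0}^{1/2}\|_\op\le\sqrt{\lambda_1}$, whereas the paper uses the operator-norm form and then $\operatorname{tr}(A\mathcal T_{\kappa_0}A)\le\|A\|_\op^2\operatorname{tr}(\mathcal T_{\kappa_0})$; both give the same rate.
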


\begin{proof}
Write $Z_i=K_{X_i}^0$ and
\[
    \widetilde{\mathcal T}_N:=\frac1N\sum_{i=1}^NZ_i\otimes Z_i,
    \qquad
    \Delta_N:=\widehat{\mathcal T}_{\kappa_0,N}-\mathcal T_{\kappa_0}
    =(\widetilde{\mathcal T}_N-\mathcal T_{\kappa_0})
      -\overline K_N^0\otimes\overline K_N^0.
\]
Condition~\nameref{par:K1} and Proposition~
\ref{prop:app-centered-rkhs-realization} give
$\|Z_i\|^2\le C_\kappa<\infty$. Independence in the Hilbert space of
Hilbert--Schmidt operators therefore yields
\[
 \bbE\|\widetilde{\mathcal T}_N-\mathcal T_{\kappa_0}\|_\HS^2
 =\frac1N\bbE\|Z\otimes Z-\mathcal T_{\kappa_0}\|_\HS^2
 \le\frac{C}{N}.
\]
Also, because $\bbE Z=0$,
\[
    \bbE\|\overline K_N^0\|^2
    =\frac{\operatorname{tr}(\mathcal T_{\kappa_0})}{N}.
\]
Consequently,
\[
    \|\Delta_N\|_\op=O_p(N^{-1/2}),
    \qquad
    \|\overline K_N^0\|=O_p(N^{-1/2}).
\]
Condition~\nameref{par:K3} states
$N^{-1/2}=o(\lambda_s-\lambda_{s+1})$. Combining it with the covariance
bound above gives
\[
    \frac{\|\Delta_N\|_\op}
    {\lambda_s-\lambda_{s+1}}=o_p(1).
\]
Weyl's inequality now gives, with probability tending to one,
\[
    \widehat\lambda_s-\widehat\lambda_{s+1}
    \ge \lambda_s-\lambda_{s+1}-2\|\Delta_N\|_\op>0.
\]
This proves $\bbP(\mathcal E_s)\to1$. Moreover,
$|\widehat\lambda_s-\lambda_s|\le\|\Delta_N\|_\op$; since
$\lambda_s-\lambda_{s+1}\le\lambda_s$, it follows that
$\widehat\lambda_s/\lambda_s\to_p1$.

On the same event, the Davis--Kahan theorem gives
\[
 \|\widehat\Pi_{\le s}-\Pi_{\le s}\|_\op
 \le\frac{C\|\Delta_N\|_\op}
 {\lambda_s-\lambda_{s+1}}.
\]
Put $A=\widehat\Pi_{\le s}-\Pi_{\le s}$. Since $A$ is self-adjoint,
$A^2\preceq\|A\|_\op^2\mathrm{Id}$, and therefore
\[
\begin{split}
 \|\mathcal T_{\kappa_0}^{1/2}A\|_\HS^2
 &=\operatorname{tr}(A\mathcal T_{\kappa_0}A)
   =\operatorname{tr}\{\mathcal T_{\kappa_0}^{1/2}A^2
   \mathcal T_{\kappa_0}^{1/2}\}\\
 &\le \|A\|_\op^2\operatorname{tr}(\mathcal T_{\kappa_0}).
\end{split}
\]
Hence
\[
 \|\mathcal T_{\kappa_0}^{1/2}(\widehat\Pi_{\le s}-\Pi_{\le s})\|_\HS
 =O_p\!\left(
   \frac{1}{\sqrt N\,(\lambda_s-\lambda_{s+1})}
 \right).
\]
The exceptional-event convention preserves this $O_p$ conclusion.

It remains to verify the pooled query bound. On $\mathcal E_s$, for every
$\mathbf a\in\mathbb R^s$, put $h=\widehat E_s\mathbf a$. By
\eqref{eq:supp-overline-C},
\[
\begin{split}
 \mathbf a^\top\overline{\mathbf C}_s\mathbf a
 &=\langle h,(\mathcal T_{\kappa_0}
     +\overline K_N^0\otimes\overline K_N^0)h\rangle\\
 &\le \mathbf a^\top\widehat{\boldsymbol\Lambda}_s\mathbf a
   +(\|\Delta_N\|_\op+\|\overline K_N^0\|^2)\|\mathbf a\|^2.
\end{split}
\]
The last coefficient is $o_p(\lambda_s)$, while
$\widehat\lambda_s/\lambda_s\to_p1$. Thus, with probability tending to one,
$\overline{\mathbf C}_s\preceq C\widehat{\boldsymbol\Lambda}_s$, and consequently
\[
    \|\mathbf B_s^{-1/2}\overline{\mathbf C}_s\mathbf B_s^{-1/2}\|_\op\le C.
\]
The centered labeled inverse is controlled separately below.
\end{proof}

\begin{lemma}[Consequences of the cross-boundary polynomial spectrum]
\label{lem:cross-boundary-spectrum-consequences}
Under Conditions~\nameref{par:K2} and \nameref{par:K5}, the following bounds hold at every
selected complete spectral-block cutoff $s$:
\begin{align}
    \mathcal N(\lambda_{s+1})
    &\lesssim s,
    \label{eq:derived-effective-dimension}\\
    \mathfrak r_s^{\partial}
    &\lesssim s\log(es),
    \label{eq:derived-block-relative-complexity}\\
    \mathcal U_s
    &\lesssim s^2\log(es),
    \label{eq:derived-unweighted-projector-sum}\\
    \mathcal W_s^{+}
    &\lesssim s^{2-q}\log(es).
    \label{eq:derived-leading-pca-excess-scale}
\end{align}
\end{lemma}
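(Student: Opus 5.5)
The plan is to reduce all four bounds to elementary sums over the polynomial spectrum. Condition \aref{K5} supplies $\lambda_j\asymp j^{-q}$, together with the cross-boundary gap $\lambda_j-\lambda_k\gtrsim j^{-q}-k^{-q}$ for $j\le s<k$. Every denominator in the four sums has this cross-boundary form, with $j\le s$ and $k\ge s+1$. The working tool is the mean-value lower bound
\[
 j^{-q}-k^{-q}\gtrsim (k-j)\,k^{-q-1}\quad\text{and}\quad j^{-q}-k^{-q}\ge j^{-q}\bigl(1-(j/k)^{q}\bigr)\gtrsim j^{-q}\,\frac{k-j}{k},
\]
for $j<k$, using $1-u^q\ge \min(1,q)(1-u)$ on $[0,1]$. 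Throughout I replace $\lambda$'s by powers via $\asymp$, and gaps by these lower bounds.

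\emph{Effective dimension.} Split $\mathcal N(\lambda_{s+1})=\sum_j\lambda_j/(\lambda_j+\lambda_{s+1})$ at $s$. The first $s$ terms are at most $1$ each. The tail is at most $\lambda_{s+1}^{-1}\sum_{k>s}\lambda_k\lesssim s^{q}\cdot s^{1-q}=s$, using $q>1$.

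\emph{Relative rank.} In the first sum of $\mathfrak r_s^\partial$, the bound gives $\lambda_j/(\lambda_j-\lambda_{s+1})\lesssim j^{-q}/\{j^{-q}(s+1-j)/(s+1)\}=(s+1)/(s+1-j)$. Summing over $j\le s$ gives $\lesssim s\log(es)$ by the harmonic sum. Symmetrically, in the second sum, $\lambda_s-\lambda_k\gtrsim s^{-q}(k-s)/k$. Hence $\lambda_k/(\lambda_s-\lambda_k)\lesssim (k/s)^{-q}\,k/(k-s)$. I split $k\in(s,2s]$ from $k>2s$. The near range gives $\lesssim\sum_{m\le s} 2s/m\lesssim s\log(es)$. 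The far range gives $\lesssim\sum_{k>2s}(s/k)^q\lesssim s$.

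\emph{The sums $\mathcal U_s$ and $\mathcal W_s^+$.} For $\mathcal U_s$, I bound each term by
\[
 \frac{\lambda_j\lambda_k}{(\lambda_j-\lambda_k)^2}\lesssim \frac{j^{-q}k^{-q}}{j^{-2q}(k-j)^2/k^2}=\Bigl(\frac{j}{k}\Bigr)^{q}\frac{k^2}{(k-j)^2}.
\]
I split at $k\le 2s$ and $k>2s$. For $k>2s$ we have $k-j\ge k/2$, so the terms are $\lesssim (j/k)^q$. Summing gives $\sum_j j^{q}\sum_{k>2s}k^{-q}\lesssim s^{q+1}s^{1-q}=s^2$. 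For $s<k\le 2s$ the factor $(j/k)^q\le1$ and $k^2\lesssim s^2$. Then $\sum_{j\le s}\sum_{k=s+1}^{2s}(k-j)^{-2}\lesssim\sum_{j\le s}(s+1-j)^{-1}\lesssim\log(es)$, so this part is $\lesssim s^2\log(es)$. This double sum is the step needing the most care: the naive bound $\lambda_s-\lambda_{s+1}\gtrsim s^{-q-1}$ applied uniformly would lose powers of $s$, so the split by distance to the boundary is essential. For $\mathcal W_s^+$, I combine $\sum_{k>s}\lambda_k\lesssim s^{1-q}$ with the first half of the $\mathfrak r_s^\partial$ bound, $\sum_{j\le s}\lambda_j/(\lambda_j-\lambda_{s+1})\lesssim s\log(es)$. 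The product is $s^{2-q}\log(es)$.

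Finally, \aref{K2} is used only to guarantee that $s$ is a block boundary, so all denominators are strictly positive. Implicit constants depend only on $q$ and the constants in \aref{K5}.
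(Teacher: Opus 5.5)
Your proof is correct and follows essentially the same route as the paper. It combines polynomial decay with the \aref{K5} cross-boundary gap, splits near/far at $k=2s$, sums harmonic series, and obtains $\mathcal W_s^{+}$ as the product of $\sum_{k>s}\lambda_k\lesssim s^{1-q}$ with the leading half of $\mathfrak r_s^{\partial}$.

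The only difference is bookkeeping. Your uniform bound $j^{-q}-k^{-q}\ge j^{-q}(k-j)/k$ replaces the paper's split into $j\le s/2$ and $s/2<j\le s$ with a mean-value bound near the cutoff. As a result, you handle $\mathfrak r_s^{\partial}$ and $\mathcal U_s$ with two regions instead of three, and you need no separate $s=1$ case.
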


\begin{proof}
For $s=1$, Condition~\nameref{par:K5} gives
$\lambda_1-\lambda_k\gtrsim1-k^{-q}$ and $\lambda_k\lesssim k^{-q}$ for
$k>1$. Since $q>1$, the series defining all four quantities are bounded by
constants. It therefore suffices to consider $s\ge2$.
Polynomial decay gives
$\lambda_s\asymp\lambda_{s+1}\asymp s^{-q}$. Therefore
\[
 \mathcal N(\lambda_{s+1})
 \le s+\lambda_{s+1}^{-1}\sum_{k>s}\lambda_k
 \lesssim s+s^q\sum_{k>s}k^{-q}
 \lesssim s,
\]
where the last step uses $q>1$.

For the relative rank, Condition~\nameref{par:K5} gives
\[
 \mathfrak r_s^{\partial}
 \lesssim
 \sum_{j\le s}\frac{j^{-q}}{j^{-q}-(s+1)^{-q}}
 +
 \sum_{k>s}\frac{k^{-q}}{s^{-q}-k^{-q}}.
\]
If $j\le s/2$, the first summand is $O(1)$.  If $s/2<j\le s$,
the mean-value theorem gives
\[
 j^{-q}-(s+1)^{-q}
 \ge c_q(s+1-j)s^{-q-1},
\]
so the summand is at most $Cs/(s+1-j)$.  Similarly, the second
summand is at most $Cs/(k-s)$ for $s<k\le2s$, while
\[
 \sum_{k>2s}\frac{k^{-q}}{s^{-q}-k^{-q}}
 \lesssim s^q\sum_{k>2s}k^{-q}
 \lesssim s.
\]
Summing the two harmonic series proves
\eqref{eq:derived-block-relative-complexity}.

For the one-sided excess scale, polynomial decay and the relative-rank
calculation above give
\[
 \sum_{k>s}\lambda_k\lesssim s^{1-q},
 \qquad
 \sum_{j\le s}\frac{\lambda_j}{\lambda_j-\lambda_{s+1}}
 \lesssim s\log(es).
\]
Multiplication proves
\eqref{eq:derived-leading-pca-excess-scale}.

It remains to treat $\mathcal U_s$.  Split the cross-boundary
index set into
\[
\begin{aligned}
 \mathcal I_1&:=\{j\le s/2,\ k>s\},\\
 \mathcal I_2&:=\{s/2<j\le s,\ k>2s\},\\
 \mathcal I_3&:=\{s/2<j\le s<k\le2s\}.
\end{aligned}
\]
On $\mathcal I_1\cup\mathcal I_2$, $k\ge2j$ and hence
$j^{-q}-k^{-q}\ge(1-2^{-q})j^{-q}$.  Consequently,
\[
 \frac{\lambda_j\lambda_k}{(\lambda_j-\lambda_k)^2}
 \lesssim j^q k^{-q}.
\]
Integral comparison then gives
\[
\begin{split}
 \sum_{\mathcal I_1\cup\mathcal I_2}j^qk^{-q}
 &\lesssim
 \left(\sum_{j\le s}j^q\right)
 \left(\sum_{k>s}k^{-q}\right)
 \lesssim s^2.
\end{split}
\]

On $\mathcal I_3$, all eigenvalues are comparable to $s^{-q}$ and the
relative-gap condition plus the mean-value theorem gives
\[
 \lambda_j-\lambda_k
 \gtrsim j^{-q}-k^{-q}
 \gtrsim(k-j)s^{-q-1}.
\]
Also $\lambda_j-\lambda_{s+1}\le\lambda_j\lesssim s^{-q}$. Hence
\[
 \frac{\lambda_j\lambda_k}{(\lambda_j-\lambda_k)^2}
 \lesssim\frac{s^2}{(k-j)^2}.
\]
Writing $a=s+1-j$ and $b=k-s$ shows
\[
 \sum_{\mathcal I_3}\frac1{(k-j)^2}
 \le
 \sum_{a=1}^{s}\sum_{b=1}^{s}
 \frac1{(a+b-1)^2}
 \lesssim\log(es).
\]
Combining the three regions proves
\eqref{eq:derived-unweighted-projector-sum}.
\end{proof}

\begin{lemma}[Consequences of the tuning regime]
\label{lem:tuning-consequences}
Under Conditions~\nameref{par:K2}, \nameref{par:K5}, and
\eqref{eq:sharp-tuning-regime},
\begin{equation}
\label{eq:tuning-effective-dimension-smallness}
    \mathfrak r_s^{\partial}\sqrt{\frac{\log N}{N}}\longrightarrow0,
    \qquad
    \frac{\mathcal N(\lambda_{s+1})+\log N}{N}\longrightarrow0.
\end{equation}
The following moment-based smallness conditions also hold:
\begin{equation}
\label{eq:tuning-moment-smallness}
    \frac{\mathcal N(\lambda_{s+1})}{\sqrt N}\longrightarrow0,
    \qquad
    \frac{(\mathfrak r_s^{\partial})^2}{N}\longrightarrow0,
    \qquad
    \frac{s^2}{N(\log N)^2}\longrightarrow0.
\end{equation}
\end{lemma}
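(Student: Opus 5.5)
The plan is to reduce every displayed limit to the two primitive bounds of Lemma~\ref{lem:cross-boundary-spectrum-consequences}: $\mathcal N(\lambda_{s+1})\lesssim s$ and $\mathfrak r_s^{\partial}\lesssim s\log(es)$. These hold under \aref{K2} and \aref{K5} at the selected cutoff. We then compare each quantity with the first limit in \eqref{eq:sharp-tuning-regime},
\[
 \epsilon_N:=s\log(es)\sqrt{\frac{\log N}{N}}\to0 .
\]
No other ingredient of the tuning regime should be needed. In particular, $s^2/n\to0$ and $\xi/\lambda_s\to0$ play no role.

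For \eqref{eq:tuning-effective-dimension-smallness}, the first limit is immediate, since $\mathfrak r_s^\partial\sqrt{\log N/N}\lesssim\epsilon_N$. For the second we split the sum.
\begin{itemize}
\item Since $\log(es)\ge1$ and $\log N\ge\log 2$ for $N\ge2$, we have $s/N\le s/\sqrt N\lesssim\epsilon_N$. Hence $\mathcal N(\lambda_{s+1})/N\lesssim s/N\to0$.
\item The term $\log N/N\to0$ trivially as $N\to\infty$.
\end{itemize}
The tuning regime implicitly forces $N\to\infty$, because $s\to\infty$ with $s<N$.

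For \eqref{eq:tuning-moment-smallness} we argue in the same way.
\begin{itemize}
\item $\mathcal N(\lambda_{s+1})/\sqrt N\lesssim s/\sqrt N\lesssim \epsilon_N/\sqrt{\log 2}$, which tends to zero.
\item $(\mathfrak r_s^\partial)^2/N\lesssim s^2\log^2(es)/N=\epsilon_N^2/\log N\le\epsilon_N^2/\log2$, which tends to zero.
\item $s^2/\{N(\log N)^2\}\le s^2\log^2(es)\log N/N=\epsilon_N^2$ once $\log N\ge1$, since $\log(es)\ge1$. This also tends to zero.
\end{itemize}

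There is no real obstacle. The only point needing care is that the lemma's constants are uniform in $s$, and that $\log N$ is bounded below for large $N$, so that the comparisons with $\epsilon_N$ are legitimate. We also check that $s=1$ is covered: the lemma's bounds hold there with constants. Since $s\to\infty$ anyway, only large $s$ matters, and the constant-level case is harmless.
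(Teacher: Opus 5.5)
Your proposal is correct and follows the paper's proof. Both reduce everything to $\mathcal N(\lambda_{s+1})\lesssim s$ and $\mathfrak r_s^{\partial}\lesssim s\log(es)$ from Lemma~\ref{lem:cross-boundary-spectrum-consequences}, then compare with the first restriction in \eqref{eq:sharp-tuning-regime} and its square. The only difference is the term $\mathcal N(\lambda_{s+1})/N$: the paper bounds $s/N\le s/n\to0$ using $N\ge n$ and $s^2/n\to0$, whereas your bound $s/N\le s/\sqrt N\lesssim s\log(es)\sqrt{\log N/N}$ shows that the first restriction alone suffices.
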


\begin{proof}
Lemma~\ref{lem:cross-boundary-spectrum-consequences} gives
$\mathcal N(\lambda_{s+1})\lesssim s$ and
$\mathfrak r_s^{\partial}\lesssim s\log(es)$. The first restriction in
\eqref{eq:sharp-tuning-regime} therefore gives the first limit in
\eqref{eq:tuning-effective-dimension-smallness}. Its square states
\[
    \frac{s^2\log^2(es)\log N}{N}\longrightarrow0.
\]
Since $N\to\infty$ and $\log(es)\ge1$, this implies all three limits in
\eqref{eq:tuning-moment-smallness}. Finally,
$\mathcal N(\lambda_{s+1})/N\lesssim s/N\le s/n\to0$ and
$\log N/N\to0$, proving the remaining limit.
\end{proof}

\begin{lemma}[Moment bounds for fixed spectral transforms]
\label{lem:spectral-transform-moments}
Assume Condition~\nameref{par:K4}. Write
$\mathcal T=\mathcal T_{\kappa_0}$, $Z=K_X^0$, and
$e_j=\sqrt{\lambda_j}\phi_j$ in the fixed eigenbasis satisfying that
condition. Let $Z_1,\ldots,Z_N$ be independent copies of $Z$, and put
\[
    \widetilde{\mathcal T}_N=N^{-1}\sum_{i=1}^N Z_i\otimes Z_i,
    \qquad
    \bar Z_N=N^{-1}\sum_{i=1}^N Z_i,
    \qquad
    \widehat{\mathcal T}_N
      =\widetilde{\mathcal T}_N-\bar Z_N\otimes\bar Z_N.
\]
Let $B$ be a fixed bounded self-adjoint operator diagonal in this
basis, with $Be_j=\beta_j e_j$, and define
\[
    b(B):=\operatorname{tr}(B\mathcal T B)
         =\sum_{j\ge1}\lambda_j\beta_j^2<\infty.
\]
There is a constant $C$ depending only on $C_{X,8}$ such that
\begin{align}
    \bbE\|BZ\|^4
      &\le Cb(B)^2,
      \label{eq:spectral-transform-fourth-moment}\\
    \bbE\|B(\widetilde{\mathcal T}_N-\mathcal T)B\|_\HS^2
      &\le \frac{Cb(B)^2}{N},
      \label{eq:spectral-transform-covariance-moment}\\
    \bbE\|B\bar Z_N\|^2
      &=\frac{b(B)}{N},\qquad
    \bbE\|B\bar Z_N\|^4\le\frac{Cb(B)^2}{N^2},
      \label{eq:spectral-transform-mean-moments}\\
    \bbE\|B(\widehat{\mathcal T}_N-\mathcal T)B\|_\HS^2
      &\le\frac{Cb(B)^2}{N}.
      \label{eq:spectral-transform-centered-moment}
\end{align}
In particular, the operator norms of the two transformed covariance
perturbations are $O_p(b(B)/\sqrt N)$, and
$\|B\bar Z_N\|^2=O_p(b(B)/N)$. These bounds hold uniformly for fixed
choices of $B$ that may depend on $n,N,s$; no uniform bound on $\|B\|_\op$
across sample sizes is needed. This does not assert a bound for data-dependent
choices of $B$ or for a supremum over transforms. If $b(B)=0$, all the
displayed quantities vanish almost surely.
\end{lemma}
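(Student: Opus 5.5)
The plan is to work in the fixed eigenbasis $(e_j)$ of (K4) and write $BZ=\sum_j\beta_j\langle Z,e_j\rangle e_j$. By the reproducing property and the common RKHS realization, $\langle Z,e_j\rangle=\langle K_X^0,\sqrt{\lambda_j}\phi_j\rangle=\sqrt{\lambda_j}\phi_j(X)$. Hence
\[
\|BZ\|^2=\sum_j\lambda_j\beta_j^2\phi_j(X)^2 .
\]
For the fourth moment we expand the square and apply Cauchy--Schwarz: $\bbE[\phi_j^2\phi_k^2]\le(\bbE\phi_j^4\,\bbE\phi_k^4)^{1/2}\le C_{X,8}^{1/2}$, using Jensen to pass from eighth to fourth moments. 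Summing over $j,k$ then gives $\bbE\|BZ\|^4\le C\,b(B)^2$, which is \eqref{eq:spectral-transform-fourth-moment}. To justify exchanging sum and expectation we use monotone convergence, which applies because all terms are nonnegative.

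For \eqref{eq:spectral-transform-covariance-moment}, note that $B(\widetilde{\mathcal T}_N-\mathcal T)B=N^{-1}\sum_i(BZ_i\otimes BZ_i-B\mathcal TB)$. This is an average of i.i.d.\ centered elements of the Hilbert space of Hilbert--Schmidt operators, since $\bbE[Z\otimes Z]=\mathcal T$ as a Bochner expectation. Therefore
\[
\bbE\|\cdot\|_\HS^2=N^{-1}\bbE\|BZ\otimes BZ-B\mathcal TB\|_\HS^2\le N^{-1}\bbE\|BZ\|^4 .
\]
The last inequality holds because a variance is at most the second moment and $\|u\otimes u\|_\HS=\|u\|^2$.

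For the mean, $\bbE Z=0$ together with independence gives $\bbE\|B\bar Z_N\|^2=N^{-1}\bbE\|BZ\|^2=N^{-1}\operatorname{tr}(B\mathcal TB)=b(B)/N$. For the fourth moment of $B\bar Z_N$ we use the standard Hilbert-space expansion
\[
\bbE\Bigl\|\sum_i BZ_i\Bigr\|^4=\sum_{i,j,k,l}\bbE\langle BZ_i,BZ_j\rangle\langle BZ_k,BZ_l\rangle .
\]
Only index patterns with every index paired survive. The diagonal terms contribute $N\bbE\|BZ\|^4$. The pairings $i=j\ne k=l$ contribute at most $N^2(\bbE\|BZ\|^2)^2$. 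The pairings $i=k\ne j=l$ and $i=l\ne j=k$ give $\bbE\langle BZ,BZ'\rangle^2\le(\bbE\|BZ\|^2)^2$ by Cauchy--Schwarz. Dividing by $N^4$ yields $\bbE\|B\bar Z_N\|^4\le C\,b(B)^2/N^2$.

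The centered bound \eqref{eq:spectral-transform-centered-moment} follows from the triangle inequality. Write $B(\widehat{\mathcal T}_N-\mathcal T)B=B(\widetilde{\mathcal T}_N-\mathcal T)B-B\bar Z_N\otimes B\bar Z_N$, and use $\|B\bar Z_N\otimes B\bar Z_N\|_\HS^2=\|B\bar Z_N\|^4$, whose expectation is at most $Cb(B)^2/N^2\le Cb(B)^2/N$. The $O_p$ statements then follow from Markov's inequality, with operator norm bounded by Hilbert--Schmidt norm. If $b(B)=0$, then $\bbE\|BZ\|^2=0$, so $BZ=0$ almost surely and everything vanishes.

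The main obstacle I expect is the rigour of the coordinate expansion for unbounded index sets. Only $b(B)<\infty$ is assumed, not trace-class behaviour of $B$. So we must check that $\|BZ\|^2$ equals the series almost surely, via the Parseval identity for $BZ\in\mathcal H_{\kappa_0}$ and completeness of $(e_j)$ shown in Appendix~\ref{app:RKHS}. We must also check that the Bochner expectations of the rank-one operators exist; this holds because $\bbE\|BZ\|^2=b(B)<\infty$. No bound on $\|B\|_\op$ enters anywhere, which is exactly the uniformity the statement claims.
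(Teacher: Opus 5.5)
Your proposal is correct and matches the paper's proof step for step: the coordinate expansion $\|BZ\|^2=\sum_j\lambda_j\beta_j^2\phi_j(X)^2$ with Cauchy--Schwarz and Tonelli, the Hilbert--Schmidt variance identity for the covariance term, and the squared triangle inequality with the rank-one term for centering. The only local difference is the fourth moment of $B\bar Z_N$. You expand $\|\sum_i BZ_i\|^4$ over index pairings in the Hilbert space and reuse $\bbE\|BZ\|^4\le Cb(B)^2$, whereas the paper computes $\bbE\bar\eta_j^4=N^{-4}\{N\bbE\eta_j^4+3N(N-1)\}$ coordinatewise and then combines coordinates by Cauchy--Schwarz and Tonelli; both give $Cb(B)^2/N^2$.
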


\begin{proof}
Put $\eta_j=\lambda_j^{-1/2}\langle e_j,Z\rangle=\phi_j(X)$. Then
$\bbE\eta_j=0$, $\bbE\eta_j^2=1$, and
\[
    \sup_j\bbE|\eta_j|^4\le C_{X,8}^{1/2}=:C_4.
\]
The coordinates need not be independent. With
$a_j=\lambda_j\beta_j^2$, the spectral expansion of $Z$ and Tonelli's
theorem give
\[
\begin{split}
    \bbE\|BZ\|^4
    &=\sum_{j,k\ge1}a_ja_k\bbE(\eta_j^2\eta_k^2)\\
    &\le\sum_{j,k\ge1}a_ja_k
          (\bbE\eta_j^4\,\bbE\eta_k^4)^{1/2}
     \le C_4 b(B)^2.
\end{split}
\]
Independence of the observations in the Hilbert space of Hilbert--Schmidt
operators now yields
\[
\begin{split}
    \bbE\|B(\widetilde{\mathcal T}_N-\mathcal T)B\|_\HS^2
    &=\frac1N\bbE\|BZ\otimes BZ-B\mathcal T B\|_\HS^2\\
    &\le\frac1N\bbE\|BZ\|^4.
\end{split}
\]
Since $\bbE Z=0$, the second-moment identity for the sample mean gives
$\bbE\|B\bar Z_N\|^2=b(B)/N$.

For its fourth moment, put
$\bar\eta_j=N^{-1}\sum_{i=1}^N\eta_j(X_i)$. Independence and centering
across observations give the exact identity
\[
    \bbE\bar\eta_j^4
    =N^{-4}\{N\bbE\eta_j^4+3N(N-1)\}
    \le\frac{C}{N^2}.
\]
Thus $\bbE(\bar\eta_j^2\bar\eta_k^2)\le C/N^2$ by
Cauchy--Schwarz. Tonelli's theorem then gives
\[
    \bbE\|B\bar Z_N\|^4
    =\sum_{j,k\ge1}a_ja_k
         \bbE(\bar\eta_j^2\bar\eta_k^2)
    \le\frac{Cb(B)^2}{N^2}.
\]
All infinite sums used here have nonnegative summands, so the same
calculations apply to the infinite-dimensional space without any exchange of
conditionally convergent series.

Finally,
\[
    B(\widehat{\mathcal T}_N-\mathcal T)B
    =B(\widetilde{\mathcal T}_N-\mathcal T)B
       -(B\bar Z_N)\otimes(B\bar Z_N),
\]
and the squared Hilbert--Schmidt norm of the rank-one term is
$\|B\bar Z_N\|^4$. The squared triangle inequality proves
\eqref{eq:spectral-transform-centered-moment}. Markov's inequality and
$\|\cdot\|_\op\le\|\cdot\|_\HS$ give the asserted stochastic orders.
The constants in these moment bounds do not depend on the diagonal
coefficients of $B$, which justifies their use for sample-size-dependent
transforms chosen without using the observations. When $b(B)=0$, $\bbE\|BZ\|^2=0$, and all the
conclusions are immediate.
\end{proof}

\begin{lemma}[Regularized covariance and empirical-centering stability]
\label{lem:derived-covariance-stability}
Under Conditions~\nameref{par:K2}, \nameref{par:K4}--\nameref{par:K5}, and
\eqref{eq:sharp-tuning-regime}, put $\mu=\lambda_{s+1}$. Then
\begin{equation}
\label{eq:derived-preconditioned-stability}
    \chi_{s,N}
    :=\left\|
        (\mathcal T_{\kappa_0}+\mu\mathrm{Id})^{-1/2}
        (\widetilde{\mathcal T}_N-\mathcal T_{\kappa_0})
        (\mathcal T_{\kappa_0}+\mu\mathrm{Id})^{-1/2}
      \right\|_\op
    =O_p\!\left(\frac{\mathcal N(\mu)}{\sqrt N}\right)
    =o_p(1),
\end{equation}
\begin{equation}
\label{eq:derived-centering-rate}
    \|\overline K_N^0\|=O_p(N^{-1/2}),
\end{equation}
and
\begin{equation}
\label{eq:derived-regularized-mean-stability}
    \nu_{s,N}
    :=\left\|
       (\mathcal T_{\kappa_0}+\mu\mathrm{Id})^{-1/2}\overline K_N^0
      \right\|^2
    =O_p\!\left(\frac{\mathcal N(\mu)}{N}\right)
    =o_p(1).
\end{equation}
Consequently, with $\delta_{s,N}:=\chi_{s,N}+\nu_{s,N}=o_p(1)$,
\begin{equation}
\label{eq:derived-centered-preconditioned-stability}
 \left\|
   (\mathcal T_{\kappa_0}+\mu\mathrm{Id})^{-1/2}
   (\widehat{\mathcal T}_{\kappa_0,N}-\mathcal T_{\kappa_0})
   (\mathcal T_{\kappa_0}+\mu\mathrm{Id})^{-1/2}
 \right\|_\op
 \le\delta_{s,N}.
\end{equation}
\end{lemma}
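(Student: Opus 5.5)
The plan is to apply the second-moment bounds of Lemma~\ref{lem:spectral-transform-moments} with the deterministic diagonal transform $B=(\mathcal T_{\kappa_0}+\mu\mathrm{Id})^{-1/2}$, $\mu=\lambda_{s+1}$. This $B$ is fixed given $(n,N,s)$ and diagonal in the eigenbasis of \aref{K4}, with $\beta_j=(\lambda_j+\mu)^{-1/2}$. Then
\[
 b(B)=\sum_j\frac{\lambda_j}{\lambda_j+\mu}=\mathcal N(\mu).
\]
By Lemma~\ref{lem:spectral-transform-moments} the lemma allows $B$ to depend on sample sizes without any uniform operator-norm bound.

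\emph{Step 1.} By \eqref{eq:spectral-transform-covariance-moment}, $\bbE\|B(\widetilde{\mathcal T}_N-\mathcal T_{\kappa_0})B\|_\HS^2\le C\mathcal N(\mu)^2/N$. Markov's inequality and $\|\cdot\|_\op\le\|\cdot\|_\HS$ then give $\chi_{s,N}=O_p(\mathcal N(\mu)/\sqrt N)$. Lemma~\ref{lem:cross-boundary-spectrum-consequences} gives $\mathcal N(\lambda_{s+1})\lesssim s$, and Lemma~\ref{lem:tuning-consequences}, \eqref{eq:tuning-moment-smallness}, gives $\mathcal N(\lambda_{s+1})/\sqrt N\to0$. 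Together these prove \eqref{eq:derived-preconditioned-stability}.

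\emph{Step 2.} For \eqref{eq:derived-centering-rate}, use $\bbE\|\overline K_N^0\|^2=\operatorname{tr}(\mathcal T_{\kappa_0})/N$ with $\operatorname{tr}(\mathcal T_{\kappa_0})\le C_\kappa$, as in the proof of Proposition~\ref{prop:primitive-spectral-verification}, followed by Markov's inequality. For \eqref{eq:derived-regularized-mean-stability}, the mean identity in \eqref{eq:spectral-transform-mean-moments} gives $\bbE\nu_{s,N}=\mathcal N(\mu)/N$. Hence $\nu_{s,N}=O_p(\mathcal N(\mu)/N)$, which is $o_p(1)$ by \eqref{eq:tuning-effective-dimension-smallness}.

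\emph{Step 3.} Use the decomposition $\widehat{\mathcal T}_{\kappa_0,N}-\mathcal T_{\kappa_0}=(\widetilde{\mathcal T}_N-\mathcal T_{\kappa_0})-\overline K_N^0\otimes\overline K_N^0$ from Section~\ref{app:RKHS}, and conjugate both sides by $B$. The rank-one term becomes $(B\overline K_N^0)\otimes(B\overline K_N^0)$, whose operator norm is exactly $\|B\overline K_N^0\|^2=\nu_{s,N}$. The triangle inequality then yields \eqref{eq:derived-centered-preconditioned-stability} with $\delta_{s,N}=\chi_{s,N}+\nu_{s,N}$ as a deterministic pointwise bound, and $\delta_{s,N}=o_p(1)$ follows from Steps 1--2.

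The argument is essentially bookkeeping; the only point needing care is Step 1. There one must confirm that Lemma~\ref{lem:spectral-transform-moments} applies even though $\|B\|_\op=\mu^{-1/2}\to\infty$. This is fine because that lemma's constants depend only on $C_{X,8}$ through $b(B)$. One must also check that Condition~\aref{K5} makes the cutoff a complete block, via \aref{K2}, so that the fixed eigenbasis is well defined. No Bernstein-type concentration is needed, since the second-moment rate $\mathcal N(\mu)/\sqrt N$ already vanishes under the tuning regime.
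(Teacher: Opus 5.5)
Your proposal is correct and matches the paper's proof essentially step for step. The paper likewise applies Lemma~\ref{lem:spectral-transform-moments} with the fixed diagonal transform $(\mathcal T_{\kappa_0}+\mu\mathrm{Id})^{-1/2}$, whose transformed trace is $\mathcal N(\mu)$, obtains the orders by Markov's inequality and Lemma~\ref{lem:tuning-consequences}, uses $\bbE\|\overline K_N^0\|^2=\operatorname{tr}(\mathcal T_{\kappa_0})/N$ for the centering rate, and concludes with the rank-one decomposition of $\widehat{\mathcal T}_{\kappa_0,N}$ and the triangle inequality; your aside about the complete-block cutoff is not needed here, since the \aref{K4} eigenbasis is fixed independently of $s$ and $(\mathcal T_{\kappa_0}+\mu\mathrm{Id})^{-1/2}$ is diagonal in it for every $\mu>0$.
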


\begin{proof}
The population operator
$B_\mu=(\mathcal T_{\kappa_0}+\mu\mathrm{Id})^{-1/2}$ is bounded for
each $s$ and diagonal in the fixed population eigenbasis. Its transformed
covariance trace is
\[
    b(B_\mu)
    =\operatorname{tr}\{\mathcal T_{\kappa_0}
           (\mathcal T_{\kappa_0}+\mu\mathrm{Id})^{-1}\}
    =\mathcal N(\mu).
\]
Lemma~\ref{lem:spectral-transform-moments} therefore gives
\[
    \bbE\chi_{s,N}^2\le\frac{C\mathcal N(\mu)^2}{N},
    \qquad
    \bbE\nu_{s,N}=\frac{\mathcal N(\mu)}{N}.
\]
Markov's inequality gives the displayed stochastic orders, and
Lemma~\ref{lem:tuning-consequences} makes both quantities $o_p(1)$.
These conclusions use only coordinatewise moments, not a uniform tail
bound for arbitrary linear combinations of eigenfunctions.
Also,
\[
    \bbE\|\overline K_N^0\|^2
    =\frac{\operatorname{tr}(\mathcal T_{\kappa_0})}{N},
\]
which proves \eqref{eq:derived-centering-rate}. Finally,
$\widehat{\mathcal T}_{\kappa_0,N}
 =\widetilde{\mathcal T}_N-\overline K_N^0\otimes\overline K_N^0$,
and the regularized rank-one correction has operator norm $\nu_{s,N}$.
The triangle inequality proves
\eqref{eq:derived-centered-preconditioned-stability}.
\end{proof}

\begin{lemma}[Empirical eigenvalue comparability]
\label{lem:empirical-eigenvalue-comparability}
Under Conditions~\nameref{par:K2}, \nameref{par:K4}--\nameref{par:K5}, and
\eqref{eq:sharp-tuning-regime},
\begin{equation}
\label{eq:empirical-eigenvalue-comparability}
    \bigvee_{1\le j\le s+1}
    \left|\frac{\widehat\lambda_j}{\lambda_j}-1\right|
    =o_p(1).
\end{equation}
In particular, $\widehat\lambda_s>0$ with probability tending to one.
\end{lemma}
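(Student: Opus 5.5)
We deduce the relative eigenvalue comparison from the preconditioned stability bound \eqref{eq:derived-centered-preconditioned-stability} of Lemma~\ref{lem:derived-covariance-stability}, with $\mu=\lambda_{s+1}$ and $\delta_{s,N}=o_p(1)$. Write $\mathcal T=\mathcal T_{\kappa_0}$, $\widehat{\mathcal T}=\widehat{\mathcal T}_{\kappa_0,N}$ and $R_\mu=(\mathcal T+\mu\mathrm{Id})^{1/2}$. The bound says
\[
 -\delta_{s,N}(\mathcal T+\mu\mathrm{Id})\preceq\widehat{\mathcal T}-\mathcal T\preceq\delta_{s,N}(\mathcal T+\mu\mathrm{Id}),
\]
so that
\[
 (1-\delta_{s,N})\mathcal T-\delta_{s,N}\mu\mathrm{Id}\preceq\widehat{\mathcal T}\preceq(1+\delta_{s,N})\mathcal T+\delta_{s,N}\mu\mathrm{Id}.
\]

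By the Courant--Fischer min-max characterization, Loewner order transfers to ordered eigenvalues. Both sides are functions of $\mathcal T$, so their $j$th eigenvalues are $(1\mp\delta_{s,N})\lambda_j\mp\delta_{s,N}\mu$. Since $\widehat{\mathcal T}$ is compact and positive on $\mathcal H_{\kappa_0}$, its nonzero eigenvalues coincide with those of $N^{-1}\mathbf K_0$ through the common-space realization. This gives, for every $j$,
\[
 |\widehat\lambda_j-\lambda_j|\le\delta_{s,N}(\lambda_j+\lambda_{s+1}).
\]
If the operator has fewer than $j$ positive eigenvalues, the lower bound still forces $\widehat\lambda_j$ to be positive once $\delta_{s,N}<1/2$. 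This step also uses that the Courant--Fischer lower bound applies to the $j$th largest eigenvalue with zeros counted.

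For $j\le s+1$ we have $\lambda_{s+1}\le\lambda_j$. Dividing by $\lambda_j$ therefore gives
\[
 \bigvee_{j\le s+1}|\widehat\lambda_j/\lambda_j-1|\le2\delta_{s,N}=o_p(1).
\]
This bound is uniform in $j$, with no dependence on $s$ beyond $\delta_{s,N}$. Since $\lambda_s>0$ by \aref{K2}, it follows that $\widehat\lambda_s\ge(1-2\delta_{s,N})\lambda_s>0$ with probability tending to one.

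The only delicate point is the identification of the empirical eigenvalues $\widehat\lambda_j$ of $N^{-1}\mathbf K_0$ with those of the operator $\widehat{\mathcal T}$. We will verify it by writing $\widehat{\mathcal T}=S^*S$ with the sampling operator $S h=N^{-1/2}(\langle h,\widehat K^0_{X_r}\rangle)_r$, so that $SS^*=N^{-1}\mathbf K_0$. The Hilbert-space detail needed here is that $\widehat K_{X_r}^0$ lies in the identified $\mathcal H_{\kappa_0}$ subspace; we will check this, since $\widehat K^0_x=K^0_x-\overline K_N^0$ is a difference of elements of that span. Everything else is a direct consequence of the already-proved stability lemma.
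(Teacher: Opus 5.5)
Your proposal is correct and follows the paper's proof: the preconditioned bound of Lemma~\ref{lem:derived-covariance-stability} gives the Loewner sandwich $(1-\delta_{s,N})\mathcal T_{\kappa_0}-\delta_{s,N}\mu\mathrm{Id}\preceq\widehat{\mathcal T}_{\kappa_0,N}\preceq(1+\delta_{s,N})\mathcal T_{\kappa_0}+\delta_{s,N}\mu\mathrm{Id}$, and then min--max together with $\mu=\lambda_{s+1}\le\lambda_j$ yields $|\widehat\lambda_j/\lambda_j-1|\le2\delta_{s,N}$ for $j\le s+1$, on an event such as $\delta_{s,N}<1/4$. Your explicit check, via the sampling operator $S$ with $S^*S=\widehat{\mathcal T}_{\kappa_0,N}$ and $SS^*=N^{-1}\mathbf K_0$, that the Gram eigenvalues coincide with those of $\widehat{\mathcal T}_{\kappa_0,N}$ (zeros included) is a detail the paper leaves to its common-RKHS realization of the empirical eigensystem.
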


\begin{proof}
Put $\mu=\lambda_{s+1}$ and use $\delta_{s,N}$ from
Lemma~\ref{lem:derived-covariance-stability}. On the event
$\delta_{s,N}<1/4$, \eqref{eq:derived-centered-preconditioned-stability}
gives the quadratic-form bounds
\[
 (1-\delta_{s,N})\mathcal T_{\kappa_0}-\delta_{s,N}\mu\mathrm{Id}
 \preceq\widehat{\mathcal T}_{\kappa_0,N}
 \preceq(1+\delta_{s,N})\mathcal T_{\kappa_0}
          +\delta_{s,N}\mu\mathrm{Id}.
\]
For $j\le s+1$, the min--max principle and $\mu\le\lambda_j$ imply
\[
 (1-2\delta_{s,N})\lambda_j
 \le(1-\delta_{s,N})\lambda_j-\delta_{s,N}\mu
 \le\widehat\lambda_j
 \le(1+\delta_{s,N})\lambda_j+\delta_{s,N}\mu
 \le(1+2\delta_{s,N})\lambda_j.
\]
Since $\delta_{s,N}=o_p(1)$, this proves the assertion for the raw
empirical eigenvalues, including repeated eigenvalues.
\end{proof}

\begin{lemma}[Empirical cutoff separation and weighted good events]
\label{lem:empirical-cutoff-separation}
Under Conditions~\nameref{par:K2}, \nameref{par:K4}--\nameref{par:K5}, and
\eqref{eq:sharp-tuning-regime},
\[
    \bbP\!\left(\widehat\lambda_s-\widehat\lambda_{s+1}
       \ge\tfrac34(\lambda_s-\lambda_{s+1})\right)\longrightarrow1,
    \qquad \bbP(\mathcal E_s)\longrightarrow1.
\]
Put
\[
    \mu_s=\lambda_{s+1},\qquad
    g_s=\lambda_s-\lambda_{s+1},\qquad
    \Delta_N=\mathcal T_{\kappa_0}-\widehat{\mathcal T}_{\kappa_0,N},\qquad
    S_s=\sum_{j\le s}(\lambda_j-\mu_s)^{-1/2}\Pi_j.
\]
The following bounds also hold:
\begin{equation}
\label{eq:centered-pca-eigenvalue-good-event}
    \bbP\!\left(\widehat\lambda_{s+1}-\mu_s>g_s/2\right)
    \le\frac{C(\mathfrak r_s^\partial)^2}{N}=o(1),
\end{equation}
\begin{equation}
\label{eq:centered-pca-weighted-good-event}
    \bbP\{\|S_s\Delta_NS_s\|_\op\le1/4\}
    \ge1-\frac{C(\mathfrak r_s^\partial)^2}{N}\longrightarrow1.
\end{equation}
\end{lemma}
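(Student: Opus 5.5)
The plan is to obtain all four assertions from a single relative-perturbation good event, controlled by a second-moment bound from Lemma~\ref{lem:spectral-transform-moments}. The key is to pick a diagonal transform whose transformed trace is exactly the relative rank $\mathfrak r_s^{\partial}$. Define the fixed population operator
\[
 R_s:=\sum_{j\le s}(\lambda_j-\mu_s)^{-1/2}\Pi_j+\sum_{k>s}(\lambda_s-\lambda_k)^{-1/2}\Pi_k ,
\]
which is bounded and self-adjoint for each $s$ because $\lambda_s>\lambda_{s+1}$ by \aref{K2}. It is diagonal in the basis fixed under \aref{K4}, with
\[
 b(R_s)=\sum_{j\le s}\frac{\lambda_j}{\lambda_j-\lambda_{s+1}}+\sum_{k>s}\frac{\lambda_k}{\lambda_s-\lambda_k}=\mathfrak r_s^{\partial}.
\]
Applying \eqref{eq:spectral-transform-centered-moment} with $B=R_s$ gives $\bbE\|R_s\Delta_NR_s\|_\HS^2\le C(\mathfrak r_s^\partial)^2/N$. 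Markov's inequality then gives $\bbP\{\|R_s\Delta_NR_s\|_\op>1/8\}\le C'(\mathfrak r_s^\partial)^2/N$. This is $o(1)$ by \eqref{eq:tuning-moment-smallness} together with Lemma~\ref{lem:cross-boundary-spectrum-consequences}.

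Since $S_s$ is the restriction of $R_s$ to $\mathcal H_s$, we have $S_s=\Pi_{\le s}R_s=R_s\Pi_{\le s}$. Hence $\|S_s\Delta_NS_s\|_\op\le\|R_s\Delta_NR_s\|_\op$, and \eqref{eq:centered-pca-weighted-good-event} follows at once. Alternatively, one can apply the same moment bound directly with $B=S_s$, for which $b(S_s)\le\mathfrak r_s^\partial$.

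Next, work on the event $\mathcal G:=\{\|R_s\Delta_NR_s\|_\op\le1/8\}$. For every $h\in\mathcal H_{\kappa_0}$ this gives the quadratic-form bound $|\langle h,\Delta_Nh\rangle|\le\tfrac18\|R_s^{-1}h\|^2$, where $R_s^{-1}$ is understood on its natural domain. We then use min--max twice.

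For the upper bound on $\widehat\lambda_{s+1}$, take the codimension-$s$ subspace $\mathcal H_s^\perp$. For a unit vector $h\perp\mathcal H_s$ with coordinates $h_k$, the quadratic-form bound gives
\[
 \langle h,\widehat{\mathcal T}_{\kappa_0,N}h\rangle\le\sum_{k>s}h_k^2\bigl\{\lambda_k+\tfrac18(\lambda_s-\lambda_k)\bigr\}\le\mu_s+\tfrac18g_s ,
\]
so $\widehat\lambda_{s+1}\le\mu_s+g_s/8$.

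For the lower bound on $\widehat\lambda_s$, take the $s$-dimensional subspace $\mathcal H_s$. For a unit vector $h\in\mathcal H_s$,
\[
 \langle h,\widehat{\mathcal T}_{\kappa_0,N}h\rangle\ge\sum_{j\le s}h_j^2\bigl\{\lambda_j-\tfrac18(\lambda_j-\mu_s)\bigr\}\ge\lambda_s-\tfrac18g_s ,
\]
so $\widehat\lambda_s\ge\lambda_s-g_s/8$.

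On $\mathcal G$ these two bounds combine to $\widehat\lambda_s-\widehat\lambda_{s+1}\ge\tfrac34g_s>0$. This proves the first claim, and $\mathcal E_s$ holds on $\mathcal G$. It also shows $\{\widehat\lambda_{s+1}-\mu_s>g_s/2\}\subseteq\mathcal G^c$, which yields \eqref{eq:centered-pca-eigenvalue-good-event}.

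The one genuinely delicate point is the choice of good event. Lemma~\ref{lem:empirical-eigenvalue-comparability} alone is not enough here: it only gives $\widehat\lambda_{s+1}-\mu_s\lesssim\delta_{s,N}\mu_s$. Under \aref{K5}, $g_s\asymp s^{-q-1}$ while $\mu_s\asymp s^{-q}$, so that route would require $s\,\delta_{s,N}\to0$, which \eqref{eq:sharp-tuning-regime} does not guarantee. The cross-boundary weighting in $R_s$ is what rescales the perturbation by the gaps $\lambda_s-\lambda_k$ and $\lambda_j-\mu_s$ instead of by $\mu_s$. The remaining checks are routine. First, Lemma~\ref{lem:spectral-transform-moments} applies because $R_s$ is fixed, bounded and diagonal, with constants independent of $\|R_s\|_\op$. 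Second, the quadratic-form inequality is used only on finite-coordinate vectors, with density handling the rest; on $\mathcal H_s^\perp$ the vectors lie in the domain of $R_s^{-1}$ because $\lambda_s-\lambda_k\le\lambda_s$.
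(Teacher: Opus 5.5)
Your proposal is correct and follows the paper's proof. The paper's argument has the same four steps: a single good event from the centered-covariance moment bound of Lemma~\ref{lem:spectral-transform-moments}, using a fixed diagonal transform with trace of order $\mathfrak r_s^\partial$; Markov's inequality; min--max on the two fixed population subspaces; and domination for the $S_s$-weighted event. The only difference is the transform. The paper uses the midpoint weight $|\mathcal T_{\kappa_0}-c_s\mathrm{Id}|^{-1/2}$ with $c_s=(\lambda_s+\lambda_{s+1})/2$, for which $b\le 2\mathfrak r_s^\partial$, and it needs the contraction $D_s=|\mathcal T_{\kappa_0}-c_s\mathrm{Id}|^{1/2}S_s$ for the weighted event. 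Your gap weight $R_s$ has trace exactly $\mathfrak r_s^\partial$ and restricts to $S_s$, so that step is immediate, and you get the same constants $\widehat\lambda_s\ge\lambda_s-g_s/8$ and $\widehat\lambda_{s+1}\le\mu_s+g_s/8$.
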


\begin{proof}
Write $\mathcal T=\mathcal T_{\kappa_0}$ and define the midpoint and its
spectral transform by
\[
    c_s:=\frac{\lambda_s+\lambda_{s+1}}2,
    \qquad
    R_s^{\mathrm{mid}}:=|\mathcal T-c_s\mathrm{Id}|^{-1/2}.
\]
Since $c_s$ lies in a spectral gap and $c_s>0$, $R_s^{\mathrm{mid}}$ is a bounded
population spectral transform for every $s$. For $j\le s$,
$\lambda_j-c_s\ge(\lambda_j-\lambda_{s+1})/2$; for $k>s$,
$c_s-\lambda_k\ge(\lambda_s-\lambda_k)/2$. Hence
\[
\begin{split}
    b(R_s^{\mathrm{mid}})
    &=\sum_{j\ge1}\frac{\lambda_j}{|\lambda_j-c_s|}\\
    &\le2\left\{
        \sum_{j\le s}\frac{\lambda_j}{\lambda_j-\lambda_{s+1}}
        +\sum_{k>s}\frac{\lambda_k}{\lambda_s-\lambda_k}
       \right\}
     =2\mathfrak r_s^\partial.
\end{split}
\]
The centered covariance bound in Lemma~\ref{lem:spectral-transform-moments}
and Markov's inequality give, with
$\epsilon_{s,N}:=\|R_s^{\mathrm{mid}}\Delta_N R_s^{\mathrm{mid}}\|_\op$,
\begin{equation}
\label{eq:midpoint-relative-good-event}
    \mathcal G_{s,N}^{\mathrm{mid}}:=\{\epsilon_{s,N}\le1/4\},
    \qquad
    \bbP\{(\mathcal G_{s,N}^{\mathrm{mid}})^c\}
    \le\frac{C(\mathfrak r_s^\partial)^2}{N}=o(1).
\end{equation}
The final limit follows from Lemma~\ref{lem:tuning-consequences}.

On $\mathcal G_{s,N}^{\mathrm{mid}}$, for every $h$,
\[
    |\langle h,\Delta_Nh\rangle|
    \le\epsilon_{s,N}
        \langle h,|\mathcal T-c_s\mathrm{Id}|h\rangle.
\]
On the population leading subspace the absolute-value operator is
$\mathcal T-c_s\mathrm{Id}$; on its orthogonal complement it is
$c_s\mathrm{Id}-\mathcal T$. Since $1-\epsilon_{s,N}>0$, the min--max
principle, applied to these two fixed subspaces, yields
\[
\begin{aligned}
    \widehat\lambda_s
    &\ge(1-\epsilon_{s,N})\lambda_s+\epsilon_{s,N}c_s
      =\lambda_s-\epsilon_{s,N}g_s/2,\\
    \widehat\lambda_{s+1}
    &\le(1-\epsilon_{s,N})\lambda_{s+1}+\epsilon_{s,N}c_s
      =\lambda_{s+1}+\epsilon_{s,N}g_s/2.
\end{aligned}
\]
Consequently,
$\widehat\lambda_s-\widehat\lambda_{s+1}
 \ge(1-\epsilon_{s,N})g_s>0$ and
$\widehat\lambda_{s+1}-\mu_s\le g_s/8$ on this event. This proves
$\mathcal G_{s,N}^{\mathrm{mid}}\subseteq\mathcal E_s$ and
\eqref{eq:centered-pca-eigenvalue-good-event}.

For the weighted event, the diagonal operator
\[
    D_s:=|\mathcal T-c_s\mathrm{Id}|^{1/2}S_s
\]
satisfies
\[
    \|D_s\|_\op^2
    =\max_{j\le s}\frac{\lambda_j-c_s}{\lambda_j-\mu_s}\le1.
\]
Because $S_s=R_s^{\mathrm{mid}}D_s=D_s R_s^{\mathrm{mid}}$,
\[
    \|S_s\Delta_NS_s\|_\op
    =\|D_s(R_s^{\mathrm{mid}}\Delta_N R_s^{\mathrm{mid}})D_s\|_\op
    \le\epsilon_{s,N}\le1/4
\]
on $\mathcal G_{s,N}^{\mathrm{mid}}$. Equation~
\eqref{eq:midpoint-relative-good-event} therefore also proves
\eqref{eq:centered-pca-weighted-good-event}. In particular, all the
required good events hold together with probability tending to one,
without an exponential-tail assumption.
\end{proof}

\begin{lemma}[Ridge-leverage stability]
\label{lem:sharp-leverage-stability}
Under Conditions~\nameref{par:K2}, \nameref{par:K4}--\nameref{par:K5}, and
\eqref{eq:sharp-tuning-regime}, with
$\mathbf B_s=\widehat{\boldsymbol\Lambda}_s+\xi \mathbf I_s$,
\[
    \left\|\mathbf B_s^{-1/2}\overline{\mathbf C}_s\mathbf B_s^{-1/2}\right\|_\op=O_p(1).
\]
\end{lemma}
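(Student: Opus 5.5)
We follow the proof of the pooled query bound in Proposition~\ref{prop:primitive-spectral-verification}, replacing the absolute perturbation bound by the relative one from Lemma~\ref{lem:derived-covariance-stability}. Work on the intersection of $\mathcal E_s$, the event $\delta_{s,N}<1/4$, and the comparability event of Lemma~\ref{lem:empirical-eigenvalue-comparability}. Lemmas~\ref{lem:empirical-cutoff-separation}, \ref{lem:derived-covariance-stability} and \ref{lem:empirical-eigenvalue-comparability} give this intersection probability tending to one. Fix $\mathbf a\in\mathbb R^s$ and put $h=\widehat E_s\mathbf a$. By \eqref{eq:supp-overline-C},
\[
 \mathbf a^\top\overline{\mathbf C}_s\mathbf a
 =\langle h,\mathcal T_{\kappa_0}h\rangle+\langle h,\overline K_N^0\rangle^2 .
\]

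For the first term, write $\mu=\lambda_{s+1}$ and $R=(\mathcal T_{\kappa_0}+\mu\mathrm{Id})^{1/2}$. Relation \eqref{eq:derived-centered-preconditioned-stability} gives
\[
 \langle h,(\mathcal T_{\kappa_0}+\mu\mathrm{Id})h\rangle
 \le(1-\delta_{s,N})^{-1}\langle h,(\widehat{\mathcal T}_{\kappa_0,N}+\mu\mathrm{Id})h\rangle .
\]
To see this, set $A=R^{-1}(\widehat{\mathcal T}_{\kappa_0,N}-\mathcal T_{\kappa_0})R^{-1}$. Since $\|A\|_\op\le\delta_{s,N}$, we have $\widehat{\mathcal T}_{\kappa_0,N}+\mu\mathrm{Id}=R(\mathrm{Id}+A)R\succeq(1-\delta_{s,N})R^2$. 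Because $h$ lies in the span of the empirical eigenvectors $\widehat e_j$, $j\le s$,
\[
 \langle h,\widehat{\mathcal T}_{\kappa_0,N}h\rangle=\mathbf a^\top\widehat{\boldsymbol\Lambda}_s\mathbf a,
 \qquad \|h\|^2=\|\mathbf a\|^2 .
\]
Hence
\[
 \langle h,\mathcal T_{\kappa_0}h\rangle\le\tfrac43\,\mathbf a^\top(\widehat{\boldsymbol\Lambda}_s+\mu\mathbf I_s)\mathbf a .
\]
Lemma~\ref{lem:empirical-eigenvalue-comparability} gives $\widehat\lambda_j\ge\tfrac12\lambda_j\ge\tfrac12\mu$ for all $j\le s$ on the good event. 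Therefore $\mu\mathbf I_s\preceq 2\widehat{\boldsymbol\Lambda}_s$, and the first term is at most $4\,\mathbf a^\top\widehat{\boldsymbol\Lambda}_s\mathbf a$.

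For the centering term we also use the preconditioning, not merely $\|\overline K_N^0\|^2$. Write
\[
 \langle h,\overline K_N^0\rangle^2=\langle Rh,R^{-1}\overline K_N^0\rangle^2\le\nu_{s,N}\,\langle h,(\mathcal T_{\kappa_0}+\mu\mathrm{Id})h\rangle ,
\]
with $\nu_{s,N}$ from \eqref{eq:derived-regularized-mean-stability}. The previous paragraph bounds the last factor by a constant times $\mathbf a^\top\widehat{\boldsymbol\Lambda}_s\mathbf a$. Since $\nu_{s,N}=o_p(1)$, the whole term is at most $\mathbf a^\top\widehat{\boldsymbol\Lambda}_s\mathbf a$ with probability tending to one.

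Combining the two bounds gives $\overline{\mathbf C}_s\preceq C\widehat{\boldsymbol\Lambda}_s\preceq C\mathbf B_s$ on the good event. Conjugating by $\mathbf B_s^{-1/2}$ yields $\|\mathbf B_s^{-1/2}\overline{\mathbf C}_s\mathbf B_s^{-1/2}\|_\op\le C$ with probability tending to one, which is the claim. On $\mathcal E_s^c$ the exceptional-event convention makes the quantity zero, so it does not affect the $O_p$ statement.

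The main obstacle is that under \eqref{eq:sharp-tuning-regime} the absolute bound $\|\Delta_N\|_\op=O_p(N^{-1/2})$ need not be $o(\lambda_s)$. The direct route of Proposition~\ref{prop:primitive-spectral-verification} therefore fails. The step requiring care is the relative perturbation argument with $\mu=\lambda_{s+1}$, together with absorbing $\mu$ into $\widehat{\boldsymbol\Lambda}_s$ through eigenvalue comparability. That absorption is possible because $\lambda_{s+1}\le\lambda_j$ for every $j\le s$.
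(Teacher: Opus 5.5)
Your proposal is correct and follows essentially the same route as the paper. Both arguments work on the intersection of $\mathcal E_s$ with the relative-stability and eigenvalue-comparability events, bound $\langle h,(\mathcal T_{\kappa_0}+\mu\mathrm{Id})h\rangle$ by the preconditioned perturbation with $\mu=\lambda_{s+1}$, absorb $\mu\|\mathbf a\|^2$ into $2\mathbf a^\top\widehat{\boldsymbol\Lambda}_s\mathbf a$, and control the centering term by Cauchy--Schwarz in regularized coordinates with $\nu_{s,N}$. The differences are cosmetic: you divide by $(1-\delta_{s,N})$ where the paper keeps $\delta_{s,N}\mu\|\mathbf a\|^2$ on the right-hand side, and you use the event $\nu_{s,N}\le 1/4$ rather than $\nu_{s,N}\le 1$.
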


\begin{proof}
On $\mathcal E_s^c$, $\overline{\mathbf C}_s=0$ by the proof convention. Put
$\mu=\lambda_{s+1}$. By Lemmas~\ref{lem:derived-covariance-stability}--\ref{lem:empirical-cutoff-separation},
the event
\[
 \mathcal E_s\cap
 \{\delta_{s,N}\le1/4,\ \nu_{s,N}\le1,\
   \widehat\lambda_s\ge\lambda_s/2\}
\]
has probability tending to one. Work on this event. For
$h=\widehat E_s\mathbf a$, \eqref{eq:derived-centered-preconditioned-stability}
yields
\[
 (1-\delta_{s,N})\langle h,\mathcal T_{\kappa_0}h\rangle
 \le \mathbf a^\top\widehat{\boldsymbol\Lambda}_s\mathbf a+\delta_{s,N}\mu\|\mathbf a\|^2.
\]
Here $\mu\|\mathbf a\|^2\le2\mathbf a^\top\widehat{\boldsymbol\Lambda}_s\mathbf a$, so
\[
 \langle h,(\mathcal T_{\kappa_0}+\mu\mathrm{Id})h\rangle
 \le C \mathbf a^\top\widehat{\boldsymbol\Lambda}_s\mathbf a.
\]
Cauchy--Schwarz in regularized coordinates and
\eqref{eq:derived-regularized-mean-stability} then give
\[
 |\langle h,\overline K_N^0\rangle|^2
 \le\nu_{s,N}
     \langle h,(\mathcal T_{\kappa_0}+\mu\mathrm{Id})h\rangle
 \le C\nu_{s,N}\mathbf a^\top\widehat{\boldsymbol\Lambda}_s\mathbf a.
\]
Combining these inequalities with \eqref{eq:supp-overline-C} proves
$\overline{\mathbf C}_s\preceq C\widehat{\boldsymbol\Lambda}_s$ on the displayed event. Hence
\[
 \mathbf B_s^{-1/2}\overline{\mathbf C}_s\mathbf B_s^{-1/2}
 \preceq C \mathbf B_s^{-1/2}\widehat{\boldsymbol\Lambda}_s\mathbf B_s^{-1/2}
 \preceq C\mathbf I_s,
\]
which proves the claim.
\end{proof}

\begin{lemma}[Conditional second-moment lifting]
\label{lem:conditional-moment-lifting}
Assume Condition~\nameref{par:K1}. Let
$\mathbf B\in\mathbb R^{s\times s}$ be positive semidefinite and
$\mathscr X_{N,L}$-measurable.  Conditionally on $\mathscr X_{N,L}$,
\[
    \left\|
      \widehat{\boldsymbol\Phi}_{s}(X)^\top
      \mathbf B\widehat{\boldsymbol\Phi}_{s}(X)
    \right\|_{L^1(\bbP_X)}
    =
    \operatorname{tr}(\mathbf B\overline{\mathbf C}_{s}).
\]
For every $\mathscr X_{N,L}$-measurable Hilbert--Schmidt operator $A$,
\begin{equation}
    \left\|\|AK_X^0\|\right\|_{L^2(\bbP_X)}
    =\{\operatorname{tr}(A\mathcal T_{\kappa_0}A^*)\}^{1/2},
    \label{eq:conditional-operator-moment-lifting-population}
\end{equation}
\end{lemma}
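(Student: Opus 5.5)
The plan is to condition on $\mathscr X_{N,L}$ throughout. Since the query $X^\circ\sim\bbP_X$ is independent of the training data, conditional expectations over the query reduce to ordinary integrals against $\bbP_X$ in which $\mathbf B$, $A$, and the empirical eigensystem are fixed. The exceptional event $\mathcal E_s^c$ needs no separate argument. There $\widehat{\boldsymbol\Phi}_s=0$ and $\overline{\mathbf C}_s=0$ by the proof convention, so both sides of the first identity vanish. On $\mathcal E_s$ we work in the common realization of the centered eigensystems.

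For the first identity, note that $\mathbf B\succeq0$ makes the integrand $\widehat{\boldsymbol\Phi}_s(X)^\top\mathbf B\widehat{\boldsymbol\Phi}_s(X)$ nonnegative. Its $L^1(\bbP_X)$ norm therefore equals its expectation. We rewrite the quadratic form as a trace, $\operatorname{tr}\{\mathbf B\,\widehat{\boldsymbol\Phi}_s(X)\widehat{\boldsymbol\Phi}_s(X)^\top\}$, and exchange the finite-dimensional trace with the expectation. This gives $\operatorname{tr}(\mathbf B\overline{\mathbf C}_s)$ by the definition \eqref{eq:supp-overline-C}. Integrability holds because each $\widehat\phi_j(x)=\langle\widehat e_j,\widehat K_x^0\rangle$ satisfies $|\widehat\phi_j(x)|\le\|\widehat K_x^0\|\le 2\sqrt{B_\kappa}$ under \aref{K1}, so all entries are bounded.

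For \eqref{eq:conditional-operator-moment-lifting-population}, we expand $\|AK_X^0\|^2=\langle K_X^0,A^*AK_X^0\rangle=\operatorname{tr}\{A(K_X^0\otimes K_X^0)A^*\}$. We then take the expectation inside the trace. The justification is that $K_X^0\otimes K_X^0$ is Bochner integrable in trace class with mean $\mathcal T_{\kappa_0}$, as recorded in Section~\ref{app:RKHS}. In addition, $B\mapsto\operatorname{tr}(ABA^*)$ is a bounded linear functional on trace-class operators, with bound $\|A\|_\op^2\|B\|_{\mathrm{tr}}$.

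Alternatively, one can expand in the orthonormal basis $(e_j)$, using $\bbE\langle e_j,K_X^0\rangle\langle e_k,K_X^0\rangle=\langle e_j,\mathcal T_{\kappa_0}e_k\rangle$ together with Tonelli's theorem on nonnegative sums. Either route yields $\bbE\|AK_X^0\|^2=\operatorname{tr}(A\mathcal T_{\kappa_0}A^*)$, and taking square roots gives the claim.

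The only step requiring care is the interchange of expectation and the infinite trace in the second identity. I would handle it through the trace-class Bochner integral, which is finite because $\bbE\kappa_0(X,X)\le C_\kappa$. Measurability in the conditioning variables then comes from $\mathscr X_{N,L}$-measurability of $\mathbf B$ and $A$ and joint measurability of the features.
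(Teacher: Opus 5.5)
Your proposal is correct. The quadratic identity is handled as in the paper: take the conditional expectation of the nonnegative quadratic form and invoke the definition \eqref{eq:supp-overline-C} of $\overline{\mathbf C}_s$. Your remarks on nonnegativity, feature boundedness and the fallback event only make explicit what the paper leaves implicit.

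For \eqref{eq:conditional-operator-moment-lifting-population} your route differs from the paper's.
\begin{itemize}
\item The paper diagonalizes the positive trace-class operator $A^*A=\sum_\ell\mu_\ell u_\ell\otimes u_\ell$. It writes $\|AK_X^0\|^2=\sum_\ell\mu_\ell|\langle u_\ell,K_X^0\rangle|^2$ and sums the scalar moments $\bbE|\langle u_\ell,K_X^0\rangle|^2=\langle u_\ell,\mathcal T_{\kappa_0}u_\ell\rangle$ term by term, which is legitimate because every term is nonnegative. This needs only the scalar covariance identity. It does use the Hilbert--Schmidt hypothesis, to make $A^*A$ compact and hence diagonalizable.
\item You write $\|AK_X^0\|^2=\operatorname{tr}\{A(K_X^0\otimes K_X^0)A^*\}$. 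You then pass the expectation through the continuous functional $B\mapsto\operatorname{tr}(ABA^*)$ on trace-class operators, using the trace-class Bochner mean $\mathcal T_{\kappa_0}$ recorded in Section~\ref{app:RKHS}. This requires only $\|A\|_\op<\infty$, so it shows the Hilbert--Schmidt assumption is not needed. The price is slightly more machinery: strong measurability of $x\mapsto K_x^0\otimes K_x^0$, which follows from continuity of the feature map on $\mathcal S$.
\end{itemize}

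One small correction concerns your alternative expansion in $(e_j)$. It produces signed cross terms $\langle e_j,K_X^0\rangle\langle e_k,K_X^0\rangle\langle Ae_j,Ae_k\rangle$, so Tonelli does not apply as stated. You would need either of the following:
\begin{itemize}
\item Fubini with the domination $\sum_{j,k}|\langle e_j,K_X^0\rangle\langle e_k,K_X^0\rangle\langle Ae_j,Ae_k\rangle|\le\|K_X^0\|^2\|A\|_\HS^2$.
\item An expansion $\sum_i|\langle A^*f_i,K_X^0\rangle|^2$ over an orthonormal basis $(f_i)$ of the range. This is essentially the paper's nonnegative-term argument.
\end{itemize}
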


\begin{proof}
For every $h\in\mathcal H_{\kappa_0}$, the covariance identity gives
\[
    \|\langle h,K_X^0\rangle\|_{L^2}
    =
    \langle h,\mathcal T_{\kappa_0}h\rangle^{1/2}.
\]
The quadratic identity follows by taking the conditional expectation of
\(\widehat{\boldsymbol\Phi}_{s}(X)^\top
\mathbf B\widehat{\boldsymbol\Phi}_{s}(X)\).  For the operator identity, diagonalize the
positive trace-class operator \(A^*A=\sum_\ell \mu_\ell u_\ell\otimes u_\ell\).
Then
\[
    \bbE\|AK_X^0\|^2
    =
    \sum_\ell\mu_\ell
    \bbE|\langle u_\ell,K_X^0\rangle|^2
    =
    \operatorname{tr}(A\mathcal T_{\kappa_0}A^*),
\]
which proves \eqref{eq:conditional-operator-moment-lifting-population}.
\end{proof}

\begin{lemma}[Leading-side PCA excess bound]
\label{lem:deterministic-leading-pca-excess}
Write $\mathcal T=\mathcal T_{\kappa_0}$,
$\Delta_N=\mathcal T-\widehat{\mathcal T}_{\kappa_0,N}$,
$\mu=\lambda_{s+1}$, and $g_s=\lambda_s-\lambda_{s+1}>0$.
On $\mathcal E_s$, suppose that
\[
    \widehat\lambda_{s+1}\le\mu+g_s/2,
    \qquad
    \|S_s\Delta_NS_s\|_\op\le1/4,
    \qquad
    S_s=\sum_{j\le s}(\lambda_j-\mu)^{-1/2}\Pi_j.
\]
Then
\begin{equation}
\label{eq:centered-pca-deterministic-leading-bound}
    \sum_{j\le s}(\lambda_j-\mu)
       \|\Pi_j\widehat\Pi_{\le s}^{\perp}\|_\HS^2
    \le16\sum_{j\le s}
       \frac{\|\Pi_j\Delta_N\Pi_{>s}\|_\HS^2}{\lambda_j-\mu}.
\end{equation}
The bound holds whenever the displayed conditions are met; no tail or moment
condition is required.
\end{lemma}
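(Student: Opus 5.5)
The bound is deterministic, and I will prove it with a Sylvester-type identity for the off-diagonal block $\Pi_j\widehat\Pi_{\le s}^{\perp}$, followed by a self-bounding argument in the $S_s$-weighted norm. Write $P=\Pi_{\le s}$, $\widehat P=\widehat\Pi_{\le s}$, $\widehat P^\perp=\mathrm{Id}-\widehat P$ and $\widehat{\mathcal T}=\widehat{\mathcal T}_{\kappa_0,N}=\mathcal T-\Delta_N$. On $\mathcal E_s$ the projector $\widehat P$ is spectral for $\widehat{\mathcal T}$, so $\widehat{\mathcal T}$ commutes with $\widehat P^\perp$. For $j\le s$ we also have $\Pi_j\mathcal T=\lambda_j\Pi_j$. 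Hence
\[
 \Pi_j\widehat P^\perp\widehat{\mathcal T}\widehat P^\perp
 =\Pi_j\widehat{\mathcal T}\widehat P^\perp
 =\lambda_j\Pi_j\widehat P^\perp-\Pi_j\Delta_N\widehat P^\perp,
\]
which rearranges to
\[
 \Pi_j\widehat P^\perp(\lambda_j\mathrm{Id}-\widehat{\mathcal T})\widehat P^\perp=\Pi_j\Delta_N\widehat P^\perp .
\]

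Next I bound the resolvent factor from below. On the range of $\widehat P^\perp$ we have $\widehat{\mathcal T}\preceq\widehat\lambda_{s+1}\le\mu+g_s/2$. Since $g_s\le\lambda_j-\mu$ for $j\le s$, this gives
\[
 \lambda_j-\widehat{\mathcal T}\succeq\lambda_j-\mu-g_s/2\succeq(\lambda_j-\mu)/2
\]
there. Right-multiplying the identity by the inverse of this compression, which has operator norm at most $2/(\lambda_j-\mu)$ on the range of $\widehat P^\perp$, yields
\[
 (\lambda_j-\mu)^{1/2}\|\Pi_j\widehat P^\perp\|_\HS
 \le2\,\|(\lambda_j-\mu)^{-1/2}\Pi_j\Delta_N\widehat P^\perp\|_\HS .
\]
Let $W$ denote the left-hand side of \eqref{eq:centered-pca-deterministic-leading-bound}. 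Introduce $S_s^{-1}:=\sum_{j\le s}(\lambda_j-\mu)^{1/2}\Pi_j$, the inverse of $S_s$ on $\operatorname{Ran}P$, so that $W=\|S_s^{-1}P\widehat P^\perp\|_\HS^2$. The blocks $\Pi_j$ are mutually orthogonal, so squaring and summing over $j\le s$ gives
\[
 \sqrt W\le2\|S_s\Delta_N\widehat P^\perp\|_\HS .
\]

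Finally I split the right-hand side with $\mathrm{Id}=\Pi_{>s}+P$:
\[
 S_s\Delta_N\widehat P^\perp
 =S_s\Delta_N\Pi_{>s}\widehat P^\perp
 +(S_s\Delta_NS_s)(S_s^{-1}P\widehat P^\perp).
\]
This uses $S_sS_s^{-1}=P$. For the first term, $\|\Pi_{>s}\widehat P^\perp\|_\op\le1$ gives
\[
 \|S_s\Delta_N\Pi_{>s}\widehat P^\perp\|_\HS\le A:=\Bigl(\sum_{j\le s}\|\Pi_j\Delta_N\Pi_{>s}\|_\HS^2/(\lambda_j-\mu)\Bigr)^{1/2}.
\]
The second term has HS norm at most $\|S_s\Delta_NS_s\|_\op\sqrt W\le\sqrt W/4$. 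Therefore $\sqrt W\le2A+\sqrt W/2$, so $\sqrt W\le4A$ and $W\le16A^2$, which is \eqref{eq:centered-pca-deterministic-leading-bound}. Finiteness of $W$ is automatic, since $\|\Pi_j\widehat P^\perp\|_\HS\le1$ and there are finitely many $j$, so the absorption step is legitimate.

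The main point needing care is the resolvent lower bound when the eigenvalue and spectral gaps are small. The hypothesis $\widehat\lambda_{s+1}\le\mu+g_s/2$ together with $g_s\le\lambda_j-\mu$ is what secures the factor $1/2$ uniformly in $j$. The other delicate point is the absorption, which is exactly why the weighted condition $\|S_s\Delta_NS_s\|_\op\le1/4$ is imposed.
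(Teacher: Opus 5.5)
Your proof is correct. The identity $\Pi_j\widehat P^\perp(\lambda_j\mathrm{Id}-\widehat{\mathcal T})\widehat P^\perp=\Pi_j\Delta_N\widehat P^\perp$ holds on $\mathcal E_s$ because $\widehat{\mathcal T}$ commutes with $\widehat P^\perp$. The compressed resolvent is bounded below by $(\lambda_j-\mu)/2$ since $g_s\le\lambda_j-\mu$. The absorption step is legitimate because $W\le\sum_{j\le s}(\lambda_j-\mu)<\infty$.

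You and the paper solve the same off-diagonal Sylvester equation for $\Pi_{\le s}\widehat\Pi_{\le s}^\perp$, but from opposite sides.

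The paper works column by column. It fixes an orthonormal eigenbasis $(v_\ell)$ of $\widehat{\mathcal T}_{\kappa_0,N}$ on $\operatorname{ran}\widehat\Pi_{\le s}^\perp$, null space included, and projects each equation $(\mathcal T-\Delta_N)v_\ell=\nu_\ell v_\ell$ onto $\operatorname{ran}\Pi_{\le s}$. In the weighted coordinates $u_\ell=A^{1/2}\Pi_{\le s}v_\ell$, with $A=(\mathcal T-\mu\mathrm{Id})|_{\operatorname{ran}\Pi_{\le s}}$, this becomes $\{I+(\mu-\nu_\ell)A^{-1}-S_s\Delta_NS_s\}u_\ell=S_s\Delta_N\Pi_{>s}v_\ell$. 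The $\Pi_{\le s}$--$\Pi_{\le s}$ block of $\Delta_N$ therefore stays inside a self-adjoint operator, which the two hypotheses together bound below by $I/4$. So $16=4^2$ comes from a single inversion, with no self-bounding step.

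You work row by row over the population directions $j\le s$. You invert the compressed empirical resolvent using only the eigenvalue hypothesis. Only afterwards do you split off the $\Pi_{\le s}$--$\Pi_{\le s}$ block and absorb it by a triangle-inequality self-bounding argument, so your $16=(2\cdot 2)^2$ comes from two separate halvings.

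Your version is basis-free on the empirical side. It uses only the commutation and the operator bound $\widehat{\mathcal T}\preceq\widehat\lambda_{s+1}$ on $\operatorname{ran}\widehat P^\perp$, never a diagonalization there, so it does not rely on the empirical covariance being finite rank. The price is that absorption needs $W$ to be finite a priori, which you correctly supply.
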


\begin{proof}
The bound is the $d=r=s$, $\mu=\lambda_{s+1}$ specialization of
\citet[arXiv version~3, equation~(3.7)]{reiss2016pca}.
We give a direct proof to separate
the operator inequality from the probabilistic control of the good events.
Let $P=\Pi_{\le s}$, $Q=\Pi_{>s}$, and
\[
    A=(\mathcal T-\mu\mathrm{Id})|_{\operatorname{ran}(P)}.
\]
Let $I_P$ denote the identity on $\operatorname{ran}(P)$. The operator $A$
on this subspace is positive definite with $A\succeq g_s I_P$, and
$S_s=A^{-1/2}P$, extended by zero on $\operatorname{ran}(Q)$. Choose an orthonormal basis
$(v_\ell)_\ell$ of $\operatorname{ran}(\widehat\Pi_{\le s}^{\perp})$
consisting of eigenvectors of the empirical covariance, with eigenvalues
$\nu_\ell\le\widehat\lambda_{s+1}$. This includes an orthonormal basis of
its null space; the empirical covariance is finite rank.
Put $u_\ell=A^{1/2}Pv_\ell$. Projecting
$(\mathcal T-\Delta_N)v_\ell=\nu_\ell v_\ell$ onto
$\operatorname{ran}(P)$ and multiplying by $A^{-1/2}$ gives
\[
    \{I_P+(\mu-\nu_\ell)A^{-1}
          -S_s\Delta_NS_s|_{\operatorname{ran}(P)}\}u_\ell
    =S_s\Delta_NQv_\ell.
\]
If $\nu_\ell\le\mu$, the sum of the first two terms is at least $I_P$.
Otherwise $0<\nu_\ell-\mu\le g_s/2$ and $A^{-1}\preceq g_s^{-1}I_P$,
so that sum is at least $I_P/2$. In either case the operator in braces is
at least $I_P/4$. Its inverse therefore has operator norm at most $4$, and
\[
    \|u_\ell\|\le4\|S_s\Delta_NQv_\ell\|.
\]
Summing the squares over this orthonormal basis gives
\[
\begin{split}
    \sum_{j\le s}(\lambda_j-\mu)
       \|\Pi_j\widehat\Pi_{\le s}^{\perp}\|_\HS^2
    &=\sum_\ell\|u_\ell\|^2\\
    &\le16\|S_s\Delta_NQ\widehat\Pi_{\le s}^{\perp}\|_\HS^2\\
    &\le16\|S_s\Delta_NQ\|_\HS^2
     =16\sum_{j\le s}
       \frac{\|\Pi_j\Delta_N\Pi_{>s}\|_\HS^2}{\lambda_j-\mu}.
\end{split}
\]
All these Hilbert--Schmidt norms are finite because $S_s$ is finite rank.
This proves the claim, including when the empirical covariance has a
nontrivial null space.
\end{proof}

\begin{lemma}[Relative PCA control of the empirically centered projector]
\label{lem:relative-pca-projector-control}
Define the leading-side PCA excess
\[
    \mathcal E_{s,N}^{\mathrm{PCA},+}
    :=
    \sum_{j\le s}(\lambda_j-\lambda_{s+1})
    \|\Pi_j\widehat\Pi_{\le s}^\perp\|_\HS^2.
\]
Under Conditions~\nameref{par:K2}, \nameref{par:K4}--\nameref{par:K5}, and the
relative-spectral tuning regime \eqref{eq:sharp-tuning-regime},
\begin{align}
    \|\widehat\Pi_{\le s}-\Pi_{\le s}\|_\HS^2
    &=O_p\!\left(\frac{\mathcal U_s\log N}{N}\right)
     =O_p\!\left(\frac{s^2\log(es)\log N}{N}\right),
    \label{eq:relative-pca-unweighted-bound}\\
    \mathcal E_{s,N}^{\mathrm{PCA},+}
    &=O_p\!\left(\frac{\mathcal W_s^{+}}{N}\right)
     =O_p\!\left(\frac{s^{2-q}\log(es)}{N}\right).
    \label{eq:relative-pca-leading-excess-bound}
\end{align}
Consequently,
\begin{equation}
\label{eq:relative-pca-prediction-weighted-bound}
    \|\mathcal T_{\kappa_0}^{1/2}(\widehat\Pi_{\le s}-\Pi_{\le s})\|_\HS
    =
    O_p\!\left(
      \sqrt{\frac{s^{2-q}\log(es)\log N}{N}}
    \right).
\end{equation}
\end{lemma}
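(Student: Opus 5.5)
The plan is to prove the two displayed bounds separately on the good events of Lemma~\ref{lem:empirical-cutoff-separation}, which hold with probability tending to one. The prediction-weighted bound \eqref{eq:relative-pca-prediction-weighted-bound} then follows by a short trace identity. Throughout, write $\mathcal T=\mathcal T_{\kappa_0}$, $A=\widehat\Pi_{\le s}-\Pi_{\le s}$, $\mu=\lambda_{s+1}$ and $\Delta_N=\mathcal T-\widehat{\mathcal T}_{\kappa_0,N}$.

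\emph{Leading-side excess.} On $\mathcal E_s$ intersected with the events \eqref{eq:centered-pca-eigenvalue-good-event}--\eqref{eq:centered-pca-weighted-good-event}, Lemma~\ref{lem:deterministic-leading-pca-excess} bounds $\mathcal E_{s,N}^{\mathrm{PCA},+}$ by $16\sum_{j\le s}\|\Pi_j\Delta_N\Pi_{>s}\|_\HS^2/(\lambda_j-\mu)$. It therefore suffices to show that this right-hand side is $O_p(\mathcal W_s^+/N)$, and I would do this through its unconditional expectation.

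Split $\Delta_N$ into $\mathcal T-\widetilde{\mathcal T}_N$ plus the rank-one correction $\overline K_N^0\otimes\overline K_N^0$.
\begin{itemize}
\item For the i.i.d.\ part, $\|\Pi_j(\widetilde{\mathcal T}_N-\mathcal T)\Pi_{>s}\|_\HS^2=\sum_{k>s}\langle e_j,(\widetilde{\mathcal T}_N-\mathcal T)e_k\rangle^2$. The cross-covariances vanish for $j\ne k$, so each summand has expectation $N^{-1}\lambda_j\lambda_k\bbE[\phi_j(X)^2\phi_k(X)^2]\le C_{X,8}^{1/2}N^{-1}\lambda_j\lambda_k$ by \aref{K4} and Cauchy--Schwarz.
\item For the centering part, $\bbE\,\lambda_j\lambda_k\bar\eta_j^2\bar\eta_k^2\le C\lambda_j\lambda_k/N^2$, as in the proof of Lemma~\ref{lem:spectral-transform-moments}.
\end{itemize}
Summing with the weights $(\lambda_j-\mu)^{-1}$ gives an expectation of order $N^{-1}(\sum_{k>s}\lambda_k)\sum_{j\le s}\lambda_j/(\lambda_j-\mu)=\mathcal W_s^+/N$. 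Markov's inequality together with \eqref{eq:derived-leading-pca-excess-scale} then yields \eqref{eq:relative-pca-leading-excess-bound}.

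\emph{Unweighted bound.} I would invoke the block relative-rank perturbation theorem, \citet[Theorem~2]{jirak2020perturbation}, for $I=\{1,\ldots,s\}$. Its contraction hypothesis $x\,\mathfrak r_s^\partial\le1/8$ would be verified at the relative scale $x_N\asymp\sqrt{\log N/N}$, using \eqref{eq:tuning-effective-dimension-smallness}. The theorem expresses $\|A\|_\HS^2$ through the linear term $\sum_{j\le s<k}\langle e_j,\Delta_Ne_k\rangle^2/(\lambda_j-\lambda_k)^2$ plus a remainder of higher relative order. By the same fourth-moment computation as above, the expectation of this linear term is $\lesssim\mathcal U_s/N$. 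The step I expect to be the main obstacle is controlling the relative perturbation $\|(|\mathcal T-c\,\mathrm{Id}|)^{-1/2}\Delta_N(|\mathcal T-c\,\mathrm{Id}|)^{-1/2}\|$ uniformly over the relevant cross-boundary pairs. Only coordinatewise eighth moments are available, not sub-Gaussian tails, and the empirical centering must also be absorbed. My first attempt would be a union and Markov bound over the cross-boundary blocks, paying the $\log N$ factor that appears in \eqref{eq:relative-pca-unweighted-bound}, with the rank-one centering term handled through $\nu_{s,N}$ from Lemma~\ref{lem:derived-covariance-stability}. Bound \eqref{eq:derived-unweighted-projector-sum} then converts $\mathcal U_s$ into $s^2\log(es)$.

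\emph{Prediction-weighted bound.} Write $\|\mathcal T^{1/2}A\|_\HS^2=\operatorname{tr}(A\mathcal TA)$ and split $\mathcal T=\Pi_{\le s}\mathcal T\Pi_{\le s}+\Pi_{>s}\mathcal T\Pi_{>s}$.
\begin{itemize}
\item For $j\le s$ we have $Ae_j=-\widehat\Pi_{\le s}^\perp e_j$, so the leading part equals $\sum_{j\le s}\lambda_j\|\Pi_j\widehat\Pi_{\le s}^\perp\|_\HS^2$. This is at most $\mathcal E_{s,N}^{\mathrm{PCA},+}+\mu\|A\|_\HS^2$.
\item For $k>s$ we have $Ae_k=\widehat\Pi_{\le s}e_k$, so the trailing part is at most $\lambda_{s+1}\|\widehat\Pi_{\le s}\Pi_{>s}\|_\HS^2\le\lambda_{s+1}\|A\|_\HS^2$.
\end{itemize}
Since $\lambda_{s+1}\asymp s^{-q}$ under \aref{K5}, combining with the two previous bounds gives $O_p(s^{2-q}\log(es)/N+s^{2-q}\log(es)\log N/N)$. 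Taking square roots proves \eqref{eq:relative-pca-prediction-weighted-bound}.
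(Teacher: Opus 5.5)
Your leading-side excess argument (Lemma~\ref{lem:deterministic-leading-pca-excess} on the midpoint good event, then an expectation bound of order $\mathcal W_s^+/N$ for both the i.i.d.\ part and the rank-one centering part, then Markov) is exactly the paper's. So is your trace split giving \eqref{eq:relative-pca-prediction-weighted-bound}. Handling the centering contribution to the block hypotheses through $\nu_{s,N}$ is a valid shortcut over the paper's per-block fourth-moment union bound. Under \aref{K5}, every block $B$ of the comparison partition satisfies $\|\Pi_B\overline K_N^0\|^2\le C\nu_{s,N}b_B$, and $\nu_{s,N}=O_p(s/N)=o_p(\sqrt{\log N/N})$ under \eqref{eq:sharp-tuning-regime}.

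The gap is the step you flag in the unweighted bound, and the union-plus-Markov route you propose for it fails. \citet[Theorem~2]{jirak2020perturbation} needs, on a single event of probability tending to one, $\|\Pi_B(\widetilde{\mathcal T}_N-\mathcal T)\Pi_C\|_\HS\le x_N\sqrt{b_Bb_C}$ for \emph{all} pairs of blocks, at scale $x_N\asymp\sqrt{\log N/N}$.
\begin{itemize}
\item The partition consists of distinct-eigenvalue blocks below $k_0=\min\{k>s:\lambda_k\le\lambda_s/2\}$ plus one tail block. You must also check the far-block separation, which your sketch omits.
\item This gives roughly $k_0^2\asymp s^2$ block pairs.
\item For singleton blocks the ratio is $|W_{jk}|$ with $W_{jk}=N^{-1}\sum_i\{\phi_j(X_i)\phi_k(X_i)-\ind\{j=k\}\}$. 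Condition \aref{K4} controls only fourth moments of these entries, and $\bbE W_{jk}^4$ is in general of exact order $N^{-2}$.
\item Markov therefore gives failure probability $\lesssim(\log N)^{-2}$ per pair, and $s^2/(\log N)^2$ after the union bound, which does not vanish.
\item Raising $x_N$ until Markov closes ($x_N\gg\sqrt{s/N}$) breaks the contraction $x_N\mathfrak r_s^\partial\le1/8$. Since $\mathfrak r_s^\partial\ge s$, that needs at least $s^3/N\to0$, which the tuning regime does not give. It would also inflate the rate by a factor of order $s/\log N$.
\item The single midpoint norm $\|(|\mathcal T-c\,\mathrm{Id}|)^{-1/2}\Delta_N(|\mathcal T-c\,\mathrm{Id}|)^{-1/2}\|_\op$ is only $O_p(\mathfrak r_s^\partial/\sqrt N)$. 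Used as the relative coefficient, it would require $(\mathfrak r_s^\partial)^2/\sqrt N\to0$.
\end{itemize}

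The missing ingredient is a Fuk--Nagaev-type tail bound, which \citet[Lemma~1]{jirak2020perturbation} supplies under eighth coordinate moments. Per block pair, the failure probability is $C\{N^{-1}t^{-4}+e^{-t^2}\}$.
\begin{itemize}
\item With $t_N=\sqrt{4\log N}$, the sub-Gaussian part is $N^{-4}$. This, not a Markov bound, is where the $\log N$ price comes from.
\item The heavy-tail part carries an extra factor $N^{-1}$ relative to Markov.
\item The union over $k_0^2$ pairs therefore fails with probability $\lesssim s^2/\{N(\log N)^2\}+s^2N^{-4}=o(1)$, by Lemma~\ref{lem:tuning-consequences}.
\end{itemize}
On that event the theorem yields $\|\widehat\Pi_{\le s}-\Pi_{\le s}\|_\HS^2\le Cx_N^2\mathcal U_s$ directly. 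Your separate expectation bound on a ``linear term'' is then unnecessary, and the $\log N$ in \eqref{eq:relative-pca-unweighted-bound} enters through $x_N^2$.
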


\begin{proof}
Put $Z=K_X^0$, $\mathcal T=\mathcal T_{\kappa_0}$, and
\[
 \widetilde{\mathcal T}_N:=N^{-1}\sum_{i=1}^N Z_i\otimes Z_i,
 \qquad
 \bar Z_N:=N^{-1}\sum_{i=1}^N Z_i.
\]
Thus the perturbation generating the projector used in the estimator is
\begin{equation}
\label{eq:centered-relative-perturbation}
 \widehat{\mathcal T}_{\kappa_0,N}-\mathcal T
 =(\widetilde{\mathcal T}_N-\mathcal T_{\kappa_0})
  -\bar Z_N\otimes\bar Z_N.
\end{equation}
Recall $e_j=\sqrt{\lambda_j}\phi_j$. The standardized
Karhunen--Lo\`eve coordinate of $Z$ in direction $e_j$ is
\[
    \eta_j(X)
    :=\lambda_j^{-1/2}\langle e_j,Z\rangle
    =\phi_j(X).
\]
Condition~\nameref{par:K4}, in the fixed population eigenbasis, gives
\begin{equation}
\label{eq:relative-pca-eighth-moment}
    \sup_{j\ge1}\bbE|\eta_j|^8\le C_{X,8}<\infty.
\end{equation}
Thus the coordinate-moment assumption in
\citet[Lemma~1]{jirak2020perturbation} holds with moment order $8$.
Independence of different coordinates is not needed.

We next verify the finite-sample event required by the block version of the
relative perturbation theorem of
\citet{jirak2020perturbation,jirak2023relative}. To make the terminology
explicit, recall the population spectral decomposition
\[
    \mathcal T=\sum_{j\ge1}\lambda_j\Pi_j,
    \qquad \Pi_j=e_j\otimes e_j.
\]
A \emph{spectral block} is an index set \(B\subset\mathbb N\) representing
the population invariant subspace
\(\operatorname{ran}(\Pi_B)\), where
\[
    \Pi_B:=\sum_{j\in B}\Pi_j,
    \qquad
    b_B:=\operatorname{tr}(\Pi_B\mathcal T)=\sum_{j\in B}\lambda_j.
\]
Thus \(\Pi_B(\widehat{\mathcal T}_{\kappa_0,N}-\mathcal T)\Pi_C\) is the \((B,C)\) rectangular
submatrix of the perturbation when the operator is written in the population
eigenbasis; explicitly,
\[
 \Pi_B(\widehat{\mathcal T}_{\kappa_0,N}-\mathcal T)\Pi_C
 =\sum_{j\in B}\sum_{k\in C}
   \langle e_j,(\widehat{\mathcal T}_{\kappa_0,N}-\mathcal T)e_k\rangle
   e_j\otimes e_k.
\]
Here ``block'' refers to a group of population eigendirections, not to a
block of observations.

The target projector is indexed by
\[
    I:=\{1,\ldots,s\}.
\]
Thus $\Pi_I=\Pi_{\le s}$.
The set \(I\) is generally a union of spectral blocks rather than a single
block. Condition~\nameref{par:K2} ensures that no population eigenspace is
split at its boundary because \(\lambda_s>\lambda_{s+1}\). To construct the
comparison partition, let
\[
 k_0:=\bigwedge\{k>s:\lambda_k\le\lambda_s/2\}.
\]
Condition~\nameref{par:K5} implies $k_0\asymp s$. In the notation of the
cited block theorem, set
\[
 I':=\{1,\ldots,k_0-1\}.
\]
For every distinct eigenvalue \(\mu\) among
\(\lambda_1,\ldots,\lambda_{k_0-1}\), define
\[
    B(\mu):=\{j<k_0:\lambda_j=\mu\},
    \qquad
    B_{\mathrm{tail}}:=\{k_0,k_0+1,\ldots\}.
\]
The nonempty sets \(B(\mu)\), together with \(B_{\mathrm{tail}}\), form the
comparison partition. In particular, repeated eigenvalues remain in the same
block, while all sufficiently small eigenvalues are collected in the single
tail block.

\paragraph{Specialized block perturbation principle.}
The form of \citet[Theorem~2]{jirak2020perturbation} used below can be read as
follows. Suppose the perturbation between every pair of blocks $B,C$ satisfies
\[
 \|\Pi_B(\widehat{\mathcal T}_{\kappa_0,N}-\mathcal T)\Pi_C\|_\HS
 \le x\sqrt{b_Bb_C},
\]
the tail is separated from the target directions, and
$x\mathfrak r_s^\partial\le1/8$. Then, on the event that the empirical leading
index set corresponds to $I=\{1,\ldots,s\}$, the theorem gives
\[
 \|\widehat\Pi_{\le s}-\Pi_{\le s}\|_\HS^2
 \le Cx^2
 \sum_{j\le s}\sum_{k>s}
 \frac{\lambda_j\lambda_k}{(\lambda_j-\lambda_k)^2}
 =Cx^2\mathcal U_s.
\]
Thus the remainder of this part verifies the three hypotheses in the present
notation; it does not introduce a new perturbation argument.

For this partition, the three inputs to be verified are the following:
\begin{enumerate}[label=\textup{(\roman*)}]
\item for every pair of blocks \(B,C\), the relative block bound
\[
 \left\|\Pi_B\{(\widetilde{\mathcal T}_N-\mathcal T_{\kappa_0})
       -\bar Z_N\otimes\bar Z_N\}\Pi_C\right\|_\HS
 \le x_N\sqrt{b_Bb_C};
\]
\item for every \(i\in I\) and \(k\notin I'\), the far-block separation
\[
    |\lambda_i-\lambda_k|\ge\lambda_i/2;
\]
\item the contraction condition
\[
    x_N\mathfrak r_s^\partial\le\frac18
\]
for all sufficiently large \(N\), where the relative coefficient \(x_N\) is
specified below.
\end{enumerate}

The far-block condition is immediate. For \(i\in I\) and \(k\notin I'\),
the definition of \(k_0\) gives
\[
 \lambda_k\le\lambda_s/2\le\lambda_i/2,
 \qquad
 |\lambda_i-\lambda_k|\ge\lambda_i/2.
\]
It remains to establish the relative block bound and the contraction
condition. For this purpose, let \(B,C\) be any two blocks in the displayed
partition. Then \citet[Lemma~1]{jirak2020perturbation}, applied to each block pair
and followed by a union bound, gives
\begin{equation}
\label{eq:relative-pca-uncentered-block-event}
 \|\Pi_B(\widetilde{\mathcal T}_N-\mathcal T_{\kappa_0})\Pi_C\|_\HS
 \le C\frac{t}{\sqrt N}\sqrt{b_Bb_C}.
\end{equation}
For $t\ge1$ the exceptional probability is at most
\[
 Ck_0^2\{N^{-1}t^{-4}+e^{-t^2}\}.
\]
This records explicitly the moment order, the far-block index, and the
failure probability of the block event used below.

Empirical centering is controlled on the same partition by fourth moments.
For every block $B$, including the infinite tail block,
Lemma~\ref{lem:spectral-transform-moments} applied to $\Pi_B$ gives
\[
    \bbE\|\Pi_B\bar Z_N\|^4\le\frac{Cb_B^2}{N^2}.
\]
All blocks are nonempty and have $b_B>0$. For $N\ge3$, put
\[
    u_N:=\sqrt{\frac{\log N}{N}},\qquad
    \mathcal C_N:=\bigcap_B
       \{\|\Pi_B\bar Z_N\|^2\le b_Bu_N\}.
\]
There are at most $k_0+1$ blocks, so Markov's inequality and a union bound
yield
\begin{equation}
\label{eq:relative-pca-centering-failure}
    \bbP(\mathcal C_N^c)
    \le\frac{C(k_0+1)}{N^2u_N^2}
    =\frac{C(k_0+1)}{N\log N}.
\end{equation}
On $\mathcal C_N$, simultaneously for every pair of blocks,
\begin{equation}
\label{eq:relative-pca-centering-block-bound}
    \|\Pi_B(\bar Z_N\otimes\bar Z_N)\Pi_C\|_\HS
    =\|\Pi_B\bar Z_N\|\,\|\Pi_C\bar Z_N\|
    \le u_N\sqrt{b_Bb_C}.
\end{equation}
Only the relative scale $u_N$ is needed by the block theorem; no
exponential-confidence bound for the sample mean is used.

Take $t_N=\sqrt{4\log N}$ in
\eqref{eq:relative-pca-uncentered-block-event}. On its intersection with
$\mathcal C_N$, the combined perturbation satisfies
\begin{equation}
\label{eq:relative-pca-combined-block-event}
    \|\Pi_B\{(\widetilde{\mathcal T}_N-\mathcal T_{\kappa_0})
           -\bar Z_N\otimes\bar Z_N\}\Pi_C\|_\HS
    \le x_N\sqrt{b_Bb_C}
    \quad\text{for every }B,C,
\end{equation}
where $x_N=C\sqrt{\log N/N}$, with a sufficiently large fixed
constant depending only on $C_{X,8}$. Let $\mathcal B_N$ denote the
simultaneous event in \eqref{eq:relative-pca-combined-block-event}.
The two union bounds give
\begin{equation}
\label{eq:relative-pca-block-failure}
\begin{split}
    \bbP(\mathcal B_N^c)
    &\le Ck_0^2\{N^{-1}t_N^{-4}+e^{-t_N^2}\}
          +\frac{C(k_0+1)}{N\log N}\\
    &\le C\left\{
          \frac{s^2}{N(\log N)^2}
          +\frac{s^2}{N^4}
          +\frac{s+1}{N\log N}\right\}
     =o(1).
\end{split}
\end{equation}
Here $k_0\asymp s$, the first term tends to zero by
Lemma~\ref{lem:tuning-consequences}, and the remaining terms vanish because
$s<N$ and $N\to\infty$.
For the selected leading index set $I=\{1,\ldots,s\}$, the separation
quantity in the cited block theorem is $\mathfrak r_s^\partial$. Lemmas~
\ref{lem:cross-boundary-spectrum-consequences} and
\ref{lem:tuning-consequences} give
\[
 x_N\mathfrak r_s^\partial=o(1).
\]
In particular, \(x_N\mathfrak r_s^\partial\le1/8\) for all sufficiently
large \(N\), as required by the block perturbation theorem. The relative
block bounds above apply simultaneously to every pair of blocks in the
near/tail partition.
The block failure bounds above and Lemma~
\ref{lem:empirical-cutoff-separation} show that the intersection of
$\mathcal E_s$ with all the block events has probability tending to one. On
this intersection, all inputs required by the relative
perturbation theorem are present: the cross-block inequalities are
\eqref{eq:relative-pca-combined-block-event}, the far-block separation follows
from the definition of \(k_0\), and
\(x_N\mathfrak r_s^\partial\le1/8\) gives the contraction condition.
Therefore \citet[Theorem~2]{jirak2020perturbation} applies to the displayed
block partition and to the combined perturbation
$(\widetilde{\mathcal T}_N-\mathcal T_{\kappa_0})-\bar Z_N\otimes\bar Z_N$. With
$I=\{1,\ldots,s\}$, the event \(\mathcal E_s\) identifies the empirical
projector indexed by \(I\) with \(\widehat\Pi_{\le s}\), and the theorem gives the
bound
\[
    \|\widehat\Pi_{\le s}-\Pi_{\le s}\|_\HS^2
    \le Cx_N^2\mathcal U_s.
\]
Since $x_N^2=O(\log N/N)$, Lemma~
\ref{lem:cross-boundary-spectrum-consequences} proves
\eqref{eq:relative-pca-unweighted-bound}. Notice that this argument retains
the logarithmic confidence factor throughout.

We next prove the one-sided excess using
Lemma~\ref{lem:deterministic-leading-pca-excess}. Put
\[
    \mu_s:=\lambda_{s+1},\qquad
    \Delta_N:=\mathcal T-\widehat{\mathcal T}_{\kappa_0,N}
       =-(\widetilde{\mathcal T}_N-\mathcal T_{\kappa_0})
          +\bar Z_N\otimes\bar Z_N.
\]
Work on $\mathcal G_{s,N}^{\mathrm{mid}}$ from
\eqref{eq:midpoint-relative-good-event}. The proof of
Lemma~\ref{lem:empirical-cutoff-separation} shows that this event is
contained in $\mathcal E_s$ and satisfies both hypotheses of the
excess lemma. Hence
\eqref{eq:centered-pca-deterministic-leading-bound} applies on an event
whose probability tends to one. No covariance-relative sub-Gaussian
assumption is used in this step.
It remains to control the right-hand side. Since the population covariance
has zero cross-blocks and the observations are independent,
\[
 \bbE\|\Pi_j(\widetilde{\mathcal T}_N-\mathcal T_{\kappa_0})\Pi_{>s}\|_\HS^2
 =\frac1N\bbE[\|\Pi_jZ\|^2\|\Pi_{>s}Z\|^2]
 \le\frac{C\lambda_j}{N}\sum_{k>s}\lambda_k.
\]
Indeed,
\[
 \|\Pi_jZ\|^2=\lambda_j\eta_j^2,
 \qquad
 \|\Pi_{>s}Z\|^2=\sum_{k>s}\lambda_k\eta_k^2,
\]
so Tonelli's theorem, Cauchy--Schwarz, and
\eqref{eq:relative-pca-eighth-moment} give
\[
 \bbE[\|\Pi_jZ\|^2\|\Pi_{>s}Z\|^2]
 =\lambda_j\sum_{k>s}\lambda_k\bbE(\eta_j^2\eta_k^2)
 \le C\lambda_j\sum_{k>s}\lambda_k.
\]
Hence
\[
 \sum_{j\le s}
 \frac{\|\Pi_j(\widetilde{\mathcal T}_N-\mathcal T_{\kappa_0})\Pi_{>s}\|_\HS^2}
      {\lambda_j-\mu_s}
 =O_p(\mathcal W_s^+/N).
\]
For the rank-one centering term, put
$U_j=\langle e_j,\bar Z_N\rangle$ and
$\bar\eta_j=N^{-1}\sum_{i=1}^N\eta_j(X_i)$, so
$U_j=\sqrt{\lambda_j}\bar\eta_j$. Since the summands are independent and
centered, expansion of the fourth power and
\eqref{eq:relative-pca-eighth-moment} give
\[
 \bbE\bar\eta_j^4
 =N^{-4}\left\{N\bbE\eta_j^4
   +3N(N-1)(\bbE\eta_j^2)^2\right\}
 \le C/N^2.
\]
Cauchy--Schwarz therefore yields
\[
 \bbE(U_j^2U_k^2)
 \le\lambda_j\lambda_k
   (\bbE\bar\eta_j^4\,\bbE\bar\eta_k^4)^{1/2}
 \le C\lambda_j\lambda_k/N^2.
\]
Moreover,
\[
 \|\Pi_j(\bar Z_N\otimes\bar Z_N)\Pi_{>s}\|_\HS^2
 =U_j^2\sum_{k>s}U_k^2.
\]
By this identity and Tonelli's theorem,
\[
 \bbE\sum_{j\le s}
 \frac{\|\Pi_j(\bar Z_N\otimes\bar Z_N)\Pi_{>s}\|_\HS^2}
      {\lambda_j-\mu_s}
 \le
 \frac{C}{N^2}
 \left(\sum_{k>s}\lambda_k\right)
 \left(\sum_{j\le s}\frac{\lambda_j}{\lambda_j-\mu_s}\right)
 =\frac{C\mathcal W_s^+}{N^2}.
\]
By the elementary inequality
\[
 \|\Pi_j\Delta_N\Pi_{>s}\|_\HS^2
 \le2\|\Pi_j(\widetilde{\mathcal T}_N-\mathcal T_{\kappa_0})\Pi_{>s}\|_\HS^2
    +2\|\Pi_j(\bar Z_N\otimes\bar Z_N)\Pi_{>s}\|_\HS^2,
\]
the nonnegative quantity
\[
    V_{s,N}:=\sum_{j\le s}
       \frac{\|\Pi_j\Delta_N\Pi_{>s}\|_\HS^2}{\lambda_j-\mu_s}
\]
satisfies $\bbE V_{s,N}\le C\mathcal W_s^+/N$.
For every $M>0$, the excess lemma and Markov's inequality
therefore give
\[
    \bbP\!\left(\mathcal E_{s,N}^{\mathrm{PCA},+}
                >M\mathcal W_s^+/N\right)
    \le\bbP\{(\mathcal G_{s,N}^{\mathrm{mid}})^c\}
       +\bbP\!\left(16V_{s,N}>M\mathcal W_s^+/N\right)
    \le o(1)+\frac{C}{M}.
\]
Thus $\mathcal E_{s,N}^{\mathrm{PCA},+}=O_p(\mathcal W_s^+/N)$.
No independence between the good event and $V_{s,N}$ is required.
Lemma~\ref{lem:cross-boundary-spectrum-consequences} now gives
\eqref{eq:relative-pca-leading-excess-bound}.

It remains to convert these two PCA quantities into the prediction-weighted
norm through an operator identity. On $\mathcal E_s$, put
\[
    a_j:=\|\widehat\Pi_{\le s}^\perp e_j\|^2\quad(j\le s),
    \qquad
    b_k:=\|\widehat\Pi_{\le s} e_k\|^2\quad(k>s).
\]
Since $\Pi_{\le s}$ and $\widehat\Pi_{\le s}$ are orthogonal projectors of the same finite
rank,
\[
    \sum_{j\le s}a_j
    =\sum_{k>s}b_k
    =\frac12\|\widehat\Pi_{\le s}-\Pi_{\le s}\|_\HS^2.
\]
Moreover,
\begin{align*}
 \|\mathcal T_{\kappa_0}^{1/2}(\widehat\Pi_{\le s}-\Pi_{\le s})\|_\HS^2
 &=
 \sum_{j\le s}\lambda_j a_j+\sum_{k>s}\lambda_k b_k\\
 &\le
 \sum_{j\le s}(\lambda_j-\lambda_{s+1})a_j
 +\lambda_{s+1}\left(\sum_{j\le s}a_j+\sum_{k>s}b_k\right)\\
 &=
 \mathcal E_{s,N}^{\mathrm{PCA},+}
 +\lambda_{s+1}\|\widehat\Pi_{\le s}-\Pi_{\le s}\|_\HS^2.
\end{align*}
Using $\lambda_{s+1}\asymp s^{-q}$ together with
\eqref{eq:relative-pca-unweighted-bound} and
\eqref{eq:relative-pca-leading-excess-bound} proves
\eqref{eq:relative-pca-prediction-weighted-bound}.  Lemma~
\ref{lem:empirical-cutoff-separation} gives
$\bbP(\mathcal E_s^c)\to0$, and the exceptional-event convention preserves
the $O_p$ conclusion. These are probability-order bounds under the
eighth-moment condition; the proof does not assert exponential failure
probabilities or unconditional mean bounds for the empirical projector.
\end{proof}

\begin{proposition}[Relative spectral stability]
\label{prop:sharp-spectral-verification}
Under Conditions~\nameref{par:K1}--\nameref{par:K2}, \nameref{par:K4}, and
\nameref{par:K5}, and under the relative-spectral tuning regime
\eqref{eq:sharp-tuning-regime}, the full-sample empirical eigensystem
constructed in Section~\ref{app:empirical-spectral-construction} satisfies
the following spectral bounds:
\[
    \frac{\widehat\lambda_s}{\lambda_s}\xrightarrow{p}1,\qquad
    \|\mathbf B_s^{-1/2}\overline{\mathbf C}_{s}\mathbf B_s^{-1/2}\|_{\op}=O_p(1),
\]
\[
    \|\mathcal T_{\kappa_0}^{1/2}(\widehat\Pi_{\le s}-\Pi_{\le s})\|_{\HS}
    =O_p\!\left(
      \sqrt{\frac{s^{2-q}\log(es)\log N}{N}}
    \right),
\]
\[
    \|\overline K_N^0\|_{\mathcal H_{\kappa_0}}
    =O_p(N^{-1/2}).
\]
\end{proposition}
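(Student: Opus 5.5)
This proposition is essentially an assembly of the lemmas proved just above, so the plan is to collect them in order and check that each hypothesis is supplied by \aref{K1}--\aref{K2}, \aref{K4}--\aref{K5}, and \eqref{eq:sharp-tuning-regime}.

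\textbf{Step 1: centering.} The bound $\|\overline K_N^0\|_{\mathcal H_{\kappa_0}}=O_p(N^{-1/2})$ is \eqref{eq:derived-centering-rate} in Lemma~\ref{lem:derived-covariance-stability}. It needs only the bounded kernel from \aref{K1}, through $\bbE\|\overline K_N^0\|^2=\operatorname{tr}(\mathcal T_{\kappa_0})/N$ and Markov's inequality.

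\textbf{Step 2: eigenvalues.} Lemma~\ref{lem:empirical-eigenvalue-comparability} gives $\bigvee_{j\le s+1}|\widehat\lambda_j/\lambda_j-1|=o_p(1)$, which contains $\widehat\lambda_s/\lambda_s\to_p1$ as the case $j=s$. The preconditioned stability $\delta_{s,N}=o_p(1)$ behind that lemma rests on two inputs: $\mathcal N(\lambda_{s+1})\lesssim s$ from Lemma~\ref{lem:cross-boundary-spectrum-consequences}, and the smallness $\mathcal N(\lambda_{s+1})/\sqrt N\to0$ from Lemma~\ref{lem:tuning-consequences}.

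\textbf{Step 3: leverage.} Ridge-leverage stability $\|\mathbf B_s^{-1/2}\overline{\mathbf C}_s\mathbf B_s^{-1/2}\|_\op=O_p(1)$ is exactly Lemma~\ref{lem:sharp-leverage-stability}. On $\mathcal E_s^c$ the matrix $\overline{\mathbf C}_s$ vanishes by the proof convention. Moreover, $\bbP(\mathcal E_s)\to1$ by Lemma~\ref{lem:empirical-cutoff-separation}, so nothing is lost on the exceptional event.

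\textbf{Step 4: projector.} The prediction-weighted projector bound is \eqref{eq:relative-pca-prediction-weighted-bound} in Lemma~\ref{lem:relative-pca-projector-control}. That lemma splits $\|\mathcal T_{\kappa_0}^{1/2}(\widehat\Pi_{\le s}-\Pi_{\le s})\|_\HS^2$ into two pieces:
\begin{itemize}
\item the leading-side PCA excess, of order $O_p(s^{2-q}\log(es)/N)$;
\item $\lambda_{s+1}$ times the unweighted Hilbert--Schmidt error, of order $s^{-q}\cdot O_p(s^2\log(es)\log N/N)$.
\end{itemize}
Both pieces are $O_p(s^{2-q}\log(es)\log N/N)$, and taking square roots gives the stated rate.

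\textbf{Main obstacle.} The only genuine subtlety is Step 4. There one must check the hypotheses of the block perturbation theorem of \citet{jirak2020perturbation} for the \emph{empirically centered} perturbation, and in particular the contraction condition $x_N\mathfrak r_s^\partial\le1/8$. This follows from $\mathfrak r_s^\partial\lesssim s\log(es)$ together with the first limit in \eqref{eq:sharp-tuning-regime}, and the lemma already carries this out. Once that lemma and the preceding ones are cited, the proposition follows by intersecting finitely many events, each of probability tending to one.
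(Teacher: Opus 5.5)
Your proposal is correct and follows the paper's proof essentially verbatim: the paper likewise obtains the four displayed bounds by citing Lemma~\ref{lem:empirical-eigenvalue-comparability}, Lemma~\ref{lem:sharp-leverage-stability}, \eqref{eq:derived-centering-rate}, and Lemma~\ref{lem:relative-pca-projector-control}. The split you describe in Step~4 is exactly how that lemma converts its two PCA quantities into the prediction-weighted rate: the leading-side excess plus $\lambda_{s+1}$ times the unweighted Hilbert--Schmidt error.
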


\begin{proof}
Let
\[
    \widetilde{\mathcal T}_N
    :=
    \frac1N\sum_{i=1}^N K_{X_i}^0\otimes K_{X_i}^0.
\]
The full-sample empirical centered covariance operator generated by
Section~\ref{app:empirical-spectral-construction} is
\[
    \widehat{\mathcal T}_{{\kappa_0},N}
    =
    \widetilde{\mathcal T}_N
    -\overline K_N^0
     \otimes\overline K_N^0,
\]
so its leading eigenpairs are exactly
\((\widehat\lambda_j,\widehat e_j)\) on $\mathcal E_s$.
Lemma~\ref{lem:empirical-eigenvalue-comparability} gives
$\widehat\lambda_s/\lambda_s\xrightarrow{p}1$, and
Lemma~\ref{lem:sharp-leverage-stability} gives the ridge-leverage bound.
Equation~\eqref{eq:derived-centering-rate} gives
$\|\overline K_N^0\|=O_p(N^{-1/2})$, while
Lemma~\ref{lem:relative-pca-projector-control} gives
\[
    \|\mathcal T_{\kappa_0}^{1/2}(\widehat\Pi_{\le s}-\Pi_{\le s})\|_\HS
    =O_p\!\left(
      \sqrt{\frac{s^{2-q}\log(es)\log N}{N}}
    \right).
\]
The displayed projector rate converges to zero. For $1<q<2$,
\[
 \frac{s^{2-q}\log(es)\log N}{N}
 \le
 \frac{s^2\log^2(es)\log N}{N}=o(1)
\]
by the first tuning restriction. For $q\ge2$, the squared rate is at most
$\log(es)\log N/N=o(1)$.

Labeled fitting stability follows from Lemma~\ref{lem:labeled-gram-stability},
not by treating its inverse as constant when averaging over $L$.
\end{proof}

For the remaining regression bounds, choose either the \emph{basic spectral
route}, consisting of Conditions~\nameref{par:K1}--\nameref{par:K3},
the tuning regime \eqref{eq:basic-tuning-regime}, and
Proposition~\ref{prop:primitive-spectral-verification}, or the
\emph{relative spectral route}, consisting of Conditions~\nameref{par:K1}--\nameref{par:K2},
\nameref{par:K4}, and \nameref{par:K5}, the relative-spectral tuning regime
\eqref{eq:sharp-tuning-regime}, and
Proposition~\ref{prop:sharp-spectral-verification}. Define
$\widetilde\rho_{r,s,n,N,\xi}$ as
$\rho_{r,s,n,N,\xi}$ or $\rho_{r,s,n,N,\xi}^{\star}$.

\begin{lemma}[Labeled means and covariance]
\label{lem:labeled-full-sample-design-comparison}
For $f_{n,N}=(N-n)/\{n(N-1)\}$, conditional on all pooled covariates,
\begin{equation}
\label{eq:labeled-full-sample-design-comparison}
\begin{aligned}
 \bbE_L[\overline{\boldsymbol\Phi}_L\overline{\boldsymbol\Phi}_L^\top]
 &=f_{n,N}\widehat{\boldsymbol\Lambda}_s,\\
 \bbE_L[\mathbf C_{L,s}]
 &=\frac{N(n-1)}{n(N-1)}\widehat{\boldsymbol\Lambda}_s,\qquad
 \bbE_L\|\mathbf B_s^{-1/2}\overline{\boldsymbol\Phi}_L\|^2\le s/n.
\end{aligned}
\end{equation}
\end{lemma}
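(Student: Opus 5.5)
The plan is to condition on the pooled covariates, so that $\widehat{\boldsymbol\Phi}_s$ becomes a fixed map, and treat $L$ as a simple random sample without replacement of size $n$ from $[N]$. Write $\mathbf u_r:=\widehat{\boldsymbol\Phi}_s(X_r)$ for $r\in[N]$. On $\mathcal E_s$, \eqref{eq:full-sample-feature-matrix} gives the finite-population moments
\[
\frac1N\sum_{r=1}^N\mathbf u_r=0,
\qquad
\frac1N\sum_{r=1}^N\mathbf u_r\mathbf u_r^\top=\widehat{\boldsymbol\Lambda}_s .
\]
On $\mathcal E_s^c$ every feature vanishes by the proof convention, so all three claims hold trivially; I will therefore work only on $\mathcal E_s$.

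For the first identity, use the inclusion probabilities of sampling without replacement, namely $\bbP(r\in L)=n/N$ and $\bbP(r,r'\in L)=n(n-1)/\{N(N-1)\}$ for $r\ne r'$. Then
\[
\bbE_L[\overline{\boldsymbol\Phi}_L\overline{\boldsymbol\Phi}_L^\top]
=n^{-2}\Big\{\frac nN\sum_r\mathbf u_r\mathbf u_r^\top+\frac{n(n-1)}{N(N-1)}\sum_{r\ne r'}\mathbf u_r\mathbf u_{r'}^\top\Big\}.
\]
The population mean is zero, so $\sum_{r\ne r'}\mathbf u_r\mathbf u_{r'}^\top=-\sum_r\mathbf u_r\mathbf u_r^\top=-N\widehat{\boldsymbol\Lambda}_s$. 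Collecting the two terms gives
\[
n^{-1}\Big\{1-\frac{n-1}{N-1}\Big\}\widehat{\boldsymbol\Lambda}_s=f_{n,N}\widehat{\boldsymbol\Lambda}_s .
\]
This is the usual finite-population-correction variance formula.

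For the second identity, use the decomposition $\mathbf C_{L,s}=\mathbf C^u_{L,s}-\overline{\boldsymbol\Phi}_L\overline{\boldsymbol\Phi}_L^\top$. The uncentered term has expectation $\bbE_L\mathbf C^u_{L,s}=N^{-1}\sum_r\mathbf u_r\mathbf u_r^\top=\widehat{\boldsymbol\Lambda}_s$. Subtracting the first identity gives the coefficient
\[
1-f_{n,N}=\frac{n(N-1)-(N-n)}{n(N-1)}=\frac{N(n-1)}{n(N-1)},
\]
which is the stated factor.

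For the third claim, take the trace against $\mathbf B_s^{-1}$. By the first identity,
\[
\bbE_L\|\mathbf B_s^{-1/2}\overline{\boldsymbol\Phi}_L\|^2=f_{n,N}\operatorname{tr}\{(\widehat{\boldsymbol\Lambda}_s+\xi\mathbf I_s)^{-1}\widehat{\boldsymbol\Lambda}_s\}.
\]
Each diagonal entry $\widehat\lambda_j/(\widehat\lambda_j+\xi)$ is at most one, so the trace is at most $s$. Also $f_{n,N}\le 1/n$ because $N-n\le N-1$. Together these give the bound $s/n$.

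The only delicate point is bookkeeping rather than analysis. The exceptional-event convention must be applied consistently: on $\mathcal E_s^c$, both $\widehat{\boldsymbol\Lambda}_s$ and the features are zero, so each identity reads $0=0$. The pooled centering must be the exact empirical identity and not the population one, since the zero-mean fact used in the cross-term step is $N^{-1}\sum_r\mathbf u_r=0$. Independence of $L$ from the covariates is what justifies treating the $\mathbf u_r$ as fixed under $\bbE_L$.
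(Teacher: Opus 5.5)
Your proof is correct and follows the paper's argument: the without-replacement inclusion probabilities, combined with the exact pooled identities $N^{-1}\sum_r\widehat{\boldsymbol\Phi}_s(X_r)=0$ and $N^{-1}\sum_r\widehat{\boldsymbol\Phi}_s(X_r)\widehat{\boldsymbol\Phi}_s(X_r)^\top=\widehat{\boldsymbol\Lambda}_s$, give the first identity. You then subtract the mean outer product from $\bbE_L\mathbf C^u_{L,s}=\widehat{\boldsymbol\Lambda}_s$ and whiten with $\mathbf B_s^{-1}$, writing out the algebra and the trivial $\mathcal E_s^c$ case that the paper's short proof leaves implicit.
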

\begin{proof}
The inclusion probabilities are $n/N$ and $n(n-1)/\{N(N-1)\}$.
Use $P_N\widehat{\boldsymbol\Phi}_s=0$ and
$P_N\widehat{\boldsymbol\Phi}_s\widehat{\boldsymbol\Phi}_s^\top
=\widehat{\boldsymbol\Lambda}_s$. Subtract the mean outer product from
$\bbE_L\mathbf C^u_{L,s}=\widehat{\boldsymbol\Lambda}_s$ and whiten.
\end{proof}

\begin{lemma}[Derived stability of the centered labeled covariance]
\label{lem:labeled-gram-stability}
Under either spectral route, put
\[
 \eta_{L,N}:=\|\mathbf B_s^{-1/2}(\mathbf C_{L,s}
                     -\widehat{\boldsymbol\Lambda}_s)\mathbf B_s^{-1/2}\|_\op.
\]
Then $s/n\to0$ and $\eta_{L,N}=o_p(1)$. For any fixed $0<\delta<1$, with probability
tending to one,
\begin{equation}
\label{eq:labeled-inverse-sandwich}
 (1-\delta)\mathbf B_s\preceq\mathbf C_{L,s}+\xi I
 \preceq(1+\delta)\mathbf B_s,\qquad
 \|\mathbf B_s^{1/2}\mathbf A_s\mathbf B_s^{1/2}\|_\op
 \le(1-\delta)^{-1}.
\end{equation}
For $N=n$ these hold exactly with $\eta_{L,N}=0$.
\end{lemma}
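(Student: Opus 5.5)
The plan is to reduce everything to a whitened finite-population sampling problem, conditional on the pooled covariates. The inverse sandwich \eqref{eq:labeled-inverse-sandwich} will follow from $\eta_{L,N}=o_p(1)$ alone. Indeed,
\[
\mathbf C_{L,s}+\xi\mathbf I_s-\mathbf B_s=\mathbf C_{L,s}-\widehat{\boldsymbol\Lambda}_s ,
\]
so $-\eta_{L,N}\mathbf B_s\preceq \mathbf C_{L,s}+\xi\mathbf I_s-\mathbf B_s\preceq\eta_{L,N}\mathbf B_s$. On $\{\eta_{L,N}\le\delta\}$ this gives both the two-sided bound and $\|\mathbf B_s^{1/2}\mathbf A_s\mathbf B_s^{1/2}\|_\op\le(1-\delta)^{-1}$.

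The case $N=n$ is exact. Then $L=[N]$, and the full-sample identities \eqref{eq:full-sample-feature-matrix} give $\overline{\boldsymbol\Phi}_L=0$ and $\mathbf C_{L,s}=\widehat{\boldsymbol\Lambda}_s$. On $\mathcal E_s^c$ both sides are zero by the proof convention. Hence $\eta_{L,N}=0$.

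\emph{Step 1: $s/n\to0$.} The trace bound $s\lambda_s\le\sum_j\lambda_j\le C_\kappa$ gives $\lambda_s\lesssim 1/s$.
\begin{itemize}
\item On the basic route, \aref{K3} gives $N^{-1/2}=o(\lambda_s-\lambda_{s+1})$, and $\lambda_s-\lambda_{s+1}\le\lambda_s\lesssim s^{-1}$. Hence $s=o(\sqrt N)$. When $N=n$ this already gives $s/n\to0$. When $N>n$, \eqref{eq:basic-tuning-regime} together with $\xi/\lambda_s\to0$ gives $\log(2s)/(n\lambda_s)\to0$. Since $1/\lambda_s\gtrsim s$, this forces $s/n\to0$.
\item On the relative route, $s^2/n\to0$ is assumed directly.
\end{itemize}

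\emph{Step 2: decomposition of $\eta_{L,N}$.} Write $\mathbf w_r:=\mathbf B_s^{-1/2}\widehat{\boldsymbol\Phi}_s(X_r)$ for $r\in[N]$. This is a fixed finite population with $P_N\mathbf w=0$ and $P_N\mathbf w\mathbf w^\top=\mathbf B_s^{-1/2}\widehat{\boldsymbol\Lambda}_s\mathbf B_s^{-1/2}\preceq\mathbf I_s$. Since $\mathbf C_{L,s}=\mathbf C^u_{L,s}-\overline{\boldsymbol\Phi}_L\overline{\boldsymbol\Phi}_L^\top$,
\[
\eta_{L,N}\le\Big\|\tfrac1n\textstyle\sum_{i\in L}\mathbf w_i\mathbf w_i^\top-P_N\mathbf w\mathbf w^\top\Big\|_\op+\|\mathbf B_s^{-1/2}\overline{\boldsymbol\Phi}_L\|^2 .
\]
By Lemma~\ref{lem:labeled-full-sample-design-comparison} the second term has conditional mean at most $s/n\to0$, so it is $o_p(1)$ by Markov's inequality. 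The first term is a sampling-without-replacement deviation of a mean of rank-one matrices. Its variance is dominated by that of sampling with replacement: the operator-norm comparison of Gross--Nesme handles the matrix-Bernstein case, and for the Hilbert--Schmidt norm the variance is exactly reduced by the finite-population factor.

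\emph{Step 3a: first term on the basic route.} The leverage is uniformly bounded. Using $\widehat\phi_j(x)=\langle\widehat e_j,\widehat K_x^0\rangle$ and Bessel's inequality, $\|\widehat{\boldsymbol\Phi}_s(x)\|^2\le\|\widehat K^0_x\|^2\le4B_\kappa$. Hence $\max_r\|\mathbf w_r\|^2\le R:=4B_\kappa/(\widehat\lambda_s+\xi)$. By Proposition~\ref{prop:primitive-spectral-verification}, $\widehat\lambda_s/\lambda_s\to_p1$, so $R\lesssim(\lambda_s+\xi)^{-1}$ with probability tending to one. Matrix Bernstein with this leverage bound and variance proxy $R$ gives the deviation bound
\[
O_p\Big(\sqrt{R\log(2s)/n}+R\log(2s)/n\Big)=o_p(1)
\]
when $N>n$, by \eqref{eq:basic-tuning-regime}.

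\emph{Step 3b: first term on the relative route (main obstacle).} Here $R\log s/n\asymp s^q\log s/n$ need not vanish, so a sup-norm leverage bound is unusable. Instead I would control the Hilbert--Schmidt norm by fourth moments:
\[
\bbE_L\Big\|\tfrac1n\textstyle\sum_{i\in L}\mathbf w_i\mathbf w_i^\top-P_N\mathbf w\mathbf w^\top\Big\|_\HS^2\le\frac1n P_N\|\mathbf w\|^4 .
\]
The task is therefore to show $P_N\|\mathbf w\|^4=O_p(s^2)$.
\begin{enumerate}
\item Write $\widehat{\boldsymbol\Phi}_s(x)=\widehat E_s^*(K_x^0-\overline K_N^0)$.
\item Reuse the quadratic-form argument of Lemma~\ref{lem:sharp-leverage-stability}. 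On its good event, $\langle h,(\mathcal T_{\kappa_0}+\mu\mathrm{Id})h\rangle\le C\,\mathbf a^\top\widehat{\boldsymbol\Lambda}_s\mathbf a$ for $h=\widehat E_s\mathbf a$ and $\mu=\lambda_{s+1}$. By duality this yields
\[
\|\mathbf B_s^{-1/2}\widehat E_s^*u\|\le C\|(\mathcal T_{\kappa_0}+\mu\mathrm{Id})^{-1/2}u\|\qquad\text{for all } u .
\]
\item Apply this to $u=K^0_{X_r}$ and to $u=\overline K_N^0$. For the latter, \eqref{eq:derived-regularized-mean-stability} gives $\nu_{s,N}=o_p(1)$.
\item This reduces the claim to bounding $P_N\|(\mathcal T_{\kappa_0}+\mu\mathrm{Id})^{-1/2}K^0_X\|^4$. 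Its expectation is at most $C\mathcal N(\mu)^2\lesssim s^2$, by \eqref{eq:spectral-transform-fourth-moment} and Lemma~\ref{lem:cross-boundary-spectrum-consequences}.
\end{enumerate}
Markov's inequality and $s^2/n\to0$ then give the first term as $o_p(1)$. The delicate point is justifying step 2 simultaneously for all sample points. Because it is a deterministic operator inequality on a single good event, not a pointwise probabilistic bound, it applies to every $u$ at once.
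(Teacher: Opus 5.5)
Your proposal is correct. For $s/n\to0$, the exact $N=n$ case, the reduction of the sandwich to $\eta_{L,N}=o_p(1)$, and the basic route, you follow the paper essentially step for step. The basic-route ingredients are the same: conditioning on the pooled covariates, the leverage bound $4B_\kappa/(\widehat\lambda_s+\xi)$, the without-replacement matrix Bernstein inequality, and Lemma~\ref{lem:labeled-full-sample-design-comparison} for the labeled mean.

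The relative route is where you genuinely differ. The paper does not use the nested sampling structure there. It compares each centered covariance, $\widehat T_L$ and $\widehat T_N$, separately with $\mathcal T_{\kappa_0}$ in the $(\mathcal T_{\kappa_0}+\mu\mathrm{Id})^{-1/2}$-whitened operator norm. It uses that the labeled covariates are marginally an iid sample of size $n$, so the population fourth moment $\bbE\|R_\mu Z\|^4\lesssim\mathcal N(\mu)^2$ gives $\delta_n=O_p(s/\sqrt n)$ and $\delta_N=O_p(s/\sqrt N)$. It then transfers to empirical coordinates through the exact identities $\mathbf C_{L,s}=\widehat E_s^*\widehat T_L\widehat E_s$ and $\widehat{\boldsymbol\Lambda}_s=\widehat E_s^*\widehat T_N\widehat E_s$, together with $\|U_N\|_\op=O_p(1)$ for $U_N=(\mathcal T_{\kappa_0}+\mu\mathrm{Id})^{1/2}\widehat E_s\mathbf B_s^{-1/2}$. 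That last bound is exactly your duality step.

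You instead stay conditional on the pooled covariates in both routes. You replace the sup-norm leverage, which as you correctly note fails when $s^q\log s/n\not\to0$, by the pooled empirical fourth moment $P_N\|\mathbf w\|^4$ in the Hilbert--Schmidt finite-population variance. You bound that moment by the same whitening transfer plus the population fourth moment and $\nu_{s,N}$.

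Your version exploits the nesting: the exact variance carries the factor $f_{n,N}$ and vanishes at $N=n$. It also keeps a single conditional-sampling framework for both routes. The cost is a separate labeled-mean term and a fourth-moment bound holding simultaneously over all $N$ pooled points. The paper's version absorbs labeled centering into $\widehat T_L$ and needs no finite-population machinery. The price is a triangle inequality through $\mathcal T_{\kappa_0}$ that ignores the nesting and must treat $N=n$ separately. Both arguments give $\eta_{L,N}=O_p(s/\sqrt n)$.
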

\begin{proof}
First, the basic conditions already imply $s/n\to0$. Put
$C_{\rm tr}:=\operatorname{tr}(\mathcal T_{\kappa_0})<\infty$ and
$g_s:=\lambda_s-\lambda_{s+1}$. Since $s\lambda_s\le C_{\rm tr}$
and $\xi/\lambda_s\to0$, along indices with $N>n$ we have eventually
\[
 \frac{s}{n}
 \le\frac{C_{\rm tr}}{n\lambda_s}
 \le\frac{2C_{\rm tr}}{n(\lambda_s+\xi)}\longrightarrow0
\]
by the labeled condition in \eqref{eq:basic-tuning-regime}. Along indices with $N=n$,
$g_s\le\lambda_s\le C_{\rm tr}/s$ gives
\[
 \frac{s^2}{n}\le\frac{C_{\rm tr}^2}{n g_s^2}\longrightarrow0
\]
by the basic eigengap condition, which also implies $s/n\to0$.
For the relative route, $s/n\to0$ follows from its retained assumption
$s^2/n\to0$.

If $N=n$, the centered labeled covariance equals the pooled diagonal
covariance identically, so $\eta_{L,N}=0$ and the sandwich is exact.
It remains to prove the assertions along indices with $N>n$.
For the basic route, condition on the pooled covariates and set
$v_i=\mathbf B_s^{-1/2}\widehat{\boldsymbol\Phi}_s(X_i)$,
$\ell_N=\max(1,\max_i\|v_i\|^2)$ and
$S_N=P_Nvv^\top\preceq I$. The centered finite population
$v_iv_i^\top-S_N$ has norm at most $2\ell_N$ and matrix variance at most
$\ell_N I$. The without-replacement matrix Bernstein bound
\citep[Theorem~1]{gross2010note} gives, with conditional probability
at least $1-\varepsilon$,
\[
 \|P_Lvv^\top-S_N\|_\op
 \le C\left\{\sqrt{\frac{\ell_N\log(2s/\varepsilon)}n}
               +\frac{\ell_N\log(2s/\varepsilon)}n\right\}.
\]
Write $Z_N$ for the left-hand side and $b_N(\varepsilon)$ for the
right-hand side, which is measurable with respect to the pooled covariates.
The zero-extension convention on $\mathcal E_s^c$ gives $Z_N=0$ there,
so the conditional probability bound holds on that event as well.
Bounded kernel and eigenvalue comparability give
$\ell_N\le 1+4B_\kappa/(\widehat\lambda_s+\xi)$.
The labeled condition in \eqref{eq:basic-tuning-regime} gives
$b_N(\varepsilon)=o_p(1)$ for each fixed $\varepsilon\in(0,1)$.
For every fixed $a>0$, integrating the conditional bound yields
\[
 \bbP(Z_N>a)
 \le \bbP\{Z_N>b_N(\varepsilon)\}
       +\bbP\{b_N(\varepsilon)>a\}
 \le \varepsilon+\bbP\{b_N(\varepsilon)>a\}.
\]
Thus $\limsup_{n,N\to\infty}\bbP(Z_N>a)\le\varepsilon$.
Letting $\varepsilon\downarrow0$ proves $Z_N=o_p(1)$ unconditionally.
Centering subtracts $(P_Lv)(P_Lv)^\top$, whose norm is $O_p(s/n)=o_p(1)$
by Lemma~\ref{lem:labeled-full-sample-design-comparison}.

For the relative route, work directly in the population-centered RKHS.
Put $\mu=\lambda_{s+1}$, $R_\mu=(\mathcal T_{\kappa_0}+\mu I)^{-1/2}$
and $d_\mu=\mathcal N(\mu)\lesssim s$, the last inequality following
from \aref{K5}. For $Z=K_X^0$, \aref{K4} and Cauchy--Schwarz give
\[
 \bbE\|R_\mu Z\|^4
 =\sum_{j,k}\frac{\lambda_j\lambda_k\,\bbE[\phi_j^2\phi_k^2]}
                   {(\lambda_j+\mu)(\lambda_k+\mu)}
 \le C d_\mu^2.
\]
For an iid sample of size $m$, let
$\widetilde{\mathcal T}_m=P_mZ\otimes Z$ and
$\widehat{\mathcal T}_m=\widetilde{\mathcal T}_m-\bar Z_m\otimes\bar Z_m$.
Hilbert--Schmidt variance and the mean identity yield
\[
 \bbE\|R_\mu(\widetilde{\mathcal T}_m-\mathcal T_{\kappa_0})R_\mu\|_\HS^2
 \le Cd_\mu^2/m,\qquad
 \bbE\|R_\mu\bar Z_m\|^2=d_\mu/m.
\]
Consequently
$\delta_m:=\|R_\mu(\widehat{\mathcal T}_m-\mathcal T_{\kappa_0})R_\mu\|_\op
=O_p(s/\sqrt m)$.
Apply this separately to $m=n,N$: the labeled observations are marginally
iid, and no independence between these two samples is required.
The assumed $s^2/n\to0$, with $N\ge n$, makes both discrepancies vanish.

The transfer to empirical coordinates does not presume the conclusion.
On $\delta_N<1/2$,
\[
 (1-\delta_N)(\mathcal T_{\kappa_0}+\mu I)
 \preceq\widehat{\mathcal T}_N+\mu I.
\]
Then $\widehat\lambda_s\ge(1-2\delta_N)\lambda_s$.
So $\mu\le2\widehat\lambda_s$ with probability tending to one.
It follows that
$U_N=(\mathcal T_{\kappa_0}+\mu I)^{1/2}\widehat E_s\mathbf B_s^{-1/2}$
has 
\[
\|U_N\|_{\mathrm{op}}^2
\le \frac{1}{1-\delta_N}
\left(1+\frac{\mu}{\widehat\lambda_s}\right)
=O_p(1).
\]
Labeled centering and the empirical
eigenvector equations give the exact identities
\[
 \mathbf C_{L,s}=\widehat E_s^*\widehat{\mathcal T}_L\widehat E_s,\qquad
 \widehat{\boldsymbol\Lambda}_s
       =\widehat E_s^*\widehat{\mathcal T}_N\widehat E_s.
\]
Thus
\[
 \eta_{L,N}\le\|U_N\|_\op^2(\delta_n+\delta_N)
 =O_p(s/\sqrt n+s/\sqrt N)=o_p(1).
\]
This avoids assuming independence of learned eigenvectors and labeled data.
The sandwich follows by adding $I$ to the whitened discrepancy.
\end{proof}

\begin{lemma}[Integrated query and weight bounds]
\label{lem:labeled-weight-bound}
Under either spectral route, let
$\mathbf Q_L=\bbE_{X^\circ}[\mathbf q_L(X^\circ)
\mathbf q_L(X^\circ)^\top\mid\mathscr X_{N,L}]$. Then
\[
 \|\mathbf B_s^{-1/2}\mathbf Q_L\mathbf B_s^{-1/2}\|_\op=O_p(1),\qquad
 \bbE_{X^\circ}\|\mathbf a_{s,\xi}(X^\circ)\|^2=O_p(s/n).
\]
\end{lemma}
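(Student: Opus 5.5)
The plan is to reduce both claims to the whitened matrices already controlled by Lemma~\ref{lem:labeled-gram-stability}, Lemma~\ref{lem:labeled-full-sample-design-comparison}, and the pooled query bound $\|\mathbf B_s^{-1/2}\overline{\mathbf C}_s\mathbf B_s^{-1/2}\|_\op=O_p(1)$. This last bound is supplied by Proposition~\ref{prop:primitive-spectral-verification} on the basic route and by Proposition~\ref{prop:sharp-spectral-verification} on the relative route. On $\mathcal E_s^c$ every feature object is zero by the probability convention, so both claims are trivial there. Since $\bbP(\mathcal E_s)\to1$, we may work on $\mathcal E_s$.

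\emph{Query second moment.} Since $\mathbf q_L(x)=\widehat{\boldsymbol\Phi}_s(x)-\overline{\boldsymbol\Phi}_L$ and $\overline{\boldsymbol\Phi}_L$ is $\mathscr X_{N,L}$-measurable, expanding the square gives
\[
 \mathbf Q_L=\overline{\mathbf C}_s-\mathbf m\overline{\boldsymbol\Phi}_L^\top-\overline{\boldsymbol\Phi}_L\mathbf m^\top+\overline{\boldsymbol\Phi}_L\overline{\boldsymbol\Phi}_L^\top,
 \qquad \mathbf m:=\bbE_{X^\circ}[\widehat{\boldsymbol\Phi}_s(X^\circ)\mid\mathscr X_{N,L}].
\]
The elementary inequality $(u-v)(u-v)^\top\preceq2uu^\top+2vv^\top$ applied inside the expectation then yields $\mathbf Q_L\preceq2\overline{\mathbf C}_s+2\overline{\boldsymbol\Phi}_L\overline{\boldsymbol\Phi}_L^\top$. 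After whitening by $\mathbf B_s^{-1/2}$, the first term is $O_p(1)$ by the propositions above. The second has operator norm $\|\mathbf B_s^{-1/2}\overline{\boldsymbol\Phi}_L\|^2$, whose conditional mean over $L$ is at most $s/n$ by Lemma~\ref{lem:labeled-full-sample-design-comparison}. Conditional Markov gives that this is $O_p(s/n)$, and $s/n\to0$ by Lemma~\ref{lem:labeled-gram-stability}. This proves the first claim.

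\emph{Weight norm.} We start from the orthogonal decomposition recorded in Section~\ref{app:full-procedures},
\[
 \|\mathbf a_{s,\xi}(x)\|^2=\frac1n+\frac1n\mathbf q_L(x)^\top\mathbf A_s\mathbf C_{L,s}\mathbf A_s\mathbf q_L(x).
\]
Since $\mathbf C_{L,s}\preceq\mathbf C_{L,s}+\xi\mathbf I_s$, we have $\mathbf A_s\mathbf C_{L,s}\mathbf A_s\preceq\mathbf A_s$. Integrating over $X^\circ$ gives
\[
 \bbE_{X^\circ}\|\mathbf a_{s,\xi}(X^\circ)\|^2\le\frac1n+\frac1n\operatorname{tr}(\mathbf A_s\mathbf Q_L).
\]
We then write $\operatorname{tr}(\mathbf A_s\mathbf Q_L)=\operatorname{tr}\{(\mathbf B_s^{1/2}\mathbf A_s\mathbf B_s^{1/2})(\mathbf B_s^{-1/2}\mathbf Q_L\mathbf B_s^{-1/2})\}$. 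This is at most $s\,\|\mathbf B_s^{1/2}\mathbf A_s\mathbf B_s^{1/2}\|_\op\,\|\mathbf B_s^{-1/2}\mathbf Q_L\mathbf B_s^{-1/2}\|_\op$. The first norm is at most $(1-\delta)^{-1}$ with probability tending to one by the sandwich \eqref{eq:labeled-inverse-sandwich}. The second is $O_p(1)$ by the first claim. Hence the integrated weight norm is $O_p((1+s)/n)=O_p(s/n)$, using $s\ge1$.

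\emph{Main obstacle.} The delicate point is the bound on $\overline{\boldsymbol\Phi}_L$. It depends on $L$ and on the pooled eigensystem jointly. I would therefore keep the argument conditional on the pooled covariates, where the eigensystem is fixed and $L$ is a uniform subset. Lemma~\ref{lem:labeled-full-sample-design-comparison} then applies exactly, and converting the conditional Markov bound to an unconditional $O_p$ statement follows the same integration step used in the proof of Lemma~\ref{lem:labeled-gram-stability}. The remaining steps only intersect finitely many events of probability tending to one.
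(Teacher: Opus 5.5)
Your proposal is correct and follows essentially the same route as the paper. Both arguments use $\mathbf Q_L\preceq2\overline{\mathbf C}_s+2\overline{\boldsymbol\Phi}_L\overline{\boldsymbol\Phi}_L^\top$ together with the pooled query bound and Lemma~\ref{lem:labeled-full-sample-design-comparison}, and then combine $\mathbf A_s\mathbf C_{L,s}\mathbf A_s\preceq\mathbf A_s$ with the sandwich \eqref{eq:labeled-inverse-sandwich} to get $\operatorname{tr}(\mathbf A_s\mathbf Q_L)=O_p(s)$; you simply make explicit the trace inequality and the conditional-Markov step that the paper leaves implicit.
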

\begin{proof}
Use $\mathbf Q_L\preceq2\overline{\mathbf C}_s+
2\overline{\boldsymbol\Phi}_L\overline{\boldsymbol\Phi}_L^\top$,
the applicable pooled query bound and
\eqref{eq:labeled-full-sample-design-comparison}. Since
$\mathbf A_s\mathbf C_{L,s}\mathbf A_s
=\mathbf A_s-\xi\mathbf A_s^2\preceq\mathbf A_s$,
\[
 \bbE_{X^\circ}\|\mathbf a_{s,\xi}(X^\circ)\|^2
 \le n^{-1}+n^{-1}\operatorname{tr}(\mathbf A_s\mathbf Q_L)
 =O_p(s/n)
\]
by \eqref{eq:labeled-inverse-sandwich}. The random inverse is bounded
before averaging, not removed from an expectation over $L$.
\end{proof}

\begin{lemma}[$L^2$ conditional response-noise bound]
\label{lem:sharp-response-noise}
Under either spectral route and \aref{R2},
\[
 \left\|\sup_t|\widehat C_{s,\xi}(t;X^\circ)
                 -\overline C^L_{s,\xi}(t;X^\circ)|\right\|_{2,X\mid\mathscr D_{n,N}}
 =O_p(\sqrt{s/n}).
\]
\end{lemma}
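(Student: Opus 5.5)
By Lemma~\ref{lem:directional-algebra}, the difference $\widehat C_{s,\xi}(t;x)-\overline C^L_{s,\xi}(t;x)$ is exactly the weighted noise sum $\sum_{i\in L}a_{s,\xi,i}(x)\zeta_t(X_i,Y_i)$, the first term of the fixed-design decomposition in Lemma~\ref{lem:fixed-design-decomposition}. The plan is to condition on $\mathscr X_{N,L}$ and the query $X^\circ$. Since the weights $\mathbf a_{s,\xi}(X^\circ)$ depend only on covariates, $L$ and the query, conditionally on these they are a deterministic vector in $\mathbb R^n$. The labeled responses remain conditionally independent with $Y_i\sim\bbP_{Y\mid X=X_i}$, and the $X_i$ lie in $\cX_0$ almost surely. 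Condition~\aref{R2}, applied with $k=n$, $x_i=X_i$ and $\mathbf w=\mathbf a_{s,\xi}(X^\circ)$, then gives
\[
 \bbE\Big[\sup_t\Big|\sum_{i\in L}a_{s,\xi,i}(X^\circ)\zeta_t(X_i,Y_i)\Big|^2\,\Big|\,\mathscr X_{N,L},X^\circ\Big]
 \le C_M^2\|\mathbf a_{s,\xi}(X^\circ)\|_2^2 .
\]
The query is independent of the responses, so this step needs no independence between the learned features and the labels. The measurable separable versions ensure that the supremum is a random variable.

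Next, integrate over $X^\circ$ and use Tonelli's theorem. This bounds $\bbE[\|\sup_t|\cdot|\|_{2,X\mid\mathscr D_{n,N}}^2\mid\mathscr X_{N,L}]$ by $C_M^2\,\bbE_{X^\circ}\|\mathbf a_{s,\xi}(X^\circ)\|^2$. By Lemma~\ref{lem:labeled-weight-bound}, which holds under either spectral route, the latter is $O_p(s/n)$.

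The final step converts this conditional second-moment bound into the stated $O_p$ claim about the $\mathscr D_{n,N}$-measurable random variable $V:=\|\sup_t|\cdot|\|_{2,X\mid\mathscr D_{n,N}}^2$. Set $W:=\bbE_{X^\circ}\|\mathbf a_{s,\xi}(X^\circ)\|^2$, which is $\mathscr X_{N,L}$-measurable. For $M>0$, apply the conditional Markov inequality on the event $\{W\le Ms/n\}$ to get
\[
 \bbP(V>M^2s/n)\le \bbP(W>Ms/n)+C_M^2/M .
\]
Choosing $M$ large makes both terms small. This gives $V=O_p(s/n)$, and taking square roots proves the claim.

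The only delicate point is this last conversion. The weight bound is only $O_p$, not a bound in expectation, because the random inverse $\mathbf A_s$ is controlled only on a high-probability event. So one should not take unconditional expectations; the truncation-on-$W$ argument handles this. On $\mathcal E_s^c$ the zero-feature convention leaves the weights equal to $1/n$, and the same bound holds there.
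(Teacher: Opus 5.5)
Your proposal is correct and follows the paper's proof essentially step for step. You identify the difference as the weighted noise sum $\sum_{i\in L}a_{s,\xi,i}(X^\circ)\zeta_t(X_i,Y_i)$, apply \aref{R2} with the weights frozen by conditioning on the covariates, $L$ and the query, integrate over $X^\circ$, invoke Lemma~\ref{lem:labeled-weight-bound}, and finish with a conditional Markov inequality on the event where the integrated weight norm is at most a multiple of $s/n$; your explicit truncation display just spells out the paper's final sentence.
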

\begin{proof}
Conditional on covariates, $L$ and a query, the difference equals
$\sum_{i\in L}a_{s,\xi,i}(X^\circ)\zeta_t(X_i,Y_i)$.
The weights are fixed under this conditioning and do not depend on the responses. Apply \aref{R2},
square and integrate over the independent query. The conditional expected
squared norm is bounded by
$C_M^2\bbE_{X^\circ}\|\mathbf a_{s,\xi}(X^\circ)\|^2=O_p(s/n)$.
Conditional Markov's inequality on events where this bound is at most
$Cs/n$ proves the assertion. No inverse moments on exceptional events
are required for this probability-order conclusion.
\end{proof}

\begin{lemma}[Hilbert-valued stability of the fitted smoother]
\label{lem:hilbert-smoother-stability}
Under either spectral route, let $\mathcal V$ be a separable Hilbert space.
There is a nonnegative $\mathscr X_{N,L}$-measurable random constant
$C_{n,N}=O_p(1)$ such that, simultaneously for all inputs
$(V_i)_{i\in L}\in\mathcal V^n$,
\[
 \left\|\sum_{i\in L}a_{s,\xi,i}(\cdot)V_i
       \right\|_{L^2(\bbP_X;\mathcal V)}^2
 \le C_{n,N}\,\frac1n\sum_{i\in L}\|V_i\|_{\mathcal V}^2.
\]
The inputs may depend on the pooled covariates and on $L$. No independence
between the inputs and the weights is required for this inequality.
\end{lemma}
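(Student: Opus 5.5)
The idea is to exploit the explicit form of the weights in \eqref{eq:calibrated-weights-main}: the smoother applied to Hilbert-valued inputs splits into a labeled mean plus a ridge-regression term. Each piece can then be bounded by a deterministic operator inequality times a random constant depending only on $\mathscr X_{N,L}$. Because the inequality is deterministic in the inputs, no independence between $(V_i)$ and the weights is needed.

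\emph{Decomposition.} Write $\mathbf A_s=(\mathbf C_{L,s}+\xi\mathbf I_s)^{-1}$, and put $\overline V_L:=n^{-1}\sum_{i\in L}V_i$ and $\mathbf W:=n^{-1}\sum_{i\in L}\mathbf q_L(X_i)\otimes V_i\in\mathbb R^s\otimes\mathcal V$. The weight formula gives the exact identity
\[
 \sum_{i\in L}a_{s,\xi,i}(x)V_i=\overline V_L+\mathbf W^*\mathbf A_s\mathbf q_L(x).
\]
Hence, by $(a+b)^2\le2a^2+2b^2$, it suffices to bound the two pieces. The first satisfies $\|\overline V_L\|^2\le n^{-1}\sum_i\|V_i\|^2$ by Jensen.

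\emph{Second piece.} Fix an orthonormal basis $(f_k)$ of $\mathcal V$ and set $\mathbf w_k:=n^{-1}\sum_{i\in L}\mathbf q_L(X_i)\langle V_i,f_k\rangle\in\mathbb R^s$. With $\mathbf Q_L$ as in Lemma~\ref{lem:labeled-weight-bound}, Tonelli gives
\[
 \bbE_{X^\circ}\|\mathbf W^*\mathbf A_s\mathbf q_L(X^\circ)\|^2
 =\sum_k\mathbf w_k^\top\mathbf A_s\mathbf Q_L\mathbf A_s\mathbf w_k
 \le\kappa_{n,N}\sum_k\mathbf w_k^\top\mathbf A_s\mathbf w_k,
 \qquad
 \kappa_{n,N}:=\|\mathbf A_s^{1/2}\mathbf Q_L\mathbf A_s^{1/2}\|_\op .
\]

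Now write $\mathbf w_k=n^{-1}(\mathbf H_L\mathbf F)^\top\mathbf v_k$, where $\mathbf v_k=(\langle V_i,f_k\rangle)_{i\in L}$. Then
\[
 \mathbf w_k^\top\mathbf A_s\mathbf w_k
 =n^{-1}\,\mathbf v_k^\top\bigl\{n^{-1}\mathbf H_L\mathbf F\mathbf A_s\mathbf F^\top\mathbf H_L\bigr\}\mathbf v_k .
\]
The bracketed matrix has the same nonzero eigenvalues as $\mathbf C_{L,s}(\mathbf C_{L,s}+\xi\mathbf I_s)^{-1}\preceq\mathbf I_s$. Therefore
\[
 \sum_k\mathbf w_k^\top\mathbf A_s\mathbf w_k\le n^{-1}\sum_k\|\mathbf v_k\|^2=n^{-1}\sum_{i\in L}\|V_i\|^2 .
\]
Combining the two pieces gives the claim with the $\mathscr X_{N,L}$-measurable constant $C_{n,N}:=2(1+\kappa_{n,N})$, valid simultaneously for all inputs.

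\emph{Stochastic order of $C_{n,N}$.} This is the only probabilistic step, and I expect it to be the main point to check. Bound
\[
 \kappa_{n,N}\le\|\mathbf B_s^{1/2}\mathbf A_s\mathbf B_s^{1/2}\|_\op\,\|\mathbf B_s^{-1/2}\mathbf Q_L\mathbf B_s^{-1/2}\|_\op .
\]
The first factor is at most $(1-\delta)^{-1}$ with probability tending to one by \eqref{eq:labeled-inverse-sandwich} in Lemma~\ref{lem:labeled-gram-stability}. The second factor is $O_p(1)$ by Lemma~\ref{lem:labeled-weight-bound}; this rests on the pooled query bound (Proposition~\ref{prop:primitive-spectral-verification} or Lemma~\ref{lem:sharp-leverage-stability}) and on Lemma~\ref{lem:labeled-full-sample-design-comparison}. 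Hence $C_{n,N}=O_p(1)$ under either spectral route.

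On $\mathcal E_s^c$ the features vanish, so $\mathbf Q_L=0$ and the bound holds trivially with $C_{n,N}=2$. Throughout, the random inverse is controlled pathwise before any averaging over the inputs or over $L$.
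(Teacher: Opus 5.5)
Your proof is correct and follows essentially the paper's argument. Both use the same split into the labeled mean plus the centered ridge-slope term, and the same query-integration step $\mathbf A_s\mathbf Q_L\mathbf A_s\preceq\kappa_{n,N}\mathbf A_s$, with $\kappa_{n,N}$ controlled by Lemmas~\ref{lem:labeled-gram-stability} and~\ref{lem:labeled-weight-bound}; your $\kappa_{n,N}$ is exactly the paper's $c_{n,N}$, since $\mathbf A_s^{-1}=\mathbf C_{L,s}+\xi\mathbf I_s$.

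The only difference is cosmetic. The paper bounds the slope term jointly with the mean, $\|\overline V_L\|^2+\|\widehat B(\mathbf C_{L,s}+\xi\mathbf I_s)^{1/2}\|_{\mathrm{HS}}^2\le n^{-1}\sum_{i\in L}\|V_i\|^2$, from nonnegativity of the Hilbert-valued ridge objective at its optimum. This gives the slightly smaller constant $2\max\{1,c_{n,N}\}$ instead of your $2(1+\kappa_{n,N})$. You instead use the hat-matrix eigenvalue bound $\mathbf C_{L,s}(\mathbf C_{L,s}+\xi\mathbf I_s)^{-1}\preceq\mathbf I_s$.
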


\begin{proof}
Put $\mathbf D_L=\mathbf C_{L,s}+\xi\mathbf I_s$ and retain
$\mathbf Q_L=\bbE_{X^\circ}[\mathbf q_L(X^\circ)\mathbf q_L(X^\circ)^\top
\mid\mathscr X_{N,L}]$. Lemmas~\ref{lem:labeled-gram-stability} and
\ref{lem:labeled-weight-bound} imply
\[
 c_{n,N}:=\|\mathbf D_L^{-1/2}\mathbf Q_L\mathbf D_L^{-1/2}\|_\op=O_p(1).
\]
For the fixed input values set $\bar V_L=n^{-1}\sum_i V_i$ and define
$\widehat B:\mathbb R^s\to\mathcal V$ by
\[
 \widehat B\mathbf D_L
   =\frac1n\sum_{i\in L}(V_i-\bar V_L)\otimes\mathbf q_L(X_i).
\]
This is the unique slope in the Hilbert-valued centered ridge fit, and
its fitted function is $\bar V_L+\widehat B\mathbf q_L(x)$.
Expansion of its penalized least-squares objective at the optimum gives
\[
 0\le \frac1n\sum_i\|V_i-\bar V_L\|^2
       -\|\widehat B\mathbf D_L^{1/2}\|_\HS^2.
\]
Since $n^{-1}\sum_i\|V_i\|^2=\|\bar V_L\|^2+
 n^{-1}\sum_i\|V_i-\bar V_L\|^2$, it follows that
\[
 \|\bar V_L\|^2+\|\widehat B\mathbf D_L^{1/2}\|_\HS^2
 \le\frac1n\sum_i\|V_i\|^2.
\]
Query integration and $\mathbf Q_L\preceq c_{n,N}\mathbf D_L$ yield
\[
\begin{split}
 \|\bar V_L+\widehat B\mathbf q_L\|_{L^2}^2
 &\le2\|\bar V_L\|^2+2\|\widehat B\mathbf Q_L^{1/2}\|_\HS^2\\
 &\le2\max\{1,c_{n,N}\}
       \big(\|\bar V_L\|^2+\|\widehat B\mathbf D_L^{1/2}\|_\HS^2\big).
\end{split}
\]
Take $C_{n,N}=2\max\{1,c_{n,N}\}$. This constant does not depend on
$\mathcal V$ or the input values. All inequalities are deterministic
conditional on the design, which proves the simultaneous assertion.
\end{proof}

\begin{lemma}[Propagation of the fractional-source residual]
\label{lem:approximation-residual-bound}
Under either spectral route and \aref{M1}, \aref{M2}, and \aref{R1},
\[
 \left\|\sup_t|(\widehat S-I)r_{t,s}|\right\|_{2,X\mid\mathscr X_{N,L}}
 =O_p(b_{r,s}).
\]
Under the Hilbert-source assumptions of
Proposition~\ref{prop:hilbert-automatic-complexity}, the right side can
be replaced by $O_p(b^{\mathrm H}_{r,s})$.
\end{lemma}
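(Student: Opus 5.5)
The plan is to use the Hilbert-valued tail representation of Lemma~\ref{lem:automatic-spectral-truncation} together with the smoother stability of Lemma~\ref{lem:hilbert-smoother-stability}. Write
\[
 r_{t,s}(x)=\langle a_{>s}(t),v_{r,s}(x)\rangle_{\ell^2},\qquad
 \sup_t\|a_{>s}(t)\|_{\ell^2}\le B_r .
\]
This holds on the common full-measure set of Lemma~\ref{lem:population-rkhs-surrogate}, which contains all observed covariates and the query almost surely. The smoother $\widehat S$ is linear with scalar weights, so
\[
 (\widehat S-I)r_{t,s}(x)=\big\langle a_{>s}(t),\,(\widehat S v_{r,s})(x)-v_{r,s}(x)\big\rangle_{\ell^2}.
\]
Here $\widehat S v_{r,s}=\sum_{i\in L}a_{s,\xi,i}(\cdot)v_{r,s}(X_i)$ is the $\ell^2$-valued smoother applied to the fixed, response-independent inputs $V_i=v_{r,s}(X_i)$. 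Cauchy--Schwarz, applied before the supremum, then gives pointwise
\[
 \sup_t|(\widehat S-I)r_{t,s}(x)|\le B_r\big(\|\widehat S v_{r,s}(x)\|_{\ell^2}+\|v_{r,s}(x)\|_{\ell^2}\big).
\]
This step removes the index $t$ entirely, so no chaining or entropy argument is needed and no $\sqrt s$ factor appears.

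Next I take the conditional $L^2$ norm over the independent query $X^\circ$ given $\mathscr X_{N,L}$. The second term equals $\|v_{r,s}\|_{L^2(\bbP_X;\ell^2)}=(\sum_{j>s}\lambda_j^r)^{1/2}$ by Lemma~\ref{lem:automatic-spectral-truncation}, so $B_r$ times it is exactly $b_{r,s}$. For the first term, Lemma~\ref{lem:hilbert-smoother-stability} with $\mathcal V=\ell^2$ applies, since the inputs depend only on the pooled covariates. It gives
\[
 \|\widehat S v_{r,s}\|^2_{L^2(\bbP_X;\ell^2)}\le C_{n,N}\,P_L\|v_{r,s}\|_{\ell^2}^2,\qquad C_{n,N}=O_p(1).
\]

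It remains to control the labeled empirical average $P_L\|v_{r,s}\|^2_{\ell^2}$. The subset $L$ is uniform and independent of the i.i.d.\ covariates, so the labeled covariates are marginally i.i.d.\ from $\bbP_X$. Tonelli then gives
\[
 \bbE\, P_L\|v_{r,s}\|^2=\sum_{j>s}\lambda_j^r\,\bbE\phi_j(X)^2=\sum_{j>s}\lambda_j^r,
\]
and Markov's inequality yields $P_L\|v_{r,s}\|^2=O_p(b_{r,s}^2/B_r^2)$. Combining the pieces gives $O_p(b_{r,s})$. If $b_{r,s}=0$, the tail vanishes almost surely and the claim is trivial.

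The main obstacle is the measurability and version bookkeeping rather than the estimates. The supremum must be taken over the separable version, and the identity $r_{t,s}=\langle a_{>s}(t),v_{r,s}\rangle$ must hold simultaneously for all $t$ at the sample points; both are supplied by the common full-measure set of Lemma~\ref{lem:population-rkhs-surrogate}.

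For the Hilbert-response replacement, I would repeat the same argument with the deterministic tail $V_s=\sum_{j>s}\theta_j\phi_j$ and $r_{t,s}=-2\langle u_t,V_s\rangle$, where $\|u_t\|=1$. Its mean $\bbE P_L\|V_s\|^2=\sum_{j>s}\|\theta_j\|^2$ then gives $O_p(b^{\mathrm H}_{r,s})$, as already sketched in the proof of Proposition~\ref{prop:hilbert-automatic-complexity}.
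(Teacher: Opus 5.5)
Your proposal is correct and follows the paper's proof essentially step for step. Both arguments use the same chain of steps:
\begin{itemize}
\item the common $\ell^2$ tail representation from Lemma~\ref{lem:automatic-spectral-truncation};
\item Cauchy--Schwarz to eliminate the index $t$;
\item Lemma~\ref{lem:hilbert-smoother-stability} with $\mathcal V=\ell^2$, together with Markov's inequality on $P_L\|v_{r,s}\|^2$, justified because the labeled covariates are marginally i.i.d.;
\item the triangle inequality, with the Hilbert case handled by the deterministic vector tail $\sum_{j>s}\theta_j\phi_j$.
\end{itemize}
The only cosmetic difference is ordering: you split $(\widehat S-I)v_{r,s}$ pointwise before integrating, whereas the paper first bounds $B_r\|(\widehat S-I)v_{r,s}\|$ and applies the triangle inequality afterward.
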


\begin{proof}
Use the common $\ell^2$ representation in
Lemma~\ref{lem:automatic-spectral-truncation}. Linearity gives
\[
 \sup_t|(\widehat S-I)r_{t,s}(x)|
 \le B_r\|(\widehat S-I)v_{r,s}(x)\|_{\ell^2}.
\]
The vector $v_{r,s}$ is deterministic because the population kernel,
its eigenfunctions, and $s$ are specified without observing the data.
The labeled covariates are marginally iid, so
\[
 \bbE P_L\|v_{r,s}\|^2=\|v_{r,s}\|_{L^2}^2=\sum_{j>s}\lambda_j^r.
\]
Markov's inequality and Lemma~\ref{lem:hilbert-smoother-stability} give
$\|\widehat S v_{r,s}\|_{L^2}=O_p(\|v_{r,s}\|_{L^2})$. The triangle
inequality proves the result. No conditional-mean-zero assumption is
made on this residual. For Hilbert responses use the deterministic
vector tail $\sum_{j>s}\theta_j\phi_j$ instead; its squared $L^2$ norm
is $\sum_{j>s}\|\theta_j\|^2$, as established in
Proposition~\ref{prop:hilbert-automatic-complexity}.
\end{proof}

\begin{lemma}[Fitted representation residual]
\label{lem:sharp-subspace-perturbation}
Under either spectral route and \aref{R1}, let $e_{N,s}$ be the
deterministic spectral scale for that route:
\[
 e_{N,s}^{\mathrm{basic}}=\{\sqrt N(\lambda_s-\lambda_{s+1})\}^{-1},
 \qquad
 e_{N,s}^{\mathrm{ref}}=\left\{\frac{s^{2-q}\log(es)\log N}{N}\right\}^{1/2}.
\]
Then
\[
 \left\|\sup_t|(\widehat S-I)d_{t,s}|\right\|_{2,X\mid\mathscr X_{N,L}}
 =O_p(G_{r,s}e_{N,s}).
\]
This bound accounts for fitting the data-dependent representation
residual on the same pooled covariates used to learn the eigenspace.
\end{lemma}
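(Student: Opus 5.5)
We reduce the supremum over $t$ to a single Hilbert-valued function, then apply Lemma~\ref{lem:hilbert-smoother-stability}. Since $g_{t,s}\in\mathcal H_s=\operatorname{ran}(\Pi_{\le s})$, we have $d_{t,s}=(I-\widehat\Pi_{\le s})\Pi_{\le s}g_{t,s}$. For $x\in\mathcal S$ the reproducing property gives
\[
 d_{t,s}(x)=\langle g_{t,s},V(x)\rangle_{\mathcal H_{\kappa_0}},\qquad
 V(x):=A K_x^0,\qquad A:=\Pi_{\le s}(I-\widehat\Pi_{\le s})=\Pi_{\le s}(\Pi_{\le s}-\widehat\Pi_{\le s}).
\]
The operator $A$ is $\mathscr X_{N,L}$-measurable (indeed pooled-measurable) and finite rank. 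The smoother $\widehat S$ is linear with weights free of $t$, so
\[
 (\widehat S-I)d_{t,s}(x)=\langle g_{t,s},(\widehat S-I)V(x)\rangle .
\]
Lemma~\ref{lem:population-rkhs-surrogate} then yields $\sup_t|(\widehat S-I)d_{t,s}(x)|\le G_{r,s}\|(\widehat S-I)V(x)\|$. It therefore suffices to show $\|(\widehat S-I)V\|_{L^2(\bbP_X;\mathcal H_{\kappa_0})}=O_p(e_{N,s})$, since the remaining factor $G_{r,s}$ is deterministic.

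We bound the two pieces of $(\widehat S-I)V$ by the triangle inequality.

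\emph{Population piece.} By \eqref{eq:conditional-operator-moment-lifting-population},
\[
 \|V\|_{L^2}^2=\operatorname{tr}(A\mathcal T_{\kappa_0}A^*)\le\|\mathcal T_{\kappa_0}^{1/2}(\widehat\Pi_{\le s}-\Pi_{\le s})\|_\HS^2=E_{N,s}^2 .
\]
This is $O_p(e_{N,s}^2)$ by Proposition~\ref{prop:primitive-spectral-verification} on the basic route, or Proposition~\ref{prop:sharp-spectral-verification} on the relative route.

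\emph{Smoothed piece.} Lemma~\ref{lem:hilbert-smoother-stability} explicitly allows inputs depending on the pooled covariates and $L$. It gives $\|\widehat SV\|_{L^2}^2\le C_{n,N}\,P_L\|V\|^2$ with $C_{n,N}=O_p(1)$. It remains to control the labeled empirical second moment $P_L\|V\|^2$, despite the reuse of labeled covariates in learning $\widehat\Pi_{\le s}$.

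To do so, condition on the pooled covariates. Since $L$ is uniform and independent of them, $\bbE_L P_L\|V\|^2=P_N\|V\|^2=\operatorname{tr}(A\widetilde{\mathcal T}_NA^*)$, where $\widetilde{\mathcal T}_N=\widehat{\mathcal T}_{\kappa_0,N}+\overline K_N^0\otimes\overline K_N^0$. A conditional Markov argument then reduces the task to bounding $\operatorname{tr}(A\widetilde{\mathcal T}_NA^*)$.

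The rank-one term is at most $\|\Pi_{\le s}-\widehat\Pi_{\le s}\|_\op^2\|\overline K_N^0\|^2=O_p(N^{-1})$, which is negligible. For the main term, $\widehat{\mathcal T}_{\kappa_0,N}$ restricted to $\operatorname{ran}(I-\widehat\Pi_{\le s})$ is bounded by $\widehat\lambda_{s+1}$. Hence
\[
 \operatorname{tr}(A\widehat{\mathcal T}_{\kappa_0,N}A^*)\le\widehat\lambda_{s+1}\|\Pi_{\le s}-\widehat\Pi_{\le s}\|_\HS^2 .
\]
Next, the identity in the proof of Lemma~\ref{lem:relative-pca-projector-control} writes $\|\mathcal T_{\kappa_0}^{1/2}(\widehat\Pi_{\le s}-\Pi_{\le s})\|_\HS^2=\sum_{j\le s}\lambda_ja_j+\sum_{k>s}\lambda_kb_k$. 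Using $\lambda_j>\lambda_{s+1}$ for $j\le s$ and $\sum_j a_j=\tfrac12\|\widehat\Pi_{\le s}-\Pi_{\le s}\|_\HS^2$, this gives $\lambda_{s+1}\|\widehat\Pi_{\le s}-\Pi_{\le s}\|_\HS^2\le 2E_{N,s}^2$.

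\emph{Relative route.} Lemma~\ref{lem:empirical-eigenvalue-comparability} gives $\widehat\lambda_{s+1}\le 2\lambda_{s+1}$ with probability tending to one, so the main term is $O_p(E_{N,s}^2)$.

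\emph{Basic route.} Weyl's inequality gives $\widehat\lambda_{s+1}\le\lambda_{s+1}+\|\Delta_N\|_\op$. The excess is at most
\[
 \|\Delta_N\|_\op\cdot 2s\|\widehat\Pi_{\le s}-\Pi_{\le s}\|_\op^2=O_p\!\left(\frac{s}{\sqrt N}\,e_{N,s}^2\right).
\]
Here $s/\sqrt N\le C_{\rm tr}/(\lambda_s\sqrt N)\le C_{\rm tr}/\{(\lambda_s-\lambda_{s+1})\sqrt N\}=o(1)$ by \aref{K3}, using $s\lambda_s\le C_{\rm tr}:=\operatorname{tr}(\mathcal T_{\kappa_0})$.

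In both routes $P_L\|V\|^2=O_p(e_{N,s}^2)$. Combining the two pieces gives the lemma; on $\mathcal E_s^c$ the zero-extension convention applies and $\bbP(\mathcal E_s^c)\to0$.

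The main obstacle is the data-reuse step: bounding the labeled empirical norm $P_L\|V\|^2$ of a function built from the same pooled sample. The plan handles it by averaging over the random labeled subset to reach the pooled empirical covariance, and then exploiting that $\widehat{\mathcal T}_{\kappa_0,N}$ is small, at most $\widehat\lambda_{s+1}$, on the empirical complement $\operatorname{ran}(I-\widehat\Pi_{\le s})$. On the basic route the extra care needed is to check that the eigenvalue perturbation does not inflate the rate, which is where the bound $s/\sqrt N=o(1)$ is used.
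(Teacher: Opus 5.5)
You have the same skeleton as the paper. Pulling $G_{r,s}$ out through $\sup_t\|g_{t,s}\|_{\mathcal H_{\kappa_0}}$ is equivalent to the paper's factorization $g_{t,s}=A_{r,s}w_t$. The population piece, the appeal to Lemma~\ref{lem:hilbert-smoother-stability}, the identity $\bbE_L P_L\|V\|^2=P_N\|V\|^2$, and your $\widehat\lambda_{s+1}$ bound for the centered main term are all correct on both routes.

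\textbf{The gap: the rank-one centering term on the relative route.} You bound $\|A\overline K_N^0\|^2$ by $\|\overline K_N^0\|^2=O_p(N^{-1})$ and call it negligible. On that route the target is $(e^{\mathrm{ref}}_{N,s})^2=s^{2-q}\log(es)\log N/N$, so the ratio is $N^{-1}/(e^{\mathrm{ref}}_{N,s})^2=s^{q-2}/\{\log(es)\log N\}$. At the cutoff $s\asymp n^{1/(qr)}$ of Corollary~\ref{cor:polynomial-rate} with $N=n$, where $qr>2$ forces $q>2$, this diverges polynomially. Keeping the factor $\|\Pi_{\le s}-\widehat\Pi_{\le s}\|_\op^2$ and inserting the Hilbert--Schmidt projector rate does not help: the ratio becomes $s^q/N\asymp n^{1/r-1}$, which is unbounded for $r<1$. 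So on the refined route your argument only yields $O_p\bigl(G_{r,s}(e_{N,s}+N^{-1/2})\bigr)$, which is weaker than the lemma. On the basic route the step is fine, since $N^{-1}\le\lambda_1^2(e^{\mathrm{basic}}_{N,s})^2$.

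\textbf{The fix.} Measure the empirical mean relative to the regularized covariance rather than in plain norm. With $\mu=\lambda_{s+1}$,
\[
 \|A\overline K_N^0\|^2
 \le\|A(\mathcal T_{\kappa_0}+\mu\mathrm{Id})^{1/2}\|_\HS^2\,\nu_{s,N},
 \qquad
 \|A(\mathcal T_{\kappa_0}+\mu\mathrm{Id})^{1/2}\|_\HS^2
 =\operatorname{tr}(A\mathcal T_{\kappa_0}A^*)+\mu\|A\|_\HS^2\le2E_{N,s}^2 .
\]
Lemma~\ref{lem:derived-covariance-stability} gives $\nu_{s,N}=O_p(\mathcal N(\mu)/N)=o_p(1)$, so this term is $o_p(E_{N,s}^2)$.

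The paper avoids the split altogether. On both routes $\widetilde{\mathcal T}_N\preceq2(\mathcal T_{\kappa_0}+\lambda_s\mathrm{Id})$ with probability tending to one, so $P_N\|V\|^2\le2\operatorname{tr}(A\mathcal T_{\kappa_0}A^*)+2\lambda_s\|A\|_\HS^2\le4E_{N,s}^2$. The last step uses $\lambda_s\|A\|_\HS^2=\lambda_s\sum_{j\le s}\|(I-\widehat\Pi_{\le s})e_j\|^2\le\sum_{j\le s}\lambda_j\|(I-\widehat\Pi_{\le s})e_j\|^2\le E_{N,s}^2$.

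That same inequality also makes your basic-route detour through $s/\sqrt N$ unnecessary. With probability tending to one, $\widehat\lambda_{s+1}\le\lambda_{s+1}+\|\Delta_N\|_\op\le\lambda_s$, so the centered main term is at most $\lambda_s\|A\|_\HS^2\le E_{N,s}^2$ directly.
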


\begin{proof}
Work on the event $\mathcal E_s$, whose probability tends to one, and
write $T=\mathcal T_{\kappa_0}$, $P=\Pi_{\le s}$, and
$Q=\widehat\Pi_{\le s}$. Set
\[
 A_{r,s}:=\sum_{j\le s}\lambda_j^{(r-1)/2}(e_j\otimes e_j),\qquad
 w_t:=\sum_{j\le s}a_j(t)e_j,\qquad
 V_{s,N}(x):=A_{r,s}(I-Q)K_x^0.
\]
Here $e_j=\sqrt{\lambda_j}\phi_j$, $g_{t,s}=A_{r,s}w_t$, and
$\sup_t\|w_t\|\le B_r$. Self-adjointness gives
\[
 d_{t,s}(x)=\langle w_t,V_{s,N}(x)\rangle,\qquad
 \sup_t|(\widehat S-I)d_{t,s}(x)|
 \le B_r\|(\widehat S-I)V_{s,N}(x)\|.
\]
Since $A_{r,s}P=A_{r,s}$ and
$\|A_{r,s}\|_\op=\lambda_s^{(r-1)/2}$,
\begin{equation}
\label{eq:source-representation-population}
 \|V_{s,N}\|_{L^2(\bbP_X;\mathcal H_{\kappa_0})}^2
 =\|A_{r,s}(P-Q)T^{1/2}\|_\HS^2
 \le\lambda_s^{r-1}E_{N,s}^2.
\end{equation}
We next control its \emph{pooled empirical} norm; an iid expectation
identity cannot be applied to the data-dependent $V_{s,N}$.

Let $\widetilde T_N=P_N(K^0\otimes K^0)$, with
$P_Nf=N^{-1}\sum_{i=1}^Nf(X_i)$. Under the basic route,
$\|\widetilde T_N-T\|_\op=O_p(N^{-1/2})=o_p(\lambda_s)$ by
\aref{K3}. Under the refined route,
Lemma~\ref{lem:derived-covariance-stability}, with
$\mu=\lambda_{s+1}\le\lambda_s$, gives
\[
 \widetilde T_N\preceq(1+\chi_{s,N})T+\chi_{s,N}\mu I,
 \qquad\chi_{s,N}=o_p(1).
\]
Thus in either case the event
$\widetilde T_N\preceq2(T+\lambda_s I)$ has probability tending to one.
On that event,
\[
 P_N\|V_{s,N}\|^2
 \le2\|A_{r,s}(I-Q)T^{1/2}\|_\HS^2
       +2\lambda_s\|A_{r,s}(I-Q)\|_\HS^2.
\]
For the second term, the spectral decomposition of $T$ gives
\[
 \lambda_s\|P(I-Q)\|_\HS^2
 =\lambda_s\sum_{j\le s}\|(I-Q)e_j\|^2
 \le\sum_{j\le s}\lambda_j\|(I-Q)e_j\|^2
 \le E_{N,s}^2.
\]
Combining this with \eqref{eq:source-representation-population} yields
\begin{equation}
\label{eq:source-representation-pooled}
 P_N\|V_{s,N}\|^2\le4\lambda_s^{r-1}E_{N,s}^2
\end{equation}
on an event of probability tending to one.

Conditional on all pooled covariates, $V_{s,N}$ is fixed and the labeled
set is a uniform subset. Therefore the exact identity
\[
 \bbE_L[P_L\|V_{s,N}\|^2\mid X_1,\ldots,X_N]
       =P_N\|V_{s,N}\|^2
\]
holds, also when $N=n$. Since $E_{N,s}=O_p(e_{N,s})$, conditional
Markov on the good events in \eqref{eq:source-representation-pooled}
gives $P_L\|V_{s,N}\|^2=O_p(\lambda_s^{r-1}e_{N,s}^2)$.
Lemma~\ref{lem:hilbert-smoother-stability} now controls
$\|\widehat S V_{s,N}\|_{L^2}$ by the square root of this empirical
quantity, while \eqref{eq:source-representation-population} controls
the unfitted query residual. Multiplying their sum by $B_r$ proves the
claim. This argument never treats the learned projector or the labeled
inverse as independent of the data that define them.
\end{proof}

\begin{lemma}[Ridge bias in the learned subspace]
\label{lem:sharp-fixed-design}
Under either spectral route and \aref{R1},
\[
 \left\|\sup_t\left|\xi\mathbf q_L(\cdot)^\top
       \mathbf A_s\mathbf c_{t,s}\right|\right\|_{2,X\mid\mathscr X_{N,L}}
 =O_p\left(G_{r,s}\frac{\xi}{\lambda_s+\xi}\right).
\]
\end{lemma}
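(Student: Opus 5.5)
We plan to reduce the supremum over $t$ to a single finite-dimensional vector quantity and then bound that quantity by whitening with $\mathbf B_s=\widehat{\boldsymbol\Lambda}_s+\xi\mathbf I_s$. First we write $\mathbf c_{t,s}=\widehat E_s^*g_{t,s}=\widehat E_s^*A_{r,s}w_t$, where, as in the proof of Lemma~\ref{lem:sharp-subspace-perturbation}, $w_t=\sum_{j\le s}a_j(t)e_j$ satisfies $\sup_t\|w_t\|\le B_r$ and $\|A_{r,s}\|_\op=\lambda_s^{(r-1)/2}$. Then
\[
 \sup_t\bigl|\xi\mathbf q_L(x)^\top\mathbf A_s\mathbf c_{t,s}\bigr|
 \le B_r\,\xi\,\bigl\|A_{r,s}\widehat E_s\mathbf A_s\mathbf q_L(x)\bigr\|
 \le G_{r,s}\,\xi\,\bigl\|\mathbf A_s\mathbf q_L(x)\bigr\|_2 ,
\]
since $\widehat E_s$ is an isometry on $\mathcal E_s$ (and everything vanishes on $\mathcal E_s^c$ by the convention). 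This removes the supremum before any integration, so no chaining over $t$ is needed.

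Next we integrate over the independent query, conditionally on $\mathscr X_{N,L}$:
\[
 \xi^2\,\bbE_{X^\circ}\|\mathbf A_s\mathbf q_L(X^\circ)\|^2
 =\xi^2\operatorname{tr}(\mathbf A_s\mathbf Q_L\mathbf A_s)
 \le \xi^2\,\|\mathbf B_s^{-1/2}\mathbf Q_L\mathbf B_s^{-1/2}\|_\op\,
   \operatorname{tr}\text{-free bound}\ \|\mathbf B_s^{1/2}\mathbf A_s^2\mathbf B_s^{1/2}\|_\op .
\]
More precisely, we use the operator-norm version: $\bbE_{X^\circ}\|\mathbf A_s\mathbf q_L\|^2\le\|\mathbf B_s^{-1/2}\mathbf Q_L\mathbf B_s^{-1/2}\|_\op\cdot\sup$ over unit directions. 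The trace form would cost a factor $s$, so we must avoid it. The plan is therefore to bound the pointwise quadratic form via $\mathbf q_L(x)^\top\mathbf A_s^2\mathbf q_L(x)\le\|\mathbf B_s^{1/2}\mathbf A_s\mathbf B_s^{1/2}\|_\op^2\,\mathbf q_L(x)^\top\mathbf B_s^{-2}\mathbf q_L(x)$. We then integrate: $\bbE_{X^\circ}\mathbf q_L^\top\mathbf B_s^{-2}\mathbf q_L=\operatorname{tr}(\mathbf B_s^{-1}\cdot\mathbf B_s^{-1/2}\mathbf Q_L\mathbf B_s^{-1/2})$. This reintroduces a trace, which again gives a factor $s$.

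The main obstacle is exactly this: getting $\xi/(\lambda_s+\xi)$ without a factor $\sqrt s$. The remedy is to keep $w_t$ in place rather than bounding it crudely. We use
\[
 \sup_t|\cdots|^2\le B_r^2\,\xi^2\,\|A_{r,s}\widehat E_s\mathbf A_s\mathbf q_L(x)\|^2 ,
\]
and integrate this as a Hilbert-valued function. By Lemma~\ref{lem:conditional-moment-lifting} its integral equals $\xi^2\operatorname{tr}(A_{r,s}\widehat E_s\mathbf A_s\mathbf Q_L\mathbf A_s\widehat E_s^*A_{r,s})$, which is again a trace of rank $s$. Since a uniform-in-$t$ bound genuinely requires a Hilbert–Schmidt control, we instead accept the trace and exploit the eigenvalue weights. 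On the good events of Lemma~\ref{lem:labeled-gram-stability} we have $\mathbf A_s\preceq(1-\delta)^{-1}\mathbf B_s^{-1}$ and $\mathbf Q_L\preceq C\mathbf B_s$ from Lemma~\ref{lem:labeled-weight-bound}. Hence $\xi^2\mathbf A_s\mathbf Q_L\mathbf A_s\preceq C\xi^2\mathbf B_s^{-1}$, whose entries are at most $C\xi^2/(\widehat\lambda_j+\xi)\le C\xi^2\lambda_j^{-1}(\xi/(\lambda_j+\xi))$ up to comparability $\widehat\lambda_j\asymp\lambda_j$. That comparability comes from Lemma~\ref{lem:empirical-eigenvalue-comparability} on the refined route; on the basic route we use only $\widehat\lambda_j\ge\widehat\lambda_s$.

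Paired against the $A_{r,s}$ weights $\lambda_j^{r-1}$ and the source coefficients, we then use that $\sum_j a_j(t)^2\le B_r^2$ with weights $\lambda_j^{r-1}\xi^2/(\lambda_j+\xi)^2\le\lambda_s^{r-1}\xi^2/(\lambda_s+\xi)^2$, which is monotone in $\lambda_j\ge\lambda_s$ for $r\le1$. For this step we first try to write the criterion as $\langle w_t,\,\xi A_{r,s}\widehat E_s\mathbf A_s\mathbf q_L(x)\rangle$ and bound the supremum over the unit ball of $w_t$ by a self-adjoint operator norm. This requires controlling $\|\mathbf D\,\mathbf q_L(x)\|$ with $\mathbf D=\xi\mathbf B_s^{1/2}\mathbf A_s\widehat E_s^*A_{r,s}^2\widehat E_s\mathbf A_s\mathbf B_s^{1/2}$, of operator norm $\lesssim G_{r,s}^2\xi^2/(\lambda_s+\xi)^2$, together with a pointwise leverage envelope. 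If that fails, we fall back on Markov with the weighted trace, conceding the limitation.
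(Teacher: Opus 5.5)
Your first display is correct, and it is already most of the proof. The proposal never closes it, though. From $G_{r,s}\,\xi\,\|\mathbf A_s\mathbf q_L(x)\|_2$ you look for a $\sqrt s$ obstruction that is not there, and several intermediate steps are false as written:
\begin{itemize}
\item the bound $C\xi^2/(\widehat\lambda_j+\xi)\le C\xi^2\lambda_j^{-1}\xi/(\lambda_j+\xi)$ would need $\lambda_j\le\xi$, contrary to $\xi/\lambda_s\to0$;
\item the pointwise inequality $\mathbf q^\top\mathbf A_s^2\mathbf q\le\|\mathbf B_s^{1/2}\mathbf A_s\mathbf B_s^{1/2}\|_\op^2\,\mathbf q^\top\mathbf B_s^{-2}\mathbf q$ fails when $\mathbf A_s$ and $\mathbf B_s$ do not commute;
\item the monotone-weight step presumes a common diagonalization of $\widehat E_s^*A_{r,s}^2\widehat E_s$ and $\mathbf A_s$, which does not exist.
\end{itemize}
You then end with a fallback that explicitly concedes the rate, so the stated bound is never established.

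The missing ingredient is the plain bounded-kernel envelope. On $\mathcal E_s$ we have $\mathbf q_L(x)=\widehat E_s^*(K_x^0-\overline K_L^0)$, with $\|\widehat E_s^*\|_\op\le1$ and $\kappa_0(x,x)\le4B_\kappa$. Hence $\sup_{x\in\mathcal S}\|\mathbf q_L(x)\|_2\le4\sqrt{B_\kappa}$. The sandwich of Lemma~\ref{lem:labeled-gram-stability}, together with $\widehat\lambda_s/\lambda_s\to1$ in probability (on either route), gives $\|\mathbf A_s\|_\op\le(1-\delta)^{-1}(\widehat\lambda_s+\xi)^{-1}\le C(\lambda_s+\xi)^{-1}$ with probability tending to one. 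Therefore $\sup_x\sup_t|\xi\mathbf q_L(x)^\top\mathbf A_s\mathbf c_{t,s}|\le4C\sqrt{B_\kappa}\,G_{r,s}\,\xi/(\lambda_s+\xi)$ uniformly in the query. This bounds the conditional $L^2$ norm directly, and $\mathcal E_s^c$ has vanishing probability. This one-line Cauchy--Schwarz step is exactly the paper's proof.

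The dimension factor you feared comes only from whitening by $\mathbf B_s^{-1/2}$. The leverage bound $\mathbf Q_L\preceq C\mathbf B_s$ is the right tool for the response-noise term, where $\operatorname{tr}(\mathbf A_s\mathbf Q_L)=O_p(s)$ is the genuine variance, but it is wasteful here. Unwhitened, $\operatorname{tr}(\mathbf A_s\mathbf Q_L\mathbf A_s)\le\|\mathbf A_s\|_\op^2\operatorname{tr}(\mathbf Q_L)$, and $\operatorname{tr}(\mathbf Q_L)=\bbE_{X^\circ}\|\mathbf q_L(X^\circ)\|_2^2\le16B_\kappa$.

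Even your own whitened bound $\xi^2\mathbf A_s\mathbf Q_L\mathbf A_s\preceq C\xi^2\mathbf B_s^{-1}$ would have sufficed. Indeed $\xi^2\operatorname{tr}(\mathbf B_s^{-1})\le Cs\xi^2/(\lambda_s+\xi)=C\{\xi/(\lambda_s+\xi)\}^2\,s(\lambda_s+\xi)$. Once $\xi\le\lambda_s$, $s(\lambda_s+\xi)\le2s\lambda_s\le2\operatorname{tr}(\mathcal T_{\kappa_0})$. So the factor $s$ is absorbed by trace-class summability rather than lost.
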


\begin{proof}
The common RKHS synthesis is an isometry on $\mathcal E_s$, so
$\sup_t\|\mathbf c_{t,s}\|\le G_{r,s}$.
The centered feature bound gives $\sup_x\|\mathbf q_L(x)\|\le4\sqrt{B_\kappa}$.
The labeled covariance sandwich and empirical eigenvalue comparability
imply $\|\mathbf A_s\|_\op=O_p((\lambda_s+\xi)^{-1})$.
Cauchy--Schwarz therefore proves the asserted bound even uniformly in
$x$. The exceptional event has vanishing probability. The rank-$s$
norm bound appears only in this ridge term and in the representation
term, not as a multiplier of the response-noise bound.
\end{proof}

\subsection{Normalized-contrast oracle inequality and rate interpretations}
\label{app:additional-rate-results}

For $x\in\cX_0$, define
\[
    R_{s,\xi}^{\oplus}(x)
    :=\sup_{y\ne m_\oplus(x)}
    \big|\widehat C_{s,\xi}((m_\oplus(x),y);x)
    -\dot C_\oplus((m_\oplus(x),y);x)\big|.
\]

\begin{theorem}[Normalized-contrast oracle inequality]
\label{thm:contrast-error}
Under Conditions~\aref{M1}, \aref{M2}, and \aref{R1}--\aref{R2}, and
under either spectral route,
\[
 \bbE_{X^\circ}\left[R_{s,\xi}^{\oplus}(X^\circ)^2
          \mid\mathscr D_{n,N}\right]^{1/2}
 =O_p(\widetilde\rho_{r,s,n,N,\xi}).
\]
Under the source assumptions of
Proposition~\ref{prop:hilbert-automatic-complexity}, replace $b_{r,s}$
by $b^{\mathrm H}_{r,s}$ and $G_{r,s}$ by $G^{\mathrm H}_{r,s}$.
\end{theorem}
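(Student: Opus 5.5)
The argument rests on the exact fixed-design decomposition of Lemma~\ref{lem:fixed-design-decomposition}, applied on $\mathcal E_s$. The complement $\mathcal E_s^c$ has vanishing probability by Proposition~\ref{prop:primitive-spectral-verification} or Lemma~\ref{lem:empirical-cutoff-separation}, so it does not affect an $O_p$ statement. For every $t\in\mathfrak T$ and $x\in\cX_0$, the error $\widehat C_{s,\xi}(t;x)-\dot C_\oplus(t;x)$ splits into four pieces:
\begin{itemize}
\item the weighted noise term $\sum_{i\in L}a_{s,\xi,i}(x)\zeta_t(X_i,Y_i)$;
\item the fitted truncation residual $(\widehat S-I)r_{t,s}(x)$;
\item the fitted representation residual $(\widehat S-I)d_{t,s}(x)$;
\item the ridge bias $-\xi\mathbf q_L(x)^\top\mathbf A_s\mathbf c_{t,s}$.
\end{itemize}
The quantity $R^\oplus_{s,\xi}(x)$ is a supremum over the subfamily $t=(m_\oplus(x),y)$, and every such $t$ lies in $\mathfrak T$ because $m_\oplus(x)\in\cM$. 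Hence $R^\oplus_{s,\xi}(x)\le\sup_{t\in\mathfrak T}|\widehat C_{s,\xi}(t;x)-\dot C_\oplus(t;x)|$. The supremum over $t$ of the sum is at most the sum of the four suprema, and Minkowski's inequality in $L^2$ of the independent query then reduces the theorem to bounding each supremum separately in the conditional norm $\|\cdot\|_{2,X\mid\mathscr D_{n,N}}$.

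Each term is handled by one earlier lemma.
\begin{itemize}
\item \textbf{Noise.} The noise term is exactly $\widehat C_{s,\xi}-\overline C^L_{s,\xi}$, so Lemma~\ref{lem:sharp-response-noise} bounds it by $O_p(\sqrt{s/n})$ using \aref{R2}.
\item \textbf{Truncation.} Lemma~\ref{lem:approximation-residual-bound} gives $O_p(b_{r,s})$ using \aref{M1}, \aref{M2} and \aref{R1}.
\item \textbf{Representation.} Lemma~\ref{lem:sharp-subspace-perturbation} gives $O_p(G_{r,s}e_{N,s})$. Here $e_{N,s}$ is $\{\sqrt N(\lambda_s-\lambda_{s+1})\}^{-1}$ on the basic route and $\{s^{2-q}\log(es)\log N/N\}^{1/2}$ on the relative route.
\item \textbf{Ridge bias.} Lemma~\ref{lem:sharp-fixed-design} gives $O_p(G_{r,s}\xi/(\lambda_s+\xi))$.
\end{itemize}
Summing these four bounds gives exactly $\widetilde\rho_{r,s,n,N,\xi}$, that is, $\rho_{r,s,n,N,\xi}$ or $\rho^\star_{r,s,n,N,\xi}$ according to the route.

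One technical point is that three of the lemmas are stated conditional on $\mathscr X_{N,L}$, while the theorem conditions on $\mathscr D_{n,N}$. The residuals $r_{t,s}$, $d_{t,s}$ and the ridge term are built from covariates, $L$ and the fixed population objects only, so they do not involve the responses. Their conditional query norms are therefore the same under either sigma-field. Only the noise term uses the responses, and Lemma~\ref{lem:sharp-response-noise} is already stated for $\mathscr D_{n,N}$. Measurability of the suprema comes from the separable-version convention, which reduces each supremum to one over a countable determining subclass.

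For the Hilbert refinement, replace Lemma~\ref{lem:approximation-residual-bound} by its $b^{\mathrm H}_{r,s}$ version. Proposition~\ref{prop:hilbert-automatic-complexity} shows that \aref{R1} holds with $B_r=2M_r$, so $G_{r,s}$ becomes $G^{\mathrm H}_{r,s}$ in the representation and ridge terms.

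The main obstacle is already absorbed into Lemma~\ref{lem:sharp-subspace-perturbation}. There the learned projector is reused on the same pooled covariates, which forces the pooled-empirical-norm argument instead of an i.i.d.\ expectation. At this stage, the only remaining point needing care is that the bounds hold uniformly over the whole family $t$ before the supremum is taken, rather than pointwise in $t$. Each lemma achieves this by dominating the supremum with a single Hilbert-valued envelope — $v_{r,s}$, $V_{s,N}$, or $\sup_t\|\mathbf c_{t,s}\|\le G_{r,s}$ — so that only one random $O_p(1)$ constant appears per term.
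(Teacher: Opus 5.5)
Your proposal is correct and follows the paper's proof essentially step for step. It uses the four-term decomposition of Lemma~\ref{lem:fixed-design-decomposition} on $\mathcal E_s$, the same four term-wise lemmas, and takes the supremum over the subfamily $\{(m_\oplus(x),y)\}\subseteq\mathfrak T$ before the $L^2$ triangle inequality. It also makes the same observations that only the noise term involves the responses, so that the $\mathscr X_{N,L}$ and $\mathscr D_{n,N}$ query norms coincide for the other three terms, and that Hilbert sources only require substituting $b^{\mathrm H}_{r,s}$ and $G^{\mathrm H}_{r,s}$.
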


\begin{proof}
On $\mathcal E_s$, Lemma~\ref{lem:fixed-design-decomposition} gives
four terms. Their indexed query norms are, respectively,
\[
 O_p\!\left(\sqrt{\frac{s}{n}}\right),\qquad
 O_p(b_{r,s}),\qquad
 O_p(G_{r,s}e_{N,s}),\qquad
 O_p\!\left(G_{r,s}\frac{\xi}{\lambda_s+\xi}\right),
\]
by Lemmas~\ref{lem:sharp-response-noise},
\ref{lem:approximation-residual-bound},
\ref{lem:sharp-subspace-perturbation}, and \ref{lem:sharp-fixed-design}.
The last three terms depend only on the covariates and $L$, so their
query norms do not change when conditioning on $\mathscr D_{n,N}$
instead of $\mathscr X_{N,L}$. The supremum in $R_{s,\xi}^{\oplus}(x)$
is over a subset of $\mathfrak T$. Taking the supremum before applying
the $L^2$ triangle inequality therefore yields the sum of the four
bounds. Proposition~\ref{prop:primitive-spectral-verification} or
\ref{prop:sharp-spectral-verification} supplies the appropriate
$e_{N,s}$, producing the stated scale. Since $\bbP(\mathcal E_s^c)\to0$,
this is also the unconditional probability-order conclusion under the
specified fallback convention. The vector-valued source uses the
sharper residual calculation in
Proposition~\ref{prop:hilbert-automatic-complexity}; all other steps
are unchanged.
\end{proof}

\begin{remark}[Closure membership and quantitative regularity]
\label{rmk:approximation-tradeoff}
The model uses a single fixed kernel, not a sequence of learned RKHSs.
A target in $\overline{\mathcal H_{\kappa_0}}^{\,L^2(\bbP_X)}\setminus\mathcal H_{\kappa_0}$ is approximated
by $g_{t,s}=\Pi_{\le s}\dot h_t$; its RKHS norm may diverge as $s\to\infty$.
The source condition quantifies this through $G_{r,s}$ and the common
Hilbert-valued residual through $b_{r,s}$. It is stronger than pointwise
closure membership: the latter alone supplies neither uniform
approximation of the contrast family nor a convergence rate.

The use of $\ell^2$ in the residual proof does not impose a Hilbert
response space. It encodes the coefficients of scalar contrast
regressions. Lemma~\ref{lem:hilbert-smoother-stability} controls this
entire residual class and Lemma~\ref{lem:sharp-subspace-perturbation}
handles data reuse, rather than bounding an arbitrary residual family
by a $\sqrt{s}$-inflated envelope. The target remains the
true $m_\oplus$; no minimizer of an RKHS-projected criterion is substituted.
The theorems do not cover targets at positive distance from the closure,
and are not simultaneous guarantees for data-selected kernels, source
classes, or tuning sequences. Example~\ref{ex:bhv-closure-source}
verifies a non-RKHS BHV model, not every experimental generating law.
\end{remark}

\begin{remark}[Spectral-decay regimes and labeled tuning]
\label{rmk:spectral-decay-regimes}
Let $\lambda_j\asymp j^{-q}$, $qr>1$, and use the refined route. The
squared source, variance, and representation terms are
\[
 s^{-(qr-1)},\qquad s/n,\qquad
 s^{2-qr}\log(es)\log N/N.
\]
Balancing the first two gives $s\asymp n^{1/(qr)}$; the sufficient
labeled condition $s^2/n\to0$ then requires $qr>2$. This is a limitation
of the present covariance-stability argument, not a necessity or a
minimax transition. No amount of unlabeled data removes that labeled
condition.

For $1<qr\le2$ and $N\ge n$, take an admissible complete-block cutoff
$s\asymp\sqrt n/(\log n)^2$ and the ridge choice in
Remark~\ref{rmk:ridge-sequence}. Then
$s\log(es)\sqrt{\log N/N}=O((\log n)^{-1/2})\to0$,
$s^2/n\to0$, and the bias dominates the other squared terms. Under
quadratic curvature this gives the valid, not claimed optimal, rate
\[
 \mathfrak R_{s,\xi,\varepsilon_{\mathrm{opt}}}
 =O_p\big(n^{-(qr-1)/2}(\log n)^{2(qr-1)}
          +\varepsilon_{\mathrm{opt}}\big).
\]
Indeed the representation-to-bias ratio is bounded by
$s\log(es)\log N/N=o(1)$, and $s/n$ is no larger than the displayed
bias order. For other curvature exponents, substitute the corresponding
contrast scale into Theorem~\ref{thm:frechet-error}.

For the basic route with $N>n$, the automatic balanced cutoff instead
fails $\log(2s)/(n\lambda_s)\to0$ when $r\le1$. The basic theorem is
still useful at smaller ranks. For example, with a gap lower bound
$\lambda_s-\lambda_{s+1}\gtrsim s^{-q-1}$, choose $s\asymp n^c$,
$0<c<1/q$, and $N/s^{3q-qr+2}\to\infty$. Together with the same ridge
choice these conditions make all four basic error terms vanish. A
balanced-rate claim for that route would require an additional,
stronger approximation or design-stability argument.
\end{remark}

\begin{remark}[Curvature regimes]
Let $p_M=2/(\beta_M-1)$. For a vanishing contrast scale
$\widetilde\rho_{r,s,n,N,\xi}$, the effective power in the prediction
bound is $p_M\wedge2$; quadratic curvature gives its square. Only an
$L^2$ query bound is used, including when $p_M>2$, because the inversion
inequality truncates that higher power on its small-error event.
\end{remark}

\begin{remark}[Sufficient ridge choices]
\label{rmk:ridge-sequence}
Under polynomial decay,
$G_{r,s}^2\lesssim s^{q(1-r)}$. For the general metric source, the
choice
\[
 0<\xi=O\big(s^{-(3q-1)/2}\big)
\]
ensures $\xi/\lambda_s\to0$ and
$\{G_{r,s}\xi/(\lambda_s+\xi)\}^2\lesssim s^{-(qr-1)}$.
For the vector-valued Hilbert source, use
$0<\xi=O(s^{-3q/2})$ to make the squared ridge term no larger than
$s^{-qr}$. These are sufficient deterministic choices, not statements
about validation-selected regularization.
\end{remark}

\begin{corollary}[Polynomial rate for a Hilbert-valued source]
\label{cor:hilbert-fractional-rate}
Assume \aref{M1}, \aref{M2}--\aref{M3}, \aref{K1},
\aref{K4}--\aref{K5}, and the source conditions of
Proposition~\ref{prop:hilbert-automatic-complexity}, with $qr>1$.
Use the minimizer convention of Theorem~\ref{thm:frechet-error} and take
\[
 N\ge n,\qquad s\asymp n^{1/(qr+1)},\qquad
 0<\xi=O(s^{-3q/2}).
\]
Then the refined tuning conditions hold, \aref{R2} is automatic, and
under the automatic quadratic curvature,
\[
 \mathfrak R_{s,\xi,\varepsilon_{\mathrm{opt}}}
 =O_p\big(n^{-qr/(qr+1)}+\varepsilon_{\mathrm{opt}}\big).
\]
\end{corollary}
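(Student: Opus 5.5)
The plan is to apply the refined route (Theorem~\ref{thm:sharp-frechet-error}, or more precisely Theorem~\ref{thm:contrast-error} followed by the argmin-inversion step behind Theorem~\ref{thm:frechet-error}) with the Hilbert-source replacements of Proposition~\ref{prop:hilbert-automatic-complexity}. That proposition supplies \aref{R1} with $B_r=2M_r$ and \aref{R2} with $C_M=2\sigma_Y$ under \aref{M1} alone. Proposition~\ref{prop:automatic-quadratic-curvature}(i) gives \aref{M3} with $\beta_M=2$, $c_M=1$. Corollary~\ref{cor:Hilbert} makes $\varepsilon_{\mathrm{opt}}=0$ available. The prediction bound then reads $O_p(\widetilde\rho^2+\varepsilon_{\mathrm{opt}})$, where $\widetilde\rho$ is $\rho^\star_{r,s,n,N,\xi}$ with $b_{r,s}$ replaced by $b^{\mathrm H}_{r,s}$ and $G_{r,s}$ by $G^{\mathrm H}_{r,s}$. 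So it remains to verify \eqref{eq:sharp-tuning-regime} and to bound each squared term by $n^{-qr/(qr+1)}$.

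\emph{Tuning.} With $s\asymp n^{1/(qr+1)}$ and $qr>1$, we get $s^2/n\asymp n^{(1-qr)/(qr+1)}\to0$. Since $N\ge n$, we have $s\log(es)\sqrt{\log N/N}\le s\log(es)\sqrt{\log N/N}$ evaluated along $N\ge n$. Because $x\mapsto\log x/x$ is eventually decreasing, this is at most a constant times $s\log(es)\sqrt{\log n/n}\asymp n^{1/(qr+1)-1/2}$ times logarithmic factors, which tends to zero because $1/(qr+1)<1/2$. Finally, $\xi/\lambda_s\lesssim s^{-3q/2}s^{q}=s^{-q/2}\to0$.

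\emph{Rate terms.} For the truncation term, $(b^{\mathrm H}_{r,s})^2\lesssim s^{-qr}\asymp n^{-qr/(qr+1)}$. For the noise term, $s/n\asymp n^{-qr/(qr+1)}$. For the ridge term, using $(G^{\mathrm H}_{r,s})^2\lesssim s^{q(1-r)}$ and $\xi/(\lambda_s+\xi)\le\xi/\lambda_s\lesssim s^{-q/2}$, the square is $\lesssim s^{q(1-r)}s^{-q}=s^{-qr}$; this is Remark~\ref{rmk:ridge-sequence}. For the representation term, the square is $\lesssim s^{q(1-r)}s^{2-q}\log(es)\log N/N=s^{2-qr}\log(es)\log N/N$. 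Dividing by $s/n$ gives the ratio $s^{1-qr}\log(es)\,n\log N/N$. Since $N\ge n$, we have $n\log N/N\lesssim\log n$, so the ratio is $\lesssim s^{1-qr}\log^2 n\to0$, using $qr>1$ and the fact that $s$ grows polynomially. All four squared terms are therefore $O(n^{-qr/(qr+1)})$.

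The main obstacle is the representation term. Its $\log N$ factor must be controlled uniformly over $N\ge n$, which is handled by the monotonicity of $\log N/N$. We also have to check that nothing in the refined route silently uses $G_{r,s}$ rather than $G^{\mathrm H}_{r,s}$. Since $G^{\mathrm H}_{r,s}=G_{r,s}$ when $B_r=2M_r$, no issue arises there.
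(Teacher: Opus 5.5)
Your proposal is correct and follows the paper's own argument. Invoke Theorem~\ref{thm:contrast-error} with the Hilbert replacements $b^{\mathrm H}_{r,s}$ and $G^{\mathrm H}_{r,s}$ from Proposition~\ref{prop:hilbert-automatic-complexity}, verify \eqref{eq:sharp-tuning-regime} using $qr>1$ and the monotonicity of $\log N/N$ over $N\ge n$, then apply the quadratic argmin inversion. The only differences are cosmetic: you compare the representation term with $s/n$ instead of the bias $s^{-qr}$, which is equivalent at $s\asymp n^{1/(qr+1)}$. Your sentence claiming $s\log(es)\sqrt{\log N/N}\le s\log(es)\sqrt{\log N/N}$ is a harmless tautology that the following sentence repairs.
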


\begin{proof}
Proposition~\ref{prop:hilbert-automatic-complexity} and
Theorem~\ref{thm:contrast-error} give the squared scale
\[
 s^{-qr}+s/n+
 s^{q(1-r)}\{\xi/(\lambda_s+\xi)\}^2+
 s^{2-qr}\log(es)\log N/N.
\]
The first two terms balance at $n^{-qr/(qr+1)}$, and the chosen ridge
has the same or smaller order. Since $qr>1$, both
$s^2/n$ and $s\log(es)\sqrt{\log N/N}$ tend to zero. For large $n$,
$\log N/N\le\log n/n$ when $N\ge n$, and the representation-to-bias
ratio is at most $Cs^2(\log n)^2/n=o(1)$. The source supplies
\aref{R1}, the proposition supplies \aref{R2}, and the quadratic
argmin bound in Theorem~\ref{thm:frechet-error}, using the refined scale,
proves the result.
\end{proof}

\subsection{Escape and argmin inversion}
\label{app:argmin-lemmas}

\begin{lemma}[Uniform linear escape]
\label{lem:uniform-linear-escape}
Under Conditions~\nameref{par:M1} and \nameref{par:M2}--\nameref{par:M3}, there exist
constants $\eta_0\in(0,\eta_M)$ and
$c_{\mathrm{esc}},R_0>0$ such that, uniformly in $x\in\cX_0$,
\[
    \Delta_\oplus(y,x)
    \ge
    c_{\mathrm{esc}}d_{\cY}(y,m_\oplus(x))
\]
whenever $d_{\cY}(y,m_\oplus(x))\ge\eta_0$.
Moreover,
\[
 \Delta_\oplus(y,x)
 \ge\tfrac12d_{\cY}(y,m_\oplus(x))^2
\]
whenever \(d_{\cY}(y,m_\oplus(x))\ge R_0\).
\end{lemma}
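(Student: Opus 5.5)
The plan is to combine the local curvature part of \aref{M3} with the uniform identification part, and to use convexity-type growth of the squared distance along the direction away from the target. The large-distance quadratic bound should come from the second-moment control in \aref{M1}.

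For the quadratic bound at large distances, fix $x\in\cX_0$ and write $m=m_\oplus(x)$ and $D=d_\cY(y,m)$. Recall $\sigma_*^2=\sup_{x}M_\oplus(y_*,x)$ from the remark on uniform square integrability; that remark gives $d_\cY(m,y_*)\le2\sigma_*$ and $M_\oplus(m,x)\le\sigma_*^2$. The triangle inequality gives $d_\cY(y,Y)\ge D-d_\cY(m,Y)$, so
\[
 M_\oplus(y,x)\ge D^2-2D\,\bbE[d_\cY(m,Y)\mid X=x].
\]
By Jensen and the bound $\bbE[d_\cY^2(m,Y)\mid X=x]\le10\sigma_*^2$ from that remark, the conditional first moment is at most $\sqrt{10}\,\sigma_*$. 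Hence
\[
 \Delta_\oplus(y,x)\ge D^2-2\sqrt{10}\,\sigma_*D-\sigma_*^2.
\]
This exceeds $D^2/2$ once $D\ge R_0$ for a constant $R_0$ depending only on $\sigma_*$. On $\{D\ge R_0\}$ this also yields $\Delta_\oplus\ge (R_0/2)D$.

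For the intermediate range, choose $\eta_0:=\eta_M/2$. For $\eta_0\le D\le R_0$, uniform identification in \aref{M3} gives
\[
 \Delta_\oplus(y,x)\ge c_0:=\inf_{x}\inf_{D\ge\eta_0}\Delta_\oplus>0,
\]
and therefore $\Delta_\oplus\ge (c_0/R_0)D$. Taking $c_{\mathrm{esc}}=\min\{c_0/R_0,R_0/2\}$ covers all $D\ge\eta_0$ uniformly in $x$. No geodesic structure is required, so the argument works in a general metric space.

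The only delicate point is uniformity. Every constant ($\sigma_*$, $c_0$, $R_0$) must be independent of $x$. This is supplied by the suprema in \aref{M1} and the infimum over $x$ in \aref{M3}, together with the uniform bound $d_\cY(m_\oplus(x),y_*)\le2\sigma_*$, which controls the target location. I expect no real obstacle beyond tracking these constants carefully.
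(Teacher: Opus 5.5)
Your proposal is correct and follows essentially the paper's proof: a quadratic lower bound beyond $R_0$ from the uniform second-moment control in \aref{M1}, then the uniform identification constant on $[\eta_0,R_0]$ divided by $R_0$, with the two slopes combined through a minimum. The paper gets the large-distance bound a bit more directly via Minkowski, $M_\oplus(y,x)^{1/2}\ge D-M_\oplus(m,x)^{1/2}$ with $M_\oplus(m,x)\le\sigma_*^2$ (so your detour through $10\sigma_*^2$ and Jensen is unnecessary), and it sets $R_0=4\sigma_*\vee\eta_0$. That choice also keeps $R_0>0$ in the degenerate case $\sigma_*=0$, which your definition of $R_0$ should likewise guard against.
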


\begin{proof}
Fix $x\in\cX_0$. Let \(y_*\) be the reference point in Condition~\nameref{par:M1}, and set
\[
    \sigma_*:=
    \sup_{x\in\cX_0}M_\oplus(y_*,x)^{1/2}<\infty.
\]
Since \(m_\oplus(x)\) minimizes \(M_\oplus(\cdot,x)\),
\(M_\oplus(m_\oplus(x),x)\le M_\oplus(y_*,x)\). Minkowski's inequality also
gives
\[
    d_{\cY}(m_\oplus(x),y_*)
    \le M_\oplus(m_\oplus(x),x)^{1/2}
       +M_\oplus(y_*,x)^{1/2}
    \le2\sigma_*.
\]
Thus both the target and its conditional root-mean-square radius are uniformly
bounded by Condition~\nameref{par:M1}, which is the bound needed for the
escape argument.
Let $m_x=m_\oplus(x)$ and $r=d_{\cY}(y,m_x)$. Minkowski's inequality gives
\[
    M_\oplus(y,x)^{1/2}
    \ge r-M_\oplus(m_x,x)^{1/2}.
\]
Thus, whenever $r\ge4\sigma_*$,
\[
    \Delta_\oplus(y,x)
    \ge r^2-2r\sigma_*\ge\frac12r^2.
\]
Choose $\eta_0\in(0,\eta_M)$. Global separation gives
\[
    a_0
    :=
    \inf_{x\in\cX_0}\inf_{d_{\cY}(y,m_x)\ge\eta_0}
    \Delta_\oplus(y,x)>0.
\]
Let $R_0:=4\sigma_*\lor\eta_0$. On
$\eta_0\le r\le R_0$, the excess is at least
$(a_0/R_0)r$. On $r\ge R_0$, it is at least
$(\eta_0/2)r$. Taking
$c_{\mathrm{esc}}=(a_0/R_0)\land(\eta_0/2)$ proves the claim.
\end{proof}

\begin{lemma}[Argmin inversion for approximate minimizers]
\label{lem:normalized-argmin-inversion}
Assume Condition~\nameref{par:M3} and the conclusion of
Lemma~\ref{lem:uniform-linear-escape}. There exist constants
$C<\infty$ and $\eta_1\in(0,1]$, independent of $x$, the data, and the sample
sizes, with the following property. For $x\in\cX_0$ and $0\le\eta\le\eta_1$,
suppose $\widehat m_x$ satisfies
\[
    \widehat M_{s,\xi}(\widehat m_x,x)
    \le
    \inf_{y\in\cY}\widehat M_{s,\xi}(y,x)+\eta.
\]
Put
\[
    m_x=m_\oplus(x),
    \qquad
    d_x=d_{\cY}(\widehat m_x,m_x),
    \qquad
    R_x=R_{s,\xi}^{\oplus}(x).
\]
Then
\begin{equation}
\label{eq:deterministic-inversion}
    d_x^2
    \le
    C\left[
        R_x^{2/(\beta_M-1)}
        \ind\{R_x<c_{\mathrm{esc}}/2\}
        +R_x^2\ind\{R_x\ge c_{\mathrm{esc}}/2\}
        +\eta^{2/\beta_M}
    \right].
\end{equation}
\end{lemma}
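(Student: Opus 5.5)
The argument is a comparison of empirical and population contrasts along the pair $(m_x,\widehat m_x)$. If $\widehat m_x=m_x$ there is nothing to prove, so assume $d_x>0$ and take $t=(m_x,\widehat m_x)\in\mathfrak T$. By Lemma~\ref{lem:directional-algebra}, $d_t\dot C_\oplus(t;x)=\Delta_\oplus(\widehat m_x,x)$. Also $d_t\widehat C_{s,\xi}(t;x)=\widehat M_{s,\xi}(\widehat m_x,x)-\widehat M_{s,\xi}(m_x,x)\le\eta$ by approximate optimality. The definition of $R_x$ therefore gives the master inequality
\[
 \Delta_\oplus(\widehat m_x,x)\le \eta+d_xR_x .
\]

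The proof then splits into three regimes of $d_x$.

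\emph{Local regime} ($d_x<\eta_M$). Condition~\aref{M3} gives $c_Md_x^{\beta_M}\le\eta+d_xR_x$. Hence either $c_Md_x^{\beta_M}\le 2\eta$, so that $d_x^2\lesssim\eta^{2/\beta_M}$, or $c_Md_x^{\beta_M-1}\le 2R_x$, so that $d_x^2\lesssim R_x^{2/(\beta_M-1)}$. In either case the bound \eqref{eq:deterministic-inversion} follows. If $R_x\ge c_{\mathrm{esc}}/2$, then $d_x<\eta_M$ is bounded by a constant and hence by $CR_x^2$, so the indicator form is also respected.

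\emph{Intermediate regime} ($d_x\ge\eta_M>\eta_0$). Lemma~\ref{lem:uniform-linear-escape} gives $c_{\mathrm{esc}}d_x\le\eta+d_xR_x$.
\begin{itemize}
\item If $R_x<c_{\mathrm{esc}}/2$, then $d_x\le 2\eta/c_{\mathrm{esc}}$. Choosing $\eta_1$ small, specifically $2\eta_1/c_{\mathrm{esc}}<\eta_M$, makes this impossible, so this case is vacuous.
\item If $R_x\ge c_{\mathrm{esc}}/2$, use the quadratic escape for $d_x\ge R_0$: $\tfrac12 d_x^2\le \eta+d_xR_x$, which yields $d_x\lesssim R_x+\sqrt\eta\lesssim R_x$, using $\eta\le1\lesssim R_x^2$. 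For $\eta_M\le d_x<R_0$, $d_x$ is bounded by a constant, and a constant is $\lesssim R_x^2$ because $R_x$ is bounded below.
\end{itemize}

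Combining the regimes, with constants depending only on $c_M,\beta_M,\eta_M,c_{\mathrm{esc}},R_0$, gives \eqref{eq:deterministic-inversion}.

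The main delicacy is the bookkeeping that keeps the constants uniform and independent of $x$ and the data. In particular, the choice of $\eta_1$ is what rules out large excursions when $R_x$ is small. This uses only the uniformity built into \aref{M3} and the escape lemma.
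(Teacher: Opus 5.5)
Your proof is correct and follows the paper's argument. Both rest on the same basic inequality $\Delta_\oplus(\widehat m_x,x)\le R_xd_x+\eta$, use linear escape plus a small $\eta_1$ to rule out large excursions when $R_x<c_{\mathrm{esc}}/2$ and local curvature in that case, and use quadratic escape or boundedness of $d_x$ when $R_x\ge c_{\mathrm{esc}}/2$. The differences are cosmetic: you split first on $d_x$ versus $\eta_M$ rather than first on $R_x$, you tie $\eta_1$ to $\eta_M$ (state it as $\eta_1=1\wedge(c_{\mathrm{esc}}\eta_M/4)$ so that $\eta\le1$ is explicit), and you use a max-of-two-terms split where the paper uses Young's inequality.
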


\begin{proof}
Set $\eta_1:=1\land(c_{\mathrm{esc}}\eta_0/4)$ and take $0\le\eta\le\eta_1$.
If \(d_x=0\), then \eqref{eq:deterministic-inversion} is immediate. Assume
\(d_x>0\). Then \(t_x=(m_x,\widehat m_x)\in\mathfrak T\), and approximate
optimality gives
\[
 \widehat M_{s,\xi}(\widehat m_x,x)
 -\widehat M_{s,\xi}(m_x,x)\le\eta.
\]
By the definitions of \(R_x\), \(\widehat C_{s,\xi}\), and
\(\dot C_\oplus\),
\[
 \left|\{\widehat M_{s,\xi}(\widehat m_x,x)
 -\widehat M_{s,\xi}(m_x,x)\}
 -\Delta_\oplus(\widehat m_x,x)\right|
 \le R_xd_x.
\]
Combining the last two displays yields
\begin{equation}
\label{eq:approximate-basic-inequality}
    \Delta_\oplus(\widehat m_x,x)
    \le R_xd_x+\eta.
\end{equation}
If $R_x<c_{\mathrm{esc}}/2$ and $d_x\ge\eta_0$, the uniform escape bound
gives
\[
    c_{\mathrm{esc}}d_x
    \le R_xd_x+\eta
    <\frac{c_{\mathrm{esc}}}{2}d_x+\eta.
\]
Thus $d_x<2\eta/c_{\mathrm{esc}}\le\eta_0/2$, contradicting
$d_x\ge\eta_0$. Hence
$d_x<\eta_0$, and local curvature together with
\eqref{eq:approximate-basic-inequality} gives
\[
    c_Md_x^{\beta_M}\le R_xd_x+\eta.
\]
Young's inequality yields
\[
    R_xd_x
    \le
    \frac{c_M}{2}d_x^{\beta_M}
    +C R_x^{\beta_M/(\beta_M-1)}.
\]
Consequently,
\[
    d_x^{\beta_M}
    \le C\{R_x^{\beta_M/(\beta_M-1)}+\eta\}.
\]
Using $(u+v)^{2/\beta_M}\le
C_{\beta_M}(u^{2/\beta_M}+v^{2/\beta_M})$ for $u,v\ge0$ gives
\[
    d_x^2
    \le
    C\left\{R_x^{2/(\beta_M-1)}+\eta^{2/\beta_M}\right\},
\]
which proves the first part of \eqref{eq:deterministic-inversion}.

Now suppose $R_x\ge c_{\mathrm{esc}}/2$. If
$d_x<R_0$, then $d_x^2\le
4R_0^2c_{\mathrm{esc}}^{-2}R_x^2$. If
$d_x\ge R_0$, the quadratic lower bound in
Lemma~\ref{lem:uniform-linear-escape} and
\eqref{eq:approximate-basic-inequality} give
\[
    \frac12d_x^2\le R_xd_x+\eta
    \le\frac14d_x^2+R_x^2+\eta.
\]
Thus $d_x^2\le4R_x^2+4\eta$. Since
$R_x\ge c_{\mathrm{esc}}/2$ and $\eta\le1$,
\[
    4\eta
    \le \frac{16}{c_{\mathrm{esc}}^2}R_x^2.
\]
This proves the large-$R_x$ part with an explicit constant and
completes the proof.
\end{proof}

\section{Proofs of the Main-Text Results}
\label{app:main-result-proofs}

This section collects proofs of results stated in the main text; auxiliary
results are proved immediately after their statements.

\begin{proof}[Proof of Theorem~\ref{thm:frechet-error}]
Put
\[
    p_M:=\frac{2}{\beta_M-1},
    \qquad
    \rho:=\rho_{r,s,n,N,\xi},
    \qquad
    R_X:=R_{s,\xi}^{\oplus}(X^\circ).
\]
Since $\varepsilon_{\mathrm{opt}}\to0$, it satisfies
$0\le\varepsilon_{\mathrm{opt}}\le\eta_1$ for all sufficiently large sample
sizes, where $\eta_1$ is from
Lemma~\ref{lem:normalized-argmin-inversion}. Apply that lemma for every
$x\in\cX_0$ with $\eta=\varepsilon_{\mathrm{opt}}$. Since
$\bbP_X(\cX_0)=1$ and $X^\circ$ is independent of the data, integrating over
$X^\circ$ gives the same prediction-risk inequality; no condition on the
conditional law outside $\cX_0$ is used. Thus
\begin{equation}
\label{eq:frechet-risk-proof-integrated-inversion}
 \mathfrak R_{s,\xi,\varepsilon_{\mathrm{opt}}}
 \le C\bigg[
 \bbE_{X^\circ}[R_X^{p_M}
       \ind(R_X<c_{\mathrm{esc}}/2)\mid\mathscr D_{n,N}]
 +\bbE_{X^\circ}[R_X^2
       \ind(R_X\ge c_{\mathrm{esc}}/2)\mid\mathscr D_{n,N}]
 +\varepsilon_{\mathrm{opt}}^{2/\beta_M}
 \bigg].
\end{equation}
Theorem~\ref{thm:contrast-error} states precisely that
\[
    \bbE_{X^\circ}[R_X^2\mid\mathscr D_{n,N}]^{1/2}
    =O_p(\rho).
\]

Suppose first that $p_M\le2$. Conditional Lyapunov inequality yields
\[
 \bbE_{X^\circ}[R_X^{p_M}\mid\mathscr D_{n,N}]
 \le
 \left\{\bbE_{X^\circ}[R_X^2\mid\mathscr D_{n,N}]\right\}^{p_M/2}
 =O_p(\rho^{p_M}),
\]
while the second expectation in
\eqref{eq:frechet-risk-proof-integrated-inversion} is bounded by
$O_p(\rho^2)$ using the preceding second-moment bound.

If $p_M>2$, then on $\{R_X<c_{\mathrm{esc}}/2\}$,
\[
    R_X^{p_M}
    \le(c_{\mathrm{esc}}/2)^{p_M-2}R_X^2.
\]
Hence both expectations in
\eqref{eq:frechet-risk-proof-integrated-inversion} are $O_p(\rho^2)$.
Combining the two cases, and retaining the harmless $\rho^{p_M}$ term in
both regimes, proves
\[
    \mathfrak R_{s,\xi,\varepsilon_{\mathrm{opt}}}
    =
    O_p\!\left(
      \rho^{2/(\beta_M-1)}+\rho^2
      +\varepsilon_{\mathrm{opt}}^{2/\beta_M}
    \right).
\]
\end{proof}

\begin{proof}[Proof of Theorem~\ref{thm:sharp-frechet-error}]
Under the relative spectral route, Theorem~\ref{thm:contrast-error} gives
\[
    \bbE_{X^\circ}[R_{s,\xi}^{\oplus}(X^\circ)^2
      \mid\mathscr D_{n,N}]^{1/2}
    =O_p(\rho_{r,s,n,N,\xi}^{\star}).
\]
Repeating the pointwise inversion and conditional moment argument in the
proof of Theorem~\ref{thm:frechet-error}, with
$\rho_{r,s,n,N,\xi}^{\star}$ in place of
$\rho_{r,s,n,N,\xi}$, proves the stated rate.
\end{proof}

\begin{proof}[Proof of Corollary~\ref{cor:polynomial-rate}]
Since $qr>2$, the chosen cutoff satisfies
\[
 \frac{s^2}{n}\asymp n^{2/(qr)-1}\to0,
 \qquad
 s\log(es)\sqrt{\frac{\log N}{N}}
 \lesssim n^{1/(qr)-1/2}(\log n)^{3/2}\to0
\]
for all $N\ge n$. The latter inequality uses that $u\mapsto\log u/u$
is decreasing for sufficiently large $u$. Also
\[
 \frac{\xi}{\lambda_s}=O(s^{-(q-1)/2})\to0.
\]
Thus all three refined tuning restrictions are verified, rather than
inferred from the desired error rate.

The source and labeled-noise terms satisfy
\[
 b_{r,s}^2+\frac{s}{n}
 \lesssim s^{-(qr-1)}+\frac{s}{n}
 \lesssim n^{-(qr-1)/(qr)}.
\]
The increasing source norm is retained explicitly in the ridge term:
\[
 \left(G_{r,s}\frac{\xi}{\lambda_s+\xi}\right)^2
 \lesssim s^{q(1-r)}s^{-(q-1)}
 =s^{-(qr-1)}.
\]
Similarly the squared representation term is bounded by
\[
 G_{r,s}^2\frac{s^{2-q}\log(es)\log N}{N}
 \lesssim \frac{s^{2-qr}\log(es)\log N}{N}.
\]
Its ratio to $s^{-(qr-1)}$ is at most
$Cs\log(es)\log N/N\lesssim n^{1/(qr)-1}(\log n)^2\to0$.
Using $(\sum_{j=1}^4u_j)^2\le4\sum_{j=1}^4u_j^2$ for nonnegative
$u_j$ now yields
\[
 (\rho_{r,s,n,N,\xi}^{\star})^2\lesssim n^{-(qr-1)/(qr)}.
\]
Theorem~\ref{thm:sharp-frechet-error} gives
\[
 \mathfrak R_{s,\xi,\varepsilon_{\mathrm{opt}}}
 =O_p\left(n^{-\frac{qr-1}{qr(\beta_M-1)}}
           +n^{-\frac{qr-1}{qr}}
           +\varepsilon_{\mathrm{opt}}^{2/\beta_M}\right).
\]
When $\beta_M=2$ the first two terms have the same order, proving the
stated specialization.
\end{proof}

\clearpage
\section{Hybrid Stability and Prediction Mixing}
\label{app:th-stability-proof}

This section proves Theorem~\ref{thm:th-stability} for the actual
concatenated-feature ridge fit. All operators and their operator or
Hilbert--Schmidt norms below are on the common RKHS
$\mathcal H_{\kappa_0}$, unless another space is displayed. We retain the
shared-law sampling scheme, common numerical tuning, and measurable
spectral conventions of the theorem. The proof uses population
Hadamard curvature, not convexity or uniqueness of the signed fitted
Fr\'echet criteria.

\begin{proof}[Proof of Theorem~\ref{thm:th-stability}]
Write $d=d_\cY$, $g_s=\lambda_s-\lambda_{s+1}$, and
$f_{n,N}=(N-n)/\{n(N-1)\}$. The source and kernel are fixed, and all
probability orders concern the joint law of the training data.

\paragraph{1. Common-space representation of the fitted kernels.}
Put $Z_i=K_{X_i}^0$ and, for $B=L,[N]$, define
\[
 \overline Z_B=\frac1{|B|}\sum_{i\in B}Z_i,\qquad
 z_L(x)=K_x^0-\overline Z_L,\qquad
 \widehat T_B=\frac1{|B|}\sum_{i\in B}
           (Z_i-\overline Z_B)\otimes(Z_i-\overline Z_B).
\]
Let $\widehat\Pi_B$ denote the leading rank-$s$ projector of $\widehat T_B$ on
its empirical-gap event, with the zero fallback of
Appendix~\ref{app:empirical-spectral-construction} otherwise. Denote
its common-RKHS synthesis map by $\widehat E_s^B$; on the fallback
both this map and $\widehat\Pi_B$ are zero. Thus
$\widehat\Pi_B=\widehat E_s^B(\widehat E_s^B)^*$.
The empirical features have the representation
\[
 \widehat{\boldsymbol\Phi}_s^B(x)
   =(\widehat E_s^B)^*(K_x^0-\overline Z_B).
\]
Subtracting their labeled average cancels the representation-specific
centering exactly:
\begin{equation}
\label{eq:th-proof-recentering}
 \mathbf q_L^B(x)=(\widehat E_s^B)^*z_L(x).
\end{equation}
This is the common-space identity already used in the original
spectral proof, applied separately to the two samples.
For $\alpha\in[0,1]$, set
\[
 W_\alpha(u,v)=\sqrt{1-\alpha}\,\widehat E_s^Lu
                    +\sqrt\alpha\,\widehat E_s^{[N]}v,
 \qquad Q_\alpha=W_\alpha W_\alpha^*
                    =(1-\alpha)\widehat\Pi_L+\alpha \widehat\Pi_{[N]}.
\]
The concatenated centered feature is $W_\alpha^*z_L(x)$, so its kernel
is $\langle z_L(x),Q_\alpha z_L(x')\rangle$. In particular,
\begin{equation}
\label{eq:th-proof-kernel-difference}
 Q_\alpha-\widehat\Pi_{[N]}=(1-\alpha)(\widehat\Pi_L-\widehat\Pi_{[N]}),\qquad 0\preceq Q_\alpha\preceq\mathrm{Id}.
\end{equation}
This identity describes the sum kernel before the joint ridge refit;
it is not an average of two fitted coefficients. 
\paragraph{2. Nested covariance and eigenspace comparison.}
For deterministic vectors $A_1,\ldots,A_N$ in any Hilbert space, let
$\overline A_B=|B|^{-1}\sum_{i\in B}A_i$. Sampling $L$ uniformly without
replacement gives
\begin{equation}
\label{eq:th-proof-finite-population}
 \bbE_L\!\left[\|\overline A_L-\overline A_{[N]}\|^2\right]
 =f_{n,N}\frac1N\sum_{i=1}^N\|A_i-\overline A_{[N]}\|^2.
\end{equation}
Indeed, for $a_i=A_i-\overline A_{[N]}$, one has $\sum_i a_i=0$,
$\bbP(i\in L)=n/N$, and
$\bbP(i,j\in L)=n(n-1)/\{N(N-1)\}$ for $i\ne j$. Expanding the
squared norm of $n^{-1}\sum_i\ind\{i\in L\}a_i$ and using
$\sum_{i\ne j}\langle a_i,a_j\rangle=-\sum_i\|a_i\|^2$ proves
\eqref{eq:th-proof-finite-population}.

Let $C_\kappa=\sup_{x\in\mathcal S}\|K_x^0\|^2\le4B_\kappa$ and
$\widetilde T_B=|B|^{-1}\sum_{i\in B}Z_i\otimes Z_i$.
Applying \eqref{eq:th-proof-finite-population}, conditional on all
pooled covariates, to $Z_i$ and then to the Hilbert--Schmidt operators
$Z_i\otimes Z_i$ yields
\[
 \bbE_L[\|\overline Z_L-\overline Z_{[N]}\|^2\mid X_1,\ldots,X_N]
 \le C_\kappa f_{n,N},\qquad
 \bbE_L[\|\widetilde T_L-\widetilde T_{[N]}\|_\HS^2\mid X_1,\ldots,X_N]
 \le C_\kappa^2 f_{n,N}.
\]
Since $\widehat T_B=\widetilde T_B-\overline Z_B\otimes\overline Z_B$,
$\|\overline Z_B\|\le\sqrt{C_\kappa}$, and
\[
 \|u\otimes u-v\otimes v\|_\HS
 \le(\|u\|+\|v\|)\|u-v\|,
\]
we obtain
\begin{equation}
\label{eq:th-proof-nested-covariance}
 \bbE_L[\|\widehat T_L-\widehat T_{[N]}\|_\HS^2\mid X_1,\ldots,X_N]
 \le10C_\kappa^2 f_{n,N}.
\end{equation}
Consequently, along sample sizes with $N>n$,
$\|\widehat T_L-\widehat T_{[N]}\|_\op=O_p(\sqrt{f_{n,N}})$.
This calculation does not treat the two covariance estimates as independent.

Put $A=\widehat T_{[N]}$, $B=\widehat T_L$,
$P=\widehat\Pi_{[N]}$, $P'=\widehat\Pi_L$, and
$\delta=\|A-B\|_\op$. Write $a_s,a_{s+1}$ for the pooled empirical
eigenvalues at the cutoff. The event
\[
 \mathcal G_{N,s}:=\{a_s-a_{s+1}\ge3g_s/4\}
\]
has probability tending to one under either spectral route. For the
basic route, Proposition~\ref{prop:primitive-spectral-verification}
and \aref{K3} give
$\|A-\mathcal T_{\kappa_0}\|_\op=o_p(g_s)$, so Weyl's inequality
applies. For the refined route, use
Lemma~\ref{lem:empirical-cutoff-separation}. No corresponding
labeled-sample event is required.

We compare projectors on $\mathcal G_{N,s}$ in two cases, without a
dimension factor. If $\delta\le g_s/8$, Weyl's inequality gives a
labeled empirical gap of at least
$a_s-a_{s+1}-2\delta\ge g_s/2$. Both projectors therefore use their
leading spectral subspaces, not the zero fallback.
Let $A_\perp=A|_{\operatorname{ran}(\mathrm{Id}-P)}$ and
$B_+=B|_{\operatorname{ran}P'}$. Their spectra are separated by at
least $a_s-a_{s+1}-\delta\ge5g_s/8$. For
$X=(\mathrm{Id}-P)P'$ and $b=(a_s+a_{s+1})/2$, invariance gives
\[
 A_\perp X-XB_+=(\mathrm{Id}-P)(A-B)P'=:E,\qquad
 X=-\int_0^\infty e^{t(A_\perp-b\mathrm{Id})}
                  E e^{-t(B_+-b\mathrm{Id})}\,dt.
\]
The integral identity follows by differentiation; both exponential
factors decay. Hence
$\|(\mathrm{Id}-P)P'\|_\op\le8\delta/(5g_s)$.
Interchanging $A,B$ gives the same bound for
$(\mathrm{Id}-P')P$. Since
$P-P'=P(\mathrm{Id}-P')-(\mathrm{Id}-P)P'$, we obtain
$\|P-P'\|_\op\le16\delta/(5g_s)$.

If $\delta>g_s/8$, both $P,P'$ are orthogonal projections, including
the possible zero fallback for $P'$, so
$\|P-P'\|_\op\le1<8\delta/g_s$. Combining the two cases gives
\[
 \|\widehat\Pi_L-\widehat\Pi_{[N]}\|_\op
 \le\frac{8}{g_s}\|\widehat T_L-\widehat T_{[N]}\|_\op
 \quad\text{on }\mathcal G_{N,s}.
\]
Together with \eqref{eq:th-proof-nested-covariance} and
$\bbP(\mathcal G_{N,s}^c)\to0$, this proves
\begin{equation}
\label{eq:th-proof-projector-rate}
 \delta_\Pi:=\|\widehat\Pi_L-\widehat\Pi_{[N]}\|_\op
 =O_p\!\left(\frac{\sqrt{f_{n,N}}}{g_s}\right),\qquad N>n.
\end{equation}
If $N=n$, the two empirical covariances and their common-convention
projectors are identical, so $\delta_\Pi=0$ exactly. This handles
sequences that also contain sample sizes with $N=n$ without assigning
an ambiguous probability order to a zero normalizing sequence.

\paragraph{3. Sensitivity of the actual ridge refit.}
Define $S:\mathcal H_{\kappa_0}\to\mathbb R^n$ by
$(Sh)_i=n^{-1/2}\langle z_L(X_i),h\rangle$, $i\in L$.
For a positive contraction $Q$, set
\[
 A_Q=SQS^*+\xi I_n,\qquad v_Q(x)=A_Q^{-1}SQz_L(x),\qquad
 \Gamma_\xi=\frac{2C_\kappa}{\xi}
                      \left(1+\frac{C_\kappa}{\xi}\right).
\]
For scalar observations $b_i$, put
$\overline b_L=n^{-1}\sum_{i\in L}b_i$ and
$b_c=n^{-1/2}(b_i-\overline b_L)_{i\in L}$.
If the centered features are $W^*z_L(x)$ and $WW^*=Q$, their centered
coefficient objective is
\[
 \min_\theta\ \frac12\|b_c-SW\theta\|_2^2+\frac\xi2\|\theta\|_2^2.
\]
The normal equations and the identity
$(F^*F+\xi I)^{-1}F^*=F^*(FF^*+\xi I)^{-1}$, with $F=SW$, give the
fitted scalar function
\begin{equation}
\label{eq:th-proof-scalar-fit}
 \widehat b_Q(x)=\overline b_L+\langle b_c,v_Q(x)\rangle_{\mathbb R^n}.
\end{equation}
This applies to $W=W_\alpha$ and to either endpoint synthesis map,
including the zero fallback, and keeps the cross-block regression
inverse in the concatenated fit.

The centered feature bounds give
\[
 \|S\|_\op^2\le\|S\|_\HS^2
 =n^{-1}\sum_{i\in L}\|Z_i\|^2-\|\overline Z_L\|^2\le C_\kappa,
 \qquad \sup_{x\in\mathcal S}\|z_L(x)\|\le2\sqrt{C_\kappa}.
\]
For $D=Q_1-Q_2$ and $R_j=(SQ_jS^*+\xi I_n)^{-1}$, positivity gives
$\|R_j\|_\op\le\xi^{-1}$ and the resolvent identity gives
$R_1-R_2=-R_1SDS^*R_2$. Consequently,
\[
 v_{Q_1}(x)-v_{Q_2}(x)
 =R_1SDz_L(x)-R_1SDS^*R_2SQ_2z_L(x),
\]
and, because $\|Q_2\|_\op\le1$,
\begin{equation}
\label{eq:th-proof-resolvent-bound}
 \sup_{x\in\mathcal S}\|v_{Q_1}(x)-v_{Q_2}(x)\|_2
 \le\left(\frac{2C_\kappa}{\xi}+\frac{2C_\kappa^2}{\xi^2}\right)
                          \|Q_1-Q_2\|_\op
 =\Gamma_\xi\|Q_1-Q_2\|_\op.
\end{equation}
These are deterministic identities and inequalities for the realized
data; they require no independence between the representations, the
fitting design, and the scalar observations.

\paragraph{4. Uniform control of fitted normalized contrasts.}
For $t\in\mathfrak T$, write
\[
 \overline\psi_{t,L}=n^{-1}\sum_{i\in L}\psi_t(Y_i),\qquad
 \psi_{t,c}=n^{-1/2}(\psi_t(Y_i)-\overline\psi_{t,L})_{i\in L},\qquad
 U_L=\sup_{t\in\mathfrak T}\|\psi_{t,c}\|_2.
\]
Linearity of \eqref{eq:th-proof-scalar-fit} and constant reproduction
show that the normalized criterion fitted with $Q$ is
\[
 \widehat C_Q(t;x)
   =d_t+\overline\psi_{t,L}+\langle\psi_{t,c},v_Q(x)\rangle.
\]
Thus, by \eqref{eq:th-proof-kernel-difference} and
\eqref{eq:th-proof-resolvent-bound},
\begin{equation}
\label{eq:th-proof-contrast-discrepancy}
 \sup_{x\in\mathcal S}\sup_{t\in\mathfrak T}
 |\widehat C_{Q_\alpha}(t;x)-\widehat C_{\widehat\Pi_{[N]}}(t;x)|
 \le(1-\alpha)B_{n,N},\qquad B_{n,N}:=\Gamma_\xi U_L\delta_\Pi.
\end{equation}
All indexed suprema use the paper's measurable, separable versions.

To control $U_L$, put
$\sigma_*^2=\sup_{x\in\cX_0}\bbE[d^2(Y,y_*)\mid X=x]$, which is finite
under \aref{M1}. For $z=m_\oplus(x_z)$, minimization of the conditional
Fr\'echet function and conditional Minkowski give
$d(z,y_*)\le2\sigma_*$. Lemma~\ref{lem:directional-algebra} therefore
implies the common envelope
\[
 \sup_{t\in\mathfrak T}|\psi_t(v)|
 \le2d(v,y_*)+4\sigma_*.
\]
Empirical centering decreases each squared empirical norm, so
\[
 U_L^2\le n^{-1}\sum_{i\in L}\sup_{t\in\mathfrak T}|\psi_t(Y_i)|^2
 \le n^{-1}\sum_{i\in L}\{2d(Y_i,y_*)+4\sigma_*\}^2.
\]
The expectation of the last expression is at most $40\sigma_*^2$.
Markov's inequality gives $U_L=O_p(1)$. In particular, no interchange
of an indexed supremum and an expectation is being used. Since
$\Gamma_\xi\lesssim\xi^{-1}+\xi^{-2}$,
\eqref{eq:th-proof-projector-rate} yields
\begin{equation}
\label{eq:th-proof-B-rate}
 B_{n,N}=O_p(\chi_{s,n,N,\xi})\quad(N>n),
 \qquad B_{n,n}=0.
\end{equation}

\paragraph{5. Population curvature, TH interpolation, and risk.}
Let $A_{\mathrm N}(x)=R_{s,\xi}^{\oplus}(x)$ be the pure pooled
normalized-contrast error from
Section~\ref{app:additional-rate-results}. For the RH fit define
\[
 A_\alpha(x)=\sup_{y\ne m_\oplus(x)}
 \left|\widehat C_{Q_\alpha}((m_\oplus(x),y);x)
             -\dot C_\oplus((m_\oplus(x),y);x)\right|.
\]
Then \eqref{eq:th-proof-contrast-discrepancy} gives
$A_\alpha(x)\le A_{\mathrm N}(x)+(1-\alpha)B_{n,N}$.
For any measurable $\varepsilon_{\mathrm{opt}}$-minimizer
$\widehat m^\alpha(x)$ of its own fitted criterion, let
$d_\alpha(x)=d(\widehat m^\alpha(x),m_\oplus(x))$.
Approximate optimality and the definition of the normalized error yield
\[
 \Delta_\oplus(\widehat m^\alpha(x),x)
 \le A_\alpha(x)d_\alpha(x)+\varepsilon_{\mathrm{opt}}.
\]
The population Hadamard variance inequality in
Proposition~\ref{prop:automatic-quadratic-curvature} bounds the left
side below by $d_\alpha(x)^2$. Solving this quadratic inequality gives
\begin{equation}
\label{eq:th-proof-endpoint-distance}
 d_\alpha(x)
 \le A_\alpha(x)+\sqrt{\varepsilon_{\mathrm{opt}}}
 \le A_{\mathrm N}(x)+(1-\alpha)B_{n,N}
                            +\sqrt{\varepsilon_{\mathrm{opt}}}.
\end{equation}
The same calculation applies at $\alpha=0,1$ to the labeled-only and
pooled endpoints, including separately selected approximate minimizers
with the common error bound. No convexity of these signed fitted
criteria has been assumed.

For two points $u,v$ in a Hadamard space, their constant-speed geodesic
$\gamma_{u,v}(\eta)$ minimizes
$F_\eta(y)=(1-\eta)d^2(y,u)+\eta d^2(y,v)$. Indeed the CAT(0)
inequality gives
\[
 F_\eta(y)\ge d^2(y,\gamma_{u,v}(\eta))
                  +\eta(1-\eta)d^2(u,v),\qquad
 F_\eta(\gamma_{u,v}(\eta))=\eta(1-\eta)d^2(u,v).
\]
It follows that every $\varepsilon_{\mathrm{hyb}}$-minimizer is within
$\sqrt{\varepsilon_{\mathrm{hyb}}}$ of this geodesic point. Moreover,
convexity of distance along Hadamard geodesics gives
\[
 d(\gamma_{u,v}(\eta),m)
 \le(1-\eta)d(u,m)+\eta d(v,m).
\]
For completeness, the common-endpoint CAT(0) comparison bounds
$d(\gamma_{u,v}(\eta),\gamma_{u,m}(\eta))$ by $\eta d(v,m)$;
the triangle inequality and
$d(\gamma_{u,m}(\eta),m)=(1-\eta)d(u,m)$ give the displayed bound.

Apply these facts with $u=\widehat m^{\mathrm L}(x)$,
$v=\widehat m^{\alphaH}(x)$, and $m=m_\oplus(x)$.
Combining \eqref{eq:th-proof-endpoint-distance} at $\alpha=0$ and
$\alpha=\alphaH$ gives
\begin{equation}
\label{eq:th-proof-pointwise-bound}
 d(\widehat m^{\mathrm{TH}}_{\alphaH,\eta}(x),m_\oplus(x))
 \le A_{\mathrm N}(x)+(1-\alphaH\eta)B_{n,N}
       +\sqrt{\varepsilon_{\mathrm{opt}}}+\sqrt{\varepsilon_{\mathrm{hyb}}},
\end{equation}
since $(1-\eta)+\eta(1-\alphaH)=1-\alphaH\eta$.
Writing $a_{\mathrm N}=\|A_{\mathrm N}\|_{L^2(\bbP_X)}$, Minkowski's
inequality and the pooled endpoint bound imply
\begin{align}
 \sqrt{\mathfrak R(\widehat m^{\mathrm{TH}}_{\alphaH,\eta})}
 &\le a_{\mathrm N}+(1-\alphaH\eta)B_{n,N}
       +\sqrt{\varepsilon_{\mathrm{opt}}}+\sqrt{\varepsilon_{\mathrm{hyb}}},
       \label{eq:th-proof-root-risk}\\
 \sqrt{\mathfrak D(\widehat m^{\mathrm{TH}}_{\alphaH,\eta},\widehat m^{\mathrm N})}
 &\le 2a_{\mathrm N}+(1-\alphaH\eta)B_{n,N}
       +2\sqrt{\varepsilon_{\mathrm{opt}}}+\sqrt{\varepsilon_{\mathrm{hyb}}}.
       \label{eq:th-proof-root-discrepancy}
\end{align}
The second inequality uses the triangle inequality through $m_\oplus$.
Theorem~\ref{thm:contrast-error}, under the hypotheses of the selected
basic or refined main theorem, gives $a_{\mathrm N}=O_p(\rho_{\mathrm N})$.
All preceding pathwise bounds hold simultaneously for every
$(\alphaH,\eta)\in[0,1]^2$, using the joint selection and uniform
tolerances of Section~\ref{app:hybrid-extension-details}.
In particular, $a_{\mathrm N}$ and $B_{n,N}$ do not depend on the weights.
Squaring \eqref{eq:th-proof-root-risk}--\eqref{eq:th-proof-root-discrepancy}
gives, for every pair,
\[
 \mathfrak R(\widehat m^{\mathrm{TH}}_{\alphaH,\eta})
 +\mathfrak D(\widehat m^{\mathrm{TH}}_{\alphaH,\eta},\widehat m^{\mathrm N})
 \le20\{a_{\mathrm N}^2+(1-\alphaH\eta)^2B_{n,N}^2
                   +\varepsilon_{\mathrm{opt}}+\varepsilon_{\mathrm{hyb}}\}.
\]
Here $\rho_{\mathrm N}>0$ since it contains $\sqrt{s/n}$.
Put $V_{n,N}=B_{n,N}/\chi_{s,n,N,\xi}$ for $N>n$ and $V_{n,n}=0$;
the latter uses $B_{n,n}=0$. By \eqref{eq:th-proof-B-rate},
\[
 Z_{n,N}:=20\max\{1,(a_{\mathrm N}/\rho_{\mathrm N})^2,V_{n,N}^2\}
 =O_p(1).
\]
Under our measurable-version conventions, this is one measurable factor
bounding the ratio in \eqref{eq:th-stability-bound} for all weight pairs.
Thus no measurability of an uncountable supremum over selected fits is
needed. Substitution of any measurable, data-selected pair gives the
same bound under the joint law of training and validation data.
If $(1-\alphaH\eta)\chi_{s,n,N,\xi}=O_p(\rho_{\mathrm N})$, the stated
risk-order consequence follows.
\end{proof}

\paragraph{Boundary cases and the meaning of stability.}
With exact two-point interpolation, returning the shared endpoint fit
at $\eta=1$ gives RH without an additional optimization error. At
$\alphaH=1$, reusing the pooled endpoint gives PH. When $N=n$ both
spectral maps are identical, so $Q_\alpha=\widehat\Pi_{[N]}$ for every $\alpha$;
the fitted criteria therefore agree for all weights. A common exact
response-minimizer rule and exact interpolation then yield identical
predictors. Arbitrary approximate selectors need not coincide.
In general Hadamard spaces, \eqref{eq:th-proof-root-discrepancy} compares
predictors through the population target; it is not a Lipschitz theorem
for a selected signed Fr\'echet minimizer. Nor does it establish an oracle
inequality for validation selection or a risk ordering between endpoints.

\paragraph{Balanced cutoffs and informative stability bounds.}
The pooled good event above follows from the selected prediction
theorem's assumptions, without requiring $\sqrt n\,g_s\to\infty$.
Thus the balanced cutoffs in Corollaries~\ref{cor:polynomial-rate}
and~\ref{cor:hilbert-fractional-rate} are admissible for the projector
comparison. Under their stated ridge choices, however, the discrepancy
scale diverges along sample sizes with $N>n$. Indeed, since $N,n$ are
integers with $N\ge n+1$,
\[
 \sqrt{f_{n,N}}\ge\frac1n,\qquad
 g_s\le\lambda_s\lesssim s^{-q},\qquad
 \chi_{s,n,N,\xi}\gtrsim\frac{\xi^{-2}s^q}{n}.
\]
For Corollary~\ref{cor:polynomial-rate},
$\xi=O(s^{-(3q-1)/2})$ and $s\asymp n^{1/(qr)}$ give
\[
 \chi_{s,n,N,\xi}\gtrsim\frac{s^{4q-1}}{n}
 \asymp n^{(4q-1)/(qr)-1}\longrightarrow\infty.
\]
For Corollary~\ref{cor:hilbert-fractional-rate},
$\xi=O(s^{-3q/2})$ and $s\asymp n^{1/(qr+1)}$ instead give
\[
 \chi_{s,n,N,\xi}\gtrsim\frac{s^{4q}}{n}
 \asymp n^{4q/(qr+1)-1}\longrightarrow\infty.
\]
Both exponents are positive for $q>1$ and $0<r\le1$.
At $N=n$, $\chi=0$ by definition. Hence the stability theorem is
applicable at the balanced cutoffs, but its discrepancy term does not
certify a vanishing risk bound for fixed $\alphaH\eta<1$. This is a
limitation of this bound, not a claim that the hybrid risk diverges.
To retain the pooled risk order via this bound, it suffices that
$(1-\alphaH\eta)\chi_{s,n,N,\xi}=O_p(\rho_{\mathrm N})$, up to
optimization error. For a fixed finite mixing grid, this condition
forces selection of $\alphaH\eta=1$ with probability tending to one
when $\rho_{\mathrm N}\to0$ and $N>n$. PH has a separate endpoint-based
rate below; the unfavorable $\chi$ bound does not preclude it.
The stability term is minimized at $\alphaH\eta=1$, so it does not
establish improvement over the pooled endpoint.

\paragraph{A compatible tuning example.}
Under \aref{K5} and $qr>1$, choose, along admissible complete-block cutoffs,
\[
 s\asymp n^c,\qquad \xi\asymp s^{-(3q-1)/2},\qquad
 0<c<\frac1{8q+qr-1}.
\]
Write $\nu_s=s^{-(qr-1)/2}$. Since $g_s\gtrsim s^{-q-1}$ and
$f_{n,N}\le n^{-1}$, the two-sided ridge order gives
\[
 \chi_{s,n,N,\xi}\lesssim s^{4q}/\sqrt n=o(\nu_s).
\]
The basic tuning conditions follow from
$\xi/\lambda_s\asymp s^{-(q-1)/2}\to0$ and
$\log(2s)/\{n(\lambda_s+\xi)\}\lesssim s^q\log(2s)/n\to0$.
In the basic scale, $b_{r,s}$ and
$G_{r,s}\xi/(\lambda_s+\xi)$ are $O(\nu_s)$, whereas
$\sqrt{s/n}/\nu_s=s^{qr/2}/\sqrt n\to0$ and
$G_{r,s}/(\sqrt N g_s\nu_s)\lesssim s^{(3q+1)/2}/\sqrt n\to0$.
Thus, under the remaining basic-theorem hypotheses,
$\rho_{\mathrm N}=O(\nu_s)$ and Theorem~\ref{thm:th-stability} bounds
both risk and discrepancy by
$O_p(n^{-c(qr-1)}+\varepsilon_{\mathrm{opt}}+\varepsilon_{\mathrm{hyb}})$.
This verifies that the conditions can hold simultaneously; it is not
an optimal-rate claim or the balanced regime of
Corollary~\ref{cor:polynomial-rate}.

\begin{proposition}[Exact geodesic prediction mixing]
\label{prop:hybrid-geodesic-gain}
Let $\cY$ be Hadamard. Conditional on the data used to fit and select
measurable predictors $u,v$, let $(X^\circ,Y^\circ)\sim\bbP$ be an
independent test pair and define
\[
 \mathcal L(g):=\bbE[d_\cY^2(Y^\circ,g(X^\circ))\mid\text{fitted data}].
\]
Suppose $\mathcal L(u),\mathcal L(v)<\infty$. For $\eta\in[0,1]$, let
$g_\eta(x)$ be the exact geodesic point at fraction $\eta$ from $u(x)$
to $v(x)$. Then
\[
 \mathcal L(g_\eta)
 \le(1-\eta)\mathcal L(u)+\eta\mathcal L(v)
       -\eta(1-\eta)\mathfrak D(u,v).
\]
The same statement holds with $\mathfrak R$ in place of $\mathcal L$
when the endpoint target risks are finite. It applies to exact TH with
$(u,v)=(\widehat m^{\mathrm L},\widehat m^{\alphaH})$ and exact PH with
$(u,v)=(\widehat m^{\mathrm L},\widehat m^{\mathrm N})$, without
common-tuning or eigengap assumptions.
\end{proposition}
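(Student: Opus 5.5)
The plan is to prove the inequality pointwise in the query and then integrate, using only the CAT(0) inequality from the definition of a Hadamard space in Section~\ref{app:geom}. Fix the fitted data. For each $x$, write $\gamma_x$ for the constant-speed geodesic from $u(x)$ to $v(x)$, so that $g_\eta(x)=\gamma_x(\eta)$. For any $w\in\cY$, the CAT(0) inequality with $y_0=u(x)$, $y_1=v(x)$, $t=\eta$ gives
\[
 d_\cY^2(w,g_\eta(x))\le(1-\eta)d_\cY^2(w,u(x))+\eta d_\cY^2(w,v(x))-\eta(1-\eta)d_\cY^2(u(x),v(x)).
\]
This holds deterministically for every $w$ and $x$.

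First take $w=Y^\circ$ and $x=X^\circ$, and take conditional expectations given the fitted data. The test pair is independent of that data, and $u,v$ are measurable. The map $x\mapsto g_\eta(x)$ is measurable because exact two-point interpolation is jointly continuous in its endpoints (Section~\ref{app:hybrid-extension-details}). Hence each term is a measurable nonnegative function. The last term integrates to $\eta(1-\eta)\mathfrak D(u,v)$, since $X^\circ\sim\bbP_X$ is independent of the data. Finiteness of $\mathcal L(u)$ and $\mathcal L(v)$ makes the right side finite. One also needs $\mathfrak D(u,v)\le 2\mathcal L(u)+2\mathcal L(v)<\infty$; this follows from the triangle inequality through $Y^\circ$. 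Rearranging therefore involves no $\infty-\infty$ issue, and the displayed bound for $\mathcal L(g_\eta)$ follows.

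For the $\mathfrak R$ version, take $w=m_\oplus(x)$ instead, using the Borel measurable extension from \aref{M2}, and integrate over $\bbP_X$. The same bound with $\mathfrak R$ follows from the same finiteness argument, now with $\mathfrak D(u,v)\le2\mathfrak R(u)+2\mathfrak R(v)$. The applications to exact PH and exact TH are immediate. In a Hadamard space the exact PH/TH interpolation $\mathcal B_\eta[u,v](x)$ is $\gamma_{u(x),v(x)}(\eta)$, by the two-point barycenter fact recorded in the proof of Theorem~\ref{thm:th-stability}. Nothing about how $u,v$ were tuned enters, so no common-tuning or eigengap condition is needed.

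The only delicate point is measurability: the conditional expectations must be well defined when $u,v$, and hence possibly $\eta$, are data-selected. I would handle this by conditioning on the sigma-field of the fitting and selection data and invoking joint measurability of $(x,\text{data})\mapsto g_\eta(x)$. The inequality itself is routine.
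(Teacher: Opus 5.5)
Your proposal is correct and is essentially the paper's proof: apply the CAT(0) inequality pointwise along the geodesic from $u(x)$ to $v(x)$, with comparison point $Y^\circ$ for $\mathcal L$ (or $m_\oplus(X^\circ)$ for $\mathfrak R$), and then integrate over the independent test pair. Your separate finiteness bound $\mathfrak D(u,v)\le 2\mathcal L(u)+2\mathcal L(v)$ is correct but dispensable: if you move the $\eta(1-\eta)d_\cY^2(u(x),v(x))$ term to the left before integrating, only nonnegative terms remain, and finiteness of the left side follows directly.
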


\begin{proof}
Integrate the CAT(0) squared-distance inequality along the geodesic,
using $Y^\circ$ as the comparison point for $\mathcal L$, or
$m_\oplus(X^\circ)$ for $\mathfrak R$.
\end{proof}

The inequality holds for every mixing weight, including a weight selected
from the fitted data, but supplies no oracle guarantee for that selection.
For example, if $\mathfrak D(u,v)>|\mathcal L(u)-\mathcal L(v)|$, some
interior weight makes the upper bound strictly smaller than both endpoint
losses. With $\varepsilon_{\mathrm{hyb}}>0$, the proposition concerns the
exact geodesic point, not the approximate output: Step~5 above bounds the
distance between them by $\sqrt{\varepsilon_{\mathrm{hyb}}}$ and handles
its contribution separately in the stability bound.

\begin{proof}[Proof of Corollary~\ref{cor:ph-balanced-rate}]
The labeled subset consists of $n$ iid labeled pairs under the sampling
scheme. Apply Corollary~\ref{cor:polynomial-rate} at the prescribed
common kernel, rank and ridge twice: with representation size $n$ for
$\widehat m^{\mathrm L}$ and size $N\ge n$ for $\widehat m^{\mathrm N}$.
Since $qr>2$, the refined tuning conditions hold for both sizes at
$s\asymp n^{1/(qr)}$. Hadamard curvature gives $\beta_M=2$, so both
endpoint risks are $O_p(a_n)$, where
$a_n=n^{-(qr-1)/(qr)}+\varepsilon_{\mathrm{opt}}$.
Proposition~\ref{prop:hybrid-geodesic-gain} gives, simultaneously for
every $\eta\in[0,1]$,
\[
 \mathfrak R(\widehat m^{\mathrm{PH}}_\eta)
 \le(1-\eta)\mathfrak R(\widehat m^{\mathrm L})
      +\eta\mathfrak R(\widehat m^{\mathrm N})
 \le\max\{\mathfrak R(\widehat m^{\mathrm L}),
          \mathfrak R(\widehat m^{\mathrm N})\}=O_p(a_n).
\]
The maximum of these two risks is one measurable bound for every
mixing weight. Substituting a measurable validation-selected
$\widehat\eta$ therefore preserves the same order; no independence
between the endpoint fits or between a fit and $\widehat\eta$ is needed.
Prediction risk is evaluated at a fresh query independent of all
fitting and selection data. This proves the corollary without using
$\chi$ or a labeled eigengap condition.
\end{proof}

\paragraph{Hilbert sources and approximate interpolation.}
Under Corollary~\ref{cor:hilbert-fractional-rate}, the same argument,
with its fixed common tuning for both endpoints, gives the uniform PH
rate $O_p(n^{-qr/(qr+1)}+\varepsilon_{\mathrm{opt}})$.
More generally, if $\widetilde m^{\mathrm{PH}}_\eta$ is an
$\varepsilon_{\mathrm{hyb}}$-minimizer of the two-point objective,
Step~5 gives
\[
 \mathfrak R(\widetilde m^{\mathrm{PH}}_\eta)
 \le2\mathfrak R(\widehat m^{\mathrm{PH}}_\eta)
      +2\varepsilon_{\mathrm{hyb}}
 \le2\max\{\mathfrak R(\widehat m^{\mathrm L}),
           \mathfrak R(\widehat m^{\mathrm N})\}
      +2\varepsilon_{\mathrm{hyb}},
\]
uniformly over the weights under a common interpolation tolerance.
Thus approximate interpolation adds $O_p(\varepsilon_{\mathrm{hyb}})$
to either PH rate. These conclusions bound PH by the worse endpoint,
not the better one, and give no validation oracle inequality. The
geometric inequality itself allows arbitrary fitted endpoints, but
the polynomial rates require the two endpoint risk bounds. In
particular, they do not establish those bounds for the independently
validation-tuned kernels, ranks and ridges used in the experiments,
and they do not supply a balanced-rate result for a jointly refitted RH.

\end{document}